\def\blfootnote{\xdef\@thefnmark{}\@footnotetext}
\newtheorem{theorem}{Theorem}
\newtheorem{corollary}{Corollary}
\newtheorem{remark}{Remark}
\newtheorem{proposition}{Proposition}
\newtheorem{lemma}{Lemma}[]
\newcommand{\Prob}{\mathbb P}
\newcommand{\mi}{\mathbb I}
\newcommand{\ent}{\mathbb H}
\newcommand{\expec}{\mathbb E}
\newcommand{\kld}{\mathbb D}
\newcommand{\CodeBook}{\mathcal{C}}
\newcommand{\Figwidth}{\columnwidth}%
\def\twocolbreak{\nonumber\\ &}%
\def\twocolalign{&}%
\def\onecolalign{}%
\def\twocolbreakquad{\nonumber\\ &\qquad{}}%
\newcommand{\Figwidth}{4.5in}%
\def\twocolbreak{}%
\def\twocolalign{}%
\def\onecolalign{&}%
\def\twocolbreakquad{}%
\newcommand{\indic}[1]{\ensuremath{\mathds{1}}}
\newcommand{\card}[1]{\ensuremath{\left|{#1}\right|}}   
\newcommand{\eqdef}{\ensuremath{\triangleq}}   
\newcommand{\intseq}[2]{\ensuremath{\llbracket{#1}{\,:\,}{#2}\rrbracket}}
\newcounter{mytempeqcounter}
\newcommand{\D}[3][]{{\mathbb{D}_{#1}}\!\left(#2\,\middle\Vert\,#3\right)} 
\renewcommand{\leq}{\leqslant} 
\renewcommand{\geq}{\geqslant} 
\newcommand{\calD}{\mathcal{D}}
\newcommand{\bbR}{\mathbb{R}}
\newcommand{\calS}{\mathcal{S}}
\newcommand{\calU}{\mathcal{U}}
\newcommand{\calV}{\mathcal{V}}
\newcommand{\calX}{\mathcal{X}}
\newcommand{\calZ}{\mathcal{Z}}
\acrodef{ACDIS}[ACDIS]{Adaptive Communication Decision and Information Systems}
\acrodef{AEP}{Asymptotic Equipartition Property}
\acrodef{AoA}{Angle of Arrival}
\acrodef{AWGN}{Additive White Gaussian Noise}
\acrodef{AVC}[AVC]{Arbitrarily Varying Channel}
\acrodef{BER}{Bit-Error-Rate}
\acrodef{BEC}{Binary Erasure Channel}
\acrodef{BPSK}{Binary Phase-Shift Keying}
\acrodef{BSC}{Binary Symmetric Channel}
\acrodef{BICM}[BICM]{Bit-Interleaved Coded-Modulation}
\acrodef{CDF}[CDF]{Cumulative Distribution Function}
\acrodef{CGF}[CGF]{Cumulant Generating Function}
\acrodef{CLT}[CLT]{Central Limit Theorem}
\acrodef{CSI}[CSI]{Channel State Information}
\acrodef{DMC}[DMC]{Discrete Memoryless Channel}
\acrodef{DMS}[DMS]{Discrete Memoryless Source}
\acrodef{ERM}[ERM]{Empirical Risk Minimization}
\acrodef{FER}[FER]{Frame Error Rate}
\acrodef{ICA}[ICA]{Independent Component Analysis}
\acrodef{iid}[i.i.d.]{independent and identically distributed}
\acrodef{IoT}[IoT]{Internet of Things}
\acrodef{KKT}[KKT]{Karush-Kuhn Tucker}
\acrodef{LASSO}[LASSO]{Least Absolute Shrinkage and Selection Operator}
\acrodef{LPD}[LPD]{Low Probability of Detection}
\acrodef{LDPC}[LDPC]{Low-Density Parity-Check}
\acrodef{LLMS}[LLMS]{Linear Least Mean Square}
\acrodef{LMS}[LMS]{Least Mean Square}
\acrodef{MAC}[MAC]{multiple-access channel}
\acrodef{MGF}[MGF]{Moment Generating Function}
\acrodef{MLC}[MLC]{Multi-Level Coding}
\acrodef{MLE}[MLE]{Maximum Likelihood Estimate}
\acrodef{MIMO}[MIMO]{Multiple-Input Multiple-Output}
\acrodef{MISO}{Multiple-Input Single-Output}
\acrodef{MSD}[MSD]{Multi-Stage Decoding}
\acrodef{MMSE}[MMSE]{Minimum Mean-Square Error}
\acrodef{PAC}[PAC]{Probably Approximately Correct}
\acrodef{PCA}[PCA]{Principal Component Analysis}
\acrodef{PDF}[PDF]{Probability Density Function}
\acrodef{PMF}[PMF]{Probability Mass Function}
\acrodef{PPM}[PPM]{Pulse Position Modulation}
\acrodef{PSD}{Power Spectral Density}
\acrodef{PSK}{Phase Shift Keying}
\acrodef{QKD}{Quantum Key Distribution}
\acrodef{ROC}{Receiver Operating Characteristic}
\acrodef{CVQKD}{Continuous-Variable \ac{QKD}}
\acrodef{QPSK}{Quadrature Phase-Shift Keying}
\acrodef{RV}{random variable}
\acrodef{SIMO}{Single-Input Multiple-Output}
\acrodef{SNR}{Signal-to-Noise Ratio}
\acrodef{SVM}[SVM]{Support Vector Machine}
\acrodef{TPCP}{Trace-Preserving Completely-Positive}
\acrodef{wrt}[w.r.t.]{with respect to}
\acrodef{WSS}{Wide Sense Stationary}
\acrodef{RHS}{Right Hand Side}
\begin{document}

\title{Keyless Covert Communication via Channel State Information}

\author{Hassan ZivariFard, {\em Student Member, IEEE}, Matthieu R. Bloch, {\em Senior Member, IEEE}, and Aria Nosratinia, {\em Fellow, IEEE}\thanks{H. ZivariFard and A. Nosratinia are with Department of Electrical Engineering, The University of Texas at Dallas, Richardson, TX, USA. M. R. Bloch is with School of Electrical and Computer Engineering, Georgia Institute of Technology, Atlanta, GA, USA. E-mail: hassan@utdallas.edu, matthieu.bloch@ece.gatech.edu, aria@utdallas.edu.}\thanks{The material in this paper was presented in part at IEEE Information Theory Workshop 2019, Visby, Gotland, Sweden.}
\thanks{The work of H. ZivariFard and A. Nosratinia is supported by National Science Foundation (NSF) grant 1956213. The work of M. R. Bloch is supported by National Science Foundation (NSF) grant 1955401.}
}
\maketitle
\date{}

\begin{abstract}
We consider the problem of covert communication over a state-dependent channel when the channel state is available either non-causally, causally, or strictly causally, either at the transmitter alone, or at both transmitter and receiver. Covert communication with respect to an adversary, called ``warden,'' is one in which, despite communication over the channel, the warden's observation remains indistinguishable from an output induced by innocent channel-input symbols. Covert communication involves fooling an adversary in part by a proliferation of codebooks; for reliable decoding at the legitimate receiver the codebook uncertainty is typically removed via a shared secret key that is unavailable to the warden. In contrast to previous work, we do not assume the availability of a shared key at the transmitter and legitimate receiver. Instead, shared randomness is extracted from the channel state in a manner that keeps it secret from the warden, despite the influence of the channel state on the warden's output. When channel state is available at the transmitter and receiver, we derive the covert capacity region. When channel state is only available at the transmitter, we derive inner and outer bounds on the covert capacity. We provide examples for which the covert capacity is positive with knowledge of channel state information but is zero without it.
\end{abstract}

\section{Introduction}
Covert communication refers to scenarios in which reliable communication over a channel must occur while simultaneously ensuring that a separate channel output at a node called the warden has a distribution identical to that induced by an innocent channel symbol~\cite{LPD_on_AWGN,Reliable_Deniable_Comm,LPD_over_DMC,LPD_by_Resolvability,OrderAsymtMehrdad}.
It is known that in a point-to-point \ac{DMC} without state, 
the number of bits that can be reliably and covertly communicated over $n$ channel transmissions scales at most as $O(\sqrt{n})$.\footnote{Except for the special case when the output distribution (at the warden) induced by the innocent symbol is a convex combination of the output distributions generated by the other input symbols \cite{LPD_over_DMC}.} This result has motivated the study of other models in which positive rates are achievable, e.g., when the transmitter and the receiver share on the order of $\sqrt{n}\log(n)$ key bits\cite{LPD_on_AWGN}.
Of particular relevance to the present work, Lee {\em et al.}~\cite{Covert_With_State} considered the problem of covert communication over a state-dependent channel in which the channel state is known either causally or non-causally to the transmitter but unknown to the receiver and the warden. The authors derived the covert capacity when the transmitter and the receiver share a sufficiently long secret key, as well as a lower bound on the minimum secret key length needed to achieve the covert capacity. 
Covert communication over a compound channel in which the object to be masked is the state of the compound channel was studied by Salehkalaibar {\em et al.}~\cite{Covert_Saleh}. Given the presence of a channel state, one may wonder if covert communication with positive rate is possible \emph{without} requiring an external secret key. In particular, several works demonstrated the benefits of exploiting common randomness and channel states to generate secret keys. For instance, the problem of stealth secret key generation from correlated sources was studied by Lin {\em et al.}~\cite{Stealthy_SKG,Stealthy_Keyless_SKG} and covert secret key generation was studied by Tahmasbi and Bloch~\cite{Covert_SKG,Covert_SKG_Quantom}. 

\begin{figure}
\centering
\includegraphics[width=\Figwidth]{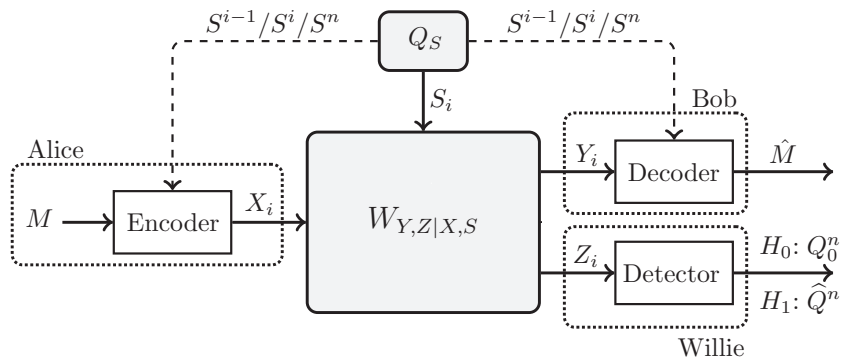}
\caption{Model of covert communication over state-dependent \ac{DMC} when channel state is available at both the transmitter and the receiver}
\label{fig:System_Model}
\end{figure}

The usefulness of exploiting states for secrecy has been extensively investigated in the context of state-dependent wiretap channels. A discrete memoryless  wiretap channel with random states known non-causally at the transmitter was first studied by Chen and Vinck \cite{WTC_With_State}, who established a lower bound on the secrecy capacity based on a combination of wiretap coding with Gel'fand-Pinsker coding. 
Generally speaking, coding schemes with \ac{CSI} outperform those without \ac{CSI} because perfect knowledge of the \ac{CSI} not only enables the transmitter to beamform its signal toward the legitimate receiver but also provides a source of common randomness from which to generate a common secret key and enhance secrecy rates.
Khisti {\em et al.}~\cite{SKA_with_State} studied the problem of secret key generation from non-causal channel state available at the transmitter and established inner and outer bounds on the secret key capacity. Chia and El~Gamal~\cite{Chia_WTC_with_State} studied a wiretap channel in which the state information is available causally at both transmitter and receiver, proposing a scheme in which the transmitter and the receiver extract a weakly secret key from the state and protect the confidential message via a one-time-pad driven with the extracted key (see also \cite{On_Sec_Capa_WTC_CSI} and \cite{WTC_with_SC_State}). Han and Sasaki~\cite{Han_WTC_with_CSI} subsequently extended this result to achieve strong secrecy.
Goldfeld et al.~\cite{Ziv_WTC_with_NonCausaul_CSI} proposed a superposition coding scheme for the problem of transmitting a semantically secure message over a state-dependent channel while the channel state is available non-causally at the transmitter.

This paper studies covert communication over a state-dependent discrete memoryless channel with channel state available either non-causally, causally, or strictly causally, either at both the transmitter and the receiver or at the transmitter alone (see Fig.~\ref{fig:System_Model}). One of the main contributions of this work is to show that the channel state can be used to simultaneously and efficiently accomplish two necessary tasks: using the channel state for a Shannon strategy or Gel'fand-Pinsker coding, while also extracting a shared secret key at the two legitimate terminals to resolve the multiple codebooks that are necessary for covert communication. 
Secret key extraction from channel states replaces the external secret key in other models, thus generalizing and expanding the applicability of covert communication.
A preliminary version of the results of this paper appeared in~\cite{ITWPaper, ISIT2020}.

We characterize the exact covert capacity when \ac{CSI} is available at both the transmitter and the receiver, and derive inner and outer bounds on the covert capacity when channel state information is only available at the transmitter. For some channel models for which the covert capacity is zero without channel state information, we show that the covert capacity is positive with channel state information. The code constructions behind our proofs combine different coding mechanisms, including channel resolvability for covertness, channel randomness extraction for key generation, and Gel'fand-Pinsker coding for state-dependent channels. The key technical challenge consists in properly combining these mechanisms to ensure the overall covertness of the transmission through block-Markov chaining schemes.
In the interest of brevity, the proofs that are more standard, or parallel other proofs in this paper, are omitted from the paper and made available online on arXiv~\cite{KeylessCovertArxiv}, as well as in a supplement provided with the submission.

\section{Preliminaries}
\label{perliminarisec}
Here, $\mathbb{N}$ is the set of natural numbers, which does not include 0, while $\mathbb{R}$ denotes the real numbers. Also, we define $\mathbb{R}_+=\{x\in\mathbb{R}|x\geq 0\}$ and $\mathbb{R}_{++}=\mathbb{R}_+\backslash\{0\}$. Throughout this paper, random variables are denoted by capital letters and their realizations by lower case letters.  The set of $\epsilon-$strongly jointly typical sequences of length $n$, according to $P_{X,Y}$, is denoted by $\mathcal{T}_{\epsilon}^{(n)}({P_{X,Y}})$. For convenience,  typicality will reference the random variables rather than the distribution, e.g., we write  $\mathcal{T}_{\epsilon}^{(n)}(X,Y)$ or $\mathcal{T}_{\epsilon}^{(n)}(X|Y)$.  Superscripts denote the dimension of a vector, e.g., $X^n$. The integer set $\{1,\dots,M\}$ is denoted by $\intseq{1}{M}$, $X_i^j$ indicates the set $\{X_i,X_{i+1},\dots,X_j\}$, and $X_{\sim i}^{n}$ denotes the vector $X^n$ except $X_i$. The cardinality of a set is denoted by $|\cdot|$. The total variation between \ac{PMF} $P$ and \ac{PMF} $Q$ is defined as $||P-Q||_1=\frac{1}{2}\sum_x |P(x)-Q(x)|$ and the Kullback-Leibler (KL) divergence between \ac{PMF}s is defined as $\kld{(P||Q)}=\sum_x p(x)\log\frac{P(x)}{Q(x)}$. The support of a probability distribution $P$ is denoted by supp$(P)$. The $n$-fold product distribution constructed from the same distribution $P$ is denoted $P^{\otimes n}$. Throughout the paper, $\log$ denotes the base 2 logarithm. $\expec_{\backslash i} (\cdot)$ is the expectation with respect to the all random variable in $\mathcal{A}$ except the one with index $i\in\mathcal{A}$, for a set of random variables $\{X_i\}_{i\in \mathcal{A}}$ indexed over a countable set $\mathcal{A}$. Also, $\expec_{X} (\cdot)$ is the expectation w.r.t. the random variable $X$ and $\indic{1}_{\{\cdot\}}$ denotes the indicator function.

Consider a discrete memoryless state-dependent channel as shown in Fig. \ref{fig:System_Model}. 
$\mathcal{X}$ is the channel input, $\mathcal{Y}$ and $\mathcal{Z}$ are the channel outputs at the legitimate receiver and the warden, respectively. Let $x_0\in\mathcal{X}$ be an innocent symbol corresponding to the absence of communication with the receiver. We assume that the channel state is \ac{iid} and drawn according to $Q_S$. Define 
\begin{align}
    Q_0(\cdot) &= \sum_{s\in\mathcal{S}} Q_S(s)W_{Z|X,S}(\cdot|x_0,s),\label{eq:Q0_Definition}\\
    Q_Z(\cdot) &= \sum_{s\in\mathcal{S}}\sum_{x\in\mathcal{X}} Q_S(s)P_{X|S}(x|s)W_{Z|X,S}(\cdot|x,s)\label{eq:Qz_Definition}
\end{align}and let $Q_0^{\otimes n}=\prod\nolimits_{i = 1}^n Q_0$ and also let $Q_Z^{\otimes n}=\prod\nolimits_{i = 1}^n Q_Z$. 
The channel state can be available non-causally, causally, or strictly causally either at both the transmitter and the receiver, or only at the transmitter, but not to the warden. An $(|\mathcal{M}|, n)$ code consists of an encoder-decoder pair, as follows. The encoder maps $(M,S^n) \rightarrow X^n\in \mathcal{X}^n$ when \ac{CSI} is available non-causally at the transmitter, $(M,S^i)\rightarrow X_i\in \mathcal{X}$ when \ac{CSI} is available causally, and $(M,S^{i-1})\rightarrow X_i\in \mathcal{X}$ when the \ac{CSI} is available strictly causally at the transmitter. The decoder maps $(S^n,Y^n)\rightarrow \hat{M}\in\mathcal{M}$ when \ac{CSI} is available at both the transmitter and the receiver, and $Y^n\rightarrow\hat{M}\in\mathcal{M}$ when \ac{CSI} is only available at the transmitter.
The code is assumed known to all parties and the objective is to design a code that is both reliable and covert. The code is defined to be reliable if the probability of error $P_e^{(n)} = \Prob(\hat{M}\ne M)$ goes to zero when $n\to\infty$. The code is covert if the warden cannot determine whether communication is happening (hypothesis $H_1$) or not (hypothesis $H_0$). The probabilities of false alarm (warden deciding $H_1$ when $H_0$ is true) and missed detection (warden deciding $H_0$ when $H_1$ is true), are denoted by $\alpha_n$ and $\beta_n$, respectively. An uninformed, random decision by the warden satisfies $\alpha_n+\beta_n=1$, which is the benchmark for covertness. 
When the channel carries communication, the warden's channel output distribution is denoted $P_{Z^n}$, and the optimal hypothesis test by the warden satisfies $\alpha_n+\beta_n\geq 1-\sqrt{\kld{(P_{Z^n}||Q_0^{\otimes n})}}$ \cite{Hypotheses_Book}. Therefore, to show that the communication is covert, it suffices to show that $\kld{(P_{Z^n}||Q_0^{\otimes n})}\to 0$. Note that supp$(Q_0)=\mathcal{Z}$ otherwise $\kld{(P_{Z^n}||Q_0^{\otimes n})}\to\infty$. Here, we assume $Q_0$ has full support (i.e. does not put zero mass on some point).
Consequently, our goal is to design a sequence of $(2^{nR},n)$ codes such that
\begin{subequations}\label{eq:Code_Definition}
\begin{align}
  &\lim_{n\to\infty}P_e^{(n)}= 0,\label{eq:Error_Probability_General}\\ &\lim_{n\to\infty}\kld{(P_{Z^n}||Q_0^{\otimes n})}= 0.\label{eq:Covertness_General}
\end{align}
\end{subequations}
We define the covert capacity as the supremum of all achievable covert rates and denote it by $C_{\mbox{\scriptsize\rm NC-TR}}$ for non-causal \ac{CSI} available at both the transmitter and the receiver, $C_{\mbox{\scriptsize\rm NC-T}}$ for non-causal \ac{CSI} only available at the transmitter, $C_{\mbox{\scriptsize\rm C-TR}}$ for causal \ac{CSI} available at both the transmitter and the receiver, $C_{\mbox{\scriptsize\rm C-T}}$ for causal \ac{CSI} available only at the transmitter, $C_{\mbox{\scriptsize\rm SC-TR}}$ and $C_{\mbox{\scriptsize\rm SC-T}}$ for strictly causal \ac{CSI} available at both the transmitter and the receiver and strictly causal \ac{CSI} only available at the transmitter, respectively. 
Note that, for the receiver's side availability of the non-causal, causal, and strictly causal \ac{CSI} does not make any difference because decoding is always done after transmission is completed. However, at the transmitter's side availability of the non-causal, causal, and strictly causal \ac{CSI} changes the problem because for these cases the transmitter can use the \ac{CSI} in different manners. Before stating the main results, we state a useful lemma about the relation between the total variation distance and the KL-divergence.
\begin{lemma}[Reverse Pinsker's Inequality {\cite[eq.~(323)]{FDivergence}}]\label{lemma:TV_KLD}
Pinsker's inequality indicates that KL-divergence is larger than total variation distance, i.e. for two arbitrary distributions $P$ and $Q$ on alphabet $\mathcal{A}$ such that $P$ is absolutely continuous with respect to $Q$ we have,
\begin{align}\label{eq:Pinskers_Inequality}
    ||P-Q||_1\leq\sqrt{\frac{1}{2}\kld(P||Q)}.
\end{align}
A reverse inequality is valid when the alphabet $\mathcal{A}$ is finite. Let $P$ and $Q$ be two arbitrary distributions on a finite alphabet set $\mathcal{A}$ such that $P$ is absolutely continuous with respect to $Q$. If $\mu\eqdef\min_{a\in\mathcal{Q}:Q(a)>0}Q(a)$, we have,
\begin{align}\label{eq:Reverse_Pinsker}
    \kld(P||Q)\leq\log\left(\frac{1}{\mu}\right)||P-Q||_1.
\end{align}
\end{lemma}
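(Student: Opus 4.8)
The forward inequality~\eqref{eq:Pinskers_Inequality} is the classical Pinsker inequality and I would simply quote it: applying the data-processing inequality for $\kld$ to the two-cell partition $\{a:P(a)\ge Q(a)\}$, $\{a:P(a)<Q(a)\}$ reduces it to the scalar bound $x\log\frac xy+(1-x)\log\frac{1-x}{1-y}\ge 2(x-y)^2$, which holds because the difference of its two sides, viewed as a function of $x$, is convex and vanishes together with its first derivative at $x=y$. The content of the lemma is the reverse bound~\eqref{eq:Reverse_Pinsker}, in which $||P-Q||_1$ is to be read as the $\ell_1$ norm $\sum_a|P(a)-Q(a)|$, as in the cited reference~\cite{FDivergence} --- equivalently, the assertion is that $\kld(P||Q)$ is at most $2\log\frac1\mu$ times the total variation distance.

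My plan is to prove~\eqref{eq:Reverse_Pinsker} one symbol at a time, with a linear slack term that cancels upon summation. We may assume $|\mathrm{supp}(Q)|\ge 2$, since otherwise $P=Q$ and both sides vanish; then $\mu\le\frac12$, hence $\log\frac1\mu\ge 1$. The single-letter estimate I would establish is that, for every $q\in[\mu,1]$ and $p\in[0,1]$,
\begin{align}
p\log\frac pq\;\le\;\log\frac1\mu\bigl(|p-q|+(p-q)\bigr).\nonumber
\end{align}
When $p<q$ the right-hand side equals $0$ while $p\log\frac pq\le 0$, so the estimate is trivial. When $p\ge q$ the right-hand side equals $2\log\frac1\mu(p-q)$, and I would verify $p\log\frac pq\le 2\log\frac1\mu(p-q)$ by studying $g(p)\eqdef p\log\frac pq-2\log\frac1\mu(p-q)$ on $[q,1]$: it has $g(q)=0$, it is convex since $g''(p)=\frac{\log e}{p}>0$, its left-endpoint slope is $g'(q)=\log e-2\log\frac1\mu<0$ (as $\log\frac1\mu\ge1$), and its right-endpoint value is
\begin{align}
g(1)=\log\frac1q-2\log\frac1\mu(1-q)=(1-q)\Bigl(\tfrac{\log(1/q)}{1-q}-2\log\tfrac1\mu\Bigr)\le 0,\nonumber
\end{align}
where the last inequality uses that $q\mapsto\frac{\log(1/q)}{1-q}$ is decreasing on $(0,1)$ --- equivalently, $s\mapsto\frac{-\ln(1-s)}{s}=1+\frac s2+\frac{s^2}{3}+\cdots$ is increasing --- together with $\mu\le\frac12$, which gives $\frac{\log(1/q)}{1-q}\le\frac{\log(1/\mu)}{1-\mu}\le 2\log\frac1\mu$. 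A convex function on $[q,1]$ is bounded above by the larger of its two endpoint values, so $g\le\max\{g(q),g(1)\}=0$ on $[q,1]$.

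Summing the single-letter estimate over $a$ then finishes the proof. Since $P$ is absolutely continuous with respect to $Q$, any $a\notin\mathrm{supp}(Q)$ contributes $0$ to each sum below, while each $a\in\mathrm{supp}(Q)$ satisfies $Q(a)\ge\mu$ and $P(a)\le1$, so
\begin{align}
\kld(P||Q)=\sum_a P(a)\log\frac{P(a)}{Q(a)}\le\log\frac1\mu\sum_a\bigl(|P(a)-Q(a)|+(P(a)-Q(a))\bigr)=\log\frac1\mu\,||P-Q||_1,\nonumber
\end{align}
the last step using $\sum_a P(a)=\sum_a Q(a)=1$. I expect the only real difficulty to be the branch $p\ge q$ of the single-letter estimate: the naive per-symbol bound $p\log\frac pq\le\log\frac1\mu|p-q|$ is in fact \emph{false} --- it fails already at $p=1$, $q=\mu$, where it would read $\log\frac1\mu\le(1-\mu)\log\frac1\mu$ --- so the crux is to compensate with the additive term $\log\frac1\mu(p-q)$, which is harmless after summation precisely because $P$ and $Q$ are probability distributions. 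The remaining steps are routine.
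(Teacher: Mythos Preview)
The paper does not prove this lemma; it is stated with a citation to~\cite{FDivergence} and used as a tool elsewhere. So there is no ``paper's proof'' to compare against, and your self-contained argument is a genuine addition.

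Your proof is correct. The single-letter estimate $p\log\frac{p}{q}\le\log\frac{1}{\mu}\bigl(|p-q|+(p-q)\bigr)$ is the right device: the $p<q$ case is trivial, and for $p\ge q$ your convexity argument is clean --- $g$ is convex on $[q,1]$ with $g(q)=0$ and $g(1)\le 0$, so $g\le 0$ throughout (the observation $g'(q)<0$ is true but unnecessary, since a convex function on a compact interval attains its maximum at an endpoint). The monotonicity of $q\mapsto\frac{\log(1/q)}{1-q}$ via the power series of $-\ln(1-s)/s$ is a nice touch. Summing and using $\sum_a(P(a)-Q(a))=0$ then gives the result.

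You are also right to flag the notational issue: the paper's preliminaries define $\|P-Q\|_1=\tfrac{1}{2}\sum_a|P(a)-Q(a)|$, whereas the inequality as written in the lemma (and in the cited reference) requires the full $\ell_1$ norm. With the paper's half-normalised convention the bound would read $\kld(P\|Q)\le 2\log(1/\mu)\,\|P-Q\|_1$, exactly as you note; your counterexample $P=(1,0)$, $Q=(\mu,1-\mu)$ implicitly confirms that the factor of~$2$ cannot be dropped under that convention.
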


\section{Channel State Information at the Transmitter and the Receiver}
\subsection{Non-causal Channel State Information}
\begin{theorem}
\label{thm:Capacity_FCSI_NC}
Let
\begin{subequations}\label{eq:finalSD_FCSI_NC}
\begin{align}
\mathcal{S} &\triangleq
  \big\{R\geq0: \exists P_{S,X,Y,Z}\in\calD \text{ such that } R\leq \mi(X;Y|S)\big\}
\label{eq:finalregion_FCSI_NC}
\end{align}
where,
\begin{align}
  \calD \eqdef \left.\begin{cases}P_{S,X,Y,Z}:\\
P_{S,X,Y,Z}=Q_SP_{X|S}W_{Y,Z|S,X}\\
P_Z=Q_0\\
\ent(S|Z) >  \mi(X;Z|S) - \mi(X;Y|S)
\end{cases}\right\}.\label{eq:finalconstraint_FCSI_NC}
\end{align}
\end{subequations}
The covert capacity of the \ac{DMC} $W_{Y,Z|S,X}$ with non-causal \ac{CSI} at both the transmitter and the receiver is
\begin{align}
C_{\mbox{\scriptsize\rm NC-TR}} =\max\{x:x\in\mathcal{S}\}.
\label{eq:Capacity_FCSI_NC}
\end{align}
\end{theorem}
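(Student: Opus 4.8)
The statement is proved by a matching achievability bound and converse; no convexification over $P_{X|S}$ is needed, since $C_{\mbox{\scriptsize\rm NC-TR}}$ is a single-letter maximum over $\calD$, so achievability may work with a single well-chosen $P_{X|S}$. Throughout the achievability, fix $P_{X|S}$ with $P_Z=Q_0$ and $\ent(S|Z)>\mi(X;Z|S)-\mi(X;Y|S)$, and choose auxiliary rates $R_K,R_L\geq0$ with $R+R_L<\mi(X;Y|S)$, $R+R_K+R_L>\mi(X;Z|S)$, and $R_K<\ent(S|Z)$. Eliminating $R_K$ and $R_L$ shows this system is feasible for every $R<\mi(X;Y|S)$ precisely because of the defining inequality of $\calD$, which is how the value $\max\{x:x\in\mathcal{S}\}$ arises; $R_L$ is local randomness ``wasted'' into the warden's channel, $R$ is decoded, and $R_K$ is resolved by a key extracted from the state.

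For achievability the plan is a block-Markov construction over $B$ blocks of $n$ symbols. In block $b$, the codebook consists of sequences $x^n(m,k,\ell;s_b^n)$ drawn independently according to $\prod_iP_{X|S}(\cdot|s_{b,i})$ and indexed by a message $m\in\intseq{1}{2^{nR}}$, a key index $k\in\intseq{1}{2^{nR_K}}$, and a local-randomness index $\ell\in\intseq{1}{2^{nR_L}}$; the index actually used in block $b$ is $k=\kappa_b\eqdef h(S_{b-1}^n)$, obtained by applying a two-universal hash function to the state of the \emph{previous} block, which is independent of $S_b^n$ since the state process is i.i.d.\ across blocks. Three facts then need to be verified. (i) \emph{Reliability:} the receiver knows $S_b^n$ and has already recovered $\kappa_b=h(S_{b-1}^n)$, and decodes $(m,\ell)$ from $Y_b^n$ by joint typicality, which succeeds because $R+R_L<\mi(X;Y|S)$. (ii) \emph{Key secrecy:} since $\kappa_b$ is a function of $S_{b-1}^n$ and $\kappa_b-S_{b-1}^n-Z_{b-1}^n$ is (trivially) a Markov chain, the leftover-hash/privacy-amplification lemma makes $\kappa_b$ nearly uniform and nearly independent of the warden's observation over blocks $1,\dots,b-1$ as long as $R_K<\ent(S|Z)$. (iii) \emph{Covertness:} conditioned on the warden's past, block $b$ presents the warden with $2^{n(R+R_K+R_L)}>2^{n\mi(X;Z|S)}$ essentially uniformly chosen codewords --- and it is here that the independence of $\kappa_b$ from $S_b^n$ matters, so that $\kappa_b$ genuinely contributes covering randomness --- so the soft-covering (resolvability) lemma gives $\|P_{Z_b^n|\text{past}}-Q_0^{\otimes n}\|_1\leq 2^{-\Omega(n)}$ (using $P_Z=Q_0$), and then Lemma~\ref{lemma:TV_KLD} together with the full support of $Q_0$ converts this into $\kld(P_{Z_b^n|\text{past}}\|Q_0^{\otimes n})\leq n\log(1/\mu)\,2^{-\Omega(n)}$. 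Summing over blocks via the chain rule $\kld(P_{Z^{nB}}\|Q_0^{\otimes nB})=\sum_b\expec[\kld(P_{Z_b^n|\text{past}}\|Q_0^{\otimes n})]$ then yields \eqref{eq:Covertness_General} for $B$ growing subexponentially in $n$, while the rate penalty from the first block (which has no preceding state and is handled by transmitting $x_0^n$ or with a sublinear seed key) is $O(1/B)\to0$.

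I expect fact (iii) to be the main obstacle: the key used in block $b$ is correlated with the warden's block-$(b-1)$ observation, which lies in the conditioning event of $P_{Z_b^n|\text{past}}$, so one must argue, using only the \emph{approximate} independence provided by privacy amplification, that this conditioning perturbs the conditional soft covering negligibly, and --- since $B\to\infty$ --- that the covering, hashing and conditioning errors do not accumulate. Keeping all of these estimates uniform in $b$ is the delicate point, and is the ``block-Markov chaining'' difficulty alluded to in the introduction.

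For the converse, consider any sequence of $(2^{nR},n)$ codes satisfying \eqref{eq:Code_Definition}. Fano's inequality and $M\indep S^n$ give $nR\leq\mi(M;Y^n|S^n)+n\epsilon_n$, and the usual memoryless single-letterization (using that $X^n$ is determined by $M$, $S^n$ and the encoder's private randomness, and that $W_{Y,Z|S,X}$ acts memorylessly) gives $\mi(M;Y^n|S^n)\leq\sum_i\mi(X_i;Y_i|S_i)$; introducing a uniform time-sharing index and using concavity of $\mi(X;Y|S)$ in $P_{X|S}$ then yields $R\leq\mi_{\bar P}(X;Y|S)+\epsilon_n$ for the time-averaged law $\bar P_{S,X,Y,Z}=Q_S\bar P_{X|S}W_{Y,Z|S,X}$. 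The covertness hypothesis forces $\bar P_Z\to Q_0$, since $\sum_i\kld(P_{Z_i}\|Q_0)\leq\kld(P_{Z^n}\|Q_0^{\otimes n})\to0$ by convexity. The crux is to show, in addition, that $\bar P$ asymptotically satisfies $\ent(S|Z)\geq\mi(X;Z|S)-\mi(X;Y|S)$: intuitively, the per-symbol randomness $\mi(X;Z|S)$ that covertness forces the transmitter to inject, as seen by the warden, must split into a part of rate at most $\mi(X;Y|S)$ that the receiver resolves (by reliability) and a residual part supplied as shared secrecy extracted from the state, whose rate is bounded by $\ent(S|Z)$ (using, among other things, $\ent(S^n|Z^n)\leq\sum_i\ent(S_i|Z_i)$ and the concavity single-letterization). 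Turning this heuristic into a rigorous chain of inequalities --- in particular obtaining the needed \emph{lower} bounds on the relevant $n$-letter conditional mutual informations when only \emph{marginal} covertness is available (so that one must exploit the near-independence of the $Z_i$, i.e.\ $\sum_i\ent(Z_i)-\ent(Z^n)\leq\kld(P_{Z^n}\|Q_0^{\otimes n})\to0$) --- is the principal difficulty on the converse side. A final continuity argument reconciles the non-strict limiting inequalities with the strict constraint defining $\calD$ and shows the supremum in $\mathcal{S}$ is attained, which completes the bound $R\leq\max\{x:x\in\mathcal{S}\}$.
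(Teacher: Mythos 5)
Your achievability plan would likely work, but it is a genuinely different and substantially heavier construction than the one in the paper, and for this particular theorem it is unnecessarily so. Because \emph{both} the transmitter and the receiver observe $S^n$ non-causally, the paper never extracts a key from the state at all: the codebook is simply $\{X^n(s^n,m)\}_{m\in\intseq{1}{2^{nR}}}$ with $X^n(s^n,m)\sim P_{X|S}^{\otimes n}(\cdot|s^n)$, the encoder transmits $x^n(s^n,m)$, and the soft-covering analysis treats the \emph{pair} $(S^n,M)$ as the covering randomness --- there is no block-Markov structure, no hashing, no chaining, and no ``first block'' to patch. The resolvability exponent then comes out as $R>\mi(S,X;Z)-\ent(S)=\mi(X;Z|S)-\ent(S|Z)$ (the identity you would get by rearranging $\mi(X,S;Z)$), while decoding needs $R<\mi(X;Y|S)$; these intervals overlap exactly when $\ent(S|Z)>\mi(X;Z|S)-\mi(X;Y|S)$, which is the constraint defining $\calD$. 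Your scheme instead over-conditions on $S_b^n$ (requiring $R+R_K+R_L>\mi(X;Z|S)$) and compensates by feeding back $R_K<\ent(S|Z)$ bits of hashed past state via a block-Markov chain, paying for it with privacy-amplification and chaining arguments that you yourself flag as delicate; this bookkeeping is essentially what the paper \emph{does} have to do in Theorems~\ref{thm:Capacity_FCSI_C}, \ref{thm:Capacity_FCSI_SC} and the CSI-at-transmitter-only theorems, where the state of the current block cannot be baked into the codeword (causal/strictly causal) or where the receiver cannot use $S^n$ for decoding. For the fully informed non-causal case, the paper's direct argument buys you all of that for free. One concrete issue your write-up should address if you pursue your route: as stated you would need both $R+R_K+R_L>\mi(X;Z|S)$ and $R_K<\ent(S|Z)$, but since the state in block $b$ already contributes $\ent(S|Z)$ to the covering (it is i.i.d.\ and unknown to the warden), the key is performing a task the state was already doing; this is harmless here but obscures why the single-letter condition in $\calD$ is exactly tight.

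On the converse your outline is directionally right but leaves the crucial lower bound loose. The paper does not need the approximate pairwise-independence of the $Z_i$ (that machinery, your quantity $\sum_i\ent(Z_i)-\ent(Z^n)\leq\kld(P_{Z^n}\Vert Q_0^{\otimes n})$, is Lemma~\ref{lemma:Resolvability_Properties} and is only invoked for the CSI-at-transmitter-only converses). For Theorem~\ref{thm:Capacity_FCSI_NC} the constraint $\ent(\tilde S|\tilde Z)\geq\mi(\tilde X;\tilde Z|\tilde S)-R$ follows from the much shorter chain
$nR=\ent(M)=\ent(M|S^n)\geq\mi(M;Z^n|S^n)=\mi(M,X^n;Z^n|S^n)\geq\mi(X^n;Z^n|S^n)=\mi(X^n,S^n;Z^n)-\ent(S^n)+\ent(S^n|Z^n)$,
followed by adding and subtracting $\log Q_0^{\otimes n}(Z^n)$ to single-letterize using only the product structure of $Q_0^{\otimes n}$ and convexity. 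If you want to complete your argument, I would anchor it to this chain rather than to the ``near-independence of $Z_i$'' idea, which is the wrong lemma for this case. Your remark about needing a continuity-at-zero step to reconcile the strict inequality in $\calD$ with limiting $\epsilon$'s is correct and matches what the paper does.
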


\begin{remark}[Interpretation]
Theorem~\ref{thm:Capacity_FCSI_NC} suggests that the key rate that one can extract from channel state \big(i.e. $H(S|Z)$\big) should be at least as large as the difference between the capacity of the warden and the capacity of the legitimate receiver.
\end{remark}
The achievability is proved by superposition encoding. 
The proof is available in Appendix~\ref{sec:Proof_thm_FCSI_NC}.

\subsection{Causal Channel State Information}

\begin{theorem}
\label{thm:Capacity_FCSI_C}
Let
\begin{subequations}\label{eq:finalSD_FCSI_C}
\begin{align}
\mathcal{S} &\triangleq
  \big\{R\geq0: \exists P_{S,U,X,Y,Z}\in\calD \text{ such that } R\leq \mi(U;Y|S)\big\}
\label{eq:finalregion_FCSI_C}
\end{align}
where,
\begin{align}
  \calD \eqdef \left.\begin{cases}P_{S,U,X,Y,Z}:\\
P_{S,U,X,Y,Z}=Q_SP_U\indic{1}_{\big\{X=X(U,S)\big\}}W_{Y,Z|S,X}\\
P_Z=Q_0\\
\ent(S|Z) >  \mi(U;Z|S) - \mi(U;Y|S)\\
\card{\calU}\leq \card{\calX}+1
\end{cases}\right\}.\label{eq:finalconstraint_FCSI_C}
\end{align}
\end{subequations}
The covert capacity of the \ac{DMC} $W_{Y,Z|S,X}$ with causal \ac{CSI} at both the transmitter and the receiver is 
\begin{align}
C_{\mbox{\scriptsize\rm C-TR}} =\max\{x:x\in\mathcal{S}\}.
\label{eq:Capacity_FCSI_C}
\end{align}
\end{theorem}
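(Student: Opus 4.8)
The plan is to prove~\eqref{eq:Capacity_FCSI_C} by an achievability part and a matching converse, closely following the proof of Theorem~\ref{thm:Capacity_FCSI_NC} with the single change that the direct use of the state at the encoder is replaced by a \emph{Shannon strategy}: the transmitter draws a prototype sequence $U^n$ from a codebook and then emits $X_i=X(U_i,S_i)$ symbol by symbol. This is precisely the modification that converts a scheme designed for non-causal \ac{CSI} into one usable with causal \ac{CSI}, and because the legitimate receiver observes $S^n$ it can still recover $U^n$; this is why $\calD$ in~\eqref{eq:finalconstraint_FCSI_C} is the analogue of $\calD$ in~\eqref{eq:finalconstraint_FCSI_NC} with the pair $(U,\,X=X(U,S))$ in the role of $X$, and why the achievable rate is $\mi(U;Y|S)$ rather than $\mi(X;Y|S)$.

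For achievability I would use a block-Markov construction over $B$ blocks of length $n$, together with superposition/resolvability coding. In block $b$ the transmitter selects a Shannon-strategy codeword $u^n(M_b,K_b)$ from a codebook drawn i.i.d.\ from $P_U$, where $M_b\in\intseq{1}{2^{nR}}$ is the message and $K_b\in\intseq{1}{2^{nR_K}}$ is a key obtained by applying a two-universal hash function to the state sequence of block $b-1$, so that both terminals (which observe $S^n$) can compute it; the first block simply transmits $x_0$, at vanishing rate cost. Knowing $K_b$, the receiver decodes $M_b$ from $(Y^n,S^n)$, which succeeds by joint typicality as soon as $R<\mi(U;Y|S)$. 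For covertness I would bound $\kld(P_{Z^{nB}}\|Q_0^{\otimes nB})$ via the chain rule by a sum over blocks of the divergence between the warden's block-$b$ output distribution, conditioned on its past observations, and $Q_0^{\otimes n}$, and drive each such term to zero with two coupled arguments: a soft-covering (channel-resolvability) argument --- where the constraint $P_Z=Q_0$ is imposed so that the target of the soft covering is exactly the innocent distribution --- that requires the total codebook rate $R+R_K$ to exceed $\mi(U;Z|S)$, the conditioning on $S$ reflecting that the very state that shapes the warden's channel is also reused as a key source; and a privacy-amplification (leftover-hash) argument that makes $K_b$ nearly uniform and nearly independent of the warden's past, which requires $R_K<\ent(S|Z)$. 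Eliminating $R_K$ between these constraints and intersecting with $R<\mi(U;Y|S)$ reproduces the feasibility condition $\ent(S|Z)>\mi(U;Z|S)-\mi(U;Y|S)$; the reverse Pinsker inequality (Lemma~\ref{lemma:TV_KLD}) converts the resulting total-variation bound into a KL-divergence bound, and letting $n\to\infty$ and then $B\to\infty$ finishes the achievability.

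For the converse I would invoke Fano's inequality: since the decoder observes $(Y^n,S^n)$ and $M\indep S^n$, we get $nR\leq\mi(M;Y^n|S^n)+n\epsilon_n$, which single-letterizes --- with the auxiliary $U_i\eqdef(M,Y^{i-1},S^{i-1},S_{i+1}^n)$, which is independent of $S_i$ and makes $X_i$ a deterministic function of $(U_i,S_i)$ as required by $\calD$, plus a uniform time-sharing variable --- to $R\leq\mi(U;Y|S)+\epsilon_n$. The covertness hypothesis $\kld(P_{Z^n}\|Q_0^{\otimes n})\to0$ forces the single-letter marginal to satisfy $P_Z\to Q_0$, and a further chain of manipulations --- combining the covertness bound with the reliability bound and single-letterizing $\ent(S^n|Z^n)\leq\sum_i\ent(S_i|Z_i)$ --- yields the inequality $\mi(U;Z|S)-\mi(U;Y|S)\leq\ent(S|Z)$ for the same single-letter distribution. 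A Fenchel--Eggleston--Carath\'eodory support-lemma argument then gives $\card{\calU}\leq\card{\calX}+1$, and a standard limiting/continuity argument removes the $\epsilon_n$ slack and restores the closed constraint $P_Z=Q_0$.

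The step I expect to be the main obstacle is the covertness analysis of the block-Markov scheme. The block-$b$ state has a double role --- it must be ``spent'' to pull the block-$b$ warden output toward $Q_0^{\otimes n}$ in the soft-covering step, and it is simultaneously the only source for the block-$(b+1)$ key, of which only the residual $\ent(S|Z)$ bits per symbol (given the warden's view) are usable --- so correctly accounting for the cross-block correlations this induces, and ensuring that each per-block divergence is exponentially small so that the sum over all $B$ blocks still vanishes, is the delicate part. On the converse side the analogous difficulty is extracting the clean single-letter inequality $\ent(S|Z)\geq\mi(U;Z|S)-\mi(U;Y|S)$ from the covertness constraint with a correctly chosen auxiliary.
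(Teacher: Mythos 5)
Your proposal follows essentially the same route as the paper's proof: a Shannon-strategy reduction to an effective channel with input $U$, block-Markov encoding with a per-block key extracted by a random hash of the preceding block's state (the paper uses random binning, which is the same mechanism), a soft-covering bound requiring total codebook rate above $\mi(U;Z|S)$ together with the key-rate constraint $R_K<\ent(S|Z)$, reverse Pinsker to move between total variation and KL, Fourier--Motzkin to produce the stated feasibility condition, and a Fano-based converse with the auxiliary $U_i=(M,Y^{i-1},S_{\sim i}^n)$. The minor differences --- making block~1 idle instead of assuming a small shared seed key, and conceptually separating the resolvability and privacy-amplification steps rather than bounding them in a single expansion --- are presentation-level choices and do not change the argument.
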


\begin{remark}[Interpretation]
Theorem~\ref{thm:Capacity_FCSI_C} suggests that the key rate that we extract from channel state \big(i.e. $H(S|Z)$\big) should be at least as large as the difference between the capacity of the warden and the capacity of the legitimate receiver.
\end{remark}

The proof is based on block Markov encoding and Shannon strategy for transmitting the message according to \ac{CSI}. In order to use the \ac{CSI} simultaneously for both Shannon strategy and key generation we design a block Markov encoding scheme. In this scheme, the transmitter not only generates a key from $S^n$ but also selects its codeword according to $S^n$ to help transmitting the message. The proof is available in Appendix~\ref{sec:Proof_Acivable_Rate_CoNC_State}.

\subsection{Strictly Causal Channel State Information}
We now consider the case in which strictly causal channel state information is available at both the transmitter and the receiver. Even though {\em strictly} causal channel state information is more restricted, it is still useful because the shared randomness can still be used to mislead the warden. 
\begin{theorem}
\label{thm:Capacity_FCSI_SC}
Let
\begin{subequations}\label{eq:finalSD_FCSI_SC}
\begin{align}
\mathcal{S} &\triangleq
  \big\{R\geq0: \exists P_{S,X,Y,Z}\in\calD \text{ such that } R\leq \mi(X;Y|S)\big\}
\label{eq:finalregion_FCSI_SC}
\end{align}
where,
\begin{align}
  \calD \eqdef \left.\begin{cases}P_{S,X,Y,Z}:\\
P_{S,X,Y,Z}=Q_SP_XW_{Y,Z|S,X}\\
P_Z=Q_0\\
\ent(S|Z) >  \mi(X;Z|S) - \mi(X;Y|S)
\end{cases}\right\}.\label{eq:finalconstraint_FCSI_SC}
\end{align}
\end{subequations}
The covert capacity of the \ac{DMC} $W_{Y,Z|S,X}$ with strictly causal \ac{CSI} at both the transmitter and the receiver is
\begin{align}
C_{\mbox{\scriptsize\rm SC-TR}} =\max\{x:x\in\mathcal{S}\}.
\label{eq:Capacity_FCSI_SC}
\end{align}
\end{theorem}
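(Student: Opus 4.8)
The plan is to read Theorem~\ref{thm:Capacity_FCSI_SC} as the strictly causal specialization of Theorem~\ref{thm:Capacity_FCSI_C}: its region is exactly the region of Theorem~\ref{thm:Capacity_FCSI_C} restricted to Shannon strategies that ignore the current state, i.e.\ to input laws with $X\independent S$. Accordingly I would prove achievability by \emph{specializing} the block-Markov scheme of Theorem~\ref{thm:Capacity_FCSI_C}. Given $R\in\mathcal{S}$ with witness $P_{S,X,Y,Z}=Q_SP_XW_{Y,Z|S,X}\in\calD$, take $U:=X$ (with $\calU=\calX$), $P_U:=P_X$ and $X(u,s):=u$; then $P_{S,U,X,Y,Z}=Q_SP_U\indic{1}_{\{X=U\}}W_{Y,Z|S,X}$ belongs to the constraint set of Theorem~\ref{thm:Capacity_FCSI_C}: the bound $\card{\calU}\leq\card{\calX}+1$ is met, $P_Z=Q_0$ is inherited, $\mi(U;Z|S)-\mi(U;Y|S)=\mi(X;Z|S)-\mi(X;Y|S)$ so the key-rate condition $\ent(S|Z)>\mi(X;Z|S)-\mi(X;Y|S)$ is the same, and $\mi(U;Y|S)=\mi(X;Y|S)\geq R$. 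The key observation is that, for this choice, the encoder of Theorem~\ref{thm:Capacity_FCSI_C} never uses the current state $S_i$ to form $X_i$ — it uses the past state only to distill the secret key — so it is realizable under strictly causal CSI, and therefore every $R\in\mathcal{S}$ is achievable, i.e.\ $C_{\mbox{\scriptsize\rm SC-TR}}\geq\max\{x:x\in\mathcal{S}\}$.

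Alternatively one runs the scheme directly: split the transmission into $L$ blocks, and in block $b$ extract a key $K_b=\Phi(S^n(b-1))$ from the previous block's state by a two-universal hash (available under strictly causal CSI since that block is in the past), then transmit $x^n(M_b,K_b,J_b)$ from an i.i.d.\ $P_X^{\otimes n}$ codebook, with $J_b$ local randomness. The decoder, knowing all the state, recomputes $K_b$ and decodes $(M_b,J_b)$ from $(Y^n(b),S^n(b))$. Reliability needs $R+R_r<\mi(X;Y|S)$; channel-randomness extraction needs $R_k<\ent(S|Z)$ (using $\ent(S^n\mid Z^n)\approx n\,\ent(S|Z)$ for a typical codebook); and covertness needs $R+R_k+R_r>\mi(X;Z|S)$, established through a conditional-on-state soft-covering estimate (which is why $\mi(X;Z|S)$, rather than $\mi(X;Z)$, appears) followed by joint convexity of the KL divergence and $P_Z=Q_0$, with the first block sacrificed to covertly seed the key chain at an $O(1/L)$ rate cost. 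Eliminating $R_r\geq0$ and $R_k<\ent(S|Z)$ from these inequalities shows feasibility precisely when $\ent(S|Z)>\mi(X;Z|S)-\mi(X;Y|S)$, and then $R$ is driven to $\mi(X;Y|S)$.

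For the converse, let $(2^{nR},n)$ codes satisfy~\eqref{eq:Code_Definition}. Since the decoder observes $(S^n,Y^n)$ and $M\independent S^n$, Fano's inequality gives $nR\leq\mi(M;Y^n|S^n)+n\epsilon_n$, and the standard single-letterization yields $nR\leq\sum_i\mi(X_i;Y_i|S_i)+n\epsilon_n$; the point specific to this setting is that $X_i$ is a deterministic function of $(M,S^{i-1})$ and the encoder's local randomness, hence independent of $S_i$ because the state is i.i.d., so each term is evaluated at a law $Q_SP_{X_i}W_{Y,Z|S,X}$. Covertness forces $\frac1n\sum_i\kld(P_{Z_i}\|Q_0)\leq\frac1n\kld(P_{Z^n}\|Q_0^{\otimes n})\to0$, so the single-letter $Z$-marginal is $Q_0$; a time-sharing and convexity argument then produces a single $P_{S,X,Y,Z}=Q_SP_XW_{Y,Z|S,X}$ with $R\leq\mi(X;Y|S)$ and $P_Z=Q_0$. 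The remaining inequality $\ent(S|Z)\geq\mi(X;Z|S)-\mi(X;Y|S)$ is obtained exactly as in the converse of Theorem~\ref{thm:Capacity_FCSI_NC}, which is itself modeled on the converse of covert communication with a shared key: the only common randomness the legitimate terminals have is the state, which carries at most $\ent(S^n\mid Z^n)\approx n\,\ent(S|Z)$ bits hidden from the warden, whereas reliable \emph{and} covert transmission at a rate approaching $\mi(X;Y|S)$ over this state-dependent channel requires at least $n\left(\mi(X;Z|S)-\mi(X;Y|S)\right)$ such bits. No cardinality bound is needed (no auxiliary in $\calD$), and a routine closure/continuity argument places the limiting law in $\calD$, so $R\leq\max\{x:x\in\mathcal{S}\}$.

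The genuinely delicate ingredients — the block-Markov covertness analysis in the direct achievability and the secret-key-type lower bound in the converse — are precisely those already carried out for Theorems~\ref{thm:Capacity_FCSI_NC} and~\ref{thm:Capacity_FCSI_C}, so for Theorem~\ref{thm:Capacity_FCSI_SC} the work is mainly bookkeeping: checking the reduction to $X\independent S$ and the fact that strict causality forces $X_i\independent S_i$. I expect the one point still needing care to be the block-Markov covertness chaining: since each block's key is a function of the previous block's state, the blocks are dependent, yet one must still show $\kld(P_{Z^{nL}}\|Q_0^{\otimes nL})\to0$ for the entire transmission; this is handled by an induction over blocks that carries the conditional soft-covering estimate forward, using that, conditioned on the warden's accumulated output, the extracted key stays within vanishing total variation of uniform so that marginalizing it out reproduces $Q_0^{\otimes n}$ block by block.
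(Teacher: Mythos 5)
Your proposal is correct, and it is worth recording that your preferred achievability route differs from the paper's. The paper proves Theorem~\ref{thm:Capacity_FCSI_SC} by a standalone block-Markov construction (Appendix~\ref{sec:Proof_Acivable_Rate_Strictly_Causal_State_Info}): an i.i.d.\ $P_X^{\otimes r}$ codebook indexed by the message and the key $k_{j-1}=\Phi(s_{j-1}^r)$, a dedicated covertness analysis via the functional-dependence-graph Markov chain $Z_j^r - K_j - Z_{j+1}^{B,r}$, and the bound $\kld(P_{Z^n}\|Q_Z^{\otimes n})\leq 2\sum_j\kld(P_{Z_j^r,K_j}\|Q_Z^{\otimes r}Q_{K_j})$, with $R_K=\ent(S|Z)-\epsilon$ and the resolvability threshold $R+R_k>\mi(X;Z|S)$. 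You instead obtain achievability by \emph{specializing} Theorem~\ref{thm:Capacity_FCSI_C} with $U=X$ and $X(u,s)=u$, verifying that the constraint set maps exactly onto $\calD$ (including the cardinality bound $\card{\calU}=\card{\calX}\leq\card{\calX}+1$) and that the resulting encoder never consults $S_i$ to form $X_i$, hence is realizable with strictly causal CSI. This reduction is a clean and genuinely different argument that avoids redoing the block-Markov covertness chaining; what it costs is that the correctness of Theorem~\ref{thm:Capacity_FCSI_SC} now rests on the full generality of Theorem~\ref{thm:Capacity_FCSI_C}'s proof, whereas the paper's standalone proof is self-contained (and a touch simpler, since with $X\independent S$ the codebook needs no Shannon-strategy map). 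Your second, direct route is essentially the paper's proof; the harmless extra local-randomness index $R_r$ you introduce eliminates via Fourier-Motzkin to the same region.

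On the converse, your argument matches the paper's. The one point you state more explicitly than the paper does—and it is the right point to emphasize—is that strict causality forces $X_i\independent S_i$ (since $X_i$ is a function of $(M,S^{i-1})$ and $S^n$ is i.i.d.), which is exactly what places the time-shared distribution $\tilde P_{X,S}$ in the product form $Q_S P_X$ required by $\calD$; the paper's Appendix~\ref{sec:Proof_Acivable_Rate_Strictly_Causal_State_Info} leaves this implicit in step $(c)$ of \eqref{eq:Upper_Bound_on_MRate_SC}. The key-rate inequality you describe heuristically is obtained precisely as in \eqref{eq:Upper_Bound_on_KRate_FCSI_NC_Final}, and the closure/continuity-at-zero step you gesture at is handled by the $\calS_\epsilon$ machinery of Appendix~\ref{sec:continuity-at-zero}. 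No gaps.
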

To prove the achievability of Theorem~\ref{thm:Capacity_FCSI_SC}, we merely use the channel state information for key generation and not for data transmission by using a block Markov encoding scheme. The proof is available in Appendix~\ref{sec:Proof_Acivable_Rate_Strictly_Causal_State_Info}. 
\section{Examples}
We provide two examples of covert communication without an external secret key, while surpassing the square-root-$n$ performance with a positive rate.
The two examples explore additive and multiplicative channel states, respectively. The former represents channels in which state can in principle be cancelled, while the latter represents fading-like states.

\subsection{Binary Additive State}
\begin{figure}
\centering
\includegraphics[width=3.8in]{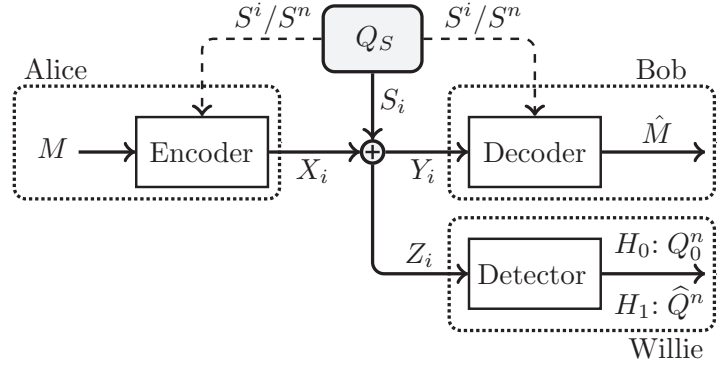}
\caption{Binary symmetric channel with additive \ac{CSI} at the transmitter and the receiver}
\label{fig:BSC_Additive_State_CSITR}
\end{figure}
Consider a channel in which $X, Y, Z,$ and $S$ are all
binary, $Q_s$ obeys a Bernoulli distribution with parameter
$\zeta\in(0, 0.5)$, and the innocent symbol is $x_0=0$. (See Fig.~\ref{fig:BSC_Additive_State_CSITR}). The law of the channel is
\begin{align}
    Y=Z=X\oplus S.\label{eq:BSC_Additive_State_Channel_Law}
\end{align}
\begin{proposition}
\label{thm:Capacity_C_FCSI_Binary_Additive}
The covert capacity of the \ac{DMC} depicted in Fig.~\ref{fig:BSC_Additive_State_CSITR} with causal or non-causal \ac{CSI} available at the transmitter and the receiver is
\begin{align}
C_{\mbox{\scriptsize\rm NC-TR}} = C_{\mbox{\scriptsize\rm C-TR}} = & \ent_b(\zeta)=\zeta\log{\frac{1}{\zeta}}+(1-\zeta)\log{\frac{1}{1-\zeta}},
\label{eq:Capacity_Region_CoNC_State_Binary_Additive}
\end{align}if $\ent(S|Z) > 0$.
\end{proposition}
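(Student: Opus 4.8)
The plan is to derive Proposition~\ref{thm:Capacity_C_FCSI_Binary_Additive} by specializing the single-letter characterizations of Theorem~\ref{thm:Capacity_FCSI_NC} (non-causal \ac{CSI}) and Theorem~\ref{thm:Capacity_FCSI_C} (causal \ac{CSI}) to the channel $Y=Z=X\oplus S$ with $Q_S=\mathrm{Bernoulli}(\zeta)$ and innocent symbol $x_0=0$. Two simplifications make the evaluation essentially immediate. First, since $Z=X\oplus S$ and $x_0=0$, definition~\eqref{eq:Q0_Definition} gives $Q_0=Q_S=\mathrm{Bernoulli}(\zeta)$, so the covertness constraint $P_Z=Q_0$ merely requires $X\oplus S\sim\mathrm{Bernoulli}(\zeta)$. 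Second, because $Y=Z$, the quantities $\mi(X;Z|S)$ and $\mi(X;Y|S)$ (respectively $\mi(U;Z|S)$ and $\mi(U;Y|S)$ in the causal case) coincide, so the feasibility condition in $\calD$ collapses to $\ent(S|Z)>0$, which is exactly the hypothesis of the proposition.

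For the converse I would argue as follows. Conditioned on $S$, the map $X\mapsto X\oplus S$ is a bijection on $\{0,1\}$, hence $\mi(X;Y|S)=\ent(X\oplus S|S)-\ent(X\oplus S|X,S)=\ent(Z|S)$, the second term vanishing because $Z$ is a deterministic function of $(X,S)$; and $\ent(Z|S)\leq\ent(Z)=\ent_b(\zeta)$, the last step using $P_Z=Q_0=\mathrm{Bernoulli}(\zeta)$. Therefore every $R$ in the set $\mathcal{S}$ of Theorem~\ref{thm:Capacity_FCSI_NC} obeys $R\leq\ent_b(\zeta)$, so $C_{\mbox{\scriptsize\rm NC-TR}}\leq\ent_b(\zeta)$. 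The causal converse is identical with $U$ in the role of $X$: since $X=X(U,S)$ is deterministic, $\ent(Y|U,S)=0$, so $\mi(U;Y|S)=\ent(Y|S)\leq\ent(Y)=\ent(Z)=\ent_b(\zeta)$, giving $C_{\mbox{\scriptsize\rm C-TR}}\leq\ent_b(\zeta)$.

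For achievability I would exhibit an explicit one-time-pad-type test distribution. In the non-causal case, let $V\sim\mathrm{Bernoulli}(\zeta)$ be independent of $S$ and set $X=V\oplus S$, i.e.\ $P_{X|S}(1|0)=\zeta$ and $P_{X|S}(1|1)=1-\zeta$; then $Z=X\oplus S=V\sim\mathrm{Bernoulli}(\zeta)=Q_0$ so covertness holds, $\mi(X;Y|S)=\ent(Z|S)=\ent(V)=\ent_b(\zeta)$, and $\ent(S|Z)=\ent(S|V)=\ent(S)=\ent_b(\zeta)>0=\mi(X;Z|S)-\mi(X;Y|S)$, so this joint law lies in $\calD$ and $\ent_b(\zeta)\in\mathcal{S}$, whence $C_{\mbox{\scriptsize\rm NC-TR}}\geq\ent_b(\zeta)$. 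In the causal case, take $\calU=\{0,1\}$, $U\sim\mathrm{Bernoulli}(\zeta)$, and the Shannon strategy $X(U,S)=U\oplus S$; then $Y=Z=U$, so $P_Z=Q_0$, $\mi(U;Y|S)=\ent(U)=\ent_b(\zeta)$, $\ent(S|Z)=\ent(S)>0$, and $\card{\calU}=2\leq\card{\calX}+1$, so $C_{\mbox{\scriptsize\rm C-TR}}\geq\ent_b(\zeta)$. Matching these with the converses (or simply invoking $C_{\mbox{\scriptsize\rm C-TR}}\leq C_{\mbox{\scriptsize\rm NC-TR}}$) yields $C_{\mbox{\scriptsize\rm NC-TR}}=C_{\mbox{\scriptsize\rm C-TR}}=\ent_b(\zeta)$.

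Because the whole argument is a substitution into formulas already established, I do not expect a genuine obstacle; the only points requiring care are verifying that the proposed test distribution is actually an element of $\calD$ (in particular that the feasibility inequality holds with \emph{strict} inequality, which here is automatic since $\zeta\in(0,0.5)$ makes $\ent_b(\zeta)>0$ while the right-hand side equals $0$) and keeping track of the cardinality bound $\card{\calU}\leq\card{\calX}+1$ in the causal case. As a sanity check it is worth noting why the strictly causal case is genuinely different: Theorem~\ref{thm:Capacity_FCSI_SC} forces $X$ independent of $S$, and then $X\oplus S$ is $\mathrm{Bernoulli}$ with parameter $p(1-\zeta)+(1-p)\zeta$, which equals $\zeta$ only when $p=0$ (since $\zeta\neq 1/2$), so no state-independent input can be simultaneously covert and informative and the strictly causal covert capacity of this channel is $0$.
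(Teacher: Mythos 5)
Your proof is correct and follows the same overall strategy as the paper (specialize Theorems~\ref{thm:Capacity_FCSI_NC} and~\ref{thm:Capacity_FCSI_C} to $Y=Z=X\oplus S$ and $Q_0=Q_S$), but your converse for the non-causal case is genuinely cleaner than the paper's. The paper parametrizes $P_{X|S}$ via the joint table in Table~\ref{table:Joint_Dist_X_S}, imposes the constraint $\eta=\beta$ from $P_Z=Q_0$, and then explicitly differentiates $\ent(X|S)$ with respect to the remaining free parameter to find the maximizer $\alpha=\zeta^2$; you instead observe directly that $\mi(X;Y|S)=\ent(Z|S)\leq\ent(Z)=\ent_b(\zeta)$ for \emph{any} admissible law, which needs no calculus and no explicit optimizer. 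Your achievability constructions (the one-time pad $X=V\oplus S$ for the non-causal case and the Shannon strategy $X(U,S)=U\oplus S$ for the causal case) are the same in spirit as what the paper uses for the causal case; for the non-causal case the paper instead backs out the achieving distribution from its optimization ($\alpha=\zeta^2$, which is exactly the one-time-pad joint law), so again the content matches and yours is merely more direct. One mildly pedantic point worth noting: the feasibility inequality $\ent(S|Z)>\mi(X;Z|S)-\mi(X;Y|S)$ is a constraint on the \emph{chosen} joint law rather than an external hypothesis, and you handle this correctly by verifying $\ent(S|Z)=\ent(S)=\ent_b(\zeta)>0$ holds for your test distribution (indeed automatically, since $\zeta\in(0,0.5)$), whereas the paper's phrasing ``if $\ent(S|Z)>0$'' glosses over this. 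Your closing remark that the strictly causal capacity of this channel is zero is a correct and useful observation not in the paper's proof.
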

Intuitively, because the encoder knows the \ac{CSI}, it can perfectly control the warden's observation. By manipulating $X$ the encoder can ensure $Z$ follows the statistics of $S$. In part, this means the set of symbols $X=1$ correspond half to $S=0$ and half to $S=1$, ensuring $Q_S\sim\mbox{Bern}(\zeta)= P_Z$. Further, since the transmitter and receiver share the \ac{CSI}, the legitimate channel is error-free. 
\begin{proof}
First we proof Proposition~\ref{thm:Capacity_C_FCSI_Binary_Additive} when channel state is available non-causally at both the transmitter and the receiver. Substituting $Y=Z=X\oplus S$ in Theorem~\ref{thm:Capacity_FCSI_NC} results in
\begin{align}
C_{\mbox{\scriptsize\rm NC-TR}} = & \mathop {\max }\limits_{Q_SP_{X|S}}\ent(X|S),
\label{eq:Capacity_BAS_CSITR}
\end{align}such that $P_Z =Q_0$. Since $x_0=0$ we have $Q_0= Q_S$.
\begin{table}
  \caption{Joint probability distribution of $X,S$}
  \centering
\setlength\arrayrulewidth{1pt}
  \begin{tabular}{c|cc}
  \diagbox[linewidth=1pt]{$S$\kern-0.7em}{\kern-0.7em $X$} & 0 & 1\\
  \hline
  0 & $\alpha$ & $\beta$\\
  1 & $1-\alpha-\beta-\eta$ & $\eta$
  \end{tabular}
  \label{table:Joint_Dist_X_S}
\end{table}
Let the joint distribution between $X$ and $S$ be according to Table~\ref{table:Joint_Dist_X_S}, we have
\begin{align}
    P_Z(z=0)&=P_{X,S}(x=0,s=0)+P_{X,S}(x=1,s=1)= \alpha+\eta,\label{eq:QZ_equal_to_Zero}\\
    Q_S(s=0)&=P_{X,S}(x=0,s=0)+P_{X,S}(x=1,s=0) = \alpha+\beta.\label{eq:QS_equal_to_Zero}
\end{align}Therefore $P_Z = Q_S$ implies that
\begin{align}
    Q_Z(z=0)=Q_S(s=0)\Rightarrow \alpha+\eta = \alpha+\beta\Rightarrow\eta=\beta\label{eq:equality_QZ_QS}.
\end{align}Therefore,
\begin{align}
    \mathop {\max }\limits_{Q_SP_{X|S}}\ent(X|S)&= \max\limits_{\substack{(\alpha,\beta) \\ \zeta=\alpha+\beta}} \Big[-\alpha\log\frac{\alpha}{\alpha+\beta}-(1-\alpha-2\beta)\log\frac{1-\alpha-2\beta}{1-\alpha-\beta}\nonumber\\
    &\qquad-\beta\log\frac{\beta}{\alpha+\beta}-\beta\log\frac{\beta}{1-\alpha-\beta}\Big].\label{eq:MaxEntropy_First}
\end{align}Considering $Q_S(s=0)=\zeta=\alpha+\beta$ and substituting $\beta=\zeta-\alpha$ in \eqref{eq:MaxEntropy_First} results in
\begin{align}
    \mathop {\max }\limits_{Q_SP_{X|S}}\ent(X|S) &= \mathop {\max }\limits_{\alpha}\Big[-\alpha\log\frac{\alpha}{\zeta}-(1+\alpha-2\zeta)\log\frac{1+\alpha-2\zeta}{1-\zeta}\nonumber\\
    &\qquad-(\zeta-\alpha)\log\frac{\zeta-\alpha}{\zeta}-(\zeta-\alpha)\log\frac{\zeta-\alpha}{1-\zeta}\Big].\label{eq:MaxEntropy_Second}
\end{align} Since entropy is a continuous concave function, the maximizer of $\ent(X|S)$ is found at the root of the first derivative of \eqref{eq:MaxEntropy_Second}. This root is $\alpha=\zeta^2$, resulting in $\max \ent(X|S)=\ent(S)$. Since $Y=Z$, the condition $\ent(S|Z) >  \mi(X;Z|S) - \mi(X;Y|S)$ reduces to $\ent(S|Z) > 0$. If $\ent(S|Z) = 0$ the rate of covert communication between the legitimate terminals will be zero because the warden can recover $S$ from $Z$ and the problem reduces to covert communication for point to point channel.

We now proof Proposition~\ref{thm:Capacity_C_FCSI_Binary_Additive} when channel state is available causally at both the transmitter and the receiver. We assume that $U$ is a Bernoulli random variable with parameter $\eta\in(0, 0.5)$. Here, $U$ and $S$ are independent. To prove the achievability for the causal case, we choose $X=U\oplus S$ therefore $Y=Z=U$ and $\mi(U;Y|S)=\ent(U)$. Since $x_0=0$ we have $Q_0=Q_S$ and the condition $Q_Z=Q_0$ results in $\eta=\zeta$ this follows because
\begin{align}
    Q_S(z=0)&=\Prob(s=0)= \zeta,\label{eq:QZ_equal_to_Zero_C}\\
    Q_Z(z=0)&=\Prob(u=0) = \eta.\label{eq:QS_equal_to_Zero_C}
\end{align}Therefore the covert capacity is equal to $\ent_b(\zeta)$. Again since $Y=Z$, the condition $\ent(S|Z) >  \mi(U;Z|S) - \mi(U;Y|S)$ reduces to $\ent(S|Z) > 0$. If $\ent(S|Z) = 0$ the rate of covert communication between the legitimate terminals will be zero because the warden can recover $S$ from $Z$ and the problem reduces to covert communication for point to point channel. The proof for the upper bound comes from the fact that the covert communication rate for the causal case cannot be higher than the non-causal case.
\end{proof}
\subsection{Binary Multiplicative State}
\begin{figure*}
\centering
\includegraphics[width=4.0in]{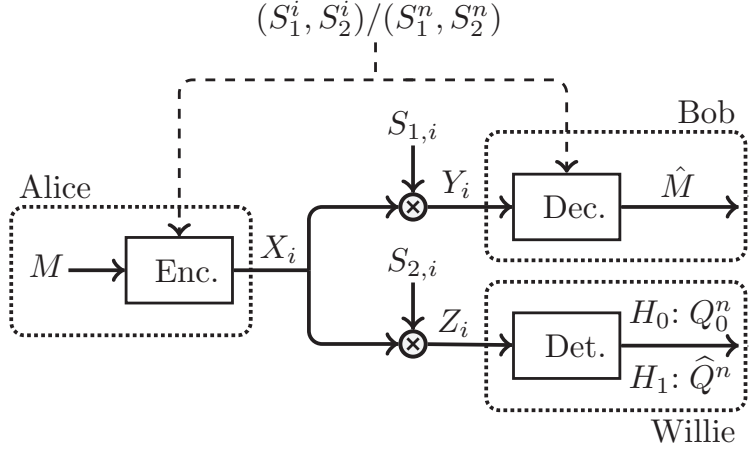}
\caption{Binary symmetric channel with multiplicative \ac{CSI} at the transmitter and the receiver}
\label{fig:BSC_Multiplicative_State_CSITR}
\end{figure*}
Consider a channel in which $X, Y, Z, S_1$, and $S_2$ are all
binary and $S_1$ and $S_2$ have a joint distribution with parameters $P(S_1=i,S_2=j)=p_{i,j}$, for $i,j\in\{0,1\}$ and the innocent symbol is $x_0=0$ (See Fig.~\ref{fig:BSC_Multiplicative_State_CSITR}). The law of the channel is
\begin{align}
    Y=X\otimes S_1,\qquad Z=X\otimes S_2.\label{eq:BSC_Multiplicative_State_Channel_Law_CSITR}
\end{align}
\begin{proposition}
\label{thm:Capacity_C_FCSI_Binary_Multiplicative}
The covert capacity of the \ac{DMC} depicted in Fig.~\ref{fig:BSC_Multiplicative_State_CSITR} with causal or non-causal \ac{CSI} available at the transmitter and the receiver is
\begin{align}
C_{\mbox{\scriptsize\rm NC-TR}} = C_{\mbox{\scriptsize\rm C-TR}} = & p_{1,0}.
\label{eq:Capacity_Region_CoNC_State_Binary_Mul}
\end{align}
\end{proposition}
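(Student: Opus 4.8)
The plan is to apply Theorem~\ref{thm:Capacity_FCSI_NC} (for the non-causal part) and Theorem~\ref{thm:Capacity_FCSI_C} (for the causal part), substitute the multiplicative channel law $Y = X\otimes S_1$, $Z = X\otimes S_2$, and then carry out the optimization. I would begin with the non-causal case. From Theorem~\ref{thm:Capacity_FCSI_NC}, the rate is $\mi(X;Y|S)$ maximized over $P_{X|S}$ subject to $P_Z = Q_0$; here $S = (S_1,S_2)$. Since $x_0 = 0$ is the innocent symbol, $Q_0 = W_{Z|X,S}(\cdot|0,s)$ averaged over $s$, so $Q_0$ is the point mass on $Z = 0$. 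Thus the covertness constraint $P_Z = Q_0$ forces $\Prob(Z = 1) = 0$, i.e., $\Prob(X = 1, S_2 = 1) = 0$: whenever $S_2 = 1$ the transmitter must send $X = 0$. This is the crucial structural reduction — covertness here is a hard (not soft) constraint because the warden's innocent output is deterministic.

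Given that constraint, I would compute $\mi(X;Y|S)$. Conditioned on $S = (s_1, s_2)$, $Y = X \otimes s_1$ is a deterministic function of $X$, so $\mi(X;Y|S=(s_1,s_2)) = \H{X|S=(s_1,s_2)}$ when $s_1 = 1$ and $0$ when $s_1 = 0$ (since then $Y \equiv 0$ regardless of $X$). Now when $s_2 = 1$ we are forced to $X = 0$, contributing zero; the only state pair that can contribute is $(s_1,s_2) = (1,0)$, with probability $p_{1,0}$, and on that event $X$ is unconstrained, so the conditional entropy $\H{X | S = (1,0)}$ can be made equal to $1$ by choosing $X \sim \mathrm{Bern}(1/2)$ there. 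Hence $\max \mi(X;Y|S) = p_{1,0}\cdot 1 = p_{1,0}$. I would also need to verify that the key-rate condition $\H{S|Z} > \mi(X;Z|S) - \mi(X;Y|S)$ can be met: with the above choice $Z \equiv 0$ so $\mi(X;Z|S) = 0$, making the right-hand side $-p_{1,0} \le 0 < \H{S|Z}$ (the latter being strictly positive since $Z$ is deterministic and $S$ is not, as long as $S$ is nondegenerate), so the constraint is satisfied and $C_{\mathrm{NC\text{-}TR}} = p_{1,0}$.

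For the causal case I would use Theorem~\ref{thm:Capacity_FCSI_C}: the rate is $\mi(U;Y|S)$ over $P_U$ and a deterministic map $X = X(U,S)$, with the same covertness constraint $P_Z = Q_0$ and key condition. The achievability of $p_{1,0}$ follows by a Shannon-strategy choice: let $U \sim \mathrm{Bern}(1/2)$ and set $X(U,S) = U$ when $(S_1,S_2) = (1,0)$ and $X(U,S) = 0$ otherwise; then $Z \equiv 0$ (covertness holds, $Q_0$ is the point mass at $0$), and $\mi(U;Y|S) = p_{1,0}\H{U} = p_{1,0}$, with the key condition checked exactly as above. Since causal CSI is a special case (more restrictive than non-causal), $C_{\mathrm{C\text{-}TR}} \le C_{\mathrm{NC\text{-}TR}} = p_{1,0}$, so both equal $p_{1,0}$.

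The main obstacle I anticipate is being careful with the covertness constraint: one must argue that $P_Z = Q_0$ genuinely forces $X = 0$ whenever $S_2 = 1$ (rather than merely constraining a marginal in a way that still leaves entropy on the table), and that no cleverer joint choice of $P_{X|S}$ beats the separable analysis above — this is handled by the conditioning decomposition $\mi(X;Y|S) = \sum_{s_1,s_2} p_{s_1,s_2}\,\mi(X;Y|S=(s_1,s_2))$ and noting each term is at most $\indic{1}_{\{(s_1,s_2)=(1,0)\}}$. A secondary point worth a sentence is confirming $\H{S|Z} > 0$ so the key-extraction condition in $\calD$ is non-vacuous; if $S$ were degenerate this would need separate comment, but generically it holds. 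The converse is essentially free once the objective is shown to be upper-bounded by $p_{1,0}$ pointwise over the feasible set, so no separate outer-bound argument is needed beyond invoking the capacity theorems.
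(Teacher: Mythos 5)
Your proof is correct and follows essentially the same route as the paper: substitute $S=(S_1,S_2)$ into Theorem~\ref{thm:Capacity_FCSI_NC}, observe that $Q_0$ is the point mass at $Z=0$ so that $P_Z=Q_0$ forces $X=0$ whenever $S_2=1$, decompose $\mi(X;Y|S)$ over the four state pairs to isolate the single contributing term $p_{1,0}$, and dispose of the key-rate condition by noting $Z\equiv 0$ makes $\mi(X;Z|S)=0$. The paper is terser at step $(a)$ (phrasing the $S_2=1$ exclusion as ``$P_X=P_Z=Q_0$'') and omits the causal achievability ``for brevity,'' whereas you spell out both the hard-covertness argument and the explicit Shannon-strategy map $X(U,S)=U\cdot\indic{1}_{\{(S_1,S_2)=(1,0)\}}$; those additions are accurate and consistent with the paper's intent.
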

Intuitively, covert communication occurs when the warden's observation is blocked by a bad channel state while simultaneously the legitimate receiver is enjoying a good channel state. Since the receiver has access to \ac{CSI}, the legitimate channel is effectively noise-free.
\begin{proof}
We proof Proposition~\ref{thm:Capacity_C_FCSI_Binary_Multiplicative} for the non-causal case, the proof for the causal case is similar to the non-causal case and it's omitted for brevity. In Theorem~\ref{thm:Capacity_FCSI_NC}, by substituting $S=(S_1,S_2)$, 
\begin{align}
    C_{\mbox{\scriptsize\rm NC-TR}}  &=\max\limits_{\substack{P_{X|S_1,S_2}, \\ Q_Z=Q_0}}\big[\mi(X;Y|S_1,S_2)\big]\nonumber\\
    &=\max\limits_{\substack{P_{X|S_1,S_2}, \\ Q_Z=Q_0}}\Big[\sum\limits_{i=0}^1\sum\limits_{j=0}^1p_{i,j}\mi(X;Y|S_1=i,S_2=j)\Big]\nonumber\\
    &\mathop=\limits^{(a)}\max\limits_{\substack{P_{X|S_1,S_2}}}\Big[p_{1,0}\mi(X;Y|s_1=1,s_2=0)\Big]\nonumber\\
    &=\mathop {\max }\limits_{P_{X|S_1,S_2}}\Big[p_{1,0}\ent(X|s_1=1,s_2=0)\Big]\nonumber\\
    &=p_{1,0},
\end{align}
where $(a)$ holds because when $S_1=0$, communication is prohibited because $Y$ is annihilated, and when $S_2=1$, we have $Z=X$ therefore $P_X=P_Z=Q_0$, and once again there can be no communication. 
Also, $Z$ always equal to zero implies that $\mi(X;Z|S)\leq\ent(Z)=0$ and therefore the condition $\ent(S|Z) >  \mi(X;Z|S) - \mi(X;Y|S)$ is satisfied.
\end{proof}

\section{Channel State Information Only at the Transmitter}

Fig.~\ref{fig:SystemModel_CSIT} depicts covert communication with channel state available at the transmitter (but not the receiver). We proceed to discuss the effects of non-causal, causal, and strictly causal \ac{CSI} at the transmitter.

\subsection{Non-Causal \ac{CSI}}

\begin{figure*}
\centering
\includegraphics[width=\Figwidth]{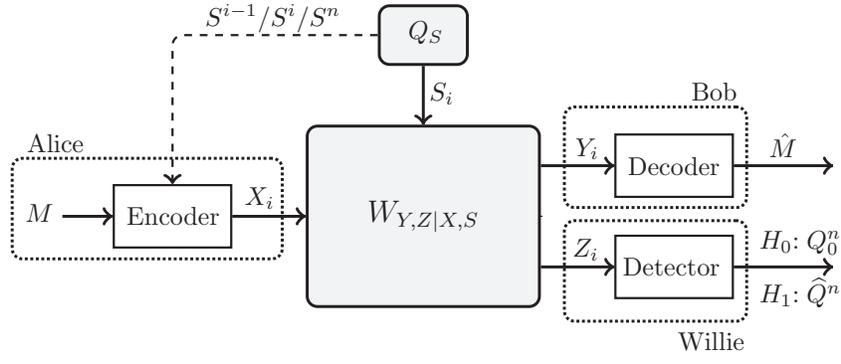}
\caption{Model of covert communication over state-dependent \ac{DMC} when \ac{CSI} is only available at the transmitter}
\label{fig:SystemModel_CSIT}
\end{figure*}

\begin{theorem}
\label{thm:Acivable_Rate_CSIT_WZ_NC}
Let
\begin{subequations}\label{eq:finalSD_CSIT_NC_Inner}
\begin{align}
\mathcal{S} &\triangleq
  \big\{R\geq0: \exists P_{U,V,S,X,Y,Z}\in\calD \text{ such that } R\leq \mi(U;Y)-\max\big\{\mi(U;S),\mi(U,V;S)-\mi(V;Y|U)\big\}\big\}
\label{eq:finalregion_CSIT_NC_Inner}
\end{align}
where,
\begin{align}
  \calD \eqdef \left.\begin{cases}P_{U,V,S,X,Y,Z}:\\
P_{U,V,S,X,Y,Z}=P_UP_VP_{S|U,V}\indic{1}_{\big\{X=X(U,S)\big\}}W_{Y,Z|X,S}\\
Q_S(\cdot)=\sum\nolimits_{u\in\mathcal{U}}\sum\nolimits_{v\in\mathcal{V}}P_U(u)P_V(v)P_{S|U,V}(\cdot|u,v)\\
P_Z=Q_0\\
\mi(V;Y|U) >  \max\{\mi(V;Z),\mi(U,V;Z)-\mi(U;Y)\}\\
\card{\calU}\leq \card{\calX}+5\\
\card{\calV}\leq \card{\calX}+3
\end{cases}\right\}.\label{eq:finalconstraint_CSIT_NC_Inner}
\end{align}
\end{subequations}
The covert capacity of the \ac{DMC} $W_{Y,Z|S,X}$ with non-causal \ac{CSI} at the transmitter is lower-bounded as
\begin{align}
C_{\mbox{\scriptsize\rm NC-T}} \geq\max\{x:x\in\mathcal{S}\}.
\label{eq:Inner_Bound_For_CSIT_NC}
\end{align}
\end{theorem}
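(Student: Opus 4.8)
The goal is to establish the achievability bound \eqref{eq:Inner_Bound_For_CSIT_NC} for covert communication with non-causal CSI only at the transmitter. The plan is to build a block-Markov random coding scheme that simultaneously (i) carries the covert message via a Gel'fand--Pinsker-style code built around the auxiliary $U$, (ii) uses the auxiliary $V$ to extract a shared secret key from the state sequence in the manner of Wyner--Ziv / channel-resolvability-based key agreement, and (iii) recycles the key generated in block $b$ to resolve the codebook ambiguity in block $b+1$, so that the warden's aggregate observation over all blocks is provably close to $Q_0^{\otimes n}$. First I would fix an input distribution $P_{U,V,S,X,Y,Z}\in\calD$ and, for a blocklength $n$ split into $L$ blocks, generate: a $U$-codebook of size $2^{n(\mi(U;Y)-\delta)}$ indexed by the message and a key; within each $U$-codeword a $V$-codebook of size $2^{n(\I{V}{Y|U}-\delta)}$; the channel input in block $b$ is then $X_i = X(U_i,S_i)$ after a joint-typicality encoder picks the $(U^n,V^n)$ pair jointly typical with the realized $S^n$, which is possible with high probability precisely when the covering rates exceed $\mi(U;S)$ and $\mi(U,V;S)$ appropriately — this is where the $\max\{\I{U}{S},\I{U,V}{S}-\I{V}{Y|U}\}$ penalty in the rate comes from.

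**Key steps in order.** (1) \textbf{Codebook generation and encoding.} Define the superposition codebook, the Gel'fand--Pinsker-type joint-typicality encoding of $(U^n,V^n)$ against $S^n$, and the block-Markov key chaining: the bin index of $V^n$ in block $b$ (of size roughly $2^{n\H{S|U} - \text{something}}$, engineered so the key rate equals $\I{V}{Y|U}-\I{V}{Z}$ worth of secrecy after leakage subtraction) serves as the one-time-pad key disambiguating the codebook in block $b+1$. (2) \textbf{Reliability.} Show $P_e^{(n)}\to 0$: the receiver, knowing $Y^n$ only, decodes $U^n$ (hence message and next-block key) by joint typicality — this succeeds if $R + R_{\text{key}} + R_{V} \lesssim \I{U}{Y} + \I{V}{Y|U}$ with the covering-induced rates subtracted, giving the $\I{U}{Y}-\max\{\cdots\}$ expression; then it decodes $V^n$ within the chosen $U^n$ bin. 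Standard packing-lemma and Markov-lemma arguments, parallel to the causal proof in Appendix~\ref{sec:Proof_Acivable_Rate_CoNC_State}, handle this. (3) \textbf{Covertness.} Show $\kld{(P_{Z^n}\|Q_0^{\otimes n})}\to 0$ using a resolvability argument: conditioned on the key being uniform and secret, the induced $Z^n$ distribution in each block is close in KL (via Lemma~\ref{lemma:TV_KLD} and a soft-covering / channel-resolvability lemma) to $Q_0^{\otimes n}$, \emph{provided} the total randomness rate injected into the $Z$-channel exceeds the resolvability threshold — this is exactly the constraint $\I{V}{Y|U} > \max\{\I{V}{Z},\I{U,V}{Z}-\I{U}{Y}\}$ in $\calD$, which guarantees the key is both large enough and hidden from the warden. (4) \textbf{Key secrecy / chaining closure.} Verify that the key extracted in block $b$ leaks negligibly to the warden (so it can legitimately serve as a one-time pad in block $b+1$), and handle the first and last blocks (seed key of negligible rate, and a terminating block) so the per-block covertness and reliability bounds telescope; the $1/L$ and seeding overheads vanish as $n,L\to\infty$. (5) \textbf{Cardinality bounds.} Apply the Fenchel--Eggleston--Carath\'eodory lemma to bound $\card{\calU}$ and $\card{\calV}$ as stated, preserving the relevant mutual-information quantities.

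**Main obstacle.** The hard part will be step (3)–(4): ensuring that the \emph{same} state sequence $S^n$ used for Gel'fand--Pinsker shaping of $X^n$ (and hence partially exposed through $Z^n$ to the warden) still yields enough \emph{secret} key to both resolve the next codebook and keep the warden's output statistically indistinguishable from $Q_0^{\otimes n}$ across all $L$ blocks jointly. One must carefully decompose the key into a part that is asymptotically independent of the warden's observation (using a conditional-resolvability / privacy-amplification argument with the leakage term $\I{V}{Z}$ or $\I{U,V}{Z}-\I{U}{Y}$ subtracted) and argue that the block-Markov coupling does not accumulate total-variation error — a union bound over $L$ blocks combined with an $L$ growing slowly enough relative to the exponential concentration in $n$. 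Getting the two competing uses of the state randomness to coexist without either the rate collapsing or the covertness failing is the crux, and it is precisely why the achievable rate carries the $\max\{\I{U}{S},\I{U,V}{S}-\I{V}{Y|U}\}$ penalty and why the feasibility set $\calD$ imposes the stated information inequality.
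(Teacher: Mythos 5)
Your high-level architecture matches the paper's: block-Markov encoding, a Gel'fand--Pinsker-style $U$-code, a $V$-codebook for extracting a key from the state, and a resolvability argument for covertness, with the $\max\{\mi(U;S),\mi(U,V;S)-\mi(V;Y|U)\}$ penalty coming from the covering cost net of what the receiver can reconstruct. But there is a concrete gap in the chaining mechanism. Because the receiver does not see $S^n$, it cannot recover the $V$-codeword of block $b$ on its own; it needs a Wyner--Ziv reconciliation (bin) index $t_b$, and that index can only be conveyed through the $U$-codeword of the \emph{next} block. Consequently, the key extracted from $V^n$ in block $b$ is not available to the decoder until the end of block $b+1$, and so it can at earliest disambiguate the codebook in block $b+2$, not $b+1$ as you propose. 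The paper therefore runs a two-block-lag chain: the $U$-codeword in block $j$ is $u^r(m_j,t_{j-1},k_{j-2},\ell_j)$, carrying both the previous block's reconciliation index and the key from two blocks back. Your one-block scheme would have the decoder attempting to use a key it has not yet been able to compute. A related conflation: you treat ``the bin index of $V^n$'' as the one-time-pad key, but the paper deliberately maintains two distinct binnings of the $V$-codebook — a reconciliation index $t_j=\sigma(v^r)$ (sent to the receiver) and a key $k_j=\Phi(v^r)$ (kept secret, used as a one-time pad); collapsing them breaks the secrecy accounting, since $t_j$ is exposed through the next block's transmission while $k_j$ must remain hidden from the warden.

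A second, smaller but real difference is the choice of encoder. You invoke joint-typicality encoding of $(U^n,V^n)$ against $S^n$; the paper instead uses a \emph{likelihood encoder} (the soft-covering construction of Cuff) so that the induced joint law is provably close in total variation to an ideal product distribution $\Gamma$. This is what makes the subsequent resolvability bound on $\kld(P_{Z^n}\Vert Q_Z^{\otimes n})$ clean, and it is especially important in the block-Markov setting where approximation errors must not accumulate across $B$ blocks. A joint-typicality encoder could in principle be made to work, but you would then need a separate argument that the induced distribution is close enough to memoryless for the soft-covering lemma and the chain decoupling (Markov-chain factorization across blocks, cf. Fig.~\ref{fig:Chaining_Noncausal_CSIT}) to go through; the proposal does not supply that argument.
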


The proof is based on block Markov encoding, Gel'fand-Pinsker coding for transmitting the message according to \ac{CSI} \cite{GelfandPinsker}, and Wyner-Ziv coding for secret key generation \cite{WynerZiv76}. To use the \ac{CSI} for both Gel'fand-Pinsker encoding and key generation we design a block Markov encoding scheme. In this scheme, the transmitter not only generates a key from $S^n$ but also selects its codeword according to $S^n$ by using a likelihood encoder \cite{Cuff13,Yassaee13,Watanabe15}. 
Here, instead of generating a secret key directly from the \ac{CSI} we generate a secret key from another random variable which is correlated with \ac{CSI}. This helps to control the rate of the secret key that is generated from the \ac{CSI}, because sometimes a shared key between the legitimate terminals is not needed for covert communication. For example, consider the case where the legitimate receiver's channel is a less noisy version of the warden's channel (see~Corollary~\ref{thm:Capacity_Region_For_Degraded_CSIT_NC}). Also, this arrangement helps the legitimate terminals to generate the secret key from the \ac{CSI} more efficiently (see~Section~\ref{Sec:Example_Reverse_Degraded}). To generate a secret key from a random variable which is correlated with \ac{CSI} we use Wyner-Ziv encoding. Proof details are available in Appendix~\ref{sec:Proof_Acivable_Rate_CSIT_WZ_NC}.

\begin{theorem}
\label{thm:Simple_Acivable_Region_CSIT_NC}
Let
\begin{subequations}\label{eq:finalSD_CSIT_NC_Simple_Inner}
\begin{align}
\mathcal{S} &\triangleq
  \big\{R\geq0: \exists P_{S,U,X,Y,Z}\in\calD \text{ such that } R\leq \mi(U;Y)-\mi(U;S)\big\}
\label{eq:finalregion_CSIT_NC_Simple_Inner}
\end{align}
where,
\begin{align}
  \calD \eqdef \left.\begin{cases}P_{S,U,X,Y,Z}:\\
P_{S,U,X,Y,Z}=Q_SP_{U|S}\indic{1}_{\big\{X=X(U,S)\big\}}W_{Y,Z|X,S}\\
P_Z=Q_0\\
\mi(U;Y) > \mi(U;Z)\\
\card{\calU}\leq \card{\calX}+2
\end{cases}\right\}.\label{eq:finalconstraint_CSIT_NC_Simple_Inner}
\end{align}
\end{subequations}
The covert capacity of the \ac{DMC} $W_{Y,Z|S,X}$ with non-causal \ac{CSI} at the transmitter is lower-bounded as
\begin{align}
C_{\mbox{\scriptsize\rm NC-T}} \geq\max\{x:x\in\mathcal{S}\}.
\label{eq:Simple_Inner_Bound_For_CSIT_NC}
\end{align}
\end{theorem}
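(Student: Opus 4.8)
The plan is to establish achievability directly by a block-Markov scheme that is the stripped-down version of the construction behind Theorem~\ref{thm:Acivable_Rate_CSIT_WZ_NC}: the Wyner--Ziv auxiliary $V$ is removed, and the shared secret key is extracted directly from the Gel'fand--Pinsker codeword $U^n$ of the previous block. Fix $P_{S,U,X,Y,Z}\in\calD$ and a rate $R<\mi(U;Y)-\mi(U;S)$. Communication runs over $B$ blocks of length $n$. In block $b$ the codebook is a Gel'fand--Pinsker codebook $\{U^n(m,k,\ell)\}$ with $m\in\intseq{1}{2^{nR}}$ the message, $k\in\intseq{1}{2^{nR_K}}$ the key carried over from block $b-1$, and $\ell\in\intseq{1}{2^{nR_L}}$ the binning randomness, all codewords i.i.d.\ $\sim P_U^{\otimes n}$ with $P_U(\cdot)=\sum_s Q_S(s)P_{U|S}(\cdot|s)$; a two-universal hash family is also fixed and public. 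Given $(S_b^n,m_b,k_b)$ the encoder uses a likelihood encoder to draw $\ell_b$ so that $(U_b^n,S_b^n)$ is close in distribution to $(Q_SP_{U|S})^{\otimes n}$, sets $X_{b,i}=X(U_{b,i},S_{b,i})$, and transmits. The receiver (which has no \ac{CSI}) decodes the blocks in order: knowing $k_b$ it performs joint-typicality decoding of $(m_b,\ell_b)$, hence of $U_b^n$, and then forms $k_{b+1}$ as the hash of $U_b^n$; the first block is seeded with a sublinear amount of shared randomness, which is asymptotically negligible.

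The rate conditions are assembled from three standard estimates. \emph{(i)} The likelihood encoder succeeds and synthesizes $(U_b^n,S_b^n)\approx(Q_SP_{U|S})^{\otimes n}$ on each message--key pair provided $R_L>\mi(U;S)$; propagating this through the memoryless channel makes the warden's single-block output close to $Q_0^{\otimes n}$ by the soft-covering (channel-resolvability) lemma, using the constraint $P_Z=Q_0$ in $\calD$, which is the per-block covertness statement. \emph{(ii)} Joint-typicality decoding at the receiver succeeds if $R+R_L<\mi(U;Y)$. \emph{(iii)} Given the codebook and $Z_b^n$, the warden's residual uncertainty about $U_b^n$ has rate $\approx R+R_K+R_L-\mi(U;Z)$, so by privacy amplification (the leftover hash lemma) the hash of $U_b^n$ is approximately uniform and approximately independent of $Z_b^n$ at any rate $R_K\leq R+R_K+R_L-\mi(U;Z)$, i.e.\ whenever $R+R_L\geq\mi(U;Z)$. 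Sandwiching $\mi(U;Z)\leq R+R_L<\mi(U;Y)$ forces the constraint $\mi(U;Y)>\mi(U;Z)$ appearing in $\calD$; together with $R_L>\mi(U;S)$ (inflating $R_L$ with dummy randomness when $R$ is small, so that $R+R_L\geq\mi(U;Z)$ still holds without violating $R+R_L<\mi(U;Y)$) this shows every $R<\mi(U;Y)-\mi(U;S)$ is feasible, and a limiting argument on the closed region recovers $\max\{x:x\in\mathcal S\}$. The cardinality bound $\card{\calU}\leq\card{\calX}+2$ is obtained by the usual support-lemma argument applied to $\mathcal S$.

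Since covertness is measured by a KL divergence, the remaining step is to show $\kld\big(P_{Z^{nB}}\,\|\,Q_0^{\otimes nB}\big)\to0$. I would expand it by the chain rule as $\sum_{b=1}^{B}\expec_{Z^{n(b-1)}}\big[\kld\big(P_{Z_b^n\mid Z^{n(b-1)}}\,\|\,Q_0^{\otimes n}\big)\big]$, use the near-independence of $k_b$ from the warden's accumulated observation (the output of estimate \emph{(iii)}) to replace $P_{Z_b^n\mid Z^{n(b-1)}}$ by $P_{Z_b^n}$ up to a vanishing penalty, bound each term by the single-block soft-covering estimate \emph{(i)}, and convert from total variation to KL divergence via the reverse Pinsker inequality of Lemma~\ref{lemma:TV_KLD} (valid because $Q_0$ has full support on the finite alphabet $\mathcal Z$). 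A single averaging/selection argument over the random codebook and hash family then produces one deterministic code meeting both \eqref{eq:Error_Probability_General} and \eqref{eq:Covertness_General}.

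I expect the main obstacle to be the block-Markov bookkeeping: one must carry, simultaneously across all $B$ blocks, both the assertion that the regenerated key stays near-uniform and near-independent of the warden's cumulative view and the assertion that each block's output stays $\kld$-close to $Q_0^{\otimes n}$, while the induced codeword law is only approximately i.i.d.\ because of the likelihood encoder. Preventing these approximation errors from compounding over the blocks, and cleanly initializing the key chain in the first block without an external key, is the delicate part; the remaining pieces are routine combinations of Gel'fand--Pinsker, soft-covering, and privacy-amplification arguments that mirror the proof of Theorem~\ref{thm:Acivable_Rate_CSIT_WZ_NC}.
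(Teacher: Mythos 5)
Your proposal is a valid construction in spirit, but it is a genuinely different and strictly more elaborate route than the paper's proof of Theorem~\ref{thm:Simple_Acivable_Region_CSIT_NC}, and it replicates machinery that this theorem is specifically designed to avoid. The paper's proof (Appendix~\ref{sec:Proof_Simple_Inner_Bound_For_CSIT_NC}) is a \emph{single-block} scheme: one Gel'fand--Pinsker codebook $\{U^n(m,\ell)\}$ of rate $R+R'$, a likelihood encoder to select $\ell$ from $s^n$, channel resolvability (soft covering) to control $P_{Z^n}$, and joint-typicality decoding of $(m,\ell)$. There is no block-Markov chaining, no key extraction, no hashing, and no privacy amplification. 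The three rate conditions are $R'>\mi(U;S)$ (likelihood encoder synthesizes the correct $(U^n,S^n)$ joint law), $R+R'>\mi(U;Z)$ (soft covering makes $P_{Z^n}\approx Q_Z^{\otimes n}=Q_0^{\otimes n}$), and $R+R'<\mi(U;Y)$ (decoding). Fourier--Motzkin then gives $R<\mi(U;Y)-\mi(U;S)$, and the requirement that some $R'$ can simultaneously satisfy $\mi(U;Z)<R+R'<\mi(U;Y)$ produces the constraint $\mi(U;Y)>\mi(U;Z)$ that appears in $\calD$.

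The reason no key is needed is exactly the constraint you use to ``sandwich'' $R+R_L$: since $\mi(U;Y)>\mi(U;Z)$, the local randomness index $\ell$ alone has enough rate to serve simultaneously as Gel'fand--Pinsker covering randomness (against $S$) and channel-resolvability randomness (against $Z$), while still leaving the receiver enough room to decode. Your own estimate (iii) quietly confirms this: the privacy-amplification constraint $R_K\leq(R+R_K+R_L)-\mi(U;Z)$ reduces to $R+R_L\geq\mi(U;Z)$, which holds for \emph{any} $R_K\geq0$, including $R_K=0$ --- i.e., the key is superfluous and the real work is done by (i) and (ii) alone. Setting $R_K=0$ in your block-Markov scheme makes the $B$ blocks independent and collapses to exactly the paper's single-block argument. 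By keeping $R_K>0$ you import, and must solve, precisely the bookkeeping you flag as the ``delicate part'': showing that the regenerated key stays near-uniform and near-independent of the warden's cumulative output across blocks, and controlling the compounding of approximation error from the likelihood encoder. That is the difficulty the paper reserves for Theorem~\ref{thm:Acivable_Rate_CSIT_WZ_NC} (where the key is actually needed because $\mi(U;Y)>\mi(U;Z)$ may fail), and it is handled there through the $\Gamma$/$\Upsilon$ intermediate-distribution trick, the functional dependence graph, and the Markov chain across blocks --- none of which is required for the present theorem. In short: your route is workable but re-proves the harder Theorem~\ref{thm:Acivable_Rate_CSIT_WZ_NC} and then specializes, whereas the constraint $\mi(U;Y)>\mi(U;Z)$ in $\calD$ is exactly the signal that lets you skip all of it.

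One further mismatch worth noting: even in the paper's Theorem~\ref{thm:Acivable_Rate_CSIT_WZ_NC}, the key is not hashed from the Gel'fand--Pinsker codeword $U^n$ as you propose, but extracted from a Wyner--Ziv description $V^n$ of the state $S^n$ (with a reconciliation index decoded from the next block's $U^n$). Hashing $U^n$ directly ties the key rate to the equivocation of the codebook index, whereas the paper's choice decouples the key rate from the codebook geometry and allows independent tuning of how much key is generated. For Theorem~\ref{thm:Simple_Acivable_Region_CSIT_NC} this distinction is moot since no key is needed, but if you intend your construction as a stripped-down version of Theorem~\ref{thm:Acivable_Rate_CSIT_WZ_NC}, be aware it is also a different key-extraction mechanism, not a restriction of the paper's.
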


The proof uses Gel'fand-Pinsker coding with likelihood encoder, {\em without}  block-Markov encoding or key generation from \ac{CSI}. Theorem~\ref{thm:Simple_Acivable_Region_CSIT_NC} can also be proved by using Theorem~\ref{thm:Acivable_Rate_CSIT_WZ_NC} while generating $S$ independent of $V$ (i.e. $P_{S|U,V}=P_{S|U}$). The independence between $V$ and $S$ implies $\mi(V;S)=0$ and $\mi(V;U,Y)=0$ which results in the region in Theorem~\ref{thm:Simple_Acivable_Region_CSIT_NC}.
Proof details are available in Appendix~\ref{sec:Proof_Simple_Inner_Bound_For_CSIT_NC}.
\begin{remark}[Comparison Between Theorems~\ref{thm:Acivable_Rate_CSIT_WZ_NC} and \ref{thm:Simple_Acivable_Region_CSIT_NC}]
\label{remark:2LowerBounds}
Theorems~\ref{thm:Acivable_Rate_CSIT_WZ_NC} and~\ref{thm:Simple_Acivable_Region_CSIT_NC} introduce two achievable schemes for the same problem (non-causal \ac{CSI} at transmitter). While the expression in \eqref{eq:finalregion_CSIT_NC_Inner} is not larger than \eqref{eq:finalregion_CSIT_NC_Simple_Inner} when calculated for the same probability distribution, the set of admissible probability distributions defined by \eqref{eq:finalconstraint_CSIT_NC_Inner} may contain distributions not in \eqref{eq:finalconstraint_CSIT_NC_Simple_Inner}. 
\end{remark}

\begin{theorem}
\label{thm:Upper_Bound_For_CSIT_NC}
Let
\begin{subequations}\label{eq:finalSD_CSIT_NC}
\begin{align}
\mathcal{S} &\triangleq
  \big\{R\geq0: \exists P_{S,U,V,X,Y,Z}\in\calD \text{ such that } R\leq \min\{\mi(U;Y)-\mi(U;S),\mi(U,V;Y)-\mi(U;S|V)\}\big\}
\label{eq:finalregion_CSIT_NC}
\end{align}
where,
\begin{align}
  \calD \eqdef \left.\begin{cases}P_{S,U,V,X,Y,Z}:\\
P_{S,U,V,X,Y,Z}=Q_SP_{UV|S}\indic{1}_{\big\{X=X(U,S)\big\}}W_{Y,Z|X,S}\\
P_Z=Q_0\\
\min\{\mi(U;Y)-\mi(U;S),\mi(U,V;Y)-\mi(U;S|V)\} \geq \mi(V;Z)-\mi(V;S)\\
\max\{\card{\calU},\card{\calV}\}\leq \card{\calX}+3
\end{cases}\right\}.\label{eq:finalconstraint_CSIT_NC}
\end{align}
\end{subequations}
The covert capacity of the \ac{DMC} $W_{Y,Z|S,X}$ with non-causal \ac{CSI} at the transmitter is upper-bounded as 
\begin{align}
C_{\mbox{\scriptsize\rm NC-T}} \leq\max\{x:x\in\mathcal{S}\}.
\label{eq:Upper_Bound_For_CSIT_NC}
\end{align}
\end{theorem}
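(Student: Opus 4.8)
My plan is a weak-converse argument. I fix an arbitrary sequence of reliable, covert $(2^{nR},n)$ codes, extract two families of auxiliary random variables, single-letterize using the Csisz\'ar sum identity, and pass to a limit. Two facts do the work: Fano's inequality for the rate, and the covertness requirement \eqref{eq:Covertness_General}, which I use in two distinct ways. First, the chain-rule identity $\D{P_{Z^n}}{Q_0^{\otimes n}}=\sum_{i=1}^n\mi(Z_i;Z^{i-1})+\sum_{i=1}^n\D{P_{Z_i}}{Q_0}$ shows, since its left side vanishes and both sums are nonnegative, that each of $\sum_i\mi(Z_i;Z^{i-1})$ and $\sum_i\D{P_{Z_i}}{Q_0}$ is $o(1)$. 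Convexity of the divergence then gives $\D{\tfrac1n\sum_iP_{Z_i}}{Q_0}\to 0$, so $P_Z=Q_0$ in the limiting single-letter distribution; and the vanishing of $\sum_i\mi(Z_i;Z^{i-1})$ — the warden's output becoming asymptotically memoryless — is exactly what makes the key-rate-type constraint single-letterize.

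By Fano and \eqref{eq:Error_Probability_General}, $nR\le\mi(M;Y^n)+n\epsilon_n$. For the first term of the minimum I take the Gel'fand--Pinsker auxiliary $U_i=(M,Y^{i-1},S_{i+1}^n)$: since $M\indep S^n$, expanding $\mi(M;Y^n)-\mi(M;S^n)$ and applying the Csisz\'ar identity gives $nR\le\sum_i[\mi(U_i;Y_i)-\mi(U_i;S_i)]+n\epsilon_n$. For the second term I also carry the warden's past, setting $V_i=(M,Z^{i-1},S_{i+1}^n)$; re-running the expansion while retaining $Z^{i-1}$ in the conditioning — applying the Csisz\'ar identity to the pair $(Y_i,Z_i)$ so that the conditioning variable stays fixed, then splitting the resulting terms apart and discarding the $\mi(Z_i;Z^{i-1})$-type cross terms via the covertness fact above — yields $nR\le\sum_i[\mi(U_i,V_i;Y_i)-\mi(U_i;S_i|V_i)]+o(n)$.

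For the constraint I expand $\mi(M;Z^n)$ by the same device as the Gel'fand--Pinsker argument but with $Z$ in place of $Y$: $\mi(M;Z^n)=\sum_i[\mi(V_i;Z_i)-\mi(V_i;S_i)]-\sum_i\mi(Z_i;Z^{i-1})$. Since $\mi(M;Z^n)\le\ent(M)=nR$ and $\sum_i\mi(Z_i;Z^{i-1})=o(1)$, this gives $\tfrac1n\sum_i[\mi(V_i;Z_i)-\mi(V_i;S_i)]\le R+o(1)$. I then introduce a time-sharing variable $Q\sim\mathrm{Unif}\,\intseq{1}{n}$ independent of everything, put $U=(U_Q,Q)$, $V=(V_Q,Q)$, $S=S_Q$, etc., and invoke the functional-representation lemma to write the encoder kernel $P_{X|U,S}$ as $X=X(U,S,W)$ with $W$ independent of $(U,S)$, folding $W$ into $U$; this makes $X$ a function of $(U,S)$ and, because the channel outputs depend on the auxiliaries only through $(X,S)$, also yields the Markov chain $V-(U,S)-(Y,Z)$ required by $\calD$. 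A subsequence along which the finitely many single-letter functionals converge then produces a distribution in $\calD$ with $P_Z=Q_0$ and $R\le\min\{\mi(U;Y)-\mi(U;S),\,\mi(U,V;Y)-\mi(U;S|V)\}$; the cardinality bounds $\max\{\card{\calU},\card{\calV}\}\le\card{\calX}+3$ follow from the Fenchel--Eggleston--Carath\'{e}odory theorem applied to the $P_S$-marginal, the entries of $P_Z$, and the three mutual-information functionals in \eqref{eq:finalregion_CSIT_NC}--\eqref{eq:finalconstraint_CSIT_NC}.

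I expect the second term of the minimum to be the crux. The obstruction is that $Z^{i-1}$, the natural warden-side history, depends on the time index, so the Csisz\'ar sum identity cannot be applied directly to $S_i$ and $Y_i$ with $Z^{i-1}$ in the conditioning; one must instead work with the pair $(Y_i,Z_i)$ (or condition jointly on $(Y^{i-1},Z^{i-1})$) and argue that the extra terms are precisely of the $\mi(Z_i;Z^{i-1})$ form killed by covertness. Interlocked with this is the need to verify that, after the functional-representation step, none of the three inequalities is loosened and that the same pair $(U,V)$ works in all of them — i.e.\ that the residual encoder randomness $W$ folded into $U$ contributes nonnegatively (or negligibly) to $\mi(U;Y)-\mi(U;S)$, $\mi(U,V;Y)-\mi(U;S|V)$, and $\mi(V;Z)-\mi(V;S)$ simultaneously. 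Getting all of this to close for a single choice of auxiliaries is the main technical burden.
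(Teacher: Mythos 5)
Your first min-term bound, the lower bound $\ent(M)\geq\mi(M;Z^n)$ used for the constraint, and the single-letterization all track the paper's proof exactly: the paper takes $U_i=(M,Y^{i-1},S_{i+1}^n)$ and $V_i=(M,Z^{i-1},S_{i+1}^n)$, absorbs $\sum_i\mi(Z_i;Z^{i-1})$ via the resolvability lemma, and then time-shares and invokes Eggleston--Fenchel--Carath\'eodory. Those parts of your argument are on track.

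The second min-term is exactly the step you flagged as the crux, and your proposed fix does not close it. Applying Csisz\'ar to the pair $(Y_i,Z_i)$ with fixed conditioning $M$ gives $\sum_i\mi(S_{i+1}^n;Y_i,Z_i|M,Y^{i-1},Z^{i-1})=\sum_i\mi(Y^{i-1},Z^{i-1};S_i|M,S_{i+1}^n)$. Splitting both sides by the chain rule and isolating $\sum_i\mi(S_{i+1}^n;Y_i|M,Y^{i-1},Z^{i-1})$, which you need to lower-bound by $\sum_i\mi(U_i;S_i|V_i)=\sum_i\mi(Y^{i-1};S_i|M,S_{i+1}^n,Z^{i-1})$ up to $o(n)$, leaves a residual $\sum_i\mi(Z^{i-1};S_i|M,S_{i+1}^n)-\sum_i\mi(S_{i+1}^n;Z_i|M,Y^i,Z^{i-1})$. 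A second Csisz\'ar applied to $(S,Z)$ with conditioning $M$ converts the first sum into $\sum_i\mi(S_{i+1}^n;Z_i|M,Z^{i-1})$, so the residual equals
\begin{equation*}
-\sum_{i=1}^n\left[\mi(Y^i;Z_i|M,Z^{i-1},S_{i+1}^n)-\mi(Y^i;Z_i|M,Z^{i-1})\right],
\end{equation*}
an interaction-information quantity that is not of the form $\sum_i\mi(Z_i;Z^{i-1})$ or $\sum_i\mi(Z_i;Z_{i+1}^n)$ controlled by the resolvability lemma, has no fixed sign, and is not $o(n)$ by any covertness argument I can see. You need this residual to be $\geq -o(n)$ in order to conclude $nR\leq\sum_i[\mi(U_i,V_i;Y_i)-\mi(U_i;S_i|V_i)]+o(n)$, and ``discarding $\mi(Z_i;Z^{i-1})$-type cross terms'' does not deliver that. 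For what it is worth, the paper's own step $(b)$ in this derivation writes $\sum_i\mi(S_{i+1}^n;Y_i|M,Y^{i-1},Z^{i-1})=\sum_i\mi(Y^{i-1};S_i|M,S_{i+1}^n,Z^{i-1})$ and attributes it to Csisz\'ar--K\"orner, but with the $i$-dependent conditioning $Z^{i-1}$ this is not a direct instance of the sum identity either; the obstacle you identified is intrinsic to the approach, and a separate argument is required to control the residual above.
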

Proof details are available in Appendix~\ref{sec:Proof_Upper_Bound_For_CSIT_NC}.

\begin{corollary}
\label{thm:Capacity_Region_For_Degraded_CSIT_NC}
Let
\begin{subequations}\label{eq:finalSD_CSIT_NC_LN}
\begin{align}
\mathcal{S} &\triangleq
  \big\{R\geq0: \exists P_{S,U,X,Y,Z}\in\calD \text{ such that } R\leq \mi(U;Y)-\mi(U;S)\big\}
\label{eq:finalregion_CSIT_NC_LN}
\end{align}
where,
\begin{align}
  \calD \eqdef \left.\begin{cases}P_{S,U,X,Y,Z}:\\
P_{S,U,X,Y,Z}=Q_SP_{U|S}\indic{1}_{\big\{X=X(U,S)\big\}}W_{Y,Z|X,S}\\
P_Z=Q_0\\
\card{\calU}\leq \card{\calX}+2
\end{cases}\right\}.\label{eq:finalconstraint_CSIT_NC_LN}
\end{align}
\end{subequations}
The covert capacity of the less noisy channel $W_{Y,Z|X,S}$ \big(i.e. $\mi(U;Y)\geq \mi(U;Z)$\big) with \ac{CSI} available non-causally only at the transmitter is,
\begin{align}
C_{\mbox{\scriptsize\rm NC-T}} =\max\{x:x\in\mathcal{S}\}.
\label{eq:Upper_Bound_For_CSIT_NC_LN}
\end{align}
\end{corollary}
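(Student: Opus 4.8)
The plan is to pin down $C_{\mbox{\scriptsize\rm NC-T}}$ by sandwiching it between the general outer bound of Theorem~\ref{thm:Upper_Bound_For_CSIT_NC} and the keyless inner bound of Theorem~\ref{thm:Simple_Acivable_Region_CSIT_NC}, and to show that under the less-noisy hypothesis both of these regions collapse to the set $\mathcal{S}$ in \eqref{eq:finalSD_CSIT_NC_LN}. Concretely, writing $\mathcal{S}'$ for the outer-bound region of Theorem~\ref{thm:Upper_Bound_For_CSIT_NC} and $\mathcal{S}''$ for the inner-bound region of Theorem~\ref{thm:Simple_Acivable_Region_CSIT_NC}, I will argue $\mathcal{S}'\subseteq\mathcal{S}$ and $\max\{x:x\in\mathcal{S}\}\le\max\{x:x\in\mathcal{S}''\}+o(1)$, whence $\max\{x:x\in\mathcal{S}\}\le C_{\mbox{\scriptsize\rm NC-T}}\le\max\{x:x\in\mathcal{S}'\}\le\max\{x:x\in\mathcal{S}\}$ and all three agree.

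For the converse, take any $R\in\mathcal{S}'$ with witnessing distribution $P_{S,U,V,X,Y,Z}\in\calD'$; by \eqref{eq:finalregion_CSIT_NC} we have in particular $R\le\mi(U;Y)-\mi(U;S)$. Now marginalise $V$ out. The reduced joint $P_{S,U,X,Y,Z}$ still factorises as $Q_S P_{U|S}\indic{1}_{\{X=X(U,S)\}}W_{Y,Z|X,S}$, still satisfies $P_Z=Q_0$, and the quantity $\mi(U;Y)-\mi(U;S)$ is unchanged because it depends only on this marginal; since $\calD$ in \eqref{eq:finalconstraint_CSIT_NC_LN} imposes nothing else besides a cardinality bound, re-deriving $\card{\calU}\le\card{\calX}+2$ by the same Fenchel--Eggleston--Carathéodory argument used in the proof of Theorem~\ref{thm:Simple_Acivable_Region_CSIT_NC} places the reduced distribution in $\calD$ with associated rate at least $R$. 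Hence $\mathcal{S}'\subseteq\mathcal{S}$ and $C_{\mbox{\scriptsize\rm NC-T}}\le\max\{x:x\in\mathcal{S}\}$. Note this direction does not invoke the less-noisy property at all.

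For the achievability I invoke Theorem~\ref{thm:Simple_Acivable_Region_CSIT_NC}, whose region $\mathcal{S}''$ is precisely $\mathcal{S}$ of the corollary together with the extra requirement $\mi(U;Y)>\mi(U;Z)$ on the admissible distributions (the factorisation, the constraint $P_Z=Q_0$, the cardinality bound, and the rate bound $R\le\mi(U;Y)-\mi(U;S)$ all coincide). This is where the hypothesis enters: for a less-noisy channel $\mi(U;Y)\ge\mi(U;Z)$ holds for every admissible $P_{U|S}$, so $\calD$ and the feasible set of Theorem~\ref{thm:Simple_Acivable_Region_CSIT_NC} differ only on the boundary locus $\{\mi(U;Y)=\mi(U;Z)\}$. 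Since $P_{U|S}\mapsto\mi(U;Y)-\mi(U;S)$ is continuous and Theorem~\ref{thm:Simple_Acivable_Region_CSIT_NC} already guarantees every rate strictly below $\mi(U;Y)-\mi(U;S)$ is achievable for each admissible distribution, it suffices to approach a maximiser $P^{\dagger}$ of $\mathcal{S}$ by distributions that in addition satisfy the strict inequality. Generically a small perturbation of $P^{\dagger}_{U|S}$ within the affine slice $\{P_{U|S}:P_Z=Q_0\}$ lifts the (nonnegative, hence minimised at $P^{\dagger}$) quantity $\mi(U;Y)-\mi(U;Z)$ strictly above $0$ while moving the rate by only $o(1)$; in the degenerate case where $\mi(U;Y)\equiv\mi(U;Z)$ on a whole neighbourhood of $P^{\dagger}$ one checks separately that the maximal rate is $0$ and achievability is trivial. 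Combining the two directions gives the claimed equality.

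The main obstacle is exactly this last boundary argument. Theorem~\ref{thm:Simple_Acivable_Region_CSIT_NC} is stated with the \emph{strict} resolvability-type inequality $\mi(U;Y)>\mi(U;Z)$ — needed so that the local-randomness rate of the Gel'fand--Pinsker likelihood encoder can be chosen to simultaneously exceed $\mi(U;S)$ and $\mi(U;Z)$ yet stay below $\mi(U;Y)$, which forces $\mi(U;Z)<\mi(U;Y)$ — and one must verify that for a less-noisy channel nothing is lost by relaxing it to ``$\ge$''. Everything else in the proof (the marginalisation of $V$, the cardinality bookkeeping, and the identification of the three regions) is routine.
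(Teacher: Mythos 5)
Your plan (achievability via Theorem~\ref{thm:Simple_Acivable_Region_CSIT_NC}, converse via Theorem~\ref{thm:Upper_Bound_For_CSIT_NC} after marginalising out $V$, with less-noisy used to reconcile the two regions) is precisely the route the paper takes, and you are right that the converse inclusion $\mathcal{S}'\subseteq\mathcal{S}$ does not actually need the less-noisy hypothesis: dropping $V$ and forgetting constraints only enlarges the admissible set, and $R\leq\mi(U;Y)-\mi(U;S)$ is already one of the two terms in the minimum in \eqref{eq:finalregion_CSIT_NC}.

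The boundary issue you flag at the end, however, is a real gap, and your perturbation sketch does not close it. The strict inequality $\mi(U;Y)>\mi(U;Z)$ in Theorem~\ref{thm:Simple_Acivable_Region_CSIT_NC} is structural: the Fourier--Motzkin elimination over the local-randomness rate $R'$ needs $R'>\mi(U;S)$, $R+R'>\mi(U;Z)$ (soft covering) and $R+R'<\mi(U;Y)$ (packing) to be simultaneously satisfiable for some $R>0$, which forces $\mi(U;Z)<\mi(U;Y)$, whereas less-noisy as written delivers only $\geq$. Your remedy has two branches, perturb into the strict region or else ``check separately that the maximal rate is $0$,'' and both branches fail on the concrete channel $Y=Z=X\oplus S$, $S\sim\text{Bern}(\zeta)$, $x_0=0$: this channel is less-noisy with $\mi(U;Y)=\mi(U;Z)$ for \emph{every} admissible $U$, so there is no strictly-less-noisy distribution nearby to perturb to; and taking $U$ independent of $S$, $U\sim\text{Bern}(\zeta)$, $X=U\oplus S$ gives $P_Z=Q_0$ and $\mi(U;Y)-\mi(U;S)=\ent_b(\zeta)>0$, so the fallback claim that the supremum over $\calD$ is $0$ is also false. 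On this example neither Theorem~\ref{thm:Simple_Acivable_Region_CSIT_NC} nor Theorem~\ref{thm:Acivable_Rate_CSIT_WZ_NC} specialised with $V$ independent of $S$ produces any positive rate, so the sandwich does not close. To be fair, the paper's one-sentence proof is silent on the same point; a careful write-up should either strengthen the hypothesis to strict (or ``essentially'') less-noisy or supply a dedicated argument for the equality locus, but you have done neither.
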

\begin{proof}
The achievability is proved by using Theorem~\ref{thm:Simple_Acivable_Region_CSIT_NC} and the less noisy property of the channel. 
We can also proof the achievability by using Theorem~\ref{thm:Acivable_Rate_CSIT_WZ_NC} while generating $S$ independent of $V$ (i.e. $P_{S|U,V}=P_{S|U}$) and the less noisy property of the channel. Furthermore, the converse is proved by utilizing Theorem~\ref{thm:Upper_Bound_For_CSIT_NC} and the less noisy property of the channel.
\end{proof}

\subsection{Causal \ac{CSI}}

\begin{theorem}
\label{thm:Acivable_Rate_For_CSIT_C}
Let
\begin{subequations}\label{eq:finalSD_CSIT_C_Inner}
\begin{align}
\mathcal{S} &\triangleq
  \big\{R\geq0: \exists P_{U,V,S,X,Y,Z}\in\calD \text{ such that } R\leq \mi(U;Y)+\min\left\{0, \mi(V;Y|U) -\mi(V;S)\right\}
\label{eq:finalregion_CSIT_C_Inner}
\end{align}
where,
\begin{align}
  \calD \eqdef \left.\begin{cases}P_{U,V,S,X,Y,Z}:\\
P_{U,V,S,X,Y,Z}=P_UP_VP_{S|V}\indic{1}_{\big\{X=X(U,S)\big\}}W_{Y,Z|X,S}\\
Q_S(\cdot)=\sum\nolimits_{v\in\mathcal{V}}P_V(v)P_{S|V}(\cdot|v)\\
P_Z=Q_0\\
\mi(V;Y|U) >  \max\{\mi(V;Z),\mi(U,V;Z)-\mi(U;Y)\}\\
\card{\calU}\leq \card{\calX}+2\\
\card{\calV}\leq \card{\calX}+3
\end{cases}\right\}.\label{eq:finalconstraint_CSIT_C_Inner}
\end{align}
\end{subequations}
The covert capacity of the \ac{DMC} $W_{Y,Z|S,X}$ with causal \ac{CSI} at the transmitter is lower-bounded as
\begin{align}
C_{\mbox{\scriptsize\rm C-T}} \geq\max\{x:x\in\mathcal{S}\}.
\label{eq:Inner_Bound_For_CSIT_C}
\end{align}
\end{theorem}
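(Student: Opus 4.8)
The plan is to establish the achievability of Theorem~\ref{thm:Acivable_Rate_For_CSIT_C} by a block-Markov scheme that parallels the proof of Theorem~\ref{thm:Acivable_Rate_CSIT_WZ_NC} (non-causal \ac{CSI} at the transmitter) but replaces Gel'fand--Pinsker coding with a Shannon-strategy codebook, since the \ac{CSI} is now only causally available. Concretely, I would transmit over $B$ blocks of length $n$ each. In block $b$ the transmitter sends a fresh message $m_b$ via a codebook of $U$-sequences of rate $R_U \approx \mi(U;Y)$, and applies the symbol-by-symbol map $X_i = X(U_i,S_i)$ (the Shannon strategy) so that causal state knowledge suffices. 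Simultaneously, the transmitter uses the state sequence $S^n$ of block $b$ to generate a key $K_b$ that is handed to block $b+1$ to resolve codebook ambiguity there; the key is extracted through an auxiliary $V$-codebook, where $V \markov S$ and $P_S = \sum_v P_V(v) P_{S|V}(\cdot|v)$, so that $V^n$ plays the role of a Wyner--Ziv description of $S^n$. The $V$-codebook is binned: the transmitter finds $v^n$ jointly typical with $s^n$ (rate $\approx \mi(V;S)$ of $V$-codewords), and the \emph{bin index} of that $v^n$ (at rate $\approx \mi(V;Z)$, or more precisely $\max\{\mi(V;Z),\mi(U,V;Z)-\mi(U;Y)\}$) is superimposed onto the next block's transmission so the receiver can recover $v^n$ from $Y^n$ while the warden cannot, leaving residual key rate $\approx \min\{0,\mi(V;Y|U)-\mi(V;S)\}$ worth of freely-usable key. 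The total message rate delivered is therefore $\mi(U;Y)$ plus whatever the key surplus allows, matching \eqref{eq:finalregion_CSIT_C_Inner}.

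**Next I would** verify the two requirements of \eqref{eq:Code_Definition}. Reliability: the receiver decodes $U^n$ of each block by joint typicality with $Y^n$, which succeeds with high probability when $R_U < \mi(U;Y)$; it then decodes the $V$-bin index carried by that block, looks in the indicated bin for the $v^n$ jointly typical with $Y^n$ (conditioned on the already-decoded $U^n$), which succeeds when the bin rate exceeds $\mi(V;Z)$-type quantities but the total $V$-codebook rate is below $\mi(V;U,Y) = \mi(V;Y|U)$ (using $V \indep U$); a standard backward-decoding argument across the $B$ blocks then recovers every $m_b$. Covertness: this is where the real work is. The warden observes $Z^{nB}$, and I must show $\kld(P_{Z^{nB}} \| Q_0^{\otimes nB}) \to 0$. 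Per block, the channel resolvability / soft-covering lemma argument shows that the $U$-codebook at rate $\mi(U;Y)$, together with the state-induced $X(U,S)$ mapping and the randomness in $S^n$, makes the induced $Z^n$-distribution close to $Q_0^{\otimes n}$ in expectation over the codebooks, provided $\mi(U;Y) > \mi(U;Z)$ and the residual-key constraint $\mi(V;Y|U) > \max\{\mi(V;Z), \mi(U,V;Z) - \mi(U;Y)\}$ holds (this is exactly the covertness constraint in $\calD$). The subtlety is that the key $K_b$ fed forward into block $b+1$ is a \emph{deterministic function} of block $b$'s state, hence is correlated with the warden's observation $Z^n$ of block $b$; I must argue that $K_b$ retains enough conditional entropy given $Z^n$ — quantitatively, that $\ent(K_b \mid Z^n_{(b)}) $ is essentially its full rate — which is precisely why $V$ is introduced and why the bin rate is tuned to strip exactly the part of $V^n$ that $Z$ can see.

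**The main obstacle I anticipate** is controlling the covertness across blocks when the forwarded key is not independent of the warden's past observations: a naive union bound over blocks does not suffice because the key-generation step leaks through $Z$. The remedy, which I would carry out carefully, is a two-part estimate: (i) use the soft-covering lemma with the $U$- (and superimposed $V$-bin-) codebooks to bound $\expec_{\CodeBook}\kld(P_{Z^n \mid \text{prev. key}} \| Q_0^{\otimes n})$ in each block, treating the incoming key as a uniform index over a set whose size is small enough (rate below $\mi(U;Y) - \mi(U;Z)$ plus the $V$-slack) that the covering still succeeds; and (ii) bound the additional divergence contributed by the correlation between the \emph{outgoing} key $K_b$ and $Z^n_{(b)}$ via a conditional-resolvability / balanced-coloring argument showing $\ent(K_b \mid Z^n_{(b)}, \CodeBook) \geq n(\mi(V;Y|U) - \mi(V;Z)) - o(n)$, i.e. the Wyner--Ziv binning at the warden side leaves the key nearly uniform from the warden's viewpoint. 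Chaining these with the telescoping identity $\kld(P_{Z^{nB}} \| Q_0^{\otimes nB}) \le \sum_b \expec[\kld(P_{Z^n_{(b)} \mid Z^{n(b-1)}} \| Q_0^{\otimes n})]$ and choosing $B$ growing slowly with $n$ gives the result. The remaining ingredients — existence of a good fixed codebook by expurgation, the cardinality bounds $\card{\calU} \le \card{\calX}+2$ and $\card{\calV} \le \card{\calX}+3$ via the Fenchel--Eggleston--Carathéodory theorem, and handling the first block (which needs a negligible $O(\sqrt{n}\log n)$ or vanishing-rate key, absorbed in the limit) — are routine and I would only sketch them, referring to the analogous steps in Appendix~\ref{sec:Proof_Acivable_Rate_CSIT_WZ_NC}.
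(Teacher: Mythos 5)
Your high-level architecture (block-Markov transmission, Shannon strategy $X_i = X(U_i,S_i)$, a Wyner--Ziv description $V^n$ of the state for key extraction, and a resolvability argument for covertness, applied in each block) matches the paper's proof in Appendix~\ref{sec:Proof_Acivable_Rate_Condition_For_CSIT_C}. But the key-timing you describe is circular. You hand $K_b$ (extracted from block $b$'s state) to block $b+1$ to descramble its codeword, while \emph{also} placing the bin/reconciliation index $T_b$ into block $b+1$'s transmission so the receiver can recover $V^n$ of block $b$ and hence $K_b$. The receiver therefore needs $K_b$ to decode block $b+1$ but only obtains $K_b$ \emph{after} decoding block $b+1$. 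Backward decoding does not break this cycle either, because the last block would be scrambled by $K_{B-1}$, and $T_{B-1}$ lives inside that very block. The paper resolves this with a \emph{two-block} delay and \emph{forward} decoding: $\ell_{j-1}$ (hence $T_{j-1}$ and $K_{j-1}$) is computed from $s_{j-1}^r$ at the start of block $j$; block $j$ carries the triple $(m_j, t_{j-1})$ scrambled by $k_{j-2}$, which the receiver already has; after decoding block $j$ the receiver recovers $T_{j-1}$, finds $V^r(\ell_{j-1})$, computes $K_{j-1}$, and uses it in block $j+1$.

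Two further points of divergence from the paper are worth flagging because they affect whether the covertness bound actually closes. First, the paper selects the $V$-codeword with a \emph{likelihood encoder} \eqref{eq:Likelihood_Encoder_Causal}, not joint typicality: this is what makes the induced per-block distribution $P^{(\calC_r)}$ provably close (in relative entropy) to the idealized $\Gamma^{(\calC_r)}$ of \eqref{eq:Ideal_PMF_Causal}, which in turn factors as a soft-covering target. With a typicality encoder you lose the clean algebraic identities \eqref{eq:Gamma_components_Causal_CSIT} and the chain of approximations in \eqref{eq:Triangle_Inequality_Causaul_CSIT} does not go through as written. Second, the per-block covertness criterion the paper needs is a \emph{joint} statement: $\kld(P_{Z_j^r,K_{j-1},T_j,K_j}\|Q_Z^{\otimes r}Q_{K_{j-1}}Q_{T_j}Q_{K_j})\to 0$, following from the Markov chain $Z_j^r-(K_{j-1},T_j,K_j)-Z_{j+1}^{B,r}$ (Fig.~\ref{fig:Chaining_C_CSIT}) and the triangle-inequality decomposition \eqref{eq:Total_KLD_Bound_C_CSIT}. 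Your sketch only lower-bounds $\ent(K_b\mid Z_{(b)}^n)$, i.e., addresses near-uniformity/near-independence of the \emph{outgoing} key alone; you must also control the reconciliation index $T_j$ and the \emph{incoming} key $K_{j-1}$ simultaneously with $Z_j^r$, and it is exactly this joint control that produces the three resolvability constraints \eqref{eq:Covert_Analysis_Causal_CSIT} which, after Fourier--Motzkin with \eqref{eq:State_Dependent_Covering_Lemma_2_C}, \eqref{eq:Decoding_BM_C_CSIT}, and \eqref{eq:Error_Analysis_Encoder_CSIT}, yield the region \eqref{eq:finalregion_CSIT_C_Inner}--\eqref{eq:finalconstraint_CSIT_C_Inner}.
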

Theorem~\ref{thm:Acivable_Rate_For_CSIT_C} is proved by a block Markov encoding, Shannon strategy for sending the message according to \ac{CSI}, and Wyner-Ziv coding for secret key generation. We design a block Markov coding scheme in which the transmitter not only generates a key from $S^n$ but also selects its codeword according to $S^n$. The details of the proof are available in Appendix~\ref{sec:Proof_Acivable_Rate_Condition_For_CSIT_C}.
\begin{theorem}
\label{thm:Acivable_Rate_For_CSIT_C_Checking}
Let
\begin{subequations}\label{eq:finalSD_CSIT_C_Simple_Inner}
\begin{align}
\mathcal{S} &\triangleq
  \big\{R\geq0: \exists P_{S,U,X,Y,Z}\in\calD \text{ such that } R\leq \mi(U;Y)\big\}
\label{eq:finalregion_CSIT_C_Simple_Inner}
\end{align}
where,
\begin{align}
  \calD \eqdef \left.\begin{cases}P_{S,U,X,Y,Z}:\\
P_{S,U,X,Y,Z}=Q_SP_U\indic{1}_{\big\{X=X(U,S)\big\}}W_{Y,Z|X,S}\\
P_Z=Q_0\\
\mi(U;Y) > \mi(U;Z)\\
\card{\calU}\leq \card{\calX}+1
\end{cases}\right\}.\label{eq:finalconstraint_CSIT_C_Simple_Inner}
\end{align}
\end{subequations}
The covert capacity of the \ac{DMC} $W_{Y,Z|S,X}$ with causal \ac{CSI} at the transmitter is lower-bounded as
\begin{align}
C_{\mbox{\scriptsize\rm C-T}} \geq\max\{x:x\in\mathcal{S}\}.
\label{eq:Simple_Inner_Bound_For_CSIT_C}
\end{align}
\end{theorem}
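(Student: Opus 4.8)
The plan is to prove this lower bound by a plain Shannon‑strategy random‑coding argument that uses \emph{neither} block‑Markov chaining \emph{nor} key extraction from the state --- the causal counterpart of the scheme behind Theorem~\ref{thm:Simple_Acivable_Region_CSIT_NC}. (Equivalently, one could just specialize Theorem~\ref{thm:Acivable_Rate_For_CSIT_C} by taking the auxiliary $V$ deterministic, whereupon $\mi(V;Y|U)=\mi(V;S)=\mi(V;Z)=0$, the rate collapses to $\mi(U;Y)$, and the covertness condition collapses to $\mi(U;Y)>\mi(U;Z)$; I will give the direct argument.) I would fix $P_{S,U,X,Y,Z}\in\calD$ and first note that, since $U\indep S$ and $X=X(U,S)$, the channels from $U$ to $Y$ and to $Z$ are the memoryless effective channels $P_{Y|U}(y|u)=\sum_{s}Q_S(s)W_{Y|X,S}(y|X(u,s),s)$ and $P_{Z|U}(z|u)=\sum_{s}Q_S(s)W_{Z|X,S}(z|X(u,s),s)$, and that by construction $\sum_{u}P_U(u)P_{Z|U}(z|u)=P_Z(z)=Q_0(z)$.

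Next I would fix a target rate $R<\mi(U;Y)$ and choose a local‑randomness rate $R_\ell\geq 0$ so that $R+R_\ell\in\big(\mi(U;Z),\mi(U;Y)\big)$; this window is nonempty exactly because $\mi(U;Y)>\mi(U;Z)$, and it accommodates any $R<\mi(U;Y)$ (take $R_\ell$ slightly above $\max\{0,\mi(U;Z)-R\}$). I would draw a codebook $\{u^n(m,\ell)\}$, $m\in\intseq{1}{2^{nR}}$, $\ell\in\intseq{1}{2^{nR_\ell}}$, i.i.d.\ from $P_U^{\otimes n}$; to send $m$ the encoder samples $\ell$ uniformly and transmits $X_i=X\big(u_i(m,\ell),S_i\big)$ at time $i$ --- a valid causal encoder (it uses only $S_i$, together with $\lceil nR_\ell\rceil$ bits of private randomness at the transmitter). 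The decoder sees $Y^n$, picks the unique $(\hat m,\hat\ell)$ with $\big(u^n(\hat m,\hat\ell),Y^n\big)\in\calT_{\epsilon}^{(n)}(U,Y)$, and outputs $\hat m$. Reliability is then immediate from the packing lemma, since the codeword sees the DMC $P_{Y|U}$ and $R+R_\ell<\mi(U;Y)$, giving $\expec_{\CodeBook}[P_e^{(n)}]\to 0$.

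For covertness I would observe that, for a fixed codebook and uniform $(M,L)$, the warden's output is $P_{Z^n|\CodeBook}=2^{-n(R+R_\ell)}\sum_{m,\ell}\prod_{i=1}^{n}P_{Z|U}\big(\cdot\,\big|\,u_i(m,\ell)\big)$, which is precisely a soft‑covering (channel‑resolvability) construction over $P_{Z|U}$ with target $P_Z^{\otimes n}=Q_0^{\otimes n}$. Since the codebook rate $R+R_\ell$ strictly exceeds $\mi(U;Z)$, the soft‑covering lemma provides $\gamma>0$ with $\expec_{\CodeBook}\big[\norm[1]{P_{Z^n|\CodeBook}-Q_0^{\otimes n}}\big]\leq 2^{-n\gamma}$ for large $n$; and because $Q_0$ has full support, every atom of $Q_0^{\otimes n}$ has mass $\geq\mu_0^{\,n}$ with $\mu_0\eqdef\min_z Q_0(z)>0$, so Lemma~\ref{lemma:TV_KLD} yields $\kld(P_{Z^n|\CodeBook}\|Q_0^{\otimes n})\leq n\log(1/\mu_0)\norm[1]{P_{Z^n|\CodeBook}-Q_0^{\otimes n}}$ and hence $\expec_{\CodeBook}[\kld(P_{Z^n|\CodeBook}\|Q_0^{\otimes n})]\leq n\log(1/\mu_0)\,2^{-n\gamma}\to 0$. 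With both $\expec_{\CodeBook}[P_e^{(n)}]$ and $\expec_{\CodeBook}[\kld(\cdot\|\cdot)]$ vanishing, Markov's inequality and a union bound produce, for each $n$, a single codebook meeting both \eqref{eq:Error_Probability_General} and \eqref{eq:Covertness_General}; letting $R\uparrow\mi(U;Y)$ and then ranging over $P_{S,U,X,Y,Z}\in\calD$ gives $C_{\mbox{\scriptsize\rm C-T}}\geq\max\{x:x\in\mathcal S\}$. Finally I would invoke the usual Fenchel--Eggleston--Carath\'eodory argument on $\calD$ to get $\card{\calU}\leq\card{\calX}+1$, preserving the functionals that fix $P_Z$, $\mi(U;Y)$, and $\mi(U;Z)$.

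I do not expect a serious obstacle here, since the scheme is essentially the Shannon‑strategy specialization of Theorem~\ref{thm:Acivable_Rate_For_CSIT_C}. The one point that needs care is that covertness is demanded in KL divergence while soft covering natively delivers a total‑variation bound; this is handled by noting that above the $\mi(U;Z)$ threshold the total variation is \emph{exponentially} small, so the $\Theta(n)$ factor introduced by the reverse Pinsker inequality (Lemma~\ref{lemma:TV_KLD}) is swamped. The remaining details --- nonemptiness of the rate window $(\mi(U;Z),\mi(U;Y))$, causality of the Shannon‑strategy encoder, and the joint selection of a codebook that is simultaneously reliable and covert --- are routine once the gap hypothesis $\mi(U;Y)>\mi(U;Z)$ built into $\calD$ is in force.
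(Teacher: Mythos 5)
Your proof is correct and reaches the same conclusion by a genuinely different technical route from the paper's. The paper's proof (Appendix~\ref{sec:Proof_Simple_Inner_Bound_For_CSIT_C}) dispenses with the local-randomness index $\ell$ entirely: it uses a single-layer codebook $\{U^n(m)\}_{m}$ at rate $R$, shows $\expec_{C_n}[\kld(P_{Z^n|C_n}\Vert Q_Z^{\otimes n})]\to 0$ directly by invoking a channel-resolvability result stated in KL divergence (Hou and Kramer's Theorem~1) under the condition $R>\mi(U;Z)$, and pairs that with the packing lemma under $R<\mi(U;Y)$. Since the theorem only asserts a lower bound on capacity, it is enough to achieve rates in the nonempty window $(\mi(U;Z),\mi(U;Y))$ — the supremum of achievable rates then equals $\mi(U;Y)$, and no local randomness is needed. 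You instead split the codebook rate as $R+R_\ell$ with a private randomization index $\ell$, prove resolvability in \emph{total variation}, and pass to KL via the reverse Pinsker inequality (Lemma~\ref{lemma:TV_KLD}), relying on the exponential decay of the TV distance above the $\mi(U;Z)$ threshold to absorb the $O(n)$ factor. Your approach buys strictly more: it shows every rate $R<\mi(U;Y)$ (including very small ones) is achievable covertly, whereas the paper's construction only handles $R>\mi(U;Z)$; on the other hand, the paper's route is shorter because KL-resolvability is invoked as a black box and no derandomization-via-Markov argument is spelled out. Your side remark about specializing Theorem~\ref{thm:Acivable_Rate_For_CSIT_C} with deterministic $V$ is also a valid third route. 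One point worth being explicit about in your write-up: the reverse Pinsker constant you use is $\log(1/\mu_0^n)=n\log(1/\mu_0)$, which is exactly why exponential TV decay (not merely $o(1)$) is essential; you flag this correctly.
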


The details of the proof are available in Appendix~\ref{sec:Proof_Simple_Inner_Bound_For_CSIT_C}.
\begin{theorem}
\label{thm:Upper_Bound_For_CSIT_C}
Let
\begin{subequations}\label{eq:finalSD_CSIT_C}
\begin{align}
\mathcal{S} &\triangleq
  \big\{R\geq0: \exists P_{S,U,V,X,Y,Z}\in\calD \text{ such that } R\leq \mi(U;Y)\big\}
\label{eq:finalregion_CSIT_C}
\end{align}
where,
\begin{align}
  \calD \eqdef \left.\begin{cases}P_{S,U,V,X,Y,Z}:\\
P_{S,U,V,X,Y,Z}=Q_SP_VP_{U|V}\indic{1}_{\big\{X=X(U,S)\big\}}W_{Y,Z|X,S}\\
P_Z=Q_0\\
\mi(U;Y) \geq \mi(V;Z)\\
\max\{\card{\calU},\card{\calV}\}\leq \card{\calX}
\end{cases}\right\}.\label{eq:finalconstraint_CSIT_C}
\end{align}
\end{subequations}
The covert capacity of the \ac{DMC} $W_{Y,Z|S,X}$ with causal \ac{CSI} at the transmitter is upper-bounded as 
\begin{align}
C_{\mbox{\scriptsize\rm C-T}} \leq\max\{x:x\in\mathcal{S}\}.
\label{eq:Upper_Bound_For_CSIT_C}
\end{align}
\end{theorem}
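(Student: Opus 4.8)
The plan is a single-letter converse organized around the Shannon-strategy auxiliary $U_i\eqdef(M,S^{i-1})$, the ``key'' auxiliary $V_i\eqdef M$, and a time-sharing variable $Q$ uniform on $\intseq{1}{n}$ and independent of everything else, from which I form $U\eqdef(Q,U_Q)$, $V\eqdef(Q,V_Q)$, and $S\eqdef S_Q$, $X\eqdef X_Q$, $Y\eqdef Y_Q$, $Z\eqdef Z_Q$. First I would verify that these variables obey the structure demanded by $\calD$: since the message is independent of the i.i.d.\ state sequence, $(U_i,V_i)\indep S_i$; since the encoding is causal, $X_i$ is a deterministic function of $(U_i,S_i)$; and $V_i$ is a coordinate of $U_i$. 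Together these give the required factorization $Q_SP_VP_{U|V}\indic{1}_{\{X=X(U,S)\}}W_{Y,Z|X,S}$, and (in the limit, see below) $P_Z=Q_0$.

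For the rate bound I would invoke Fano's inequality: because the decoder sees only $Y^n$ and $P_e^{(n)}\to0$, $nR=\H{M}\leq\mi(M;Y^n)+n\epsilon_n$ with $\epsilon_n\to0$. The one non-routine observation is that, conditioned on $(M,S^{i-1})$, the encoder produces $X^{i-1}$ deterministically, so $(Y^{i-1},Z^{i-1})$ is a function only of the (memoryless) channel randomness at times $1,\dots,i-1$, whereas $Y_i$ is a function of $S_i$ and the channel randomness at time $i$; being independent, $Y^{i-1}\indep Y_i\mid(M,S^{i-1})$. Using this, $\mi(M;Y_i\mid Y^{i-1})\leq\H{Y_i}-\H{Y_i\mid M,S^{i-1}}=\mi(U_i;Y_i)$, so $nR\leq\sum_i\mi(U_i;Y_i)+n\epsilon_n$, and since $\frac1n\sum_i\mi(U_i;Y_i)=\mi(U_Q;Y_Q\mid Q)\leq\mi(U;Y)$ this reads $R\leq\mi(U;Y)+\epsilon_n$.

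The covertness constraint $\mi(U;Y)\geq\mi(V;Z)$ is where the real work lies. I would begin with the identity $\kld(P_{Z^n}\|Q_0^{\otimes n})=\sum_i\kld(P_{Z_i}\|Q_0)+\sum_i\mi(Z_i;Z^{i-1})$, both summands being nonnegative, so covertness forces (i) $\frac1n\sum_i\kld(P_{Z_i}\|Q_0)\to0$ and (ii) $\sum_i\mi(Z_i;Z^{i-1})\to0$. From (i) and convexity of relative entropy, $\kld(P_{Z_Q}\|Q_0)\to0$, which yields $P_Z=Q_0$ in the limit. From (ii) and the elementary bound $\mi(M;Z_i)\leq\mi(M;Z_i\mid Z^{i-1})+\mi(Z_i;Z^{i-1})$, summing gives $\sum_i\mi(M;Z_i)\leq\mi(M;Z^n)+o(n)\leq nR+o(n)$. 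Finally $\mi(V;Z)=\mi(Q;Z_Q)+\frac1n\sum_i\mi(M;Z_i)$, where $\mi(Q;Z_Q)=\frac1n\sum_i\kld(P_{Z_i}\|P_{Z_Q})\to0$ by (i) combined with reverse Pinsker (Lemma~\ref{lemma:TV_KLD}, applicable since $\calZ$ is finite and $P_{Z_Q}$ is bounded away from $0$ for large $n$), Pinsker's inequality, and Cauchy--Schwarz; thus $\mi(V;Z)\leq R+o(1)$, which together with the rate bound gives $\mi(U;Y)\geq R-\epsilon_n\geq\mi(V;Z)-o(1)$. After the standard support-lemma reduction of $\card{\calU}$ and $\card{\calV}$ and passage to a convergent subsequence, the limiting distribution lies in $\calD$ and satisfies $R\leq\mi(U;Y)$, so $R\in\mathcal S$ and $C_{\mbox{\scriptsize\rm C-T}}\leq\max\{x:x\in\mathcal S\}$.

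I expect the main obstacle to be the covertness step: the single hypothesis $\kld(P_{Z^n}\|Q_0^{\otimes n})\to0$ must do double duty, supplying both the near-independence of the $Z_i$'s (needed to replace $\sum_i\mi(M;Z_i)$ by $\mi(M;Z^n)\leq nR$) and the vanishing of the time-sharing penalty $\mi(Q;Z_Q)$, with all error terms controlled uniformly enough that the limiting distribution lands exactly in $\calD$. A secondary nuisance is carrying the causal/Shannon-strategy structure — and the marginals that enter $\mi(V;Z)$ — through the cardinality reduction so that the bound $\card{\calU}\leq\card{\calX}$ built into $\calD$ is met.
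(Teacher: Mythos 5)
Your proof is correct and follows essentially the same single-letterization skeleton as the paper's, but with one genuine point of departure: the choice of $V$. The paper sets $V_i\triangleq(M,Z^{i-1})$ and obtains $\mi(V;Z)\leq R+2\delta/n$ cleanly by chain rule on $\mi(M;Z^n)=\sum_i\mi(M;Z_i\mid Z^{i-1})\geq\sum_i\mi(M,Z^{i-1};Z_i)-\delta$ (one application of Lemma~\ref{lemma:Resolvability_Properties}), followed by a second application to move time-sharing inside. You instead take $V_i\triangleq M$, which forces you to pay the dependency penalty $\mi(Z_i;Z^{i-1})$ by the inequality $\mi(M;Z_i)\leq\mi(M;Z_i\mid Z^{i-1})+\mi(Z_i;Z^{i-1})$; in the end the bound is the same, $\mi(V;Z)\leq R+2\delta/n$, and your $V$ is a strictly smaller auxiliary (a function of $U$). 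Both choices satisfy the factorization in $\calD$ (since $\calD$ only requires $(U,V)\independent S$ and $X$ deterministic in $(U,S)$, not that $V$ be recoverable from $U$ or related to $Z$), so your version is a valid and arguably simpler single-letterization.

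Two smaller remarks. First, the claim ``(ii) $\sum_i\mi(Z_i;Z^{i-1})\to0$'' is imprecise: the covertness hypothesis $\kld(P_{Z^n}\|Q_0^{\otimes n})\leq\delta$ (with $\delta$ not assumed to vanish) gives $\sum_i\mi(Z_i;Z^{i-1})\leq\delta$, a constant; what you actually use is $\delta=o(n)$, which holds, so the conclusion is unaffected but the phrasing should be fixed. Second, your route to $\mi(Q;Z_Q)\to0$ via reverse Pinsker, Pinsker, and Cauchy--Schwarz works (it gives $O(\sqrt{\delta/n})$) but is heavier machinery than needed: the identity
\begin{align}
\frac{1}{n}\sum_{i=1}^n\kld(P_{Z_i}\|P_{Z_Q})=\frac{1}{n}\sum_{i=1}^n\kld(P_{Z_i}\|Q_0)-\kld(P_{Z_Q}\|Q_0)\leq\frac{\delta}{n}
\end{align}
gives the sharper $\mi(Q;Z_Q)\leq\delta/n$ in one line, which is precisely the content of the paper's Lemma~\ref{lemma:Resolvability_Properties} and also sidesteps the need to argue that $P_{Z_Q}$ has full support for large $n$.
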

Proof details are available in Appendix~\ref{sec:Proof_Upper_Bound_For_CSIT_C}.
\begin{corollary}
\label{thm:Capacity_Region_For_Degraded_CSIT_C}
Let
\begin{subequations}\label{eq:finalSD_CSIT_C_LN}
\begin{align}
\mathcal{S} &\triangleq
  \big\{R\geq0: \exists P_{S,U,X,Y,Z}\in\calD \text{ such that } R\leq \mi(U;Y)\big\}
\label{eq:finalregion_CSIT_C_LN}
\end{align}
where,
\begin{align}
  \calD \eqdef \left.\begin{cases}P_{S,U,X,Y,Z}:\\
P_{S,U,X,Y,Z}=Q_SP_U\indic{1}_{\big\{X=X(U,S)\big\}}W_{Y,Z|X,S}\\
P_Z=Q_0\\
\card{\calU}\leq \card{\calX}+1
\end{cases}\right\}.\label{eq:finalconstraint_CSIT_C_LN}
\end{align}
\end{subequations}
The covert capacity of the less noisy channel $W_{Y,Z|X,S}$ \big(i.e. $\mi(U;Y)\geq \mi(U;Z)$\big) with \ac{CSI} available causally only at the transmitter is,
\begin{align}
C_{\mbox{\scriptsize\rm C-T}} =\max\{x:x\in\mathcal{S}\}.
\label{eq:Upper_Bound_For_CSIT_C_LN}
\end{align}
\end{corollary}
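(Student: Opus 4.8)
The plan is to mirror the proof of Corollary~\ref{thm:Capacity_Region_For_Degraded_CSIT_NC}: obtain achievability from the simple causal inner bound of Theorem~\ref{thm:Acivable_Rate_For_CSIT_C_Checking}, obtain the converse from the causal outer bound of Theorem~\ref{thm:Upper_Bound_For_CSIT_C}, and use the less noisy hypothesis $\mi(U;Y)\geq\mi(U;Z)$ essentially only to make these two bounds coincide on the set $\calD$ of~\eqref{eq:finalconstraint_CSIT_C_LN}.

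For the converse, I would specialize Theorem~\ref{thm:Upper_Bound_For_CSIT_C}. In its admissible set the factorization $P_{S,U,V,X,Y,Z}=Q_SP_VP_{U|V}\indic{1}_{\{X=X(U,S)\}}W_{Y,Z|X,S}$ forces $S$ to be independent of $(U,V)$, so the $(S,U,X,Y,Z)$-marginal of any admissible distribution has the form $Q_SP_U\indic{1}_{\{X=X(U,S)\}}W_{Y,Z|X,S}$, still satisfies $P_Z=Q_0$, carries the same value of $\mi(U;Y)$, and obeys $\card{\calU}\leq\card{\calX}\leq\card{\calX}+1$. Dropping the auxiliary $V$ and the (now superfluous) constraint $\mi(U;Y)\geq\mi(V;Z)$ can only enlarge the feasible set, so the outer region of Theorem~\ref{thm:Upper_Bound_For_CSIT_C} is contained in the set $\mathcal{S}$ of~\eqref{eq:finalregion_CSIT_C_LN}, which yields $C_{\mbox{\scriptsize\rm C-T}}\leq\max\{x:x\in\mathcal{S}\}$; this direction does not actually invoke the less noisy assumption.

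For the achievability, I would invoke Theorem~\ref{thm:Acivable_Rate_For_CSIT_C_Checking}, whose admissible set is precisely $\calD$ of~\eqref{eq:finalconstraint_CSIT_C_LN} together with the extra requirement $\mi(U;Y)>\mi(U;Z)$. Since the channel is less noisy, every distribution of the prescribed form already satisfies $\mi(U;Y)\geq\mi(U;Z)$, so the only discrepancy between the two admissible sets is the strictness of that inequality. I would close it with the standard closure argument: let $P^\star$ maximize $\mi(U;Y)$ over the compact set $\calD$; if $\mi_{P^\star}(U;Y)>\mi_{P^\star}(U;Z)$ then $P^\star$ is admissible in Theorem~\ref{thm:Acivable_Rate_For_CSIT_C_Checking} and the rate $\mi_{P^\star}(U;Y)$ is achievable, and otherwise one perturbs $P^\star$ inside the covertness manifold $\{P_Z=Q_0\}$, varying $P_U$ and the Shannon-strategy map $X(\cdot,\cdot)$ and enlarging $\calU$ if needed, to obtain a sequence $P_n\to P^\star$ with $\mi_{P_n}(U;Y)>\mi_{P_n}(U;Z)$ and $\mi_{P_n}(U;Y)\to\mi_{P^\star}(U;Y)$, and then appeals to the previous case together with the closedness of the set of achievable rates. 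As an alternative route, and exactly as in the non-causal corollary, one may instead use Theorem~\ref{thm:Acivable_Rate_For_CSIT_C} with $S$ generated independently of $V$ (so that $\mi(V;S)=0$), which again delivers the rate $\mi(U;Y)$, the key-generation constraint reducing, by less noisiness, to the same boundary condition dealt with above.

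The step I expect to be the main obstacle is precisely this boundary case $\mi_{P^\star}(U;Y)=\mi_{P^\star}(U;Z)$ in the achievability: one must check that the optimal input can always be approximated by distributions that are \emph{strictly} admissible for Theorem~\ref{thm:Acivable_Rate_For_CSIT_C_Checking} without leaving the equality manifold $P_Z=Q_0$, and in a fully degenerate situation where $\mi(U;Y)\equiv\mi(U;Z)$ in a neighborhood of the optimum this would require a separate time-sharing argument, although this does not change the final value of the capacity.
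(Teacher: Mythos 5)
Your proposal matches the paper's own (very terse) proof: achievability from Theorem~\ref{thm:Acivable_Rate_For_CSIT_C_Checking} (or from Theorem~\ref{thm:Acivable_Rate_For_CSIT_C} with $P_{S|V}=Q_S$) together with the less noisy hypothesis, and converse from Theorem~\ref{thm:Upper_Bound_For_CSIT_C}; you are also right that the converse direction needs no less noisy hypothesis at all, since projecting any distribution in the outer-bound constraint set onto its $(S,U,X,Y,Z)$-marginal lands inside the $\calD$ of~\eqref{eq:finalconstraint_CSIT_C_LN} with the same value of $\mi(U;Y)$, and dropping the constraint $\mi(U;Y)\geq\mi(V;Z)$ can only enlarge the maximum. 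Your concern about the boundary case $\mi(U;Y)=\mi(U;Z)$ at the optimizer is a real loose end in the paper's one-line proof, but note that the time-sharing fallback you float for the fully degenerate situation $\mi(U;Y)\equiv\mi(U;Z)$ would not actually close the gap: time-sharing schemes that each sit on the equality boundary cannot manufacture the strict inequality $\mi(U;Y)>\mi(U;Z)$ that Theorem~\ref{thm:Acivable_Rate_For_CSIT_C_Checking} demands, so in that (admittedly pathological) regime the achievability of $\mi(U;Y)$ is not established by the machinery the paper invokes either.
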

\begin{proof}
The achievability is proved by using Theorem~\ref{thm:Acivable_Rate_For_CSIT_C_Checking} and the less noisy property of the channel. 
We can also proof the achievability by using Theorem~\ref{thm:Acivable_Rate_For_CSIT_C} while generating $S$ independent of $V$ (i.e. $P_{S|V}=Q_S$) and the less noisy property of the channel. Furthermore, the converse is proved by utilizing Theorem~\ref{thm:Upper_Bound_For_CSIT_C} and the less noisy property of the channel.
\end{proof}

\subsection{Strictly Causal \ac{CSI}}

The availability of strictly causal channel state information at the transmitter is useful because it provides a source of common randomness between the legitimate users, which can be used to generate a shared secret key between them to mislead the warden. Here, we only use the channel state information for key generation and not for data transmission.
\begin{theorem}
\label{thm:Strictly_Causal_State_CSIT}
Let
\begin{subequations}\label{eq:finalSD_CSIT_SC_Inner}
\begin{align}
\mathcal{S} &\triangleq
  \big\{R\geq0: \exists P_{X,V,S,Y,Z}\in\calD \text{ such that } R\leq \mi(X;Y)+\min\left\{0, \mi(V;Y|X) -\mi(V;S)\right\}
\label{eq:finalregion_CSIT_SC_Inner}
\end{align}
where,
\begin{align}
  \calD \eqdef \left.\begin{cases}P_{X,V,X,Y,Z}:\\
P_{X,V,S,Y,Z}=P_XP_VP_{S|V}W_{Y,Z|X,S}\\
Q_S(\cdot)=\sum\nolimits_{v\in\mathcal{V}}P_V(v)P_{S|V}(\cdot|v)\\
P_Z=Q_0\\
\mi(V;Y|X) >  \max\{\mi(V;Z),\mi(X,V;Z)-\mi(X;Y)\}\\
\card{\calV}\leq \card{\calX}+3
\end{cases}\right\}.\label{eq:finalconstraint_CSIT_SC_Inner}
\end{align}
\end{subequations}
The covert capacity of the \ac{DMC} $W_{Y,Z|S,X}$ with strictly causal \ac{CSI} at the transmitter is lower-bounded as
\begin{align}
C_{\mbox{\scriptsize\rm SC-T}} \geq\max\{x:x\in\mathcal{S}\}.
\label{eq:Inner_Bound_For_CSIT_SC}
\end{align}
\end{theorem}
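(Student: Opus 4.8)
The plan is to obtain Theorem~\ref{thm:Strictly_Causal_State_CSIT} as the degenerate case of the causal-\ac{CSI} achievability result of Theorem~\ref{thm:Acivable_Rate_For_CSIT_C} in which the Shannon strategy is switched off, i.e.\ $\calU=\calX$ and $X(u,s)=u$. With this choice the joint law in \eqref{eq:finalconstraint_CSIT_C_Inner} reduces to $P_{X,V,S,Y,Z}=P_XP_VP_{S|V}W_{Y,Z|X,S}$, the rate $\mi(U;Y)+\min\{0,\mi(V;Y|U)-\mi(V;S)\}$ reduces to $\mi(X;Y)+\min\{0,\mi(V;Y|X)-\mi(V;S)\}$, the covertness constraint $\mi(V;Y|U)>\max\{\mi(V;Z),\mi(U,V;Z)-\mi(U;Y)\}$ reduces to $\mi(V;Y|X)>\max\{\mi(V;Z),\mi(X,V;Z)-\mi(X;Y)\}$, and the cardinality bound $\card{\calU}\leq\card{\calX}+2$ holds trivially; hence the induced region is precisely $\mathcal{S}$ of \eqref{eq:finalSD_CSIT_SC_Inner}. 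The only genuinely new point to check is that the scheme supporting Theorem~\ref{thm:Acivable_Rate_For_CSIT_C}, stated under \emph{causal} state information, remains admissible under the more restrictive \emph{strictly causal} constraint once $X(u,s)=u$: this holds because with the trivial Shannon strategy the state is never used to form the transmitted symbol, and its sole role in the block-Markov construction is to seed---through a soft-covering encoder applied at the end of a block---the secret key inserted in the next block; that key is a function of the \emph{whole} previous-block state $S^n(b-1)$, which is available to the encoder before block $b$ begins even under strict causality.

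For concreteness I recall the construction. Transmission runs over $B$ blocks of length $n$. In block $b$ the transmitter sends one codeword of an i.i.d.\ $P_X^{\otimes n}$ codebook indexed by the triple $(m_b,\ell_{b-1},\kappa_b)$: the fresh message $m_b$ of rate $R$, a Wyner--Ziv bin index $\ell_{b-1}$ of rate $R_\ell\eqdef[\mi(V;S)-\mi(V;Y|X)]^+$, and a secret key $\kappa_b$ of rate $R_\kappa\eqdef\mi(V;Y|X)-\mi(V;Z)$, the latter two inherited from block $b-1$; here $\kappa_b$ plays the role of the resolution randomness hidden from the warden. At the end of block $b$ the transmitter applies a likelihood (soft-covering) encoder for the test channel $P_{V|S}$ to the realized state $S^n(b)$, producing a $V^n$-codeword whose index is nearly uniform over $2^{n\mi(V;S)}$ values and which is partitioned as $(\ell_b,\kappa_{b+1})$. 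The receiver, which already holds $\kappa_b$, jointly decodes $(m_b,\ell_{b-1})$ from $Y^n(b)$---reliable whenever $R+R_\ell<\mi(X;Y)$, equivalently $R\leq\mi(X;Y)+\min\{0,\mi(V;Y|X)-\mi(V;S)\}$---and then recovers the previous-block $V^n$-codeword by Wyner--Ziv decoding from $\ell_{b-1}$ and the side information $(X^n(b{-}1),Y^n(b{-}1))$, which is jointly typical with it at rate $\mi(V;Y|X)$; this yields $\kappa_b$ and closes the recursion.

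The verification breaks into the usual three pieces. Reliability is the Wyner--Ziv-plus-channel-coding argument just outlined. Key secrecy: since the key is a function of $V^n$ and $V-S-(Y,Z)$ holds given $X$, the one-way key-agreement bound with transmitter source $S^n$, receiver source $(X^n,Y^n)$ and eavesdropper source $Z^n$ delivers a nearly uniform key of rate $\mi(V;Y|X)-\mi(V;Z)$ that is nearly independent of $(Z^n,\ell)$, which is feasible exactly when $\mi(V;Y|X)>\mi(V;Z)$---the first entry of the maximum. Covertness: conditioned on the codebooks \emph{and} on the covering codeword $V^n(j_b)$ (hence on the future key $\kappa_{b+1}$), the warden's block output is the image of the rate-$(R+R_\ell+R_\kappa)$ message/bin/key codebook through the channel $P_{Z|X,V}(\cdot\,|\,\cdot,v)=\sum_s P_{S|V}(s|v)W_{Z|X,S}(\cdot\,|\,\cdot,s)$, so the soft-covering lemma forces $\|P_{Z^n(b)|\kappa_{b+1}}-Q_0^{\otimes n}\|_1\to 0$ provided $R+R_\ell+R_\kappa>\mi(X;Z|V)=\mi(X,V;Z)-\mi(V;Z)$. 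Substituting $R\leq\mi(X;Y)-R_\ell$ and $R_\kappa=\mi(V;Y|X)-\mi(V;Z)$ turns this into $\mi(V;Y|X)>\mi(X,V;Z)-\mi(X;Y)$, the second entry of the maximum, and the same step certifies that $Z^n(b)$ is not only close to $Q_0^{\otimes n}$ but also nearly independent of the key to be used next. Concatenating the $B$ blocks and converting total variation to KL divergence via the Reverse Pinsker inequality (Lemma~\ref{lemma:TV_KLD}) then gives $\kld{(P_{Z^{nB}}\|Q_0^{\otimes nB})}\to 0$ as $n,B\to\infty$, which is \eqref{eq:Covertness_General}.

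The main obstacle is the chaining, not any individual block. One must show that the key $\kappa_b$ actually used in block $b$ is, conditioned on the codebooks, jointly within vanishing total variation of a variable that is uniform and independent of the warden's \emph{entire} past $Z^{n(b-1)}$ and of all public indices revealed up to that point; only under such a guarantee does the per-block soft-covering bound compose into a bound on $\kld{(P_{Z^{nB}}\|Q_0^{\otimes nB})}$. This forces one to propagate the small total-variation deficits created by soft covering, by Wyner--Ziv binning, and by sporadic decoding errors across all $B$ blocks, to treat the first and last blocks separately since their keys cannot be refreshed from a neighbouring block, and to show that the accumulated deficit is $o(B)$ so that the per-symbol covertness metric still vanishes. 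This bookkeeping is structurally identical to that in the proof of Theorem~\ref{thm:Acivable_Rate_For_CSIT_C}, and the details follow along the same lines.
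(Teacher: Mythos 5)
Your reduction is correct, and the underlying scheme is exactly the one the paper uses: the paper's own proof of Theorem~\ref{thm:Strictly_Causal_State_CSIT} (Appendix~\ref{sec:Proof_Acivable_Rate_Condition_For_CSIT_SC}) is word-for-word the causal proof (Appendix~\ref{sec:Proof_Acivable_Rate_Condition_For_CSIT_C}) with $U$ replaced by $X$ throughout; it does not explicitly observe, as you do, that this amounts to specializing $x(u,s)=u$ in Theorem~\ref{thm:Acivable_Rate_For_CSIT_C} and that this specialization is what makes the causal construction strictly causal. The construction, rate bookkeeping (soft-covering rate $\tilde R>\mi(V;S)$, Wyner--Ziv bin rate giving $R\leq\mi(X;Y)+\min\{0,\mi(V;Y|X)-\mi(V;S)\}$, extractable key rate $\mi(V;Y|X)-\mi(V;Z)$, and covertness constraints matching the two entries of the $\max$), and the need to chain the per-block total-variation deficits through the functional-dependence Markov structure $Z_j^r-(K_{j-1},T_j,K_j)-Z_{j+1}^{B,r}$ all coincide with the paper's argument, so this is the same proof packaged as a corollary rather than reproved from scratch.
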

Proof details are available in Appendix~\ref{sec:Proof_Acivable_Rate_Condition_For_CSIT_SC}.
\begin{theorem}
\label{thm:Strictly_Causal_State_CSIT_App}
Let
\begin{subequations}\label{eq:finalSD_CSIT_SC_Simple_Inner}
\begin{align}
\mathcal{S} &\triangleq
  \big\{R\geq0: \exists P_{S,X,Y,Z}\in\calD \text{ such that } R\leq \mi(X;Y)\big\}
\label{eq:finalregion_CSIT_SC_Simple_Inner}
\end{align}
where,
\begin{align}
  \calD \eqdef \left.\begin{cases}P_{S,X,Y,Z}:\\
P_{S,X,Y,Z}=Q_SP_XW_{Y,Z|X,S}\\
P_Z=Q_0\\
\mi(X;Y) > \mi(X;Z)
\end{cases}\right\}.\label{eq:finalconstraint_CSIT_SC_Simple_Inner}
\end{align}
\end{subequations}
The covert capacity of the \ac{DMC} $W_{Y,Z|S,X}$ with strictly causal \ac{CSI} at the transmitter is lower-bounded as
\begin{align}
C_{\mbox{\scriptsize\rm SC-T}} \geq\max\{x:x\in\mathcal{S}\}.
\label{eq:Simple_Inner_Bound_For_CSIT_SC}
\end{align}
\end{theorem}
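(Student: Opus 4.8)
The plan is to prove this inner bound by a single-block channel-resolvability argument that uses neither block-Markov chaining nor key extraction from the state; the scheme in fact makes no use of the (strictly causal) CSI at all, so it is simultaneously an achievability result for the state-averaged channel seen without side information, and the window condition $\mi(X;Y)>\mi(X;Z)$ in $\calD$ plays exactly the role that the ``exceptional'' full-support hypothesis plays in keyless covert communication over a plain \ac{DMC}. Fix $P_{S,X,Y,Z}=Q_SP_XW_{Y,Z|X,S}$ in $\calD$ and write $P_{Z|X}(z|x)\eqdef\sum_{s}Q_S(s)W_{Z|X,S}(z|x,s)$ and $P_{Y|X}(y|x)\eqdef\sum_{s}Q_S(s)W_{Y|X,S}(y|x,s)$ for the state-averaged channels, so that the constraint $P_Z=Q_0$ reads $\sum_x P_X(x)P_{Z|X}(z|x)=Q_0(z)$ and the legitimate receiver (which has no CSI) effectively sees $P_{Y|X}$. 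Given any $R<\mi(X;Y)$, I would choose a randomization rate $R'\geq0$ with $R+R'$ in the open interval $\big(\mi(X;Z),\mi(X;Y)\big)$, which is nonempty precisely because $\mi(X;Y)>\mi(X;Z)$; concretely $R'=0$ whenever $R\geq\mi(X;Z)$, and $R'$ slightly above $\mi(X;Z)-R$ otherwise.

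First I would use the standard random codebook: draw $2^{n(R+R')}$ codewords $x^n(m,k)$, $m\in\intseq{1}{2^{nR}}$, $k\in\intseq{1}{2^{nR'}}$, i.i.d.\ from $P_X^{\otimes n}$. To transmit $m$ the encoder samples $k$ uniformly at random --- this is local randomness, \emph{not} a key shared with the receiver --- and sends $x^n(m,k)$; the decoder outputs the unique $m$ for which $\big(x^n(m,k),Y^n\big)\in\mathcal{T}_{\epsilon}^{(n)}(X,Y)$ for some $k$. Reliability is the usual packing argument: the transmitted codeword is jointly typical with $Y^n$ with probability tending to $1$, while the expected number of codewords carrying an incorrect message that are jointly typical with $Y^n$ is at most $2^{n(R+R')}2^{-n(\mi(X;Y)-\delta)}$, which vanishes since $R+R'<\mi(X;Y)$; that the decoder does not know $k$ is immaterial because only the message coordinate of the decoded pair must be correct.

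For covertness I would apply a soft-covering (channel-resolvability) bound \emph{in relative entropy} to the channel $P_{Z|X}$. The warden's codebook-induced output is $P_{Z^n}(z^n)=2^{-n(R+R')}\sum_{m,k}P_{Z|X}^{\otimes n}\big(z^n\,\big|\,x^n(m,k)\big)$, a Monte-Carlo approximation of $P_Z^{\otimes n}=Q_0^{\otimes n}$ built from $2^{n(R+R')}$ i.i.d.\ samples from $P_X^{\otimes n}$; since $R+R'>\mi(X;Z)$ and $Q_0$ has full support, the resolvability estimate underlying covert-communication schemes \cite{LPD_by_Resolvability} gives $\expec_{\CodeBook}\!\big[\kld\big(P_{Z^n}\|Q_0^{\otimes n}\big)\big]\to0$. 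This step is the reason the constraint $P_Z=Q_0$ must hold \emph{exactly}: any mismatch would leave a residual $n\,\kld(P_Z\|Q_0)$ that does not vanish. Averaging, $\expec_{\CodeBook}\big[P_e^{(n)}+\kld(P_{Z^n}\|Q_0^{\otimes n})\big]\to0$, so some deterministic codebook makes both terms vanish; letting $R\uparrow\mi(X;Y)$ and then taking the supremum over $P_{S,X,Y,Z}\in\calD$ yields $C_{\mbox{\scriptsize\rm SC-T}}\geq\max\{x:x\in\mathcal{S}\}$.

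The hard part will be the relative-entropy resolvability estimate: obtaining $\kld(P_{Z^n}\|Q_0^{\otimes n})\to0$ rather than only $\|P_{Z^n}-Q_0^{\otimes n}\|_1\to0$ needs the total codebook rate strictly above $\mi(X;Z)$ and uses the full-support assumption on $Q_0$ (if one only has a soft-covering bound in total variation, one can still conclude via the reverse Pinsker inequality of Lemma~\ref{lemma:TV_KLD}, but a direct relative-entropy covering lemma is cleaner). Everything else --- the packing bound, the ``decode the pair, keep the message'' reduction, and the derandomization --- is routine, and the essential content of the theorem is the observation that $\mi(X;Y)>\mi(X;Z)$ is exactly what opens a window $\big(\mi(X;Z),\mi(X;Y)\big)$ of admissible codebook rates, letting the message rate use all of $\mi(X;Y)$ while still statistically covering the warden's output.
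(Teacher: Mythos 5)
Your proposal is correct and matches the paper's own proof in essentially all respects: both treat the strictly causal CSI as unused, work on the state-averaged channels $P_{Y|X}$ and $P_{Z|X}$, draw a single i.i.d.\ codebook from $P_X^{\otimes n}$, establish covertness via soft covering in relative entropy (the paper invokes~\cite[Theorem~1]{Hou13}) with a rate exceeding $\mi(X;Z)$, and establish reliability via the packing lemma with total rate below $\mi(X;Y)$. The only cosmetic difference is that you add an explicit local randomization index $k$ of rate $R'$ to handle message rates $R<\mi(X;Z)$, whereas the paper uses the message index alone and relies on the achieved rate approaching $\mi(X;Y)$ (which automatically exceeds $\mi(X;Z)$ in $\calD$); your version is marginally more explicit about the low-rate regime but the lower bound on capacity obtained is identical.
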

Proof details are available in Appendix~\ref{sec:Proof_Simple_Inner_Bound_For_CSIT_SC}.
We now present an upper bound on the covert capacity when the \ac{CSI} is available strictly causally at the transmitter.
\begin{theorem}
\label{thm:Upper_Bound_For_CSIT_SC}
Let
\begin{subequations}\label{eq:finalSD_CSIT_SC}
\begin{align}
\mathcal{S} &\triangleq
  \big\{R\geq0: \exists P_{S,V,X,Y,Z}\in\calD \text{ such that } R\leq \mi(X;Y)\big\}
\label{eq:finalregion_CSIT_SC}
\end{align}
where,
\begin{align}
  \calD \eqdef \left.\begin{cases}P_{S,V,X,Y,Z}:\\
P_{S,V,X,Y,Z}=Q_SP_VP_{X|V}W_{Y,Z|X,S}\\
P_Z=Q_0\\
\mi(X;Y) \geq \mi(V;Z)\\
\card{\calV}\leq \card{\calX}
\end{cases}\right\}.\label{eq:finalconstraint_CSIT_SC}
\end{align}
\end{subequations}
The covert capacity of the \ac{DMC} $W_{Y,Z|S,X}$ with strictly causal \ac{CSI} at the transmitter is upper-bounded as 
\begin{align}
C_{\mbox{\scriptsize\rm SC-T}} \leq\max\{x:x\in\mathcal{S}\}.
\label{eq:Upper_Bound_For_CSIT_SC}
\end{align}
\end{theorem}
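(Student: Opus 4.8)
The plan is a weak converse: bound the rate with Fano's inequality, extract the constraint $\mi(X;Y)\ge\mi(V;Z)$ from the covertness hypothesis, and single-letterize with a time-sharing index; the argument parallels that for the causal case (Theorem~\ref{thm:Upper_Bound_For_CSIT_C}), but is simpler because with strictly causal CSI the input $X_i$ is a deterministic function of $(M,S^{i-1})$ alone, so no Shannon-strategy auxiliary is needed. Two structural facts are used throughout: (a) since $S$ is i.i.d.\ and $M\perp S^n$, the variable $S_i\sim Q_S$ is independent of $(M,S^{i-1},Y^{i-1},Z^{i-1})$, hence of $X_i$ and of any function of that vector; (b) consequently the single-letter channels $X\to Y$ and $X\to Z$ obtained by averaging $W_{Y,Z|X,S}$ over $S\sim Q_S$ are the same at every time index, and $Q_0$ is the resulting $X\to Z$ channel evaluated at $x_0$. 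Write $\delta_n\eqdef\kld(P_{Z^n}||Q_0^{\otimes n})\to0$ and let $\epsilon_n\to0$ absorb the Fano terms. For the rate, Fano and reliability give $nR\le\mi(M;Y^n)+n\epsilon_n=\sum_i\mi(M;Y_i|Y^{i-1})+n\epsilon_n$; since $X_i$ is a function of $(M,S^{i-1})$, each term obeys $\mi(M;Y_i|Y^{i-1})\le\mi(M,S^{i-1},Y^{i-1};Y_i)=\mi(X_i;Y_i)+\mi(M,S^{i-1},Y^{i-1};Y_i|X_i)$, and the residual is zero because, given $X_i$, $Y_i$ depends only on $S_i$, which by (a) is independent of $(M,S^{i-1},Y^{i-1},X_i)$. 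Hence $nR\le\sum_i\mi(X_i;Y_i)+n\epsilon_n$.

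For covertness, set $V_i\eqdef(M,Z^{i-1})$. From the identity $\kld(P_{Z^n}||Q_0^{\otimes n})=\sum_i\kld(P_{Z_i}||Q_0)+\big(\sum_i\ent(Z_i)-\ent(Z^n)\big)$, with both summands nonnegative, we get $\sum_i\ent(Z_i)-\ent(Z^n)\le\delta_n$ and $\kld(P_{Z_i}||Q_0)\le\delta_n$ for every $i$; by Pinsker's inequality (Lemma~\ref{lemma:TV_KLD}) the latter gives $\norm[1]{P_{Z_i}-Q_0}\le\sqrt{\delta_n/2}$ uniformly in $i$. Then $\sum_i\mi(V_i;Z_i)=\mi(M;Z^n)+\big(\sum_i\ent(Z_i)-\ent(Z^n)\big)\le nR+\delta_n$, using $\mi(M;Z^n)\le\ent(M)=nR$, and combining with the rate bound, $\tfrac1n\sum_i\mi(V_i;Z_i)\le\tfrac1n\sum_i\mi(X_i;Y_i)+\epsilon_n+\delta_n/n$.

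Now single-letterize: let $Q$ be uniform on $\intseq{1}{n}$, independent of the code, and set $X\eqdef X_Q$, $Y\eqdef Y_Q$, $Z\eqdef Z_Q$, $S\eqdef S_Q$, $V\eqdef(V_Q,Q)$. Facts (a)--(b) yield $P_{S,V,X,Y,Z}=Q_SP_VP_{X|V}W_{Y,Z|X,S}$ (in particular $S\perp(V,X)$ and $(Y,Z)-(X,S)-V$) and $\norm[1]{P_Z-Q_0}\le\sqrt{\delta_n/2}$. Because the $X\to Y$ channel does not depend on the time index, $\mi(X;Y)$ is a concave function of $P_X$, so $\mi(X;Y)\ge\mi(X;Y|Q)=\tfrac1n\sum_i\mi(X_i;Y_i)\ge R-\epsilon_n$. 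Likewise $\mi(V;Z)=\mi(Q;Z_Q)+\tfrac1n\sum_i\mi(V_i;Z_i)$, where $\mi(Q;Z_Q)=\ent(Z_Q)-\tfrac1n\sum_i\ent(Z_i)$ tends to $0$ since each $P_{Z_i}$, hence $P_{Z_Q}$, lies within $\sqrt{\delta_n/2}$ of $Q_0$ and entropy is continuous on the finite alphabet $\calZ$; thus $\mi(V;Z)\le\mi(X;Y)+\epsilon_n+\delta_n/n+o(1)$. A Fenchel--Eggleston--Carath\'eodory argument then replaces $V$ by an auxiliary on an alphabet of size at most $\card{\calX}$ while preserving $P_X$ (hence $\mi(X;Y)$ and $P_Z$) and $\ent(Z|V)$ (hence $\mi(V;Z)$), which accounts for the bound $\card{\calV}\le\card{\calX}$ ($\card{\calX}-1$ constraints for $P_X$ plus one for $\ent(Z|V)$). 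Finally, for any achievable $R$, carry out the above for a sequence of codes; after the cardinality reduction the joint distributions live in a fixed compact simplex, so a convergent subsequence has a limit in $\calD$ satisfying $R\le\mi(X;Y)$, whence $R\in\mathcal{S}$ and $C_{\mbox{\scriptsize\rm SC-T}}\le\max\mathcal{S}$.

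The crux is the double use of covertness through the single decomposition of $\kld(P_{Z^n}||Q_0^{\otimes n})$: one term, $\sum_i\ent(Z_i)-\ent(Z^n)$, must be small so that $\mi(V_i;Z_i)$ telescopes to $\mi(M;Z^n)$ (removing the correlation among the warden's symbols), while simultaneously each marginal $P_{Z_i}$ must be uniformly close to $Q_0$ so that the time-sharing variable $Q$ is asymptotically independent of $Z_Q$ and $\mi(V;Z)$ can be compared to the \emph{unconditioned} $\mi(X;Y)$. Obtaining both from the same hypothesis is what delivers exactly the stated region rather than a looser bound, and I expect it to be the main difficulty; a secondary subtlety is that $X_Q$ is not a deterministic function of $V=(M,Z^{Q-1},Q)$, so the factorization $Q_SP_VP_{X|V}W_{Y,Z|X,S}$ must be checked via the conditional independences in (a) rather than by substitution.
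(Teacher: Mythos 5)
Your proof is correct and follows essentially the same route as the paper's: Fano to bound $nR\le\sum_i\mi(X_i;Y_i)+n\epsilon_n$ using the Markov chain induced by strictly causal $X_i=x_i(M,S^{i-1})$, the auxiliary $V_i=(M,Z^{i-1})$ with $nR\ge\mi(M;Z^n)\ge\sum_i\mi(V_i;Z_i)-\delta$, and single-letterization with a uniform time-sharing index. The paper packages the two consequences of $\kld(P_{Z^n}\Vert Q_0^{\otimes n})\le\delta$ that you extract directly from the chain-rule decomposition into Lemma~\ref{lemma:Resolvability_Properties}, which gives $\sum_i\mi(Z_i;Z^{i-1})\le\delta$ and $\mi(T;Z_T)\le\delta/n$ without invoking continuity of entropy; your continuity argument for $\mi(Q;Z_Q)\to0$ is fine under the model's requirement $\delta_n\to0$, but the paper's form is slightly stronger because it also covers the case of a fixed nonvanishing $\delta$, letting $\delta/n\to0$ do the work. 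Likewise, where you invoke a compactness/subsequence argument to land in $\calD$, the paper routes this through an explicit $\epsilon$-rate region $\calS_\epsilon$ plus a ``continuity at zero'' lemma (Appendix~\ref{sec:continuity-at-zero}); these are two equivalent ways to close the same gap. Your observation that $X$ is not a deterministic function of $V$ and that the factorization $Q_SP_VP_{X|V}W_{Y,Z|X,S}$ must be verified through the conditional independences rather than by substitution is the right flag, and is consistent with the paper's choice of $P_{X|V}$ (rather than a deterministic map) in $\calD$.
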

Proof details are available in Appendix~\ref{sec:Proof_Upper_Bound_For_CSIT_SC}.
\begin{remark}[Strictly Causal \ac{CSI} Does not Increase Capacity]
The segment of the proof culminating in~\eqref{eq:Upper_Bound_Strictly_Causal_CSIT_1} implies that strictly causal \ac{CSI} does not increase the capacity of a point-to-point \ac{DMC}. This result is known (e.g., see~\cite[Page~13]{Lapidoth2010}) but to the best of our knowledge no formal proof for it exists in the literature.
\end{remark}
\begin{corollary}
\label{thm:Capacity_Region_For_Degraded_CSIT_SC}
Let
\begin{subequations}\label{eq:finalSD_CSIT_SC_LN}
\begin{align}
\mathcal{S} &\triangleq
  \big\{R\geq0: \exists P_{S,X,Y,Z}\in\calD \text{ such that } R\leq \mi(X;Y)\big\}
\label{eq:finalregion_CSIT_SC_LN}
\end{align}
where,
\begin{align}
  \calD \eqdef \left.\begin{cases}P_{S,X,Y,Z}:\\
P_{S,X,Y,Z}=Q_SP_XW_{Y,Z|X,S}\\
P_Z=Q_0
\end{cases}\right\}.\label{eq:finalconstraint_CSIT_SC_LN}
\end{align}
\end{subequations}
The covert capacity when the channel is degraded \big(i.e. $W_{Y,Z|S,X}=W_{Y|S,X}W_{Z|Y}$\big) and \ac{CSI} is available strictly causally at the transmitter is,
\begin{align}
C_{\mbox{\scriptsize\rm SC-T}} =\max\{x:x\in\mathcal{S}\}.
\label{eq:Upper_Bound_For_CSIT_SC_LN}
\end{align}
\end{corollary}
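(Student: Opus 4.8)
The plan is to establish~\eqref{eq:Upper_Bound_For_CSIT_SC_LN} by a matching achievability/converse argument, exactly parallel to the proofs of Corollaries~\ref{thm:Capacity_Region_For_Degraded_CSIT_NC} and~\ref{thm:Capacity_Region_For_Degraded_CSIT_C}: I would sandwich the set $\mathcal{S}$ of~\eqref{eq:finalregion_CSIT_SC_LN} between the general inner bound of Theorem~\ref{thm:Strictly_Causal_State_CSIT_App} and the general outer bound of Theorem~\ref{thm:Upper_Bound_For_CSIT_SC}, and use the degradedness $W_{Y,Z|S,X}=W_{Y|S,X}W_{Z|Y}$ to collapse both to that same set.

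For the converse I would start from Theorem~\ref{thm:Upper_Bound_For_CSIT_SC}. In any joint law of the form $Q_SP_VP_{X|V}W_{Y,Z|X,S}$ appearing in~\eqref{eq:finalconstraint_CSIT_SC}, the state $S$ is drawn independently of $(V,X)$ and $(Y,Z)$ is produced from $(X,S)$ alone, so after marginalizing $S$ one has $V-X-(Y,Z)$; combined with the relation $Z-Y-(X,S)$ coming from the degradedness assumption, this yields the Markov chain $V-X-Y-Z$, hence $\mi(V;Z)\leq\mi(V;Y)\leq\mi(X;Y)$. Therefore the covertness constraint $\mi(X;Y)\geq\mi(V;Z)$ in~\eqref{eq:finalconstraint_CSIT_SC} is automatically satisfied and can be dropped; once it is gone, both the objective $\mi(X;Y)$ and the constraint $P_Z=Q_0$ depend on the joint law only through the marginal $P_X$, so the auxiliary $V$ may be taken constant without loss of optimality and the outer region of Theorem~\ref{thm:Upper_Bound_For_CSIT_SC} reduces precisely to $\mathcal{S}$ in~\eqref{eq:finalregion_CSIT_SC_LN}. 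This gives $C_{\mbox{\scriptsize\rm SC-T}}\leq\max\{x:x\in\mathcal{S}\}$.

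For achievability I would invoke Theorem~\ref{thm:Strictly_Causal_State_CSIT_App}, whose region is obtained from $\mathcal{S}$ in~\eqref{eq:finalregion_CSIT_SC_LN} by adjoining the single extra requirement $\mi(X;Y)>\mi(X;Z)$. Applying the data-processing inequality to $X-Y-Z$ under $Q_SP_XW_{Y,Z|X,S}$, degradedness guarantees $\mi(X;Y)\geq\mi(X;Z)$ for every admissible $P_X$, so the inner region of Theorem~\ref{thm:Strictly_Causal_State_CSIT_App} differs from $\mathcal{S}$ only on the boundary set $\{\mi(X;Y)=\mi(X;Z)\}$. Since $\{P_X:P_Z=Q_0\}$ is convex and compact and all the mutual informations involved are continuous in $P_X$, the supremum of $\mi(X;Y)$ over the strict-inequality region equals its supremum over the full region, so every $R<\max\{x:x\in\mathcal{S}\}$ is achievable and $C_{\mbox{\scriptsize\rm SC-T}}\geq\max\{x:x\in\mathcal{S}\}$; together with the converse this proves~\eqref{eq:Upper_Bound_For_CSIT_SC_LN}. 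The step I expect to demand the most care is precisely this last one: reconciling the strict inequality $\mi(X;Y)>\mi(X;Z)$ used in the general achievable region with the non-strict characterization of $\mathcal{S}$, i.e.\ showing that the optimizing input can be approached through distributions obeying the strict inequality, and separately disposing of the degenerate case in which $\mi(X;Y)=\mi(X;Z)$ holds identically on $\{P_X:P_Z=Q_0\}$ (where the claimed capacity is either zero or is attained through the exceptional convex-combination mechanism of~\cite{LPD_over_DMC} rather than through Theorem~\ref{thm:Strictly_Causal_State_CSIT_App}).
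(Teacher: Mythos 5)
Your proof is correct and follows the same sandwich argument the paper uses: achievability from Theorem~\ref{thm:Strictly_Causal_State_CSIT_App} and converse from Theorem~\ref{thm:Upper_Bound_For_CSIT_SC}, both collapsed to $\mathcal{S}$ via the Markov chain $V-X-Y-Z$ that degradedness induces after marginalizing the state. You additionally supply the continuity/compactness argument needed to reconcile the strict inequality $\mi(X;Y)>\mi(X;Z)$ in Theorem~\ref{thm:Strictly_Causal_State_CSIT_App} with the non-strict characterization of $\mathcal{S}$, a boundary detail the paper's one-line proof leaves implicit.
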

\begin{proof}
The achievability is proved by using Theorem~\ref{thm:Strictly_Causal_State_CSIT_App} and the less noisy property of the channel. 
We can also proof the achievability by using Theorem~\ref{thm:Strictly_Causal_State_CSIT} while generating $S$ independent of $V$ (i.e. $P_{S|V}=Q_S$) and the less noisy property of the channel. Furthermore, the converse is proved by utilizing Theorem~\ref{thm:Upper_Bound_For_CSIT_SC} and the less noisy property of the channel.
\end{proof}
\begin{remark}[Cardinality Bounds]
The cardinality bounds on the auxiliary random variables in Theorems~\ref{thm:Capacity_FCSI_C} to \ref{thm:Strictly_Causal_State_CSIT} follows by a standard application of the Eggleston-Fenchel-Carath\'eodory theorem \cite[Theorem~18]{Eggleston}. Details are omitted for brevity.
\end{remark}
\section{Examples}
We provide two examples of covert communication without an external secret key, while surpassing the square-root-$n$ performance with a positive rate. 
In both of the examples the channel state is additive however in the first example the warden's channel is a degraded version of the legitimate receiver's channel but in the second example the legitimate receiver's channel is a degraded version of the warden's channel. The second example shows that our proposed coding scheme with block-Markov encoding and Wyner-Ziv encoding for secret key generation can outperform the simple approach for deriving the covert rates.
\subsection{Degraded Channel with Binary Additive State}
\begin{figure}
\centering
\includegraphics[width=3.4in]{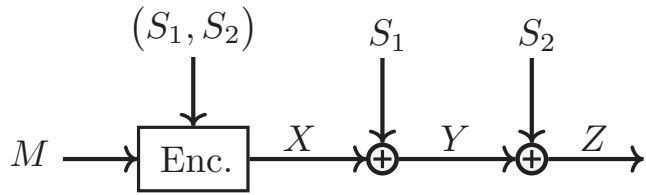}
\caption{Degraded channel with binary additive \ac{CSI} at the transmitter}
\label{fig:BSC_Additive_State_CSIT}
\end{figure}
We provide an example in which the covert capacity is positive. Consider a channel in which $X, Y, Z$ and $S=(S_1,S_2)$ are all binary, and let $S_1$, $S_2$, and $U$ be independent Bernoulli random variables with parameters $\alpha\in\intseq{1}{0.5}$, $\beta\in\intseq{1}{0.5}$, and $\lambda\in\intseq{1}{0.5}$, respectively, and let $x_0=0$ (See Fig.~\ref{fig:BSC_Additive_State_CSIT}). Here, $S_1$ and $S_2$ are the channel states of the legitimate receiver's channel and the warden's channel, respectively, and $U$ is an auxiliary random variable that represents the message. The channel state information is available causally at the Encoder and the law of the channel is as follows
\begin{align}
Y=X\oplus S_1,\label{eq:BSC_Additive_Channel_Law_CSIT_Y}\\
Z=Y\oplus S_2.\label{eq:BSC_Additive_Channel_Law_CSIT_Z}
\end{align}
\begin{proposition}
\label{thm:Achievable_Region_Binary_Additive_Binary_State}
The covert capacity of the \ac{DMC} channel depicted in Fig.~\ref{fig:BSC_Additive_State_CSIT} with causal \ac{CSI} at the transmitter is
\begin{align}
C_{\mbox{\scriptsize\rm C-T}}\mathop = \limits^{(a)}\mathop {\max }\limits_{\mathop {P_U,}\limits_{P_Z = Q_0} }\ent(U)\mathop = \limits^{(b)}\ent_b(\alpha)\,
\label{eq:Achievable_Region_Binary_Additive_Binary_State}
\end{align}where $\ent_b(\cdot)$ is binary entropy.
\end{proposition}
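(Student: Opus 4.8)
The plan is to obtain $(a)$ by matching an achievability bound for causal \ac{CSI} at the transmitter against the corresponding converse, and to read off $(b)$ from the covertness constraint. Write $a\ast b\eqdef a(1-b)+b(1-a)$ for the binary convolution of parameters. Since $x_0=0$ and $Z=X\oplus S_1\oplus S_2$, the innocent warden distribution is $Q_0=\mbox{Bern}(\alpha\ast\beta)$.

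For the achievability (the ``$\geq$'' in $(a)$), I would apply Theorem~\ref{thm:Acivable_Rate_For_CSIT_C_Checking} with $U\sim\mbox{Bern}(\lambda)$ and the Shannon strategy $X\big(u,(s_1,s_2)\big)=u\oplus s_1$, leaving $\lambda$ free for now. This cancels the legitimate state, so $Y=U$ and $Z=Y\oplus S_2=U\oplus S_2$; hence $\mi(U;Y)=\ent(U)=\ent_b(\lambda)$ while $P_Z=\mbox{Bern}(\lambda\ast\beta)$. The requirement $P_Z=Q_0$ becomes $\lambda(1-2\beta)=\alpha(1-2\beta)$, i.e.\ $\lambda=\alpha$ since $\beta\in(0,0.5)$. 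The remaining constraint $\mi(U;Y)>\mi(U;Z)$ holds strictly whenever $\ent(U)>0$: because $S_2$ is independent of $U=Y$, the chain $U-Y-Z$ is Markov through a non-trivial $\mbox{BSC}(\beta)$, so $\mi(U;Z)<\mi(U;Y)$. This yields $C_{\mbox{\scriptsize\rm C-T}}\geq\ent_b(\alpha)$; moreover, since the scheme attains $\mi(U;Y)=\ent(U)$ and $P_Z=Q_0$ leaves only $P_U=\mbox{Bern}(\alpha)$ admissible, $\max_{P_U:\,P_Z=Q_0}\ent(U)=\ent_b(\alpha)$, which is $(b)$.

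For the converse (the ``$\leq$'' in $(a)$), I would substitute the channel into Theorem~\ref{thm:Upper_Bound_For_CSIT_C}. Its auxiliary $V$ enters only through $\mi(U;Y)\geq\mi(V;Z)$, which is slack for $V$ constant, so the outer region collapses to $R\leq\max\mi(U;Y)$ over joint laws with $U\perp S$, $X=X(U,S)$, $Y=X\oplus S_1$, $Z=X\oplus S_1\oplus S_2$ and $P_Z=\mbox{Bern}(\alpha\ast\beta)$, and it remains to show this maximum equals $\ent_b(\alpha)$. When the encoder ignores the warden's state component, i.e.\ $X=X(U,S_1)$, this is immediate: then $S_2$ is independent of $Y$, so $P_Z$ is the $\mbox{Bern}(\beta)$-convolution of $P_Y$, and $P_Z=Q_0$ forces $P_Y=\mbox{Bern}(\alpha)$, whence $\mi(U;Y)\leq\ent(Y)=\ent_b(\alpha)$. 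Combined with the achievability, this closes the gap for that family of schemes.

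The hard part is the general converse, where $X$ is allowed to depend on $S_2$ (permitted here, since the transmitter has causal access to the full state $S=(S_1,S_2)$). Then $S_2$ and $Y$ become correlated and the quick bound $\mi(U;Y)\leq\mi(U;Y\,|\,S_2)$ is too weak — conditioning $S_2$ out effectively hands it to the receiver and yields only the looser $\ent_b(\alpha\ast\beta)>\ent_b(\alpha)$. The real content is a single-letter argument showing that any use of $S_2$ to shape $Y$ must be paid back through the covertness constraint on $Z=Y\oplus S_2$, so that $\mi(U;Y)\leq\ent_b(\alpha)$ nevertheless; I would carry this out by tracking the joint law of $(U,Y,S_2)$ rather than discarding $S_2$, and I expect this to be the step requiring the most care.
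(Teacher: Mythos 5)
Your achievability matches the paper's and is correct. Your converse is in fact more careful than the paper's: you apply Theorem~\ref{thm:Upper_Bound_For_CSIT_C} directly, note that $V$ can be taken constant, and honestly flag the strategies $X(U,S_1,S_2)$ that genuinely use $S_2$ as the unresolved case; the observation that conditioning $S_2$ away only yields $\ent_b\big(\alpha(1-\beta)+\beta(1-\alpha)\big)$ is exactly right.

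That unresolved case cannot be closed in the direction you hope, because the claimed equality appears to be false. Take $U\in\{0,1,2\}$ with $P_U=(p_0,p_1,p_2)$ (within the cardinality bound $\card{\calU}\leq\card{\calX}+1=3$ of Theorem~\ref{thm:Acivable_Rate_For_CSIT_C_Checking}) and the strategy $X(0,s_1,s_2)=s_1$, $X(1,s_1,s_2)=1\oplus s_1$, $X(2,s_1,s_2)=s_1\oplus s_2$, so that $Y$ equals $0$, $1$, or $S_2$ and $Z$ equals $S_2$, $1\oplus S_2$, or $0$ according to $U$. The covertness constraint $P_Z=Q_0$ is $p_0\beta+p_1(1-\beta)=\alpha(1-\beta)+\beta(1-\alpha)$, giving $p_1=\alpha+p_2\beta/(1-2\beta)$, and then $\mi(U;Y)=\ent_b(p_1+p_2\beta)-p_2\,\ent_b(\beta)$ with
\[
\left.\frac{\dd}{\dd p_2}\mi(U;Y)\right|_{p_2=0}=\log_2\frac{1-\alpha}{\alpha}\cdot\frac{2\beta(1-\beta)}{1-2\beta}-\ent_b(\beta),
\]
which is strictly positive for small $\alpha$ (e.g.\ $\alpha=\beta=0.1$ gives roughly $0.71-0.47>0$). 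The admissibility condition $\mi(U;Y)>\mi(U;Z)$ of Theorem~\ref{thm:Acivable_Rate_For_CSIT_C_Checking} holds at $p_2=0$, since $\mi(U;Z)=\ent_b\big(\alpha(1-\beta)+\beta(1-\alpha)\big)-(1-p_2)\ent_b(\beta)<\ent_b(\alpha)$ there by sub-additivity of binary entropy under convolution, and persists for small $p_2>0$; so the theorem already yields a covert rate strictly above $\ent_b(\alpha)$. This strategy uses $S_2$ to partially reverse the degradation, which is exactly why the less-noisy hypothesis of Corollary~\ref{thm:Capacity_Region_For_Degraded_CSIT_C}, on which the paper's proof leans, fails for the effective channel from $U$ to $(Y,Z)$: $Z=Y\oplus S_2$ is degraded only conditionally on $S_2$, and causal access to $S_2$ lets the encoder invert it. The paper's computation for part $(b)$, which evaluates $P_Z$ only for the fixed strategy $X=U\oplus S_1$, does not establish the asserted maximum over all admissible $X(U,S)$, and so shares the gap you identified.
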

\begin{proof}The achievability proof for $(a)$ follows from the achievability part of Corollary~\ref{thm:Capacity_Region_For_Degraded_CSIT_C} and setting $X=U\oplus S_1$. The converse part of $(a)$ follows from the converse part of Corollary~\ref{thm:Capacity_Region_For_Degraded_CSIT_C} and the fact that $\mi(U;Y)\leq\ent(U)$. To prove $(b)$ in Proposition~\ref{thm:Achievable_Region_Binary_Additive_Binary_State}, we have
\begin{align}
    Q_0(z=0)&=P(s_1\oplus s_2=0)\nonumber\\
    &=P(s_1=0,s_2=0)+P(s_1=1,s_2=1)\nonumber\\
    &=(1-\alpha)(1-\beta)+\alpha\beta.
\end{align}
The distribution induced at the output of the warden when transmitting a codeword is
\begin{align}
    P_Z(z=0)&=P(u\oplus s_2=0)\nonumber\\
    &=P(u=0,s_2=0)+P(u=1,s_2=1)\nonumber\\
    &=(1-\lambda)(1-\beta)+\lambda\beta.
\end{align}Therefore, the covertness constraint $P_Z=Q_0$ requires
\begin{align}
    \lambda=\alpha.
\end{align}
\end{proof}
\subsection{Reverse Degraded Channel with Binary Additive State}
\label{Sec:Example_Reverse_Degraded}
\begin{figure}
\centering
\includegraphics[width=3.4in]{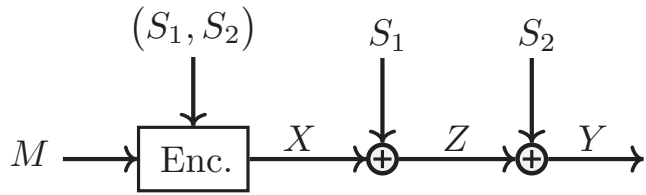}
\caption{Reverse degraded channel with binary additive \ac{CSI} at the transmitter}
\label{fig:Reverse_BSC_Additive_State_CSIT}
\end{figure}
To show the efficiency of the proposed scheme in this paper, we provide an example in which the region in Theorem~\ref{thm:Acivable_Rate_For_CSIT_C} results in a positive rate for covert communication while the region in Theorem~\ref{thm:Acivable_Rate_For_CSIT_C_Checking} results in zero rate. Consider a channel in which $X, Y, Z$ and $S=(S_1,S_2)$ are all binary, and let $S_1$, $S_2$ and $U$ be independent Bernoulli random variables with parameters $\alpha\in\big(0{\,:\,}0.5\rrbracket$, $\beta\in\big(0{\,:\,}0.5\rrbracket$, and $\lambda\in\big(0{\,:\,}0.5\rrbracket$, respectively, and let $x_0=0$ (See Fig.~\ref{fig:Reverse_BSC_Additive_State_CSIT}). Also, let $V$ be a Bernoulli random variable. Here, $S_1$ and $S_2$ are the channel states of the warden's channel and the legitimate receiver's channel, respectively, $U$ is an auxiliary random variable that represents the message, and $V$ is an auxiliary random variable that represents a description of the \ac{CSI}. The channel state information is available causally at the Encoder and the law of the channel is as follows
\begin{align}
Z=X\oplus S_1,\label{eq:Rev_BSC_Additive_Channel_Law_CSIT_Z}\\
Y=Z\oplus S_2.\label{eq:Rev_BSC_Additive_Channel_Law_CSIT_Y}
\end{align}Since for this example $\mi(U;Z)\geq\mi(U;Y)$, the achievable rate region in Theorem~\ref{thm:Acivable_Rate_For_CSIT_C_Checking} results in zero rate but Theorem~\ref{thm:Acivable_Rate_For_CSIT_C} results in the following region.
\begin{proposition}
\label{thm:Rev_Achievable_Region_Binary_Additive_Binary_State}
The covert capacity of the \ac{DMC} channel depicted in Fig.~\ref{fig:Reverse_BSC_Additive_State_CSIT} with causal \ac{CSI} at the transmitter is lower bounded as
\begin{align}
C_{\mbox{\scriptsize\rm C-T}}\geq\ent_b(\eta)-\ent_b(\beta),
\label{eq:Rev_Achievable_Region_Binary_Additive_Binary_State}
\end{align}where $\eta=\alpha\beta+(1-\alpha)(1-\beta)$.
\end{proposition}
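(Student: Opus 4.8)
The plan is to obtain \eqref{eq:Rev_Achievable_Region_Binary_Additive_Binary_State} as a special case of Theorem~\ref{thm:Acivable_Rate_For_CSIT_C} by exhibiting one admissible tuple of auxiliaries. The guiding intuition is that $Z=X\oplus S_1$ is produced \emph{before} the legitimate-side noise $S_2$ is injected, so $S_1$ is the component of the state the transmitter should cancel on the warden's link, while $S_2$ is a source of common randomness that never reaches the warden. Accordingly I would take $U\sim\mathrm{Bern}(\lambda)$ independent of $S=(S_1,S_2)$, set the Shannon-strategy input to $X=U\oplus S_1$, and choose the state description to be $V\eqdef S_2$, with $P_V=\mathrm{Bern}(\beta)$ and $P_{S|V}(s_1,s_2|v)=Q_{S_1}(s_1)\indic{1}_{\{s_2=v\}}$ (which marginalizes to $Q_S$ as required and keeps $U$ independent of $(V,S)$).

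With this choice $Z=X\oplus S_1=U$ and $Y=Z\oplus S_2=U\oplus S_2$. The innocent symbol $x_0=0$ gives $Q_0=\mathrm{Bern}(\alpha)$ (the law of $S_1$), so the covertness constraint $P_Z=Q_0$ forces $\lambda=\alpha$; then $U\oplus S_2\sim\mathrm{Bern}(1-\eta)$ and hence $\mi(U;Y)=\ent(U\oplus S_2)-\ent(S_2)=\ent_b(1-\eta)-\ent_b(\beta)=\ent_b(\eta)-\ent_b(\beta)$. For the second term in the rate expression I would note that, because given $U$ the map $S_2\mapsto U\oplus S_2=Y$ is a bijection, $\mi(V;Y|U)=\ent(S_2)=\ent_b(\beta)$, while $\mi(V;S)=\mi(S_2;S_1,S_2)=\ent(S_2)=\ent_b(\beta)$; thus $\min\{0,\mi(V;Y|U)-\mi(V;S)\}=0$ and the achievable rate collapses to $\mi(U;Y)=\ent_b(\eta)-\ent_b(\beta)$.

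It then remains to check that the tuple lies in $\calD$, i.e. the strict inequality $\mi(V;Y|U)>\max\{\mi(V;Z),\,\mi(U,V;Z)-\mi(U;Y)\}$ together with the cardinality bounds. Since $Z=U$ is independent of $S_2$, one has $\mi(V;Z)=0$ and $\mi(U,V;Z)=\ent(U)=\ent_b(\alpha)$, so the condition reduces to $\ent_b(\beta)>\ent_b(\alpha)-\ent_b(\eta)+\ent_b(\beta)$, i.e. $\ent_b(\eta)>\ent_b(\alpha)$. This holds because $1-\eta=\alpha(1-\beta)+\beta(1-\alpha)$ lies strictly between $\alpha$ and $\tfrac12$ when $\alpha,\beta\in(0,\tfrac12)$ (convolving $\mathrm{Bern}(\alpha)$ with a nondegenerate Bernoulli pushes the bias toward $\tfrac12$) and $\ent_b$ is strictly increasing on $[0,\tfrac12]$; the remaining boundary value $\alpha=\tfrac12$ then follows by taking the closure of the rate region, and the cardinality bounds hold trivially since $\card{\calU}=\card{\calV}=2$.

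The \emph{hard part} here is conceptual rather than computational: recognizing the correct split of the state, using $S_1$ for the Shannon strategy ($X=U\oplus S_1$) and $S_2$ for Wyner--Ziv key generation ($V=S_2$), after which every information quantity reduces to a binary entropy. This split is exactly what the simpler scheme of Theorem~\ref{thm:Acivable_Rate_For_CSIT_C_Checking} cannot exploit: there $U\to Z\to Y$ is a Markov chain for every admissible input (as $Y=Z\oplus S_2$ with $S_2$ fresh), so $\mi(U;Y)\leq\mi(U;Z)$ always and its constraint $\mi(U;Y)>\mi(U;Z)$ is never met, yielding rate $0$. Here, by contrast, the key extracted from $S_2$ "pays for itself" ($\mi(V;Y|U)=\mi(V;S)$) yet still leaves slack $\ent_b(\eta)-\ent_b(\alpha)>0$ to meet the covertness side-constraint.
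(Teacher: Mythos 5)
Your proposal matches the paper's proof essentially verbatim: the same instantiation of Theorem~\ref{thm:Acivable_Rate_For_CSIT_C} with $U$ an independent Bernoulli, $X=U\oplus S_1$, and $V=S_2$, the same covertness argument forcing $\lambda=\alpha$, and the same reduction of every mutual-information term to a binary entropy, yielding $\mi(U;Y)=\ent_b(\eta)-\ent_b(\beta)$ and $\mi(V;Y|U)=\mi(V;S)=\ent_b(\beta)$. The one small place you are actually more careful than the paper is in noticing that the strict constraint $\ent_b(\eta)>\ent_b(\alpha)$ degenerates to equality at $\alpha=\tfrac12$ and patching it by closure of the rate region; the paper simply asserts $\ent(S_2)-\ent(S_2|U\oplus S_2)>0$ without flagging this boundary case.
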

\begin{proof}Here we choose $X=U\oplus S_1$ therefore $Z=U$ and $Y=U\oplus S_2$. To proof the region in Proposition~\ref{thm:Rev_Achievable_Region_Binary_Additive_Binary_State} by using Theorem~\ref{thm:Acivable_Rate_For_CSIT_C}, we start with covertness constraint $P_Z=Q_0$,
\begin{align}
    Q_0(z=0)&=\Prob(s_1=0)=\alpha\label{eq:Q0_Rev_Deg},\\
    P_Z(z=0)&=\Prob(u=0)=\lambda\label{eq:QZ_Rev_Deg}.
\end{align}Therefore, the covertness constraint requires $\lambda=\alpha$. We also choose $V=S_2$ therefore, 
\begin{align}
    P_{V|S}&=P_{V|S_1,S_2}=P_{V|S_2}=\indic{1}_{\{V=S_2\}}.\label{eq:V}
\end{align}The chaining between the random variables for this example is depicted in Fig.~\ref{fig:Reverse_BSC_Additive_Chaining}.
\begin{figure}
\centering
\includegraphics[height=1.2in]{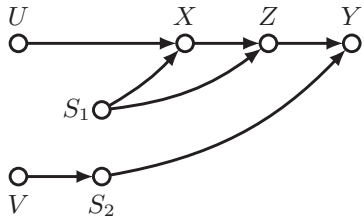}
\caption{Chaining between the random variables for reverse degraded channel with binary additive \ac{CSI}}
\label{fig:Reverse_BSC_Additive_Chaining}
\end{figure}
Now we show that the fourth condition in Theorem~\ref{thm:Acivable_Rate_For_CSIT_C} which is the following conditions is satisfied,
\begin{align}
    \mi(V;Y|U) &> \mi(V;Z),\label{eq:first_Condition}\\ 
    \mi(U,V;Y) &> \mi(U,V;Z).\label{eq:second_Condition}
\end{align}We now have,
\begin{align}
    \mi(V;Y|U)&=\ent(S_2|U)-\ent(S_2|U,Y)\nonumber\\
    &=\ent(S_2)-\ent(S_2|U,U\oplus S_2)\nonumber\\
    &=\ent(S_2)=\ent_b(\beta),\nonumber\\
    \mi(V;Z)&=\ent(S_2)-\ent(S_2|U)\mathop = \limits^{(a)}0.\nonumber
\end{align}where $(a)$ follows since $U$ and $S_2$ are independent. Therefore, the condition \eqref{eq:first_Condition} is satisfied. For the condition in \eqref{eq:second_Condition} we have,
\begin{align}
    \mi(U,V;Y)&=\ent(U,S_2)-\ent(U,S_2|Y)\nonumber\\
    &=\ent(U)+\ent(S_2)-\ent(U,S_2|U\oplus S_2)\nonumber\\
    &=\ent(U)+\ent(S_2)-\ent(S_2|U\oplus S_2),\nonumber\\
    \mi(U,V;Z)&=\mi(U,S_2;U)=\ent(U)\nonumber
\end{align}since $\ent(S_2)-\ent(S_2|U\oplus S_2)>0$ the condition in \eqref{eq:second_Condition} is also satisfied. To calculate the covert rate \eqref{eq:finalregion_CSIT_C_Inner} in Theorem~\ref{thm:Acivable_Rate_For_CSIT_C} we have,
\begin{align}
    \mi(V;Y|U)&=\ent_b(\beta),\nonumber\\
    \mi(V;S)&=\mi(S_2;S_1,S_2)\nonumber\\
    &=\ent(S_2)=\ent_b(\beta),\nonumber\\
    \mi(U;Y)&=\ent(Y)-\ent(Y|U)\nonumber\\
    &=\ent(U\oplus S_2)-\ent(U\oplus S_2|U)\nonumber\\
    &=\ent(U\oplus S_2)-\ent(S_2)\nonumber\\
    &=\ent_b(\eta)-\ent_b(\beta),\nonumber
\end{align}where $\eta=\alpha\beta+(1-\alpha)(1-\beta)$.
\begin{remark}[Covertness vs.\ Security]
  This example also captures the difference between covertness and security. Here the warden has noiseless access to the transmitted sequence and therefore it can decode the transmitted message, but since the transmitted sequence has the same statistics as the \ac{CSI} it cannot prove that communication is happening.
\end{remark}
\begin{remark}[Shared Key]
  In the examples in this section, the codebooks are generated with the same distribution as the channel state $S_1$ therefore the legitimate terminals need to have access to a shared key of rate $\epsilon$ to discriminate the codewords from the \ac{CSI}. Note that, in our achievability scheme for Theorem~\ref{thm:Acivable_Rate_For_CSIT_C} the transmitter and the receiver have access to a shared secret key of rate $\epsilon$ therefore the legitimate terminals can use this shared key in the achievability scheme to discriminate the codewords from the \ac{CSI}.
\end{remark}
\end{proof}

\section{Conclusion}
This paper studies keyless covert communication over state dependant channels, when the channel state information is available either at the transmitter alone, or at both the transmitter and receiver, but not to the adversary (warden). Our results show the feasibility of covertly communicating with a positive rate without an externally shared key between the transmitter and the receiver. This is in stark contrast with the known results showing that in the absence of channel state information, covert communication without a shared key is impossible at positive rates.

\begin{appendices}

\section{Proof of Theorem~\ref{thm:Capacity_FCSI_NC}}
\label{sec:Proof_thm_FCSI_NC}
\subsection{Achievability Proof}
Fix $P_{X|S}(x|s)$ and $\epsilon_1>\epsilon_2>0$ such that, $P_Z = Q_0$.

\emph{{Codebook Generation:}}
For every $S^n\in\mathcal{S}^n$ let $C_n\triangleq\{X^n(S^n,m)\}_{m\in\mathcal{M}}$, where $\mathcal{M}\in\intseq{1}{2^{nR}}$, be a random codebook consisting of independent random sequences each generated according to $P_{X|S}^{\otimes n}(\cdot|s_i)$. We denote a realization of $C_n$ by $\CodeBook_n\triangleq\{x^n(s^n,m)\}_{m\in\mathcal{M}}$.

\emph{{Encoding:}}
Given the channel state information $s^n$, to send message $m$ the transmitter computes $x^n(s^n,m)$ and transmits it over the channel.

For a fixed codebook $\CodeBook_n$, the induced joint distribution is
\begin{align}
    \twocolalign P_{S^n,M,X^n,Z^n}^{(\CodeBook_n)}(s^n,m,\tilde{x}^n,z^n) =Q_S^n(s^n) 2^{-nR}\indic{1}_{\{\tilde{x}^n=x^n(s^n,m)\}} W_{Z|S,X}^{\otimes n}(z^n|s^n,\tilde{x}^n).\label{eq:P_Dist_NC_FCSI}
\end{align}

\emph{Covert Analysis:} We now show  $\expec_{C_n}[\kld(P_{Z^n|C_n} || Q_Z^{\otimes n} )]\underset{n\rightarrow\infty}{\xrightarrow{\hspace{0.2in}}} 0$, and then choose $P_{X|S}$ such that it satisfies $Q_Z=Q_0$. 
Henceforth, we denote by $P^{(\CodeBook_n)}$ the distributions induced by a fixed codebook $\CodeBook_n$, and by $P_{\cdot|C_n}$ the distributions induced by a random codebook $C_n$. 
First, consider the following marginal from \eqref{eq:P_Dist_NC_FCSI},
\begin{align}
    &P_{Z^n|C_n} (z^n)= \sum\limits_{s^n}\sum\limits_{m} Q_S^{\otimes n}(s^n)2^{-nR}W_{Z|S,X}^{\otimes n}\big(z^n|s^n,X^n(s^n,m)\big).\label{eq:Gamma_ZK_NC_FCSI}  
\end{align}
We now have,
\begin{align}
& \expec_{C_n}\big[\kld\big(P_{Z^n|C_n}||Q_Z^{\otimes n}\big)\big] = \expec_{C_n}\Big[\sum\limits_{z^n} P_{Z^n|C_n}(z^n)\log \Big(\frac{P_{Z^n|C_n}(z^n)}{Q_Z^{\otimes n}(z^n)}\Big) \Big] \nonumber\\
&= \expec_{C_n}\Bigg[\sum\limits_{z^n} \sum\limits_{s^n} \sum\limits_{m} \frac{1}{2^{nR}} Q_S^{\otimes n}(s^n)W_{Z|S,X}^{\otimes n}\big(z^n|s^n,X^n(s^n,m)\big)\log \Bigg(\frac{\sum\limits_{(\tilde{s}^n,\tilde{m})}  Q_S^{\otimes n}(\tilde{s}^n)W_{Z|S,X}^{\otimes n}\big(z^n|\tilde{s}^n,X^n(\tilde{s}^n,\tilde{m})\big)}{2^{nR}Q_Z^{\otimes n}(z^n)}\Bigg)     \Bigg] \nonumber\\ 
&\mathop \le \limits^{(a)} \sum\limits_{z^n} \sum\limits_{s^n} \sum\limits_{m} \frac{1}{2^{nR}}\sum\limits_{x^n(s^n,m)} P_{S,X,Z}^{\otimes n}\big(s^n,x^n(s^n,m),z^n\big)\log \expec_{\backslash m} \Bigg[\frac{\sum\limits_{(\tilde{s}^n,\tilde{m})}  Q_S^{\otimes n}(\tilde{s}^n)W_{Z|S,X}^{\otimes n}\big(z^n|\tilde{s}^n,X^n(\tilde{s}^n,\tilde{m})\big)}{2^{nR}Q_Z^{\otimes n}(z^n)}\Bigg]        \nonumber\\
&= \sum\limits_{z^n} \sum\limits_{s^n} \sum\limits_{m} \frac{1}{2^{nR}}\sum\limits_{x^n(s^n,m)} P_{S,X,Z}^{\otimes n}\big(s^n,x^n(s^n,m),z^n\big)\nonumber\\
&\qquad\times\log  \Bigg(\frac{Q_S^{\otimes n}(s^n)W_{Z|S,X}^{\otimes n}\big(z^n|s^n,x^n(s^n,m)\big)}{2^{nR}Q_Z^{\otimes n}(z^n)} + \expec_{\backslash m}\Bigg[\frac{\sum\limits_{(\tilde{s}^n,\tilde{m})\ne(s^n,m)}Q_S^{\otimes n}(s^n)W_{Z|S,X}^{\otimes n}\big(z^n|s^n,X^n(s^n,\tilde{m})\big)}{2^{nR}Q_Z^{\otimes n}(z^n)}\Bigg]\Bigg)\nonumber\\
&\mathop \le \limits^{(b)} \sum\limits_{z^n} \sum\limits_{s^n} \sum\limits_{m} \frac{1}{2^{nR}}\sum\limits_{x^n(s^n,m)} P_{S,X,Z}^{\otimes n}\big(s^n,x^n(s^n,m),z^n\big)\nonumber\\
&\qquad\times\log  \Bigg(\frac{Q_S^{\otimes n}(s^n)W_{Z|S,X}^{\otimes n}\big(z^n|s^n,x^n(s^n,m)\big)}{2^{nR}Q_Z^{\otimes n}(z^n)} + \frac{\sum\limits_{(\tilde{s}^n,\tilde{m})}Q_S^{\otimes n}(s^n)W_{Z|S}^{\otimes n}\big(z^n|s^n\big)}{2^{nR}Q_Z^{\otimes n}(z^n)}\Bigg)\nonumber\\
&= \sum\limits_{z^n} \sum\limits_{s^n} \sum\limits_{m} \frac{1}{2^{nR}}\sum\limits_{x^n(s^n,m)} P_{S,X,Z}^{\otimes n}\big(s^n,x^n(s^n,m),z^n\big)\log  \Bigg(\frac{Q_S^{\otimes n}(s^n)W_{Z|S,X}^{\otimes n}\big(z^n|s^n,x^n(s^n,m)\big)}{2^{nR}Q_Z^{\otimes n}(z^n)} + 1\Bigg)\nonumber\\
&\triangleq \Psi_1 + \Psi_2,\label{eq:KLD_ZK_NC_FCSI}
\end{align}where $(a)$ follows from Jensen's inequality and $(b)$ follows by adding some terms to the nominator of the second term in the argument of the $\log$ function. We define $ \Psi_1$ and $\Psi_2$ as
\begin{align}
 &{\Psi _1} =  \sum\limits_{m} \frac{1}{2^{nR}}\sum\limits_{(s^n,x^n(s^n,m),z^n) \in\mathcal{T}_{\epsilon}^{(n)}} P_{S,X,Z}^{\otimes n}\big(s^n,x^n(s^n,m),z^n\big)\log  \Bigg(\frac{Q_S^{\otimes n}(s^n)W_{Z|S,X}^{\otimes n}\big(z^n|s^n,x^n(s^n,m)\big)}{2^{nR}Q_Z^{\otimes n}(z^n)}  + 1   \Bigg)     \nonumber\\
& \le \log \Bigg(\frac{2^{ - n(1 - \epsilon )\big(\ent(S)+\ent(Z|S,X)\big)}}{2^{nR}2^{ - n(1 + \epsilon )\ent(Z)}}   + 1\Bigg)\label{eq:Psi_1_NC_FCSI}\\
&{\Psi _2} =  \sum\limits_{m} \frac{1}{2^{nR}}\sum\limits_{(s^n,x^n(s^n,m),z^n) \notin\mathcal{T}_{\epsilon}^{(n)}} P_{S,X,Z}^{\otimes n}\big(s^n,x^n(s^n,m),z^n\big)\log  \Bigg(\frac{Q_S^{\otimes n}(s^n)W_{Z|S,X}^{\otimes n}\big(z^n|s^n,x^n(s^n,m)\big)}{2^{nR}Q_Z^{\otimes n}(z^n)} + 1   \Bigg)     \nonumber\\
&\le 2|S||X||Z|e^{ - n\epsilon^2 {\mu _{S,X,Z}}}n\log\Big(\frac{3}{\mu_Z} + 1\Big) \label{eq:Psi_2_NC_FCSI}
\end{align}
\sloppy where $\mu_{S,X,Z}=\min\limits_{(s,x,z)\in(\mathcal{S},\mathcal{X},\mathcal{Z})}P_{S,X,Z}(s,x,z)$ and $\mu_Z=\min\limits_{z\in\mathcal{Z}}P_Z(z)$. When $n\to\infty$ then $\Psi_2\to 0$ when $n$ grows if
\begin{align}
    &R >\mi(S,X;Z)-\ent(S).\label{eq:res_1}
\end{align}
Basic information identities yield:
\begin{align}
    \mi(X,S;Z)=\mi(X;S,Z)+\mi(S;Z)-\mi(X;S).\label{eq:MI_Manipulation}
\end{align}Substituting \eqref{eq:MI_Manipulation} into \eqref{eq:res_1} leads to
\begin{align}
    R>\mi(X;Z|S) -\ent(S|Z).\label{eq:res_NC}
\end{align}

\emph{{Decoding and Error Probability Analysis:}}
By access to the channel state $s^n$, the receiver declares that $\hat{m}=m$ if the exists a unique index $\hat{m}$ such that $(x^n(s^n,\hat{m}),y^n,s^n)\in\mathcal{T}_{\epsilon}^{(n)}(X,Y,S)$. According to the law of large numbers and the packing lemma the probability of error goes to zero as $n\rightarrow\infty$ if \cite{ElGamalKim},
\begin{align}
&R < \mi(X;Y|S).
\label{eq:Decoding_BM_NC_FCSI}
\end{align}


The region in Theorem~\ref{thm:Capacity_FCSI_NC} is derived from \eqref{eq:res_NC} and \eqref{eq:Decoding_BM_NC_FCSI}.

\subsection{Converse Proof}
We now develop an upper bound for the non-causal side information. Consider any sequence of length-$n$ codes for a state-dependent channel with channel state available non-causally at both the transmitter and the receiver such that $P_e^{(n)}\leq\epsilon_n$ and $\kld(P_{Z^n}||Q_0^{\otimes n})\leq\delta$ with $\lim_{n\to\infty}\epsilon_n=0$. Note that the converse is consistent with the model and does \emph{not} require $\delta$ to vanish.
\subsubsection{Epsilon Rate Region}
We first define a region $\mathcal{S}_{\epsilon}$ for $\epsilon>0$ that expands the region defined in~\eqref{eq:finalSD_FCSI_NC} as follows.
\begin{subequations}\label{eq:Epsilon_Rate_Region_FCSI_NC}
\begin{align}
\calS_\epsilon\eqdef \big\{R\geq 0: \exists P_{S,X,Y,Z}\in\calD_\epsilon: R\leq \mi(X;Y|S) + \epsilon \big\}\label{eq:Sepsilon_FCSI_NC}
\end{align}
where 
\begin{align}
  \calD_\epsilon = \left.\begin{cases}P_{S,X,Y,Z}:\\
P_{S,X,Y,Z}=Q_SP_VP_{X|V}W_{Y,Z|X,S}\\
\mathbb{D}\left(P_Z\Vert Q_0\right) \leq \epsilon\\
\ent(S|Z) >  \mi(X;Z|S) - \mi(X;Y|S) - 3\epsilon
\end{cases}\right\},\label{eq:Depsilon_FCSI_NC}
\end{align}
\end{subequations}where $\epsilon\triangleq\max\{\epsilon_n,\nu\geq\frac{\delta}{n}\}$. 
We next show that if a rate $R$ is achievable then $R\in\mathcal{S}_{\epsilon}$ for any $\epsilon>0$. 

For any $\epsilon_n>0$, we start by upper bounding $nR$ using standard techniques.
\begin{align}
nR &= \ent(M)\nonumber\\
&\mathop \le \limits^{(a)} \ent(M|S^n) - \ent(M|Y^n,S^n) + n\epsilon_n \nonumber\\
&= \mi(M;Y^n|S^n) + n\epsilon_n \nonumber\\
&= \sum\limits_{i = 1}^n \mi(M;Y_i|Y^{i - 1},S^n)  + n\epsilon_n \nonumber\\
&= \sum\limits_{i = 1}^n [\ent(Y_i|Y^{i - 1},S^n) - \ent(Y_i|M,Y^{i - 1},S^n)]  + n\epsilon_n \nonumber\\
&\mathop \le \limits^{(b)} \sum\limits_{i = 1}^n [\ent(Y_i|S_i) - \ent(Y_i|M,Y^{i - 1},S^n,X^n)]  + n\epsilon_n \nonumber\\
&= \sum\limits_{i = 1}^n [\ent(Y_i|S_i) - \ent(Y_i|S_i,X_i)]  + n\epsilon_n \nonumber\\
&= \sum\limits_{i = 1}^n \mi(X_i;Y_i|S_i)  + n\epsilon_n \nonumber\\
&\mathop \le \limits^{(c)} n\mi(\tilde{X};\tilde{Y}|\tilde{S}) + n\epsilon_n\\
&\mathop \le \limits^{(d)} n\mi(\tilde{X};\tilde{Y}|\tilde{S}) + n\epsilon
\label{eq:Upper_Bound_on_MRate_FCSI_NC_2}
\end{align}where
\begin{itemize}
    \item[$(a)$] follows from Fano's inequality and independence of $M$ from $S^n$;
    \item[$(b)$] holds because conditioning does not increase entropy;
    \item[$(c)$] follows from the concavity of mutual information, with the resulting random variables $\tilde{X}$, $\tilde{S}$, $\tilde{Y}$, and $\tilde{Z}$ to having the following distributions
\begin{subequations}\label{eq:Joint_Dist_Tilde_FCSI_NC}
\begin{align}
\tilde{P}_{X,S}(x,s) &\triangleq \frac{1}{n}\sum\limits_{i = 1}^n P_{X_i,S_i}(x,s), \label{eq:Joint_Dist_Tilde_FCSI_NC_1}\\
\tilde{P}_{X,S,Y,Z}(x,s,y,z) &\triangleq \tilde{P}_{X,S}(x,s)W_{Y,Z|X,S}(y,z|x,s);\label{eq:Joint_Dist_Tilde_FCSI_NC_2}
\end{align}
\end{subequations}
\item[$(d)$] follows by defining $\epsilon\triangleq\max\{\epsilon_n,\nu\geq\frac{\delta}{n}\}$.
\end{itemize}
We also have,
\begin{align}
nR &\geq \ent(M)\nonumber\\
&= \ent(M|S^n)\nonumber\\
&\ge \mi(M;Z^n|S^n)\nonumber\\
&\mathop = \limits^{(a)} \mi(M,X^n;Z^n|S^n)\nonumber\\
&\ge \mi(X^n;Z^n|S^n)\nonumber\\
&= \mi(X^n,S^n;Z^n) - \mi(S^n;Z^n)\nonumber\\
&= \sum\limits_{x^n} {\sum\limits_{s^n} {\sum\limits_{z^n} {P(x^n,s^n,z^n)\log \frac{W_{Z|X,S}^{\otimes n}(z^n|x^n,s^n)}{P(z^n)} - \ent(S^n) + \ent(S^n|Z^n)} } } \nonumber\\
&\geq \sum\limits_{x^n} {\sum\limits_{s^n} {\sum\limits_{z^n} {P(x^n,s^n,z^n)\log \frac{W_{Z|X,S}^{\otimes n}(z^n|x^n,s^n)}{P(z^n)} + \kld(P_{Z^n}||Q_0^{\otimes n}) - \ent(S^n) - \delta } } } \nonumber\\
&\ge \sum\limits_{i = 1}^n {\sum\limits_{x_i} {\sum\limits_{s_i} {\sum\limits_{z_i} {P(x_i,s_i,z_i)\log \frac{W_{Z|X,S}(z_i|x_i,s_i)}{Q_0(z_i)} - \sum\limits_{i = 1}^n \ent(S_i)  - \delta } } } } \nonumber\\
&= \sum\limits_{i = 1}^n {\kld(P_{X_i,S_i,Z_i}||P_{X_i,S_i}Q_0) - \sum\limits_{i = 1}^n \ent(S_i)  - \delta } \nonumber\\
&\mathop \ge \limits^{(b)} n\kld(\tilde{P}_{X,S,Z}||\tilde{P}_{X,S}Q_0) - n\ent(\tilde S) - \delta \nonumber\\
&= n\kld(\tilde{P}_{X,S,Z}||\tilde{P}_{X,S}\tilde{P}_Z) + \kld(\tilde{P}_Z||Q_0) - n\ent(\tilde S) - \delta \nonumber\\
&\mathop \ge \limits^{(c)} n\mi(\tilde{X},\tilde{S};\tilde{Z}) - n\ent(\tilde{S})- 2\delta\label{eq:Upper_Bound_on_KRate_FCSI_NC}
\end{align}where 
\begin{itemize}
    \item[$(a)$] follows because $X^n$ is a function of $(M,S^n)$;
    \item[$(b)$] follows from Jensen's inequality, the convexity of $\kld(\cdot||\cdot)$, and concavity of $\ent(\cdot)$;
    \item[$(c)$] follows from the definition of random variables $\tilde{X}$, $\tilde{S}$, $\tilde{Y}$, and $\tilde{Z}$ in \eqref{eq:Joint_Dist_Tilde_FCSI_NC}.
\end{itemize}
For any $\nu>0$, choosing $n$ large enough substituting \eqref{eq:MI_Manipulation} into \eqref{eq:Upper_Bound_on_KRate_FCSI_NC} ensures that
\begin{align}
    nR&\ge\mi(\tilde{X};\tilde{Z}|\tilde{S}) -\ent(\tilde{S}|\tilde{Z})-2\nu,\nonumber\\
    &\ge\mi(\tilde{X};\tilde{Z}|\tilde{S}) -\ent(\tilde{S}|\tilde{Z})-2\epsilon,
    \label{eq:Upper_Bound_on_KRate_FCSI_NC_Final}
\end{align}where the last equality follows from the definition of $\epsilon\triangleq\max\{\epsilon_n,\nu\}$. 
To show that $ \kld(P_Z||Q_0)\leq\epsilon$, note that for $n$ large enough
\begin{align}
\kld(P_Z||Q_0)=\kld(P_{\tilde{Z}}||Q_0)=\kld\Bigg(\frac{1}{n}\sum\limits_{i=1}^nP_{Z_i}\Bigg|\Bigg|Q_0\Bigg)\leq\frac{1}{n}\sum\limits_{i=1}^n\kld(P_{Z_i}||Q_0)\leq\frac{1}{n}\kld(P_{Z^n}||Q_0^{\otimes n})\leq\frac{\delta}{n}\leq\nu\leq\epsilon.\label{eq:boundKL_FCSI_NC}
\end{align}
Combining \eqref{eq:Upper_Bound_on_MRate_FCSI_NC_2} and \eqref{eq:Upper_Bound_on_KRate_FCSI_NC_Final} shows that $\forall \epsilon_n,\nu>0$, $R\leq \max\{x:x\in\mathcal{S}_{\epsilon}\}$. Therefore,
\begin{align}
  C_{\mbox{\scriptsize\rm NC-TR}} = \max\left\{x:x\in\bigcap_{\epsilon>0}\mathcal{S}_{\epsilon}\right\}.
\end{align}
\subsubsection{Continuity at Zero}
One can prove the continuity at zero of $\calS_\epsilon$ by substituting $\min\{\mi(U;Y)-\mi(U;S),\mi(U,V;Y)-\mi(U;S|V)\}$ with $\mi(X;Y|S)$ and $\mi(V;Z)-\mi(V;S)$ with $\mi(X;Z|S)-\ent(S|Z)$ in Appendix~\ref{sec:continuity-at-zero} and following the exact same arguments.

\section{Proof of Theorem~\ref{thm:Capacity_FCSI_C}}
\label{sec:Proof_Acivable_Rate_CoNC_State}
\subsection{Achievability Proof}
To prove the achievability of Theorem~\ref{thm:Capacity_FCSI_C} it is
convenient to introduce an associated channel $W_{Y,Z|U,S}$ as follows:
Let $U\in\mathcal{U}$ be an arbitrary auxiliary random variable which is independent of the state $S$, and let $x: U\times S\to X$ be a deterministic mapping subject to
$\indic{1}_{\{x=x(s,u)\}}$. According to the Shannon
strategy \cite{Shannon58state}, we define the $W_{Y,Z|U,S}$ as a channel specified by
\begin{align}
    W_{Y,Z|U,S} =\sum\limits_{x\in\mathcal{X}}\indic{1}_{\{x=x(s,u)\}}W_{Y,Z|X,S}(y,z|x,s),
\end{align}
which results in a channel with input $U$, outputs $Y$, $Z$, and state $S$. Therefore, we only focus on the coding problem for the channel $W_{Y,Z|U,S}$ for the achievability proof.

We use block-Markov coding in which $B$ independent messages are transmitted over $B$ channel blocks, each of length $r$, therefore the overall codeword length is $n=rB$ symbols. The warden's observation $Z^n$ can be described in terms of observations in individual block-Markov blocks  $Z^n=(Z_1^r,\dots,Z_B^r)$. The distribution of the warden's observation, induced by the block-Markov coding, is $P_{Z^n}\triangleq P_{Z_1^r,\dots,Z_B^r}$ and the target output distribution is $Q_Z^{\otimes n}=\prod\nolimits_{j = 1}^B Q_Z^{\otimes r}$. 
\begin{align}
\kld(P_{Z^n}||Q_Z^{\otimes n}) &= \kld(P_{Z_1^r...Z_B^r}||Q_Z^{\otimes rB}) \nonumber\\
&= \sum\limits_{j=1}^B {\kld(P_{Z_j^r|Z_{j+1}^{B,r}}||Q_Z^{\otimes r}|P_{Z_{j+1}^{B,r}})}  \nonumber\\ 
&= \sum\limits_{j = 1}^B {[\kld(P_{Z_j^r}||Q_Z^{\otimes r}) + \kld(P_{Z_j^r|Z_{j+1}^{B,r}}||P_{Z_j^r}|P_{Z_{j + 1}^{B,r}})]}  \nonumber\\ 
&= \sum\limits_{j = 1}^B {[\kld(P_{Z_j^r}||Q_Z^{\otimes r}) + \mi(Z_j^r;Z_{j+1}^{B,r})]},  \label{eq:Total_KLD_CoNC}
\end{align}
where $Z_{j + 1}^{B,r}=\{Z_{j + 1}^r,\dots Z_B^r\}$. Hence, $\kld(P_{Z^n}||Q_Z^{\otimes n})\underset{n\rightarrow\infty}{\xrightarrow{\hspace{0.2in}}} 0$, is equivalent to;
\begin{align}
\kld(P_{Z_j^r}||Q_Z^{\otimes r})\underset{r\rightarrow\infty}{\xrightarrow{\hspace{0.2in}}} 0, \qquad \mi(Z_j^r;Z_{j+1}^{B,r})\underset{r\rightarrow\infty}{\xrightarrow{\hspace{0.2in}}} 0, \qquad\qquad \forall j\in\intseq{1}{B}.
\end{align}
This requires constructing a code that approximates $Q_Z^{\otimes r}$ in each block, while eliminating the dependencies across blocks created by block-Markov coding. The random code generation is as follows:

Fix $P_U(u)$ and $\epsilon_1>\epsilon_2>0$ such that, $P_Z = Q_0$.

\emph{{Codebook Generation for Keys:}}
For each block $j\in\intseq{1}{B}$, create a function $\Phi:S_j^r\to\intseq{1}{2^{rR_K}}$ through random binning by choosing the value of $\Phi(s_j^r)$ independently and uniformly at random for every $s_j^r\in\mathcal{S}^r$. The key $k_j=\Phi(s_j^r)$ obtained in block $j\in\intseq{1}{B}$ from the state sequence $s_j^r$ is used to assist the encoder in the next block.

\emph{Codebook Generation for Messages:}
For each block $j\in\intseq{1}{B}$, let $C_r\triangleq\{U^r(m_j,k_{j-1})\}_{(m_j,k_{j-1})\in\mathcal{M}\times \mathcal{K}}$, where $\mathcal{M}\in\intseq{1}{2^{rR}}$ and $\mathcal{K}\in\intseq{1}{2^{rR_k}}$, be a random codebook consisting of independent random sequences each generated according to $P_U^{\otimes r}$. We denote a realization of $C_r$ by $\CodeBook_r\triangleq\{u^r(m_j,k_{j-1})\}_{(m_j,k_{j-1})\in\mathcal{M}\times\mathcal{K}}$.

\emph{{Encoding:}}
For the first block, we assume that the transmitter and the receiver have access to a shared secret key $k_0$, in this block to transmit $m_1$ the encoder computes $u^r(m_1,k_0)$ and transmits codeword $x^r$, where $x_i=x(u_i(m_1,k_0),s_i)$. At the end of the first block the encoder generates a key from \ac{CSI} $s_1^r$ to be used in Block 2.

For block $j\in\intseq{2}{B}$, to send message $m_j$ according to the generated key $k_{j-1}$ from the previous block, the encoder computes $u^r(m_j,k_{j-1})$ and transmits codeword $x^r$, where $x_i=x(u_i(m_j,k_{j-1}),s_i)$. Also, at the end of each  block $j\in\intseq{2}{B}$ the encoder generates a key from \ac{CSI} $s_j^r$ to be used in next block.

Define
\begin{align}
    \Upsilon _{M_j,K_{j-1},U^r,S_j^r,Z_j^r,K_j}^{(\mathcal{C}_r)}(m_j,k_{j-1},\tilde{u}^r,s_j^r,z_j^r,k_j) \triangleq& 2^{-r(R_k + R)}\indic{1}_{\{ \tilde{u}^r=u^r(m_j,k_{j-1})\}}Q_S^{\otimes r}(s_j^r)\nonumber\\
    &\times W_{Z|U,S}^{\otimes r}(z_j^r|\tilde{u}^r,s_j^r)\indic{1}_{\{ k_j=\Phi (\tilde s_j^r)\}}.\label{eq:P_Dist_CoN}
\end{align}
For a fixed codebook $\mathcal{C}_r$, the induced joint distribution over the codebook (i.e. $P^{(\mathcal{C}_r)}$) satisfies
\begin{align}
\kld\Big(P_{M_j,K_{j-1},U^r,S_j^r,Z_j^r,K_j}^{(\mathcal{C}_r)}||\Upsilon_{M_j,K_{j-1},U^r,S_j^r,Z_j^r,K_j}^{(\mathcal{C}_r)}\Big)\leq\epsilon.
\label{eq:Uniformity_Key}
\end{align} 
This intermediate distribution $\Upsilon^{(\mathcal{C}_r)}$ approximates the true distribution $P^{(\mathcal{C}_r)}$ and will be used in the sequel for bounding purposes. Expression~\eqref{eq:Uniformity_Key} holds because the main difference in $\Upsilon^{(\mathcal{C}_r)}$ is assuming the key $K_{j-1}$ is uniformly distributed, which is made (arbitrarily) nearly uniform in $P^{(\mathcal{C}_r)}$ with appropriate control of rate.

\emph{Covert Analysis:} We now show  $\expec_{C_n}[\kld(P_{Z^n|C_n} || Q_Z^{\otimes n} )]\underset{n\rightarrow\infty}{\xrightarrow{\hspace{0.2in}}} 0$, where $C_n$ is the set of all codebooks from all blocks, and then choose $P_U$ and $x(u,s)$ such that it satisfies $Q_Z=Q_0$.
\begin{figure*}
\centering
\includegraphics[width=5.75in]{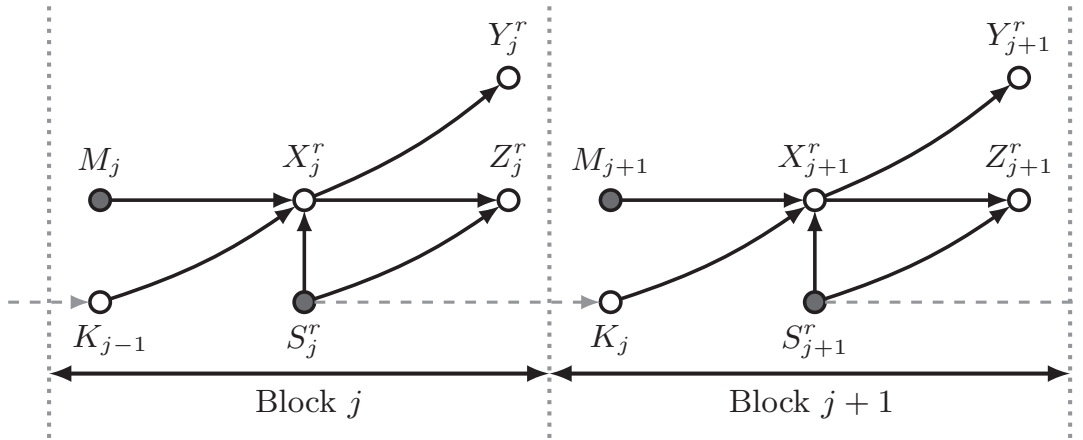}
\caption{Functional dependence graph for the block-Markov encoding scheme}
\label{fig:Chaining_CoN}
\end{figure*}%
From the expansion in \eqref{eq:Total_KLD_CoNC},  for every block~$j\in\intseq{2}{B}$,
\begin{align}
 \mi(Z_j^r;Z_{j+1}^{B,r}) &\leq \mi(Z_j^r;K_j,Z_{j+1}^{B,r}) \nonumber\\
 &\mathop=\limits^{(a)} \mi(Z_j^r;K_j), \label{eq:Indep_Across_Blocks_CoNC}
\end{align}
where $(a)$ holds because $Z_j^r-K_j-Z_{j+1}^{B,r}$ forms a Markov chain, as seen in the functional dependence graph depicted in Fig.~\ref{fig:Chaining_CoN}. Also,
\begin{align}
    \mi(Z_j^r;K_j) &= \kld(P_{Z_j^r,K_j}^{(\mathcal{C}_n)}||P_{Z_j^r}^{(\mathcal{C}_n)}P_{K_j}^{(\mathcal{C}_n)})\nonumber\\
  &\mathop\leq\limits^{(b)} \kld(P_{Z_j^r,K_j}^{(\mathcal{C}_n)}||Q_Z^{\otimes r}Q_{K_j}),\label{eq:KLD_Z3_CoNC}
\end{align}
where $Q_{K_j}$ is the uniform distribution on $\intseq{1}{2^{rR_K}}$
and $(b)$ follows from the positivity of relative entropy and
\begin{align}
  \kld(P_{Z_j^r,K_j}||P_{Z_j^r}P_{K_j}) &= \kld(P_{Z_j^r,K_j}||Q_Z^{\otimes r}Q_{K_j}) \twocolbreak - \kld(P_{K_j}||Q_{K_j}) - \kld(P_{Z_j^r}||Q_Z^{\otimes r}).
  \label{eq:KLD_Property_CoNC}
\end{align} 
Therefore by combining \eqref{eq:Total_KLD_CoNC}, \eqref{eq:KLD_Z3_CoNC}, and \eqref{eq:KLD_Property_CoNC} 
\begin{align}
    \kld(P_{Z^n|C_n}||Q_Z^{\otimes n})\leq 2\sum\limits_{j = 1}^B {\kld(P_{Z_j^r,K_j|C_r}||Q_Z^{\otimes r}Q _{K_j})}.\label{eq:Total_KLD_Bound}
\end{align}
We now proceed to bound the right hand side of \eqref{eq:Total_KLD_Bound}. First, consider the marginal
\begin{align}
    \Upsilon_{Z_j^r,K_j|C_r} (z_j^r,k_j)=\sum\limits_{m_j}\sum\limits_{k_{j-1}}\sum\limits_{s_j^r}   \frac{1}{2^{r(R + R_k)}}\twocolbreak Q_S^{\otimes r}(s_j^r) W_{Z|U,S}^{\otimes r}\big(z_j^r|U^r(m_j,k_{j-1}),s_j^r\big) \indic{1}_{\{ k_j=\Phi (s_j^r)\}}.\label{eq:Joint_Dist_ZK_CoNC}  
\end{align}
From \eqref{eq:Uniformity_Key} and the monotonicity of KL-divergence we have,
\begin{align}
    \kld\big(\Upsilon_{Z_j^r,K_j|C_r}||P_{Z_j^r,K_j|C_r}\big)\leq\epsilon.\label{eq:Marginal_Up_P_FCSI_C}
\end{align}
To bound the \ac{RHS} of \eqref{eq:Total_KLD_Bound} by using Lemma~\ref{lemma:TV_KLD} and the triangle inequality we have
\begin{align}
     \expec_{C_r}||P_{Z_j^r,K_j|C_r}-Q_Z^{\otimes r} Q_{K_j}||_1\leq \expec_{C_r}||P_{Z_j^r,K_j|C_r}-\Upsilon_{Z_j^r,K_j|C_r}||_1 + \expec_{C_r}||\Upsilon_{Z_j^r,K_j|C_r}-Q_Z^{\otimes r} Q_{K_j}||_1.\label{eq:Triangle_with_Real_Dist_1}
\end{align}
From Lemma~\ref{lemma:TV_KLD} and \eqref{eq:Marginal_Up_P_FCSI_C} the first term on the \ac{RHS} of \eqref{eq:Triangle_with_Real_Dist_1} vanishes as $r$ grows; to bound the second term by using Lemma~\ref{lemma:TV_KLD} we have,
\begin{align}
&\expec_{C_r}[\kld(\Upsilon_{Z_j^r,K_j|C_r}||Q_Z^{\otimes r} Q_{K_j})] = \expec_{C_r}\Big[\sum\limits_{z_j^r,k_j} \Upsilon_{Z_j^r,K_j|C_r} (z_j^r,k_j)\log\Big(\frac{\Upsilon_{Z_j^r,K_j|C_r}(z_j^r,k_j)}{Q_Z^{\otimes r}(z_j^r)Q_{K_j}(k_j)}\Big) \Big] \nonumber\\ 
&= \expec_{C_r}\Bigg[\sum\limits_{(z_j^r,k_j)} \sum\limits_{m_j}\sum\limits_{k_{j-1}}  \frac{1}{2^{r(R + R_k)}}\sum\limits_{s_j^r} Q_S^{\otimes r}(s_j^r)W_{Z|U,S}^{\otimes r}\big(z_j^r|U^r(m_j,k_{j-1}),s_j^r\big)\indic{1}_{\{ k_j=\Phi (s_j^r)\}} \nonumber\\ 
&\qquad\times\log \Bigg(\frac{ \sum\limits_{\tilde{m}_j}\sum\limits_{\tilde{k}_{j-1}}\sum\limits_{\tilde{s}_j^r}   Q_S^{\otimes r}(\tilde s_j^r) W_{Z|U,S}^{\otimes r}\big(z_j^r|U^r(\tilde{m}_j,\tilde{k}_{j-1}),\tilde s_j^r\big)\indic{1}_{\{ k_j=\Phi (\tilde s_j^r)\}}}{2^{r(R+ R_k - R_K)}Q_Z^{\otimes r}(z_j^r)}\Bigg)     \Bigg] \nonumber\\ 
&\mathop \le \limits^{(a)} \sum\limits_{(z_j^r,k_j)} \sum\limits_{m_j}\sum\limits_{k_{j-1}}  \frac{1}{2^{r(R+ R_k)}}\sum\limits_{s_j^r} \sum\limits_{u^r(m_j,k_{j-1})} \Upsilon^{(\mathcal{C}_r)}_{U^r,S^r,Z^r}\big(u^r(m_j,k_{j-1}),s_j^r,z_j^r\big)\expec_{\Phi (s_j^r)}\big[\indic{1}_{\{ k_j=\Phi (s_j^r)\}}\big] \nonumber\\
&\qquad\times\log \expec_{\backslash ((m_j,k_{j-1}),\Phi (s_j^r))} \Bigg[ \frac{\sum\limits_{\tilde m_j}\sum\limits_{\tilde k_{j-1}}\sum\limits_{\tilde s_j^r} Q_S^{\otimes r}(\tilde s_j^r) W_{Z|U,S}^{\otimes r}\big(z_j^r|U^r(\tilde{m}_j,\tilde{k}_{j-1}),\tilde s_j^r\big)\indic{1}_{\{ k_j=\Phi (\tilde s_j^r)\}}     }{{{2^{r(R+ R_k - R_K)}}Q_Z^{\otimes r}(z_j^r)}}\Bigg]        \nonumber\\
&\mathop \le \limits^{(b)}\sum\limits_{(z_j^r,k_j)} \sum\limits_{m_j}\sum\limits_{k_{j-1}} \frac{1}{2^{r(R+ R_k)}}\sum\limits_{s_j^r} \sum\limits_{u^r(m_j,k_{j-1})} \Upsilon^{(\mathcal{C}_r)}_{U^r,S^r,Z^r}\big(u^r(m_j,k_{j-1},s_j^r,z_j^r)\big) \frac{1}{2^{rR_K}}  \nonumber\\
&\qquad\times\log\frac{1}{2^{r(R+ R_k - R_K)}Q_Z^{\otimes r}(z_j^r)} \Bigg(Q_S^{\otimes r}(s_j^r) W_{Z|U,S}^{\otimes r}\big(z_j^r|u^r(m_j,k_{j-1}),s_j^r\big) \nonumber\\
&\qquad + \expec_{\backslash (m_j,k_{j-1})} \Bigg[\sum\limits_{(\tilde{m}_j,\tilde{k}_{j-1}) \ne (m_j,k_{j-1})}Q_S^{\otimes r}(s_j^r) W_{Z|U,S}^{\otimes r}\big(z_j^r|U^r(\tilde{m}_j,\tilde{k}_{j-1}),s_j^r\big)\Bigg] \nonumber\\
&\qquad+ \expec_{\backslash \Phi (s_j^r)}\Bigg[ \sum\limits_{\tilde s_j^r}Q_S^{\otimes r}(\tilde s_j^r) W_{Z|U,S}^{\otimes r}\big(z_j^r|u^r(m_j,k_{j-1}),\tilde s_j^r\big)\indic{1}_{\{ k_j=\Phi (\tilde s_j^r)\}}\Bigg]        \nonumber\\
&\qquad+ \expec_{\backslash ((m_j,k_{j-1}),\Phi (s_j^r))}\Bigg[ \sum\limits_{\tilde s_j^r}\sum\limits_{(\tilde{m}_j,\tilde{k}_{j-1}) \ne (m_j,k_{j-1})}Q_S^{\otimes r}(\tilde s_j^r) W_{Z|U,S}^{\otimes r}\big(z_j^r|U^r(\tilde{m}_j,\tilde{k}_{j-1}),\tilde s_j^r\big)\indic{1}_{\{ k_j=\Phi (\tilde s_j^r)\}}\Bigg] \Bigg)        \nonumber\\
&\le\sum\limits_{(z_j^r,k_j)} \sum\limits_{m_j}\sum\limits_{k_{j-1}} \frac{1}{2^{r(R+ R_k+R_K)}}\sum\limits_{s_j^r} \sum\limits_{u^r(m_j,k_{j-1})} \Upsilon^{(\mathcal{C}_r)}_{U^r,S^r,Z^r}\big(u^r(m_j,k_{j-1}),s_j^r,z_j^r\big)  \nonumber\\
&\qquad\times\log \Bigg(\frac{Q_S^{\otimes r}(s_j^r) W_{Z|U,S}^{\otimes r}\big(z_j^r|u^r(m_j,k_{j-1}),s_j^r\big)}{2^{r(R+ R_k - R_K)}Q_Z^{\otimes r}(z_j^r)} + \sum\limits_{(\tilde{k}_{j - 1},\tilde{m}_j) \ne (m_j,k_{j-1})} \frac{Q_{S,Z}^{\otimes r}(s_j^r,z_j^r)}{2^{r(R+ R_k - R_K)}Q_Z^{\otimes r}(z_j^r)}\nonumber\\
&\qquad +\frac{ W_{Z|U}^{\otimes r}\big(z_j^r|u^r(m_j,k_{j-1})\big)}{2^{r(R+ R_k)}Q_Z^{\otimes r}(z_j^r)}
+1 \Bigg)\nonumber\\
&\le\sum\limits_{(z_j^r,k_j)} \sum\limits_{m_j}\sum\limits_{k_{j-1}} \frac{1}{2^{r(R+ R_k+R_K)}}\sum\limits_{s_j^r} \sum\limits_{u^r(m_j,k_{j-1})} \Upsilon^{(\mathcal{C}_r)}_{U^r,S^r,Z^r}\big(u^r(m_j,k_{j-1}),s_j^r,z_j^r\big)  \nonumber\\
&\qquad\times\log \Bigg(\frac{Q_S^{\otimes r}(s_j^r) W_{Z|U,S}^{\otimes r}\big(z_j^r|u^r(m_j,k_{j-1}),s_j^r\big)}{2^{r(R+ R_k - R_K)}Q_Z^{\otimes r}(z_j^r)} +  \frac{2^{rR_K} Q_{S,Z}^{\otimes r}(s_j^r,z_j^r)}{Q_Z^{\otimes r}(z_j^r)} + \frac{ W_{Z|U}^{\otimes r}\big(z_j^r|u^r(m_j,k_{j-1})\big)}{2^{r(R+ R_k)}Q_Z^{\otimes r}(z_j^r)} + 1\Bigg)  \nonumber\\
&\triangleq \Psi_1 + \Psi_2,\label{eq:KLD_ZK_CoNC}
\end{align}
where $(a)$ follows from Jensen's inequality and $(b)$ holds because $\indic{1}_{\{ \cdot\}}\leq 1$. We defined $\Psi_1$ and $\Psi_2$ as
\begin{align}
&{\Psi _1} = \sum\limits_{k_j} \sum\limits_{m_j}\sum\limits_{k_{j-1}}  \frac{1}{2^{r(R+ R_k + R_K)}}\sum\limits_{\big(u^r(m_j,k_{j-1}),s_j^r,z_j^r\big) \in \mathcal{T}_\epsilon ^{(n)}} \Upsilon^{(\mathcal{C}_r)}_{U^r,S^r,Z^r}\big(u^r(m_j,k_{j-1}),s_j^r,z_j^r\big)\nonumber\\
&\qquad\times
 \log \Bigg(\frac{Q_S^{\otimes r}(s_j^r) W_{Z|U,S}^{\otimes r}\big(z_j^r|u^r(m_j,k_{j-1}),s_j^r\big)}{2^{r(R+ R_k - R_K)}Q_Z^{\otimes r}(z_j^r)} +  \frac{2^{rR_K} Q_{S,Z}^{\otimes r}(s_j^r,z_j^r)}{Q_Z^{\otimes r}(z_j^r)} + \frac{ W_{Z|U}^{\otimes r}\big(z_j^r|u^r(m_j,k_{j-1})\big)}{2^{r(R+ R_k)}Q_Z^{\otimes r}(z_j^r)} + 1\Bigg)  \nonumber\\
 & \le \log \Bigg(\frac{2^{rR_K}2^{ - r(1 - \epsilon )(\ent(S)+\ent(Z|U,S))}}{2^{r(R+ R_k)}2^{ - r(1 + \epsilon )\ent(Z)}} + \frac{2^{rR_K}2^{ - r(1 - \epsilon )\ent(S,Z)}}{2^{ - r(1 + \epsilon )\ent(Z)}} + \frac{2^{ - r(1 - \epsilon )\ent(Z|U)}}{2^{r(R+ R_k)}2^{ - r(1 + \epsilon )\ent(Z)}}  + 1\Bigg)\label{eq:Psi_1_CoNC}\\
&{\Psi _2} = \sum\limits_{{k_j}} \sum\limits_{m_j}\sum\limits_{k_{j-1}}  \frac{1}{2^{r(R+ R_k + R_K)}}\sum\limits_{\big(u^r(m_j,k_{j-1}),s_j^r,z_j^r\big) \notin \mathcal{T}_\epsilon ^{(n)}} \Upsilon^{(\mathcal{C}_r)}_{U^r,S^r,Z^r}\big(u^r(m_j,k_{j-1}),s_j^r,z_j^r\big)\nonumber\\
&\qquad\times
 \log \Bigg(\frac{Q_S^{\otimes r}(s_j^r) W_{Z|U,S}^{\otimes r}\big(z_j^r|u^r(m_j,k_{j-1}),s_j^r\big)}{2^{r(R+ R_k - R_K)}Q_Z^{\otimes r}(z_j^r)} +  \frac{2^{rR_K} Q_{S,Z}^{\otimes r}(s_j^r,z_j^r)}{Q_Z^{\otimes r}(z_j^r)} + \frac{ W_{Z|U}^{\otimes r}\big(z_j^r|u^r(m_j,k_{j-1})\big)}{2^{r(R+ R_k)}Q_Z^{\otimes r}(z_j^r)} + 1\Bigg)  \nonumber\\
&\le 2|U||S||Z|e^{ - r\epsilon^2 {\mu _{U,S,Z}}}r\log\Big(\frac{3}{\mu_Z} + 1\Big), \label{eq:Psi_2_CoNC}
\end{align}
\sloppy where $\mu_{U,S,Z}=\min\limits_{(u,s,z)\in(\mathcal{U},\mathcal{S},\mathcal{Z})}P_{U,S,Z}(u,s,z)$ and $\mu_Z=\min\limits_{z\in\mathcal{Z}}P_Z(z)$. When $r\to\infty$ then $\Psi_2\to 0$, by choosing $R_K=\ent(S|Z)-\epsilon$, $\Psi_1$ vanishes when $r$ grows if,
\begin{subequations}
\begin{align}
    R + R_k &>\mi(U;Z|S),\label{eq:Resolv_U_Against_S_and_Z_CoNC}\\
    R + R_k &>\mi(U;Z).\label{eq:Resolv_Redun_CoNC}
\end{align}
\end{subequations}Here since $U$ and $S$ are independent \eqref{eq:Resolv_Redun_CoNC} is redundant because of \eqref{eq:Resolv_U_Against_S_and_Z_CoNC}.

\emph{{Decoding and Error Probability Analysis:}}
At the end of block $j\in\intseq{1}{B}$, using its knowledge of the channel state $s_j^r$ of the current block and the key $k_{j-1}$ generated from the previous block, the receiver finds a unique  $\hat{m}_j$ such that $\big(u^r(\hat{m}_j,k_{j-1}),s_j^r,y_j^r\big)\in\mathcal{T}_{\epsilon}^{(r)}$. To analyze the probability of error we define the following error events for $j\in\intseq{1}{B}$
\begin{subequations}\label{eq:error_Events_FCSI_C}
\begin{align}
    \mathcal{E}&=\big\{\hat{M}\ne M\big\},\\
    \mathcal{E}_j&=\big\{\hat{M}_j\ne M_j\big\},\\
    \mathcal{E}_{1,j}&=\big\{\big(U^r(M_j,K_{j-1}),S_j^r\big)\notin\mathcal{T}_{\epsilon_1}^{(r)}(Q_SP_U)\big\},\\
    \mathcal{E}_{2,j}&=\big\{\big(U^r(M_j,K_{j-1}),S_j^r,Y_j^r\big)\notin\mathcal{T}_{\epsilon_2}^{(r)}(Q_SP_UW_{Y|U,S})\big\},\\
    \mathcal{E}_{3,j}&=\big\{\big(U^r(k_{j-1},\hat{m}_j),S_j^r,Y_j^r\big)\in\mathcal{T}_{\epsilon_2}^{(r)},\quad\mbox{for some}\quad\hat{m}_j\ne M_j\big\},
\end{align}
\end{subequations}where $\epsilon_2>\epsilon_1>\epsilon>0$. The probability of error is upper bounded as follows,
\begin{align}
    \Prob(\mathcal{E})\leq\Prob\Big\{\bigcup\nolimits_{j = 1}^B \mathcal{E}_j\Big\}\leq\sum_{j=1}^B\Prob(\mathcal{E}_j).
\end{align}Now we bound $\Prob(\mathcal{E}_j)$ by using union bound
\begin{align}
    \Prob(\mathcal{E}_j)\leq \Prob(\mathcal{E}_{1,j})+\Prob(\mathcal{E}_{1,j}^c\cap \mathcal{E}_{2,j})+\Prob(\mathcal{E}_{2,j}^c\cap \mathcal{E}_{3,j}).\label{eq:PError}
\end{align}By the law of large numbers the first and second term on \ac{RHS} of \eqref{eq:PError} vanishes when $r$ grows. According to the law of large numbers and the packing lemma, the last term on \ac{RHS} of \eqref{eq:PError} vanishes when $r$ grows if \cite{ElGamalKim},
\begin{align}
&R < \mi(U;S,Y)=\mi(U;Y|S).
\label{eq:Decoding_BM_CoNC}
\end{align}

Furthermore, this scheme requires that 
\begin{align}
    R_k\leq R_K = \ent(S|Z) - \epsilon,\label{eq:Indep_S_and_Z_CoNC}
\end{align}
The region in Theorem~\ref{thm:Capacity_FCSI_C} is obtained by applying Fourier-Motzkin to \eqref{eq:Resolv_U_Against_S_and_Z_CoNC}, \eqref{eq:Decoding_BM_CoNC}, and  \eqref{eq:Indep_S_and_Z_CoNC}.
\subsection{Converse Proof}
We now develop an upper bound when \ac{CSI} is available causally at both of the legitimate terminals. Consider any sequence of length-$n$ codes for a state-dependent channel with channel state available causally at both the transmitter and the receiver such that $P_e^{(n)}\leq\epsilon_n$ and $\kld(P_{Z^n}||Q_0^{\otimes n})\leq\delta$ with $\lim_{n\to\infty}\epsilon_n=0$. Note that the converse is consistent with the model and does \emph{not} require $\delta$ to vanish. 
\subsubsection{Epsilon Rate Region}
We first define a region $\mathcal{S}_{\epsilon}$ for $\epsilon>0$ that expands the region defined in~\eqref{eq:finalSD_FCSI_C} as follows.
\begin{subequations}\label{eq:Epsilon_Rate_Region_FCSI_C}
\begin{align}
\calS_\epsilon\eqdef \big\{R\geq 0: \exists P_{S,U,X,Y,Z}\in\calD_\epsilon: R\leq \mi(U;Y|S) + \epsilon \big\}\label{eq:Sepsilon_FCSI_C}
\end{align}
where 
\begin{align}
  \calD_\epsilon = \left.\begin{cases}P_{S,U,X,Y,Z}:\\
P_{S,U,X,Y,Z}=Q_SP_U\indic{1}_{\big\{X=X(U,S)\big\}}W_{Y,Z|X,S}\\
\mathbb{D}\left(P_Z\Vert Q_0\right) \leq \epsilon\\
\ent(S|Z) >  \mi(U;Z|S) - \mi(U;Y|S) - 3\epsilon\\
\card{\calU}\leq \card{\calX}+1
\end{cases}\right\},\label{eq:Depsilon_FCSI_C}
\end{align}
\end{subequations}where $\epsilon\triangleq\max\{\epsilon_n,\nu\geq\frac{\delta}{n}\}$. 
We next show that if a rate $R$ is achievable then $R\in\mathcal{S}_{\epsilon}$ for any $\epsilon>0$. 

For any $\epsilon_n>0$, we start by upper bounding $nR$ using standard techniques.
\begin{align}
nR &= \ent(M)\nonumber\\
&\mathop \le \limits^{(a)} \ent(M|S^n) - \ent(M|Y^n,S^n) + n\epsilon_n \nonumber\\
&= \mi(M;Y^n|S^n) + n\epsilon_n \nonumber\\
&= \sum\limits_{i = 1}^n \mi(M;Y_i|Y^{i - 1},S^n)  + n\epsilon_n \nonumber\\
&= \sum\limits_{i = 1}^n [\ent(Y_i|Y^{i - 1},S^n) - \ent(Y_i|M,Y^{i - 1},S^n)]  + n\epsilon_n \nonumber\\
&\mathop \le \limits^{(b)} \sum\limits_{i = 1}^n [\ent(Y_i|S_i) - \ent(Y_i|U_i,S_i)]  + n\epsilon_n \nonumber\\
&= \sum\limits_{i = 1}^n \mi(U_i;Y_i|S_i)  + n\epsilon_n \nonumber\\
&\mathop \le \limits^{(c)} n\mi(\tilde{U};\tilde{Y}|\tilde{S}) + n\epsilon_n\nonumber\\
&\mathop \le \limits^{(d)} n\mi(\tilde{U};\tilde{Y}|\tilde{S}) + n\epsilon
\label{eq:Upper_Bound_on_MRate_FCSI_C_2}
\end{align}where 
\begin{itemize}
    \item[$(a)$] follows from Fano's inequality and independence of $M$ from $S^n$;
    \item[$(b)$] holds because conditioning does not increase entropy and $U_i=(M,Y^{i - 1},S_{\sim i}^n)$;
    \item[$(c)$] follows from the concavity of mutual information, with the resulting random variables $\tilde{U}$, $\tilde{S}$, and $\tilde{Y}$ having the following distributions
\begin{subequations}\label{eq:Joint_Dist_Tilde_FCSI_C}
\begin{align}
\tilde{P}_{U,S,X}(u,s,x) &\triangleq \frac{1}{n}\sum\limits_{i = 1}^n P_{U_i,S_i,X_i}(u,s,x), \label{eq:Joint_Dist_Tilde_FCSI_C_1}\\
\tilde{P}_{U,S,X,Y,Z}(u,s,x,y,z) &\triangleq \tilde{P}_{U,S,X}(u,s,x)W_{Y,Z|X,S}(y,z|x,s),\label{eq:Joint_Dist_Tilde_FCSI_C_2}
\end{align}
\end{subequations}
\item[$(d)$] follows by defining $\epsilon\triangleq\max\{\epsilon_n,\nu\geq\frac{\delta}{n}\}$.
\end{itemize}

We now have,
\begin{align}
    nR&\mathop \ge \limits^{(a)} n\mi(\tilde{X},\tilde{S};\tilde{Z}) - n\ent(\tilde{S})-2\epsilon\nonumber\\
    &\mathop \ge \limits^{(b)} n\mi(\tilde{U},\tilde{S};\tilde{Z}) - n\ent(\tilde{S})-2\epsilon,\label{eq:Upper_Bound_on_Key_FCSI_C}
\end{align}where $(a)$ follows by the steps in \eqref{eq:Upper_Bound_on_KRate_FCSI_NC_Final} and $(b)$ follows from the Markov chain $U-(X,S)-Z$ and from the definition of random variables $\tilde{U}$, $\tilde{X}$, $\tilde{S}$, $\tilde{Y}$, and $\tilde{Z}$ in \eqref{eq:Joint_Dist_Tilde_FCSI_C}. 
Rewriting the bound in \eqref{eq:Upper_Bound_on_Key_FCSI_C} by using the basic property in \eqref{eq:MI_Manipulation} leads to 
\begin{align}
    nR\ge\mi(\tilde{U};\tilde{Z}|\tilde{S}) -\ent(\tilde{S}|\tilde{Z})-2\epsilon.\label{eq:Upper_Bound_on_KRate_CoNC_Final}
\end{align}
To show that $ \kld(P_Z||Q_0)\leq\epsilon$, note that for $n$ large enough
\begin{align}
\kld(P_Z||Q_0)=\kld(P_{\tilde{Z}}||Q_0)=\kld\Bigg(\frac{1}{n}\sum\limits_{i=1}^nP_{Z_i}\Bigg|\Bigg|Q_0\Bigg)\leq\frac{1}{n}\sum\limits_{i=1}^n\kld(P_{Z_i}||Q_0)\leq\frac{1}{n}\kld(P_{Z^n}||Q_0^{\otimes n})\leq\frac{\delta}{n}\leq\nu\leq\epsilon.\label{eq:boundKL_FCSI_C}
\end{align}
Combining \eqref{eq:Upper_Bound_on_MRate_FCSI_C_2} and \eqref{eq:Upper_Bound_on_KRate_CoNC_Final} shows that $\forall \epsilon_n,\nu>0$, $R\leq \max\{x:x\in\mathcal{S}_{\epsilon}\}$. Therefore,
\begin{align}
  C_{\mbox{\scriptsize\rm C-TR}} = \max\left\{x:x\in\bigcap_{\epsilon>0}\mathcal{S}_{\epsilon}\right\}.
\end{align}
\subsubsection{Continuity at Zero}
Continuity at zero for $\calS_\epsilon$ is established by substituting $\min\{\mi(U;Y)-\mi(U;S),\mi(U,V;Y)-\mi(U;S|V)\}$ with $\mi(U;Y|S)$ and $\mi(V;Z)-\mi(V;S)$ with $\mi(U;Z|S)-\ent(S|Z)$ in Appendix~\ref{sec:continuity-at-zero} and following the same arguments.


\section{Proof of Theorem~\ref{thm:Capacity_FCSI_SC}}
\label{sec:Proof_Acivable_Rate_Strictly_Causal_State_Info}
\subsection{Achievability Proof}
We adopt a block-Markov encoding scheme in which $B$ independent messages are transmitted over $B$ channel blocks each of length $r$, such that $n=rB$.
The warden's observation is $Z^n=(Z_1^r,\dots,Z_B^r)$, the target output distribution is $Q_Z^{\otimes n}$, and Equation \eqref{eq:Total_KLD_CoNC}, describing the distance between the two distributions, continues to  hold. 
The random code generation is as follows:

Fix $P_X$ and $\epsilon_1>\epsilon_2>0$ such that, $P_Z = Q_0$.

\emph{{Codebook Generation for Keys:}}
For each block $j\in\intseq{1}{B}$, create a function $\Phi:S_j^r\to\intseq{1}{2^{rR_K}}$ through random binning by choosing the value of $\Phi(s_j^r)$ independently and uniformly at random for every $s_j^r\in\mathcal{S}^r$. The key $k_j=\Phi(s_j^r)$ obtained in block $j\in\intseq{1}{B}$ from the state sequence $s_j^r$ is used to assist the encoder in the next block.

\emph{Codebook Generation for Messages:}
For each block $j\in\intseq{1}{B}$, let $C_r\triangleq\{X^r(m_j,k_{j-1})\}_{(m_j,k_{j-1})\in\mathcal{M}\times \mathcal{K}}$, where $\mathcal{M}\in\intseq{1}{2^{rR}}$ and $\mathcal{K}\in\intseq{1}{2^{rR_k}}$, be a random codebook consisting of independent random sequences each generated according to $P_X^{\otimes r}$. 
We denote a realization of $C_r$ by $\CodeBook_r\triangleq\{x^r(m_j,k_{j-1})\}_{(m_j,k_{j-1})\in\mathcal{M}\times\mathcal{K}}$.

\emph{{Encoding:}}
For the first block, we assume that the transmitter and the receiver have access to a shared secret key $k_0$, in this block to transmit $m_1$ the encoder computes $x^r(m_1,k_0)$ and transmits it over the channel. At the end of the first block the encoder generates a key from \ac{CSI} $s_1^r$ to be used in Block 2.

For block $j\in\intseq{2}{B}$, to send message $m_j$ according to the generated key $k_{j-1}$ from the previous block, the encoder computes $x^r(m_j,k_{j-1})$ and transmits it over the channel. Also, at the end of each  block $j\in\intseq{2}{B}$ the encoder generates a key from \ac{CSI} $s_j^r$ to be used in next block.

Define
\begin{align}
    \twocolalign \Upsilon_{M_j,K_{j-1},X^r,S_j^r,Z_j^r,K_j}^{(\mathcal{C}_r)}(m_j,k_{j-1},\tilde{x}^r,s_j^r,z_j^r,k_j) \triangleq\onecolalign 2^{-r(R + R_k)}\indic{1}_{\{ \tilde{x}^r=x^r(m_j,k_{j-1})\}}Q_S^{\otimes r}(s_j^r)\nonumber\\
    &\times W_{Z|X,S}^{\otimes r}\big(z_j^r|\tilde{x}^r,s_j^r\big)\indic{1}_{\{ k_j=\Phi (\tilde s_j^r)\}}.\label{eq:P_Dist_SCCSI}
\end{align}
For a fixed codebook $\mathcal{C}_r$, the induced joint distribution over the codebook (i.e. $P^{(\mathcal{C}_r)}$) satisfies
\begin{align}
\kld\Big(P_{M_j,K_{j-1},X^r,S_j^r,Z_j^r,K_j}^{(\mathcal{C}_r)}||\Upsilon_{M_j,K_{j-1},X^r,S_j^r,Z_j^r,K_j}^{(\mathcal{C}_r)}\Big)\leq\epsilon.\label{eq:Uniformity_Key_SC}
\end{align}
This intermediate distribution $\Upsilon^{(\mathcal{C}_r)}$ approximates the true distribution $P^{(\mathcal{C}_r)}$ and will be used in the sequel for bounding purposes. Expression~\eqref{eq:Uniformity_Key_SC} holds because the main difference in $\Upsilon^{(\mathcal{C}_r)}$ is assuming the key $K_{j-1}$ is uniformly distributed, which is made (arbitrarily) nearly uniform in $P^{(\mathcal{C}_r)}$ with appropriate control of rate.

\begin{figure*}
\centering
\includegraphics[width=5.0in]{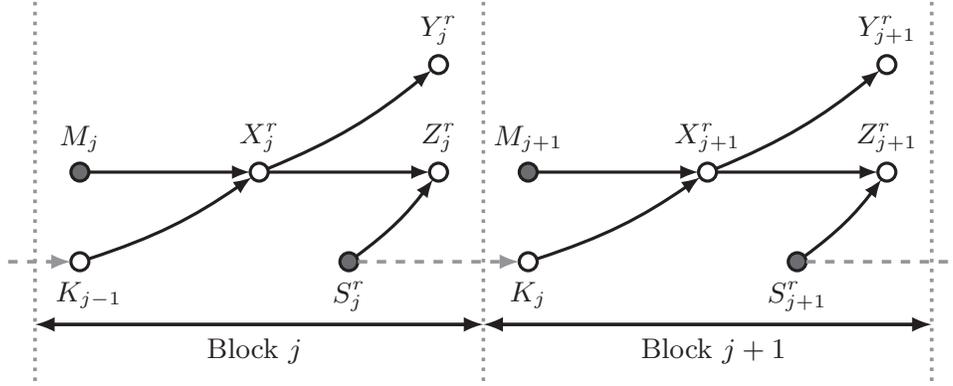}
\caption{Functional dependence graph for the block-Markov encoding scheme}
\label{fig:SC_Chaining}
\end{figure*}
\emph{{Covert Analysis:}}  We now show that this coding scheme  guarantees that $\expec_{C_n}[\kld(P_{Z^n|C_n} || Q_Z^{\otimes n} )]\underset{n\rightarrow\infty}{\xrightarrow{\hspace{0.2in}}} 0$, where $C_n$ is the set of all the codebooks for all blocks, and then choose $P_X$ such that it satisfies $Q_Z=Q_0$.

Similar to \eqref{eq:Total_KLD_Bound}, by using the functional dependence graph depicted in Fig.~\ref{fig:SC_Chaining},
\begin{align}
    \kld(P_{Z^n|C_n}||Q_Z^{\otimes n})\leq 2\sum\limits_{j = 1}^B {\kld(P_{Z_j^r,K_j|C_r}||Q_Z^{\otimes r}Q _{K_j})}.\label{eq:Total_KLD_Bound_SC}
\end{align}

We now proceed to bound the right hand side of \eqref{eq:Total_KLD_Bound_SC}. First, consider the marginal
\begin{align}
    \twocolalign \Upsilon_{Z_j^r,K_j|C_r} (z_j^r,k_j)=  \sum\limits_{m_j}\sum\limits_{k_{j-1}}\sum\limits_{s_j^r} \frac{1}{2^{r(R + R_k)}} Q_S^{\otimes r}(s_j^r) W_{Z|X,S}^{\otimes r}\big(z_j^r|X^r(m_j,k_{j-1}),s_j^r\big) \indic{1}_{\{ k_j=\Phi (s_j^r)\}}.\label{eq:Joint_Dist_ZK_SCCSI}  
\end{align}
To bound the \ac{RHS} of \eqref{eq:Total_KLD_Bound_SC} by using Lemma~\ref{lemma:TV_KLD} and the triangle inequality we have
\begin{align}
     \expec_{C_r}||P_{Z_j^r,K_j|C_r}-Q_Z^{\otimes r} Q_{K_j}||_1\leq \expec_{C_r}||P_{Z_j^r,K_j|C_r}-\Upsilon_{Z_j^r,K_j|C_r}||_1 + \expec_{C_r}||\Upsilon_{Z_j^r,K_j|C_r}-Q_Z^{\otimes r} Q_{K_j}||_1.\label{eq:Triangle_with_Real_Dist_SC}
\end{align}
From Lemma~\ref{lemma:TV_KLD} and \eqref{eq:Uniformity_Key_SC} the first term on the \ac{RHS} of \eqref{eq:Triangle_with_Real_Dist_SC} vanishes as $r$ grows; to bound the second term by using Lemma~\ref{lemma:TV_KLD} we have,
\begin{align}
& \expec_{C_r}[\kld(\Upsilon_{Z_j^r,K_j|C_r}||Q_Z^{\otimes r} Q_{K_j})] = \expec_{C_r}\bigg[\sum\limits_{(z_j^r,k_j)} \Upsilon_{Z_j^r,K_j|C_r}(z_j^r,k_j)\log\bigg(\frac{\Upsilon_{Z_j^r,K_j|C_r}(z_j^r,k_j)}{Q_Z^{\otimes r}(z_j^r)Q_{K_j}(k_j)}\bigg) \bigg] \nonumber\\
&= \expec_{C_r}\Bigg[\sum\limits_{(z_j^r,k_j)}  \sum\limits_{m_j}\sum\limits_{k_{j-1}} \frac{1}{2^{r(R + R_k)}}\sum\limits_{s_j^r} Q_S^{\otimes r}(s_j^r) W_{Z|X,S}^{\otimes r}\big(z_j^r|X^r(m_j,k_{j-1}),s_j^r\big)\indic{1}_{\{ k_j=\Phi (s_j^r)\}} \nonumber\\ 
&\qquad\times\log \Bigg(\frac{\sum\limits_{\tilde m_j}\sum\limits_{\tilde k_{j-1}}\sum\limits_{\tilde s_j^r}  Q_S^{\otimes r}(\tilde s_j^r) W_{Z|X,S}^{\otimes r}\big(z_j^r|X^r(\tilde{m}_j,\tilde{k}_{j-1}),\tilde s_j^r\big)\indic{1}_{\{ k_j=\Phi (\tilde s_j^r)\}}     }{2^{r(R + R_k - R_K)}Q_Z^{\otimes r}(z_j^r)}\Bigg)     \Bigg] \nonumber\\ 
&\mathop \le \limits^{(a)} \sum\limits_{(z_j^r,k_j)} \sum\limits_{m_j}\sum\limits_{k_{j-1}}  \frac{1}{2^{r(R + R_k)}}\sum\limits_{s_j^r} \sum\limits_{x^r(m_j,k_{j-1})} \Upsilon^{(\mathcal{C}_r)}_{X^r,S^r,Z^r}\big(x^r(m_j,k_{j-1}),s_j^r,z_j^r\big)\times\expec_{\Phi (s_j^r)}\big[\indic{1}_{\{ k_j=\Phi (s_j^r)\}}\big] \nonumber\\
&\qquad\times\log \expec_{\backslash ((m_j,k_{j-1}),\Phi (s_j^r))} \Bigg[ \frac{\sum\limits_{\tilde m_j}\sum\limits_{\tilde k_{j-1}} \sum\limits_{\tilde s_j^r} Q_S^{\otimes r}(\tilde s_j^r) W_{Z|X,S}^{\otimes r}\big(z_j^r|X^r(\tilde{m}_j,\tilde{k}_{j-1}),\tilde s_j^r\big)\indic{1}_{\{k_j= \Phi(\tilde s_j^r)\}}     }{{{2^{r(R + R_k - R_K)}}Q_Z^{\otimes r}(z_j^r)}}\Bigg]        \nonumber\\
&\mathop \le \limits^{(b)}\sum\limits_{(z_j^r,k_j)} \sum\limits_{m_j}\sum\limits_{k_{j-1}}  \frac{1}{2^{r(R + R_k)}}\sum\limits_{s_j^r} \sum\limits_{x^r(m_j,k_{j-1})} \Upsilon^{(\mathcal{C}_r)}_{X^r,S^r,Z^r}\big(x^r(m_j,k_{j-1}),s_j^r,z_j^r\big)\times \frac{1}{2^{rR_K}}  \nonumber\\
&\qquad\times\log\frac{1}{{2^{r(R + R_k - R_K)}Q_Z^{\otimes r}(z_j^r)} }\Bigg(Q_S^{\otimes r}(s_j^r) W_{Z|X,S}^{\otimes r}\big(z_j^r|x^r(m_j,k_{j-1}),s_j^r\big)\nonumber\\
&\qquad + \expec_{\backslash (m_j,k_{j-1})} \Bigg[\sum\limits_{(\tilde{m}_j,\tilde{k}_{j-1}) \ne (m_j,k_{j-1})}Q_S^{\otimes r}(s_j^r) W_{Z|X,S}^{\otimes r}\big(z_j^r|X^r(\tilde{m}_j,\tilde{k}_{j-1}),s_j^r\big)\Bigg] \nonumber\\
&\qquad + \expec_{\backslash \Phi (s_j^r)}\Bigg[ \sum\limits_{\tilde s_j^r \ne s_j^r}Q_S^{\otimes r}(\tilde s_j^r)\times W_{Z|X,S}^{\otimes r}\big(z_j^r|x^r(m_j,k_{j-1}),\tilde s_j^r\big)\indic{1}_{\{ k_j=\Phi (\tilde s_j^r)\}}\Bigg] \nonumber\\
&\qquad+ \expec_{\backslash ((m_j,k_{j-1}),\Phi (s_j^r))}\Bigg[ \sum\limits_{\tilde s_j^r \ne s_j^r}\sum\limits_{(\tilde{m}_j,\tilde{k}_{j-1}) \ne (m_j,k_{j-1})}Q_S^{\otimes r}(\tilde s_j^r) W_{Z|X,S}^{\otimes r}\big(z_j^r|X^r(\tilde{m}_j,\tilde{k}_{j-1}),\tilde s_j^r\big)\indic{1}_{\{ k_j=\Phi (\tilde s_j^r)\}}\Bigg] \Bigg)        \nonumber\\
&\mathop \le \limits^{(c)} \sum\limits_{(z_j^r,k_j)} \sum\limits_{m_j}\sum\limits_{k_{j-1}}  \frac{1}{2^{r(R + R_k+R_K)}}\sum\limits_{s_j^r} \sum\limits_{x^r(m_j,k_{j-1})} \Upsilon^{(\mathcal{C}_r)}_{X^r,S^r,Z^r}\big(x^r(m_j,k_{j-1}),s_j^r,z_j^r\big)  \nonumber\\
&\qquad\times\log \Bigg(\frac{Q_S^{\otimes r}(s_j^r) W_{Z|X,S}^{\otimes r}\big(z_j^r|x^r(m_j,k_{j-1}),s_j^r\big)}{2^{r(R + R_k - R_K)}Q_Z^{\otimes r}(z_j^r)} + \sum\limits_{(\tilde{m}_j,\tilde{k}_{j-1}) \ne (m_j,k_{j-1})} {\frac{{Q_{S,Z}^{\otimes r}(s_j^r,z_j^r)}}{{{2^{r(R + R_k - R_K)}}Q_Z^{\otimes r}(z_j^r)}}} \nonumber\\
&\qquad + \sum\limits_{\tilde s_j^r \ne s_j^r} \frac{Q_S^{\otimes r}(\tilde s_j^r) W_{Z|X,S}^{\otimes r}\big(z_j^r|x^r(m_j,k_{j-1}),\tilde s_j^r\big)}{2^{r(R + R_k)}Q_Z^{\otimes r}(z_j^r)} +1 \Bigg)        \nonumber\\
&\le \sum\limits_{(z_j^r,k_j)} \sum\limits_{m_j}\sum\limits_{k_{j-1}}  \frac{1}{2^{r(R + R_k+R_K)}}\sum\limits_{s_j^r} \sum\limits_{x^r(m_j,k_{j-1})} \Upsilon^{(\mathcal{C}_r)}_{X^r,S^r,Z^r}\big(x^r(m_j,k_{j-1}),s_j^r,z_j^r\big)  \nonumber\\
&\times\log \Bigg(\frac{Q_S^{\otimes r}(s_j^r) W_{Z|X,S}^{\otimes r}\big(z_j^r|x^r(m_j,k_{j-1}),s_j^r\big)}{2^{r(R + R_k - R_K)}Q_Z^{\otimes r}(z_j^r)} +  \frac{2^{rR_K} Q_{S,Z}^{\otimes r}(s_j^r,z_j^r)}{Q_Z^{\otimes r}(z_j^r)} + \frac{W_{Z|X}^{\otimes r}\big(z_j^r|x^r(m_j,k_{j-1})\big)}{2^{r(R + R_k)}Q_Z^{\otimes r}(z_j^r)}  + 1\Bigg)      \nonumber\\
&\triangleq \Psi_1 + \Psi_2,\label{eq:KLD_ZK_SCCSI}
\end{align}
where $(a)$ follows from Jensen's inequality, $(b)$ and $(c)$ are because $\indic{1}_{\{\cdot\}}\leq 1$,  and the last term in the \ac{RHS} of $(b)$ is smaller than $1$. We defined $\Psi_1$ and $\Psi_2$ as
\begin{align}
 &{\Psi _1} = \sum\limits_{{k_j}} \sum\limits_{k_{j-1}} \sum\limits_{m_j} \frac{1}{2^{r(R + R_k + R_K)}}\sum\limits_{\big(x^r(m_j,k_{j-1}),s_j^r,z_j^r\big) \in \mathcal{T}_\epsilon ^{(n)}} \Upsilon^{(\mathcal{C}_r)}_{X^r,S^r,Z^r}\big(x^r(m_j,k_{j-1}),s_j^r,z_j^r\big)\nonumber\\
&\qquad\times
 \log \Bigg(\frac{Q_S^{\otimes r}(s_j^r) W_{Z|X,S}^{\otimes r}\big(z_j^r|x^r(m_j,k_{j-1}),s_j^r\big)}{2^{r(R + R_k - R_K)}Q_Z^{\otimes r}(z_j^r)} +  \frac{2^{rR_K} Q_{S,Z}^{\otimes r}(s_j^r,z_j^r)}{Q_Z^{\otimes r}(z_j^r)} + \frac{W_{Z|X}^{\otimes r}\big(z_j^r|x^r(m_j,k_{j-1})\big)}{2^{r(R + R_k)}Q_Z^{\otimes r}(z_j^r)}  + 1\Bigg)      \nonumber\\
 & \le \log \Bigg(\frac{2^{rR_K}2^{ - r(1 - \epsilon )\big(\ent(S)+\ent(Z|X,S)\big)}}{2^{r(R + R_k)}2^{ - r(1 + \epsilon )\ent(Z)}} + \frac{2^{rR_K}2^{ - r(1 - \epsilon )\ent(S,Z)}}{2^{ - r(1 + \epsilon )\ent(Z)}} + \frac{2^{ - r(1 - \epsilon )\ent(Z|X)}}{2^{r(R + R_k)}2^{ - r(1 + \epsilon )\ent(Z)}} + 1\Bigg)\label{eq:Psi_1_SCCSI}\\
&{\Psi _2} =\sum\limits_{{k_j}} \sum\limits_{k_{j-1}} \sum\limits_{m_j} \frac{1}{2^{r(R + R_k + R_K)}}\sum\limits_{\big(x^r(m_j,k_{j-1}),s_j^r,z_j^r\big) \notin \mathcal{T}_\epsilon ^{(n)}} \Upsilon^{(\mathcal{C}_r)}_{X^r,S^r,Z^r}\big(x^r(m_j,k_{j-1}),s_j^r,z_j^r\big)\nonumber\\
&\qquad\times
 \log \Bigg(\frac{Q_S^{\otimes r}(s_j^r) W_{Z|X,S}^{\otimes r}\big(z_j^r|x^r(m_j,k_{j-1}),s_j^r\big)}{2^{r(R + R_k - R_K)}Q_Z^{\otimes r}(z_j^r)} +  \frac{2^{rR_K} Q_{S,Z}^{\otimes r}(s_j^r,z_j^r)}{Q_Z^{\otimes r}(z_j^r)} + \frac{W_{Z|X}^{\otimes r}\big(z_j^r|x^r(m_j,k_{j-1})\big)}{2^{r(R + R_k)}Q_Z^{\otimes r}(z_j^r)}  + 1\Bigg)      \nonumber\\
&\le 2|X||S||Z|e^{ - r\epsilon^2 {\mu _{X,S,Z}}}r\log\Big(\frac{3}{\mu_Z} + 1\Big), \label{eq:Psi_2_SCCSI}
\end{align}
\sloppy where $\mu_{X,S,Z}=\min\limits_{(x,s,z)\in(\mathcal{X},\mathcal{S},\mathcal{Z})}P_{X,S,Z}(x,s,z)$ and $\mu_Z=\min\limits_{z\in\mathcal{Z}}P_Z(z)$. When $r\to\infty$ then $\Psi_2\to 0$, by choosing $R_K=\ent(S|Z)-\epsilon$, $\Psi_1$ vanishes when $r$ grows if,
\begin{subequations}
\begin{align}
    R + R_k &>\mi(X;Z|S),\label{eq:Resolv_X_Against_S_and_Z__SCCSI}\\
    R + R_k &>\mi(X;Z)\label{eq:Resolv_Redun__SCCSI}.
\end{align}
\end{subequations}Here \eqref{eq:Resolv_Redun__SCCSI} is redundant because of \eqref{eq:Resolv_X_Against_S_and_Z__SCCSI} and the fact that $X$ and $S$ are independent.

\emph{{Decoding and Error Probability Analysis:}}
At the end of block $j\in\intseq{1}{B}$, using its knowledge of the channel state $s_j^r$ of the current block and the key $k_{j-1}$ generated from the previous block, the receiver finds a unique  $\hat{m}_j$ such that $\big(u^r(\hat{m}_j,k_{j-1}),s_j^r,y_j^r\big)\in\mathcal{T}_{\epsilon}^{(r)}$. To analyze the probability of error we define the following error events for $j\in\intseq{1}{B}$
\begin{subequations}\label{eq:error_Events_FCSI_SC}
\begin{align}
    \mathcal{E}&=\big\{\hat{M}\ne M\big\},\\
    \mathcal{E}_j&=\big\{\hat{M}_j\ne M_j\big\},\\
    \mathcal{E}_{1,j}&=\big\{\big(X^r(M_j,K_{j-1}),S_j^r\big)\notin\mathcal{T}_{\epsilon_1}^{(r)}(Q_SP_X)\big\},\\
    \mathcal{E}_{2,j}&=\big\{\big(X^r(M_j,K_{j-1}),S_j^r,Y_j^r\big)\notin\mathcal{T}_{\epsilon_2}^{(r)}(Q_SP_XW_{Y|X,S})\big\},\\
    \mathcal{E}_{3,j}&=\big\{\big(X^r(k_{j-1},\hat{m}_j),S_j^r,Y_j^r\big)\in\mathcal{T}_{\epsilon_2}^{(r)},\quad\mbox{for some}\quad\hat{m}_j\ne M_j\big\},
\end{align}
\end{subequations}where $\epsilon_2>\epsilon_1>\epsilon>0$. The probability of error is upper bounded as follows,
\begin{align}
    \Prob(\mathcal{E})\leq\Prob\Big\{\bigcup\nolimits_{j = 1}^B \mathcal{E}_j\Big\}\leq\sum_{j=1}^B\Prob(\mathcal{E}_j).
\end{align}Now we bound $\Prob(\mathcal{E}_j)$ by using union bound
\begin{align}
    \Prob(\mathcal{E}_j)\leq \Prob(\mathcal{E}_{1,j})+\Prob(\mathcal{E}_{1,j}^c\cap \mathcal{E}_{2,j})+\Prob(\mathcal{E}_{2,j}^c\cap \mathcal{E}_{3,j}).\label{eq:PError_SC}
\end{align}By the law of large numbers the first and second term on \ac{RHS} of \eqref{eq:PError_SC} vanishes when $r$ grows. According to the law of large numbers and the packing lemma, the last term on \ac{RHS} of \eqref{eq:PError_SC} vanishes when $r$ grows if \cite{ElGamalKim},
\begin{align}
&R < \mi(X;S,Y)=\mi(X;Y|S).
\label{eq:Decoding_BM_SCCSI}
\end{align}

Furthermore, this scheme requires that,
\begin{align}
    R_k\leq R_K = \ent(S|Z) - \epsilon.\label{eq:Indep_S_and_Z_SCCSI}
\end{align}
The region in Theorem~\ref{thm:Capacity_FCSI_SC} is obtained by applying Fourier-Motzkin to \eqref{eq:Resolv_X_Against_S_and_Z__SCCSI}, \eqref{eq:Decoding_BM_SCCSI}, and  \eqref{eq:Indep_S_and_Z_SCCSI}.
\subsection{Converse Proof}
To establish the upper bound, consider any sequence of length-$n$ codes for a state-dependent channel with channel state available strictly causally at both the transmitter and the receiver such that $P_e^{(n)}\leq\epsilon_n$ and $\kld(P_{Z^n}||Q_0^{\otimes n})\leq\delta$ with $\lim_{n\to\infty}\epsilon_n=0$. Note that the converse is consistent with the model and does \emph{not} require $\delta$ to vanish.
\subsubsection{Epsilon Rate Region}
We first define a region $\mathcal{S}_{\epsilon}$ for $\epsilon>0$ that expands the region defined in~\eqref{eq:finalSD_FCSI_SC} as follows.
\begin{subequations}\label{eq:Epsilon_Rate_Region_FCSI_SC}
\begin{align}
\calS_\epsilon\eqdef \big\{R\geq 0: \exists P_{S,X,Y,Z}\in\calD_\epsilon: R\leq \mi(X;Y|S) + \epsilon \big\}\label{eq:Sepsilon_FCSI_SC}
\end{align}
where 
\begin{align}
  \calD_\epsilon = \left.\begin{cases}P_{S,X,Y,Z}:\\
P_{S,X,Y,Z}=Q_SP_XW_{Y,Z|X,S}\\
\mathbb{D}\left(P_Z\Vert Q_0\right) \leq \epsilon\\
\ent(S|Z) >  \mi(X;Z|S) - \mi(X;Y|S) - 3\epsilon
\end{cases}\right\},\label{eq:Depsilon_FCSI_SC}
\end{align}
\end{subequations}where $\epsilon\triangleq\max\{\epsilon_n,\nu\geq\frac{\delta}{n}\}$. 
We next show that if a rate $R$ is achievable then $R\in\mathcal{S}_{\epsilon}$ for any $\epsilon>0$. 

For any $\epsilon_n>0$, we start by upper bounding $nR$ using standard techniques.
\begin{align}
    nR &= \ent(M) \nonumber\\
    &\mathop\leq\limits^{(a)} \ent(M|S^n) - \ent(M|Y^n,S^n) + n\epsilon_n\nonumber\\
    &= \mi(M;Y^n|S^n) + n\epsilon_n\nonumber\\
    &= \sum_{i=1}^n{\mi(M;Y_i|Y^{i-1},S^n)} + n\epsilon_n\nonumber\\
    &\leq \sum_{i=1}^n{[\ent(Y_i|Y^{i-1},S^n)-\ent(Y_i|Y^{i-1},S^n,X_i,M)]} + n\epsilon_n\nonumber\\
    &\mathop\leq\limits^{(b)}\sum_{i=1}^n{\mi(X_i;Y_i|S_i)} + n\epsilon_n\nonumber\\
    &\mathop\leq\limits^{(c)} n\mi(\tilde{X};\tilde{Y}|\tilde{S}) + n\epsilon_n\nonumber\\
    &\mathop\leq\limits^{(d)} n\mi(\tilde{X};\tilde{Y}|\tilde{S}) + n\epsilon,\label{eq:Upper_Bound_on_MRate_SC}
\end{align}where 
\begin{itemize}
    \item[$(a)$] follows from Fano's inequality and independence of $M$ from $S^n$;
    \item[$(b)$] holds because conditioning does not increase entropy and $(M,Y^{i-1},S_{\sim i}^{n},X_{\sim i}^{n})-(X_i,S_i)-Y_i$ forms a Markov chain;
    \item[$(c)$] follows from concavity of mutual information, with respect to the input distribution, with the random variables $\tilde{X}$, $\tilde{S}$, $\tilde{Y}$, and $\tilde{Z}$ having the following distributions
\begin{subequations}\label{eq:Joint_Dist_Tilde_FCSI_SC}
\begin{align}
\tilde{P}_{X,S}(x,s) &\triangleq \frac{1}{n}\sum\limits_{i = 1}^n P_{X_i,S_i}(x,s), \label{eq:Joint_Dist_Tilde_FCSI_SC_1}\\
\tilde{P}_{X,S,Y,Z}(x,s,y,z) &\triangleq \tilde{P}_{X,S}(x,s)W_{Y,Z|X,S}(y,z|x,s),\label{eq:Joint_Dist_Tilde_FCSI_SC_2}
\end{align}
\end{subequations}
\item[$(d)$] follows by defining $\epsilon\triangleq\max\{\epsilon_n,\nu\geq\frac{\delta}{n}\}$.
\end{itemize}

By following the same steps as in \eqref{eq:Upper_Bound_on_KRate_FCSI_NC_Final}
we also have,
\begin{align}
nR 
&\ge n\mi(\tilde{X},\tilde{S};\tilde{Z}) - n\ent(\tilde{S})-2\epsilon,
\label{eq:Upper_Bound_on_KRate_SC}
\end{align}
where the random variables $\tilde{X}$, $\tilde{S}$, $\tilde{Y}$, and $\tilde{Z}$ have been defined in \eqref{eq:Joint_Dist_Tilde_FCSI_SC}.
Substituting \eqref{eq:MI_Manipulation} into \eqref{eq:Upper_Bound_on_KRate_SC} leads to 
\begin{align}
    R\ge\mi(\tilde{X};\tilde{Z}|\tilde{S}) -\ent(\tilde{S}|\tilde{Z})-2\epsilon.\label{eq:Upper_Bound_on_KRate_SC_Final}
\end{align}
To show that $ \kld(P_Z||Q_0)\leq\epsilon$, note that for $n$ large enough
\begin{align}
\kld(P_Z||Q_0)=\kld(P_{\tilde{Z}}||Q_0)=\kld\Bigg(\frac{1}{n}\sum\limits_{i=1}^nP_{Z_i}\Bigg|\Bigg|Q_0\Bigg)\leq\frac{1}{n}\sum\limits_{i=1}^n\kld(P_{Z_i}||Q_0)\leq\frac{1}{n}\kld(P_{Z^n}||Q_0^{\otimes n})\leq\frac{\delta}{n}\leq\nu\leq\epsilon.\label{eq:boundKL_FCSI_SC}
\end{align}
Combining \eqref{eq:Upper_Bound_on_MRate_SC} and \eqref{eq:Upper_Bound_on_KRate_SC_Final} shows that $\forall \epsilon_n,\nu>0$, $R\leq \max\{x:x\in\mathcal{S}_{\epsilon}\}$. Therefore,
\begin{align}
  C_{\mbox{\scriptsize\rm SC-TR}} = \max\left\{x:x\in\bigcap_{\epsilon>0}\mathcal{S}_{\epsilon}\right\}.
\end{align}
\subsubsection{Continuity at Zero}
One can prove the continuity at zero of $\calS_\epsilon$ by substituting $\min\{\mi(U;Y)-\mi(U;S),\mi(U,V;Y)-\mi(U;S|V)\}$ with $\mi(X;Y|S)$ and $\mi(V;Z)-\mi(V;S)$ with $\mi(X;Z|S)-\ent(S|Z)$ in Appendix~\ref{sec:continuity-at-zero} and following the exact same arguments.


\section{Proof of Theorem~\ref{thm:Acivable_Rate_CSIT_WZ_NC}}
\label{sec:Proof_Acivable_Rate_CSIT_WZ_NC}

We adopt a block-Markov encoding scheme in which $B$ independent messages are transmitted over $B$ channel blocks each of length $r$, such that $n=rB$.
The warden's observation is $Z^n=(Z_1^r,\dots,Z_B^r)$, the distribution induced at the output of the warden is $P_{Z^n}$, the target output distribution is $Q_Z^{\otimes n}$, and Equation \eqref{eq:Total_KLD_CoNC}, describing the distance between the two distributions, continues to  hold. 
The random code generation is as follows:

Fix $P_{U|S}(u|s)$, $P_{V|S}(v|s)$, $x(u,s)$, and $\epsilon_1>\epsilon_2>0$ such that, $P_Z = Q_0$. 

\emph{Codebook Generation for Keys:}
For each block $j\in\intseq{1}{B}$, let $C_1^{(r)}\triangleq\big\{V^r(a_j)\big\}_{a_j\in\mathcal{A}}$, where $\mathcal{A}=\intseq{1}{2^{r\tilde{R}}}$, be a random codebook consisting of independent random sequences each generated according to $P_V^{\otimes r}$, where $P_V=\sum_{s\in\mathcal{S}}Q
_S(s)P_{V|S}(v|s)$. We denote a realization of $C_1^{(r)}$ by $\mathcal{C}_1^{(r)}\triangleq\big\{v^r(a_j)\big\}_{a_j\in\mathcal{A}}$. Partition the set of indices $a_j\in\intseq{1}{2^{r\tilde{R}}}$ into bins $\mathcal{B}(t)$, $t\in\intseq{1}{2^{rR_T}}$ by using function $\varphi:V^r(a_j)\to\intseq{1}{2^{rR_T}}$ through random binning by choosing the value of $\varphi(v^r(a_j))$ independently and uniformly at random for every $v^r(a_j)\in\mathcal{V}^r$.
For each block $j\in\intseq{1}{B}$, create a function $\Phi:V^r(a_j)\to\intseq{1}{2^{rR_K}}$ through random binning by choosing the value of $\Phi(v^r(a_j))$ independently and uniformly at random for every $v^r(a_j)\in\mathcal{V}^r$. The key $k_j=\Phi(v^r(a_j))$ obtained in block $j\in\intseq{1}{B}$ from the description of the channel state sequence $v^r(a_j)$ is used to assist the encoder in block $j+2$.

\emph{Codebook Generation for Messages:}
For each block $j\in\intseq{1}{B}$, let $C_2^{(r)}\triangleq\big\{U^r(m_j,t_{j-1},k_{j-2},\ell_j)\big\}_{(m_j,t_{j-1},k_{j-2},\ell_j)\in\mathcal{M}\times\mathcal{T}\times\mathcal{K}\times\mathcal{L}}$, where $\mathcal{M}=\intseq{1}{2^{rR}}$, $\mathcal{T}=\intseq{1}{2^{rR_t}}$, $\mathcal{K}=\intseq{1}{2^{rR_k}}$, and $\mathcal{L}=\intseq{1}{2^{rR'}}$, be a random codebook consisting of independent random sequences each generated according to $P_U^{\otimes r}$. We denote a realization of $C_2^{(r)}$ by $\mathcal{C}_2^{(r)}\triangleq\big\{u^r(m_j,t_{j-1},k_{j-2},\ell_j)\big\}_{(m_j,t_{j-1},k_{j-2},\ell_j)\in\mathcal{M}\times\mathcal{T}\times\mathcal{K}\times\mathcal{L}}$. Let, $C_r=\big\{C_1^{(r)},C_2^{(r)}\big\}$ and $\mathcal{C}_r=\big\{\mathcal{C}_1^{(r)},\mathcal{C}_2^{(r)}\big\}$. The indices $(m_j,t_{j-1},k_{j-2},\ell_j)$ can be viewed as a three layer binning. 
We define an ideal \ac{PMF} for codebook $\mathcal{C}_r$, as an approximate distribution to facilitate the analysis
\begin{align}
    &\Gamma_{M_j,T_{j-1},K_{j-2},L_j,A_j,U^r,V^r,S_j^r,Z_j^r,K_{j-1},T_j,K_j}^{(\mathcal{C}_r)}(m_j,t_{j-1},k_{j-2},\ell_j,a_j,\tilde{u}^r,\tilde{v}_j^r,s_j^r,z_j^r,k_{j-1},t_j,k_j) \nonumber\\
    &\qquad=2^{-r(R + R_t + R_k  + R'+\tilde{R})}\indic{1}_{\{ \tilde{u}^r=u^r(m_j,t_{j-1},k_{j-2},\ell_j) \}}\indic{1}_{\{ \tilde{v}^r=v^r(a_j)\}}
     P_{S|U,V}^{\otimes r}(s_j^r|\tilde{u}^r,\tilde{v}^r)\nonumber\\
    &\qquad\quad\times W_{Z|U,S}^{\otimes r}(z_j^r|\tilde{u}^r,s_j^r)2^{-rR_k} \indic{1}_{\{ t_j=\sigma (\tilde{v}^r) \}} \indic{1}_{\{ k_j=\Phi (\tilde{v}^r) \}},\label{eq:Ideal_PMF}
\end{align}where $W_{Z|U,S}$ is the marginal distribution $W_{Z|U,S}=\sum_{x\in\mathcal{X}}\indic{1}_{\{x=x(u,s)\}}W_{Z|X,S}$ and $P_{S|U,V}$ is defined as follows
\begin{align}
    P_{S|U,V}(s|u,v)&\triangleq\frac{P_{S,U,V}(s,u,v)}{P_{U,V}(u,v)}=\frac{Q_S(s)P_{U|S}(u|s)P_{V|S}(v|s)}{\sum_{s\in\mathcal{S}}Q_S(s)P_{U|S}(u|s)P_{V|S}(v|s)}.\label{eq:S_Dist_NC}
\end{align}

\emph{{Encoding:}}
\begin{figure*}
\centering
\includegraphics[width=6.0in]{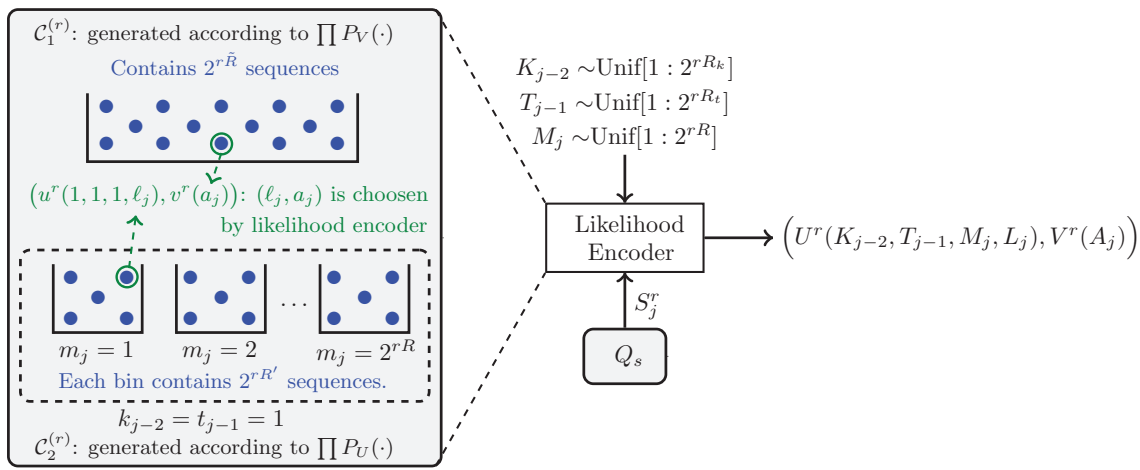}
\caption{Proposed coding scheme for the dual use of \ac{CSI}}
\label{fig:Encoding}
\end{figure*}

We assume that the transmitter and the receiver have access to shared secret keys $k_{-1}$ and $k_0$ for the first two blocks but after the first two blocks they use the key that they generate from the \ac{CSI}.  

In the first block, to send the message $m_1$ according to $k_{-1}$, the encoder chooses the index $t_0$ uniformly at random and then chooses the indices $\ell_1$ and $a_1$ according to the following distribution with $j=1$,
\begin{align}
 f(\ell_j,a_j|s_j^r,m_j,t_{j-1},k_{j-2})=\frac{P_{S|U,V}^{\otimes r}\left(s_j^r|u^r(m_j,t_{j-1},k_{j-2},\ell_j),v^r(a_j)\right)}{\sum\limits_{\ell'_j\in\intseq{1}{2^{rR'}}}\sum\limits_{a'_j\in\intseq{1}{2^{r\tilde{R}}}}P_{S|U,V}^{\otimes r}\left(s_j^r|u^r(m_j,t_{j-1},k_{j-2},\ell'_j),v^r(a'_j)\right)},
\label{eq:Likelihood_Encoder_BM}
\end{align}
where $P_{S|U,V}$ is defined in~\eqref{eq:S_Dist_NC}. Based on these indices, the encoder computes $u^r(m_1,t_0,k_{-1},\ell_1)$ and $v^r(a_1)$ and transmits codeword $x^r$, where $x_i=x(u_i(m_1,t_0,k_{-1},\ell_1),s_i)$. Note that, the index $t_0$ does not convey any useful information. Simultaneously, it uses the description of its \ac{CSI} $v^r(a_1)$ to generate a reconciliation index $t_1$ and a key $k_1$ to be used in the second and the third blocks, respectively. 

In the second block, to send the message $m_2$ and reconciliation index $t_1$ according to $k_0$, the encoder chooses the indices $\ell_2$ and $a_2$ according to the likelihood encoder described in \eqref{eq:Likelihood_Encoder_BM} with $j=2$. Based on these indices the encoder computes $u^r(m_2,t_1,k_0,\ell_2)$ and $v^r(a_2)$ and transmits codeword $x^r$, where $x_i=x(u_i(m_2,t_1,k_0,\ell_1),s_i)$. Simultaneously, it uses the description of its \ac{CSI} $v^r(a_2)$ to generate a reconciliation index $t_2$ and a key $k_2$ to be used in the third and the fourth block, respectively.

In block $j\in\intseq{3}{B}$, to send the message $m_j$ and reconciliation index $t_{j-1}$, generated in the previous block, according to the generated key $k_{j-2}$ from the the block $j-2$ and the \ac{CSI} of the current block, the encoder selects indices $\ell_j$ and $a_j$ from bin $(m_j,t_{j-1},k_{j-2})$ according to the likelihood encoder described in \eqref{eq:Likelihood_Encoder_BM}. 
The encoder then transmits codeword $x^r$, where each coordinate of the transmitted signal is a function of the state, as well as the corresponding sample of the transmitter's codeword $u_i$, i.e., $x_i=x(u_i(m_j,t_{j-1},k_{j-2},\ell_j),s_i)$. Simultaneously, the encoder uses the description of its \ac{CSI} $v^r(a_j)$ to generate a reconciliation index $t_j$ and a key $k_j$ to be used in the block $j+1$ and the block $j+2$, respectively. The encoding scheme in block $j\in\intseq{3}{B}$ is depicted in Fig.~\ref{fig:Encoding}. 

Define
\begin{align}
    \twocolalign &\Upsilon_{M_j,T_{j-1},K_{j-2},S_j^r,L_j,A_j,U^r,V^r,Z_j^r,K_{j-1},T_j,K_j}^{(\mathcal{C}_r)}(m_j,t_{j-1},k_{j-2},s_j^r,\ell_j,a_j,\tilde{u}^r,\tilde{v}^r,z_j^r,k_{j-1},t_j,k_j)\nonumber\\
    &\qquad\triangleq 2^{-r(R+R_t+R_k)} Q_S^{\otimes r}(s_j^r)  f(\ell_j,a_j|s_j^r,m_j,t_{j-1},k_{j-2})\indic{1}_{\{ \tilde{u}^r=u^r(m_j,t_{j-1},k_{j-2},\ell_j) \}}\indic{1}_{\{ \tilde{v}^r=v^r(a_j)\}}\nonumber\\
    &\qquad\quad\times W_{Z|U,S}^{\otimes r}(z_j^r|\tilde{u}^r,s_j^r)2^{-rR_k}\indic{1}_{\{ t_j=\sigma(\tilde{v}^r)\}}\indic{1}_{\{ k_j=\Phi (\tilde{v}^r) \}}.\label{eq:P_Dist}
\end{align}
For a given codebook $\mathcal{C}_r$, the induced joint distribution over the codebook (i.e. $P^{(\mathcal{C}_r)}$) satisfies
\begin{align}
\kld\Big(P_{M_j,T_{j-1},K_{j-2},S_j^r,L_j,A_j,U^r,V^r,Z_j^r,K_{j-1},T_j,K_j}^{(\mathcal{C}_r)}||\Upsilon_{M_j,T_{j-1},K_{j-2},S_j^r,L_j,A_j,U^r,V^r,Z_j^r,K_{j-1},T_j,K_j}^{(\mathcal{C}_r)}\Big)\leq\epsilon.\label{eq:Uniformity_Key_NC_CSIT}
\end{align}
This intermediate distribution $\Upsilon^{({\cal C}_r)}$ approximates the true distribution $P^{({\cal C}_r)}$ and will be used in the sequel for bounding purposes. Expression~\eqref{eq:Uniformity_Key_NC_CSIT} holds because the main difference in $\Upsilon^{({\cal C}_r)}$ is assuming the keys $K_{j-2}$, $K_{j-1}$ and the reconciliation index $T_{j-1}$ are uniformly distributed, which is made (arbitrarily) nearly uniform in $P^{({\cal C}_r)}$ with appropriate control of rate.

\emph{{Covert Analysis:}} 
We now show $\expec_{C_n}[\kld(P_{Z^n|C_n}|| Q_Z^{\otimes n} )]\underset{n\rightarrow\infty}{\xrightarrow{\hspace{0.2in}}} 0$, where $C_n$ is the set of all the codebooks for all blocks, and then choose $P_U$, $P_V$, and $x(u,s)$ such that it satisfies $Q_Z=Q_0$. From the expansion in \eqref{eq:Total_KLD_CoNC}, for every $j\in\intseq{2}{B}$,
\begin{align}
 \mi(Z_j^r;Z_{j+1}^{B,r}) &\leq \mi(Z_j^r;K_{j-1},T_j,K_j,Z_{j+1}^{B,r}) \nonumber\\
 &\mathop=\limits^{(a)} \mi(Z_j^r;K_{j-1},T_j,K_j), \label{eq:Indep_Across_Blocks_C_CSIT}
\end{align}
where $(a)$ holds because $Z_j^r-(K_{j-1},T_j,K_j)-Z_{j+1}^{B,r}$ forms a Markov chain, as seen in the functional dependence graph depicted in Fig.~\ref{fig:Chaining_Noncausal_CSIT}. Also,
\begin{align}
    \mi(Z_j^r;K_{j-1},T_j,K_j) &= \kld(P_{Z_j^r,K_{j-1},T_j,K_j}||P_{Z_j^r}P_{K_{j-1},T_j,K_j})\nonumber\\
  &\mathop\leq\limits^{(b)} \kld(P_{Z_j^r,K_{j-1},T_j,K_j}||Q_Z^{\otimes r}Q_{K_{j-1}}Q_{T_j}Q_{K_j}),\label{eq:KLD_Z3_C_CSIT}
\end{align}
where $Q_{K_{j-1}}Q_{K_j}Q_{T_j}$ is the uniform distribution on $\intseq{1}{2^{rR_k}}\times\intseq{1}{2^{rR_K}}\times\intseq{1}{2^{rR_T}}$
and $(b)$ follows from
\begin{align}
  \kld(P_{Z_j^r,K_{j-1},T_j,K_j}||P_{Z_j^r}P_{K_{j-1},T_j,K_j}) &= \kld(P_{Z_j^r,K_{j-1},T_j,K_j}||Q_Z^{\otimes r}Q_{K_{j-1}}Q_{T_j}Q_{K_j}) \nonumber\\
  &\quad- \kld(P_{Z_j^r}||Q_Z^{\otimes r}) - \kld(P_{K_{j-1},T_j,K_j}||Q_{K_{j-1}}Q_{T_j}Q_{K_j}) .
  \label{eq:KLD_Property_C_CSIT}
\end{align} 
Therefore, by combining \eqref{eq:Total_KLD_CoNC}, \eqref{eq:KLD_Z3_C_CSIT}, and \eqref{eq:KLD_Property_C_CSIT}
\begin{align}
    \kld(P_{Z^n}||Q_Z^{\otimes n})\leq 2\sum\limits_{j = 1}^B {\kld(P_{Z_j^r,K_{j-1},T_j,K_j}||Q_Z^{\otimes r}Q_{K_{j-1}}Q_{T_j}Q _{K_j})}.\label{eq:Total_KLD_Bound_NC_CSIT}
\end{align}
\begin{figure*}
\centering
\includegraphics[width=6.0in]{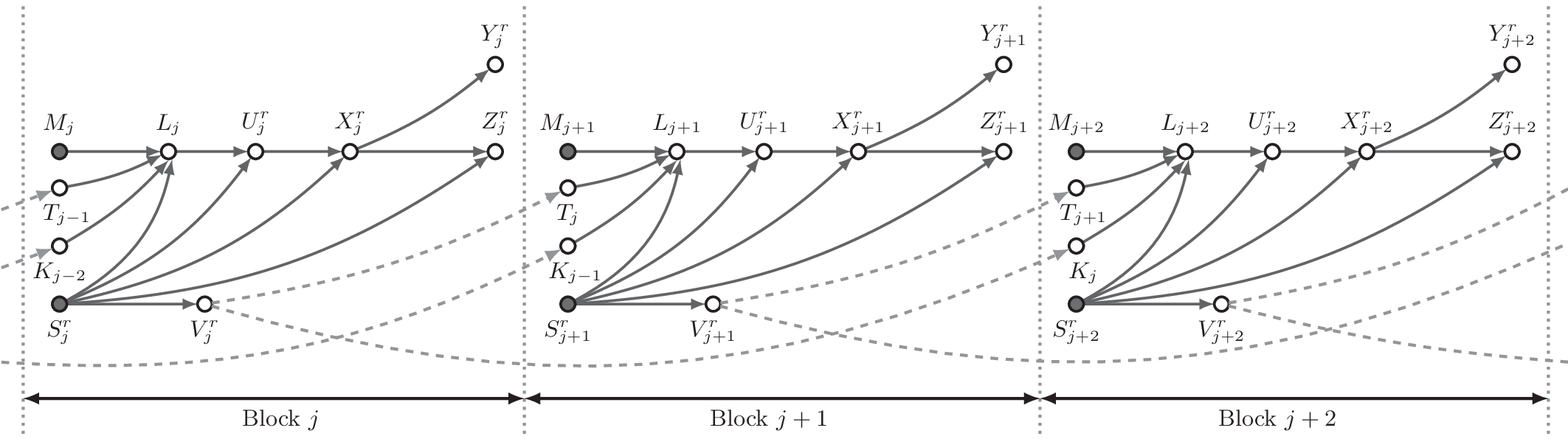}
\caption{Functional dependence graph for the block-Markov encoding scheme}
\label{fig:Chaining_Noncausal_CSIT}
\end{figure*}
To bound the \ac{RHS} of \eqref{eq:Total_KLD_Bound_NC_CSIT} by using Lemma~\ref{lemma:TV_KLD} and the triangle inequality we have,
\begin{align}
    &\expec_{C_r}||P_{Z_j^r,K_{j-1},T_j,K_j|C_r}-Q_Z^{\otimes r}Q_{K_{j-1}}Q_{T_j}Q _{K_j}||_1 \leq\expec_{C_r}||P_{Z_j^r,K_{j-1},T_j,K_j|C_r}-\Gamma_{Z_j^r,K_{j-1},T_j,K_j|C_r}||_1  \nonumber\\ 
    &\qquad+ \expec_{C_r}||\Gamma_{Z_j^r,K_{j-1},T_j,K_j|C_r}-Q_Z^{\otimes r}Q_{K_{j-1}}Q_{T_j}Q _{K_j}||_1\nonumber\\ &\leq\expec_{C_r}||P_{Z_j^r,K_{j-1},T_j,K_j|C_r}-\Upsilon_{Z_j^r,K_{j-1},T_j,K_j|C_r}||_1 + \expec_{C_r}||\Upsilon_{Z_j^r,K_{j-1},T_j,K_j|C_r}-\Gamma_{Z_j^r,K_{j-1},T_j,K_j|C_r}||_1  \nonumber\\
    &\qquad+ \expec_{C_r}||\Gamma_{Z_j^r,K_{j-1},T_j,K_j|C_r}-Q_Z^{\otimes r}Q_{K_{j-1}}Q_{T_j}Q _{K_j}||_1.\label{eq:Triangle_Inequality_2}
\end{align}
From \eqref{eq:Uniformity_Key_NC_CSIT} and the monotonicity of KL-divergence the first term on the \ac{RHS} of \eqref{eq:Triangle_Inequality_2} goes to zero when $r$ grows. To bound the second term on the \ac{RHS} of \eqref{eq:Triangle_Inequality_2} for a fixed codebook $\mathcal{C}_r$ we have,
\begin{subequations}\label{eq:Upsilon_Gamma}
\begin{align}
  \Gamma_{M_j,T_{j-1},K_{j-2}}^{(\mathcal{C}_r)} &= 2^{-r(R+R_t+R_k)} = \Upsilon_{M_j,T_{j-1},K_{j-2}}^{(\mathcal{C}_r)},\label{eq:Uniformity_M_2}\\
  \Gamma_{L_j,A_j|M_j,T_{j-1},K_{j-2},S_j^r}^{(\mathcal{C}_r)} &= f^{LE}(\ell_j,a_j|s_j^r,m_j,t_{j-1},k_{j-2}) = \Upsilon_{L_j,A_j|M_j,T_{j-1},K_{j-2},S_j^r}^{(\mathcal{C}_r)},\label{eq:Likelihood_Encoder_Equality_2}\\
  \Gamma_{U^r|M_j,T_{j-1},K_{j-2},S_j^r,L_j,A_j}^{(\mathcal{C}_r)} &= \indic{1}_{\{\tilde{u}^r= u^r(m_j,t_{j-1},k_{j-2},\ell_j) \}} \twocolbreak= \Upsilon_{U^r|M_j,T_{j-1},K_{j-2},S_j^r,L_j,A_j}^{(\mathcal{C}_r)},\label{eq:eq4}\\
  \Gamma_{V^r|M_j,T_{j-1},K_{j-2},S_j^r,L_j,A_j,U^r}^{(\mathcal{C}_r)} &=  \indic{1}_{\{ \tilde{v}^r = v^r(a_j) \}} = \Upsilon_{V^r|M_j,T_{j-1},K_{j-2},S_j^r,L_j,A_j,U^r}^{(\mathcal{C}_r)},\label{eq:eq5}\\
\Gamma_{Z_j^r|M_j,T_{j-1},K_{j-2},S_j^r,L_j,A_j,U^r,V^r}^{(\mathcal{C}_r)} &= W_{Z|U,S}^{\otimes r}  = \Upsilon_{Z_j^r|M_j,T_{j-1},K_{j-2},S_j^r,L_j,A_j,U^r,V^r}^{(\mathcal{C}_r)},\label{eq:Channel_Equality_2}\\
\Gamma_{K_{j-1}|M_j,T_{j-1},K_{j-2},S_j^r,L_j,A_j,U^r,V^r,Z_j^r}^{(\mathcal{C}_r)} &= 2^{-rR_k}  = \Upsilon_{K_{j-1}|M_j,T_{j-1},K_{j-2},S_j^r,L_j,A_j,U^r,V^r,Z_j^r}^{(\mathcal{C}_r)},\label{eq:Second_Key}\\
\Gamma_{T_j|M_j,T_{j-1},K_{j-2},S_j^r,L_j,A_j,U^r,V^r,Z_j^r,K_{j-1}}^{(\mathcal{C}_r)} &=\indic{1}_{\{ t_j=\sigma (v^r) \}}=\Upsilon_{T_j|M_j,T_{j-1},K_{j-2},S_j^r,L_j,A_j,U^r,V^r,Z_j^r,K_{j-1}}^{(\mathcal{C}_r)},\label{eq:K_j_Distribution}\\
\Gamma_{K_j|M_j,T_{j-1},K_{j-2},S_j^r,L_j,A_j,U^r,V^r,Z_j^r,K_{j-1},T_j}^{(\mathcal{C}_r)} &=\indic{1}_{\{ k_j=\Phi (v^r)\}}=\Upsilon_{K_j|M_j,T_{j-1},K_{j-2},S_j^r,L_j,A_j,U^r,V^r,Z_j^r,K_{j-1},T_j}^{(\mathcal{C}_r)},\label{eq:T_j_Distribution}
\end{align}
\end{subequations}
where \eqref{eq:Likelihood_Encoder_Equality_2} follows from \eqref{eq:Likelihood_Encoder_BM}. Hence,
\begin{align}
 &\expec_{C_r}||\Upsilon_{Z_j^r,K_{j-1},T_j,K_j|C_r}-\Gamma_{Z_j^r,K_{j-1},T_j,K_j|C_r}||_1
 \twocolbreak \nonumber\\ &\leq\expec_{C_r}||\Upsilon_{M_j,T_{j-1},K_{j-2},S_j^r,L_j,A_j,U^r,V^r,Z_j^r,K_{j-1},T_j,K_j|C_r}- \Gamma_{M_j,T_{j-1},K_{j-2},S_j^r,L_j,A_j,U^r,V^r,Z_j^r,K_{j-1},T_j,K_j|C_r}||_1 \nonumber\\
&\mathop = \limits^{(a)}\expec_{C_r}||\Upsilon_{M_j,T_{j-1},K_{j-2},S_j^r|C_r}- \Gamma_{M_j,T_{j-1},K_{j-2},S_j^r|C_r}||_1 \nonumber\\
&\mathop = \limits^{(b)} \expec_{C_r}||Q_S^{\otimes r} - \Gamma_{S_j^r|M_j=1,T_{j-1}=1,K_{j-2}=1,C_r}||_1,\label{eq:General_Distribution_Expansion_2}
\end{align}where $(a)$ follows from \eqref{eq:Likelihood_Encoder_Equality_2}-\eqref{eq:T_j_Distribution} and $(b)$ follows from the symmetry of the codebook construction with respect to $M_j$, $T_{j-1}$, and $K_{j-2}$ and \eqref{eq:Uniformity_M_2}. Based on \cite[Theorem~2]{YassaeeMAWC} the \ac{RHS} of \eqref{eq:General_Distribution_Expansion_2} vanishes if
\begin{subequations}\label{eq:MAC_Covering_Lemma}
\begin{align}
    R' &> \mi(U;S),\label{eq:MAC_Covering_Lemma_1}\\
    \tilde{R} &> \mi(V;S),\label{eq:MAC_Covering_Lemma_2}\\
    R' + \tilde{R} &> \mi(U,V;S).\label{eq:MAC_Covering_Lemma_3}
\end{align}
\end{subequations}
We now proceed to bound the third term on the \ac{RHS} of \eqref{eq:Triangle_Inequality_2}. First, consider the marginal
\begin{align}
    &\Gamma_{Z_j^r,K_{j-1},T_j,K_j|C_r} (z_j^r,k_{j-1},t_j,k_j)= \sum\limits_{m_j} \sum\limits_{t_{j-1}}\sum\limits_{k_{j-2}} \sum\limits_{\ell_j}\sum\limits_{a_j}\sum\limits_{s_j^r} \frac{1}{2^{r(R + R_t + 2R_k + R'+\tilde{R})}}\nonumber\\
    &\qquad\times P_{S|U,V}^{\otimes r}\big(s_j^r|U^r(m_j,t_{j-1},k_{j-2},\ell_j),V^r(a_j)\big) W_{Z|U,S}^{\otimes r}\big(z_j^r|U^r(m_j,t_{j-1},k_{j-2},\ell_j),s_j^r\big)\nonumber\\
    &\qquad\times \indic{1}_{\{ t_j=\sigma (V^r(a_j))\}}\indic{1}_{\{ k_j=\Phi (V^r(a_j)) \}}\\
    &= \sum\limits_{m_j} \sum\limits_{t_{j-1}}\sum\limits_{k_{j-2}} \sum\limits_{\ell_j}\sum\limits_{a_j}\frac{1}{2^{r(R + R_t + 2R_k + R'+\tilde{R})}} W_{Z|U,V}^{\otimes r}\big(z_j^r|U^r(m_j,t_{j-1},k_{j-2},\ell_j),V^r(a_j)\big)\nonumber\\
    &\qquad\times \indic{1}_{\{ t_j=\sigma (V^r(a_j))\}}\indic{1}_{\{ k_j=\Phi (V^r(a_j)) \}},\label{eq:Gamma_ZK} 
\end{align}where $W_{Z|U,V}(z|u,v)=\sum_{s\in\mathcal{S}} P_{S|U,V}(s|u,v)W_{Z|U,S}(z|u,s)$. 
\sloppy To bound the third term on the \ac{RHS} of \eqref{eq:Triangle_Inequality_2} by using Pinsker's inequality, it is sufficient to bound $\expec_{C_r}[\kld({\Gamma_{Z_j^r,K_{j-1},T_j,K_j|C_r}||Q_Z^{\otimes r}Q_{K_{j-1}}Q_{T_j}Q_{K_j}})]$ as follows,
\begin{align}
& \expec_{C_r}[\kld({\Gamma_{Z_j^r,K_{j-1},T_j,K_j|C_r}||Q_Z^{\otimes r}Q_{K_{j-1}}Q_{T_j}Q_{K_j}})] \nonumber\\ 
&= \expec_{C_r}\Big[\sum\limits_{(z_j^r,k_{j-1},t_j,k_j)} \Gamma_{Z_j^r,K_{j-1},T_j,K_j|C_r} (z_j^r,k_{j-1},t_j,k_j)\log \Big(\frac{\Gamma_{Z_j^r,K_{j-1},T_j,K_j|C_r}(z_j^r,k_{j-1},t_j,k_j)}{Q_Z^{\otimes r}(z_j^r)Q_{K_{j-1}}(k_{j-1})Q_{T_j}(t_j)Q_{K_j}(k_j)}\Big) \Big] \nonumber\\ 
&= \expec_{C_r}\Bigg[\sum\limits_{(z_j^r,k_{j-1},t_j,k_j)}   \sum\limits_{m_j}\sum\limits_{t_{j - 1}}\sum\limits_{k_{j-2}} \sum\limits_{\ell_j}\sum\limits_{a_j} \frac{1}{2^{r( R + R_t +2R_k + R'+\tilde{R})}}  W_{Z|U,V}^{\otimes r}\big(z_j^r|U^r(m_j,t_{j-1},k_{j-2},\ell_j),V^r(a_j)\big)\nonumber\\
&\times\indic{1}_{\{ t_j=\sigma (V^r(a_j))\}}\indic{1}_{\{ k_j=\Phi (V^r(a_j)) \}}\nonumber\\
&\times\log \Bigg(\frac{\sum\limits_{\tilde{m}_j}\sum\limits_{\tilde{t}_{j - 1}}\sum\limits_{\tilde{k}_{j-2}} \sum\limits_{\tilde{\ell}_j}\sum\limits_{\tilde{a}_j}  W_{Z|U,V}^{\otimes r}\big(z_j^r|U^r(\tilde{m}_j,\tilde{t}_{j-1},\tilde{k}_{j-2},\tilde{\ell}_j),V^r(\tilde{a}_j)\big)  \indic{1}_{\{ t_j=\sigma (V^r(\tilde{a}_j))\}}\indic{1}_{\{k_j=\Phi (V^r(\tilde{a}_j))\}}}{2^{r(R + R_t + R_k + R' +\tilde{R} - R_T- R_K)} Q_Z^{\otimes r}(z_j^r)}  \Bigg)\Bigg]\nonumber\\ 
&\mathop\le\limits^{(a)}\sum\limits_{(z_j^r,k_{j-1},t_j,k_j)} \sum\limits_{m_j}\sum\limits_{t_{j-1}}\sum\limits_{k_{j-2}} \sum\limits_{\ell_j} \sum\limits_{a_j} \frac{1}{2^{r(R + R_t + 2R_k + R'+\tilde{R})}} \sum\limits_{(u^r,v^r)} \Gamma^{(\mathcal{C}_r)}_{U^r,V^r,Z^r}\big(u^r(m_j,t_{j-1},k_{j-2},\ell_j),v^r(a_j),z_j^r\big)\nonumber\\  
&\times\expec_{\sigma(v^r(a_j))}\big[\indic{1}_{\{ t_j=\sigma (v^r(a_j)) \}}\big]\times\expec_{\Phi(v^r(a_j))}\big[\indic{1}_{\{ k_j=\Phi (v^r(a_j))\}}\big] \nonumber\\
&\times\log \expec_{\mathop {\backslash (m_j,t_{j-1},k_{j-2},\ell_j,a_j),}\limits_{\backslash (\sigma (v^r(a_j)),\Phi (v^r(a_j)))} }\Bigg[\frac{1}{2^{r(R + R_t + R_k + R' +\tilde{R} - R_T- R_K)} Q_Z^{\otimes r}(z_j^r)}\nonumber\\
&\times\sum\limits_{\tilde{m}_j}\sum\limits_{\tilde{t}_{j - 1}}\sum\limits_{\tilde{k}_{j-2}} \sum\limits_{\tilde{\ell}_j}\sum\limits_{\tilde{a}_j}  W_{Z|U,V}^{\otimes r}\big(z_j^r|U^r(,\tilde{m}_j,\tilde{t}_{j-1},\tilde{k}_{j-2},\tilde{\ell}_j),V^r(\tilde{a}_j)\big)  \indic{1}_{\{ t_j=\sigma (V^r(\tilde{a}_j))\}}\indic{1}_{\{k_j=\Phi (V^r(\tilde{a}_j))\}}\Bigg]    \nonumber  \\ 
&\mathop\le\limits^{(b)}\sum\limits_{(z_j^r,k_{j-1},t_j,k_j)} \sum\limits_{m_j}\sum\limits_{t_{j-1}}\sum\limits_{k_{j-2}} \sum\limits_{\ell_j} \sum\limits_{a_j} \frac{1}{2^{r(R + R_t + 2R_k + R'+\tilde{R}+R_T+R_K)}} \sum\limits_{(u^r,v^r)} \Gamma^{(\mathcal{C}_r)}_{U^r,V^r,Z^r}\big(u^r(m_j,t_{j-1},k_{j-2},\ell_j),v^r(a_j),z_j^r\big)\nonumber\\ 
&\times\log \frac{1}{2^{r(R + R_t +R_k + R' +\tilde{R} -R_T- R_K)} Q_Z^{\otimes r}(z_j^r)}\Bigg(W_{Z|U,V}^{\otimes r}\big(z_j^r|u^r(m_j,t_{j-1},k_{j-2},\ell_j),v^r(a_j)\big) \nonumber\\
&+  \expec_{\backslash (m_j,t_{j-1},k_{j-2},\ell_j)}\Bigg[\sum\limits_{(\tilde{m}_j,\tilde{t}_{j-1},\tilde{k}_{j-2},\tilde{\ell}_j) \ne (m_j,t_{j-1},k_{j-2},\ell_j)} W_{Z|U,V}^{\otimes r}\big(z_j^r|U^r(\tilde{m}_j,\tilde{t}_{j-1},\tilde{k}_{j-2},\tilde{\ell}_j),v^r(a_j)\big)\Bigg] \nonumber\\
&+  \expec_{\backslash (a_j,\sigma(v^r(a_j)),\Phi(v^r(a_j)))}\Bigg[\sum\limits_{\tilde{a}_{j} \ne a_j}W_{Z|U,V}^{\otimes r}\big(z_j^r|u^r(m_j,t_{j-1},k_{j-2},\ell_j),V^r(\tilde{a}_j)\big)\indic{1}_{\{ t_j=\sigma (V^r(\tilde{a}_j))\}}\indic{1}_{\{ k_j=\Phi (V^r(\tilde{a}_j))\}}\Bigg]         \nonumber\\
&+  \expec_{\mathop {\backslash (m_j,t_{j-1},k_{j-2},\ell_j,a_j),}\limits_{\backslash (\sigma (v^r(a_j)),\Phi (v^r(a_j)))}}\Bigg[ \sum\limits_{(\tilde{m}_j,\tilde{t}_{j-1},\tilde{k}_{j-2},\tilde{\ell}_j) \ne (m_j,t_{j-1},k_{j-2},\ell_j)}\sum\limits_{\tilde{a}_{j}\ne a_j} W_{Z|U,V}^{\otimes r}\big(z_j^r|U^r(\tilde{m}_j,\tilde{t}_{j-1},\tilde{k}_{j-2},\tilde{\ell}_j),V^r(\tilde{a}_j)\big)\nonumber\\
&\times\indic{1}_{\{ t_j=\sigma (V^r(\tilde{a}_j))\}}\indic{1}_{\{ k_j=\Phi (V^r(\tilde{a}_j))\}}\Bigg] \Bigg)\nonumber\\
&\mathop\le\limits^{(c)}\sum\limits_{(z_j^r,k_{j-1},t_j,k_j)} \sum\limits_{m_j}\sum\limits_{t_{j-1}}\sum\limits_{k_{j-2}} \sum\limits_{\ell_j} \sum\limits_{a_j} \frac{1}{2^{r(R + R_t + 2R_k + R'+\tilde{R}+R_T+R_K)}} \sum\limits_{(u^r,v^r)} \Gamma^{(\mathcal{C}_r)}_{U^r,V^r,Z^r}\big(u^r(m_j,t_{j-1},k_{j-2},\ell_j),v^r(a_j),z_j^r\big)\nonumber\\
&\times\log \Bigg(\frac{W_{Z|U,V}^{\otimes r}\big(z_j^r|u^r(m_j,t_{j-1},k_{j-2},\ell_j),v^r(a_j)\big)}{2^{r(R + R_t +R_k + R' +\tilde{R} -R_T- R_K)} Q_Z^{\otimes r}(z_j^r)} \nonumber\\
& + \sum\limits_{(\tilde{m}_j,\tilde{t}_{j-1},\tilde{k}_{j-2},\tilde{\ell}_j) \ne (m_j,t_{j-1},k_{j-2},\ell_j)} {\frac{ W_{Z|V}^{\otimes r}\big(z_j^r|v^r(a_j)\big)}{{{2^{r( R + R_t +R_k + R' +\tilde{R}-R_T- R_K)}} Q_Z^{\otimes r}(z_j^r)}}} + \sum\limits_{\tilde{a}_j \ne a_j}\frac{W_{Z|U}^{\otimes r}\big(z_j^r|u^r(m_j,t_{j-1},k_{j-2},\ell_j)\big)}{2^{r( R + R_t +R_k + R'+\tilde{R})} Q_Z^{\otimes r}(z_j^r)}  +1 \Bigg)\nonumber\\
&\le\sum\limits_{(z_j^r,k_{j-1},t_j,k_j)} \sum\limits_{m_j}\sum\limits_{t_{j-1}}\sum\limits_{k_{j-2}} \sum\limits_{\ell_j} \sum\limits_{a_j} \frac{1}{2^{r(R + R_t + 2R_k + R'+\tilde{R}+R_T+R_K)}} \sum\limits_{(u^r,v^r)} \Gamma^{(\mathcal{C}_r)}_{U^r,V^r,Z^r}\big(u^r(m_j,t_{j-1},k_{j-2},\ell_j),v^r(a_j),z_j^r\big)\nonumber\\
&\times\log \Bigg(\frac{W_{Z|U,V}^{\otimes r}\big(z_j^r|u^r(m_j,t_{j-1},k_{j-2},\ell_j),v^r(a_j)\big)}{2^{r(R + R_t +R_k + R' +\tilde{R} -R_T- R_K)} Q_Z^{\otimes r}(z_j^r)} +  \frac{W_{Z|V}^{\otimes r}\big(z_j^r|v^r(a_j)\big)}{{{2^{r( \tilde{R}-R_T- R_K)}} Q_Z^{\otimes r}(z_j^r)}} + \frac{W_{Z|U}^{\otimes r}\big(z_j^r|u^r(m_j,t_{j-1},k_{j-2},\ell_j)\big) }{2^{r( R + R_t +R_k + R')} Q_Z^{\otimes r}(z_j^r)} +1 \Bigg)        \nonumber\\
&\triangleq \Psi_1 + \Psi_2,\label{eq:KLD_ZK}
\end{align}
where $(a)$ follows from Jensen's inequality, $(b)$ and $(c)$ hold because $\indic{1}_{\{\cdot\}}\leq 1$. We defined $\Psi_1$ and $\Psi_2$ as
\begin{align}
&{\Psi _1} = \sum\limits_{(k_{j-1},t_j,k_j)}  \sum\limits_{m_j} \sum\limits_{t_{j-1}}\sum\limits_{k_{j-2}} \sum\limits_{\ell_j}\sum\limits_{a_j} \frac{1}{{{2^{r(R + R_t +2R_k + R'+\tilde{R}+R_T+R_K)}}}}\nonumber\\
&\times\sum\limits_{(u^r(m_j,t_{j-1},k_{j-2},\ell_j),v^r(a_j),z_j^r) \in \mathcal{T}_\epsilon ^{(n)}} \Gamma^{(\mathcal{C}_r)}_{U^r,V^r,Z^r}\big(u^r(m_j,t_{j-1},k_{j-2},\ell_j),v^r(a_j),z_j^r\big)\nonumber\\
&\times\log \Bigg(\frac{W_{Z|U,V}^{\otimes r}\big(z_j^r|u^r(m_j,t_{j-1},k_{j-2},\ell_j),v^r(a_j)\big)}{2^{r(R + R_t +R_k + R' +\tilde{R} -R_T- R_K)} Q_Z^{\otimes r}(z_j^r)} +  \frac{W_{Z|V}^{\otimes r}\big(z_j^r|v^r(a_j)\big)}{{{2^{r( \tilde{R}-R_T- R_K)}} Q_Z^{\otimes r}(z_j^r)}} + \frac{W_{Z|U}^{\otimes r}\big(z_j^r|u^r(m_j,t_{j-1},k_{j-2},\ell_j)\big) }{2^{r( R + R_t +R_k + R')} Q_Z^{\otimes r}(z_j^r)} +1 \Bigg)        \nonumber\\
& \le \log \Bigg(\frac{2^{r(R_T+R_K)}\times 2^{ - r(1 - \epsilon )\ent(Z|U,V)}}{2^{r( R + R_t +R_k + R'+\tilde{R})}\times 2^{ - r(1 + \epsilon )\ent(Z)}} + \frac{2^{r(R_T+R_K)}\times 2^{ - r(1 - \epsilon )\ent(Z|V)}}{2^{r\tilde{R}}\times 2^{ - r(1 + \epsilon )\ent(Z)}} + \frac{2^{ - r(1 - \epsilon )\ent(Z|U)}}{2^{r( R + R_t +R_k + R')}\times 2^{ - r(1 + \epsilon )\ent(Z)}} + 1\Bigg)\label{eq:Psi_1}\\
&{\Psi _2} = \sum\limits_{(k_{j-1},t_j,k_j)}  \sum\limits_{m_j} \sum\limits_{t_{j-1}}\sum\limits_{k_{j-2}} \sum\limits_{\ell_j}\sum\limits_{a_j} \frac{1}{{{2^{r(R + R_t +2R_k + R'+\tilde{R}+R_T+R_K)}}}}\nonumber\\
&\times\sum\limits_{(u^r(m_j,t_{j-1},k_{j-2},\ell_j),v^r(a_j),z_j^r) \notin \mathcal{T}_\epsilon ^{(n)}} \Gamma^{(\mathcal{C}_r)}_{U^r,V^r,Z^r}\big(u^r(m_j,t_{j-1},k_{j-2},\ell_j),v^r(a_j),z_j^r\big)\nonumber\\
&\times\log \Bigg(\frac{W_{Z|U,V}^{\otimes r}\big(z_j^r|u^r(m_j,t_{j-1},k_{j-2},\ell_j),v^r(a_j)\big)}{2^{r(R + R_t +R_k + R' +\tilde{R} -R_T- R_K)} Q_Z^{\otimes r}(z_j^r)} +  \frac{W_{Z|V}^{\otimes r}\big(z_j^r|v^r(a_j)\big)}{{{2^{r( \tilde{R}-R_T- R_K)}} Q_Z^{\otimes r}(z_j^r)}} + \frac{W_{Z|U}^{\otimes r}\big(z_j^r|u^r(m_j,t_{j-1},k_{j-2},\ell_j)\big) }{2^{r( R + R_t +R_k + R')} Q_Z^{\otimes r}(z_j^r)} +1 \Bigg)        \nonumber\\
&\le 2|V||U||Z|e^{ - r\epsilon^2 {\mu _{S,V,U,Z}}}r\log \Big(\frac{3}{\mu_Z} + 1\Big). \label{eq:Psi_2}
\end{align}
In \eqref{eq:Psi_2} $\mu_{V,U,Z}=\min\limits_{(v,u,z)\in(\mathcal{V},\mathcal{U},\mathcal{Z})}P_{V,U,Z}(v,u,z)$ and $\mu_Z=\min\limits_{z\in\mathcal{Z}}P_Z(z)$. When $r\to\infty$ then $\Psi_2\to 0$ and $\Psi_1$ goes to zero when $r$ grows if
\begin{subequations}\label{eq:KLD_Bound}
\begin{align}
    R +R_t+R_k+ R'+\tilde{R}-R_T-R_K&>\mi(U,V;Z),\label{eq:KLD_Bound_1}\\
    \tilde{R}-R_T-R_K&>\mi(V;Z),\label{eq:KLD_Bound_2}\\
    R +R_t+R_k+ R'&>\mi(U;Z).\label{eq:KLD_Bound_3}
\end{align}
\end{subequations}

\emph{{Decoding and Error Probability Analysis:}}
At the end of block $j\in\intseq{1}{B}$, using its knowledge of the key $k_{j-2}$ generated from the block $j-2$, the receiver finds a unique triple $(\hat{m}_j,\hat{t}_{j-1},\hat{\ell}_j)$ such that $\big(u^r(\hat{m}_j,\hat{t}_{j-1},k_{j-2},\hat{\ell}_j),y_j^r\big)\in\mathcal{T}_{\epsilon}^{(r)}$. 
To bound the probability of error at the encoder and the decoder we need the following lemma,
\begin{lemma}[Typical With High Probability]
\label{lemma:Typicality}
If $(R',\tilde{R})\in\mathds{R}_+^2$ satisfies \eqref{eq:MAC_Covering_Lemma_1} to \eqref{eq:MAC_Covering_Lemma_3}, then for any $(m_j,t_{j-1},k_{j-2})\in(\mathcal{M}_j,\mathcal{T}_{j-1},\mathcal{K}_{j-2})$ and $\epsilon>0$, we have
\begin{align}
    \expec_{C_r}\Prob_P\Big(\big(U^r(m_j,t_{j-1},k_{j-2},L_j),V^r(A_j),S_j^r\big)\notin\mathcal{T}_{\epsilon}^{(n)}|C_r\Big)\underset{r\rightarrow\infty}{\xrightarrow{\hspace{0.2in}}} 0,
\end{align}
\end{lemma}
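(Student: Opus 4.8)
The plan is to transfer the assertion from the true induced distribution $P^{(\mathcal{C}_r)}$ to the ideal distribution $\Gamma^{(\mathcal{C}_r)}$ of \eqref{eq:Ideal_PMF}, for which it reduces to the law of large numbers. By the symmetry of the codebook construction with respect to $(M_j,T_{j-1},K_{j-2})$ it suffices to treat the fixed triple $(m_j,t_{j-1},k_{j-2})=(1,1,1)$; write $\mathcal{E}$ for the event $\big(U^r(1,1,1,L_j),V^r(A_j),S_j^r\big)\notin\mathcal{T}_\epsilon^{(n)}$. The triangle inequality gives
\begin{align}
\expec_{C_r}\Prob_P(\mathcal{E}\mid C_r)\leq \expec_{C_r}\Prob_{\Gamma}(\mathcal{E}\mid C_r)+\expec_{C_r}||P^{(\mathcal{C}_r)}_{U^r,V^r,S_j^r}-\Gamma^{(\mathcal{C}_r)}_{U^r,V^r,S_j^r}||_1,\nonumber
\end{align}
so it is enough to argue that both terms on the right-hand side vanish as $r\to\infty$.

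For the total-variation term I would route through the intermediate distribution $\Upsilon^{(\mathcal{C}_r)}$ of \eqref{eq:P_Dist}: the bound \eqref{eq:Uniformity_Key_NC_CSIT} together with Lemma~\ref{lemma:TV_KLD} shows $\expec_{C_r}||P^{(\mathcal{C}_r)}-\Upsilon^{(\mathcal{C}_r)}||_1\to 0$ on the marginal of interest, while \eqref{eq:General_Distribution_Expansion_2}, combined with the two-layer soft-covering estimate \cite[Theorem~2]{YassaeeMAWC}, shows $\expec_{C_r}||\Upsilon^{(\mathcal{C}_r)}-\Gamma^{(\mathcal{C}_r)}||_1\to 0$ precisely under the rate conditions \eqref{eq:MAC_Covering_Lemma_1}--\eqref{eq:MAC_Covering_Lemma_3}. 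For the first term I would note that under $\Gamma^{(\mathcal{C}_r)}$, conditioned on $(M_j,T_{j-1},K_{j-2})=(1,1,1)$, the indices $L_j$ and $A_j$ are uniform on their ranges; hence, after averaging over the random codebook, $U^r(1,1,1,L_j)\sim P_U^{\otimes r}$ and $V^r(A_j)\sim P_V^{\otimes r}$ independently, and $S_j^r$ is then generated letter-by-letter through $P_{S|U,V}$, so that the codebook-averaged law of $\big(U^r(1,1,1,L_j),V^r(A_j),S_j^r\big)$ is exactly $P_{S,U,V}^{\otimes r}$ with $P_{S,U,V}=P_UP_VP_{S|U,V}$. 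Consequently $\expec_{C_r}\Prob_{\Gamma}(\mathcal{E}\mid C_r)=\Prob_{P_{S,U,V}^{\otimes r}}\big((U^r,V^r,S^r)\notin\mathcal{T}_\epsilon^{(n)}\big)\to 0$ by the weak law of large numbers (equivalently, the typical-average lemma).

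The main obstacle is the step $\expec_{C_r}||\Upsilon^{(\mathcal{C}_r)}-\Gamma^{(\mathcal{C}_r)}||_1\to 0$, i.e.\ showing that the likelihood encoder \eqref{eq:Likelihood_Encoder_BM}, which draws $(L_j,A_j)$ with probability proportional to $P_{S|U,V}^{\otimes r}\big(s_j^r\mid u^r(\cdot),v^r(\cdot)\big)$, is statistically indistinguishable from the idealized mechanism that picks the indices uniformly and then synthesizes the state. Because the $U$- and the $V$-codebooks are generated \emph{independently} while they must \emph{jointly} cover an i.i.d.\ state sequence drawn from $Q_S^{\otimes r}$, this is a multiterminal soft-covering (``cloud covering'') statement rather than a single-codebook resolvability bound; the essential point is to verify that the three conditions \eqref{eq:MAC_Covering_Lemma_1}--\eqref{eq:MAC_Covering_Lemma_3} are exactly the hypotheses of \cite[Theorem~2]{YassaeeMAWC} for this configuration. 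Once that identification is made, the remainder is the routine total-variation bookkeeping and the appeal to the law of large numbers that already recur throughout the covert analyses in this appendix.
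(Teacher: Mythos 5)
Your proposal matches the paper's proof essentially verbatim: both transfer the typicality claim from $P$ to the ideal distribution $\Gamma$ via a total-variation bound (the paper invokes \cite[Property~1]{AllertonCuffSong} to convert the total-variation estimate into a bound on the difference of expectations of the bounded indicator $g_n$, which is the same inequality you write), and then conclude by the weak law of large numbers under $\Gamma$, where the codebook-averaged law of $\big(U^r(1,1,1,L_j),V^r(A_j),S_j^r\big)$ is exactly $P_{S,U,V}^{\otimes r}$. You are slightly more explicit than the paper about routing the bound on $\expec_{C_r}\|P-\Gamma\|_1$ through the intermediate distribution $\Upsilon$, combining \eqref{eq:Uniformity_Key_NC_CSIT} with \eqref{eq:General_Distribution_Expansion_2} via the triangle inequality, but this is precisely what the paper's appeal to \eqref{eq:General_Distribution_Expansion_2} entails.
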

where $P$ is the induced distribution over the codebook defined in \eqref{eq:Uniformity_Key_NC_CSIT}. 
The proof of Lemma~\ref{lemma:Typicality} is given in Appendix~\ref{Appen_Proof_Typicality_Lemma}.

To analyze the probability of error we define the following error events for $j\in\intseq{1}{B}$,
\begin{subequations}\label{eq:error_Events_CSIT_NC}
\begin{align}
    \mathcal{E} &\triangleq \big\{M\ne\hat{M}\big\},\label{eq:error_Events_CSIT_NC_0}\\
    \mathcal{E}_j &\triangleq \big\{M_j\ne\hat{M}_j\big\},\label{eq:error_Events_CSIT_NC_01}\\
    \mathcal{E}_{1,j} &\triangleq\big\{\big(U^r(M_j,T_{j-1},K_{j-2},L_j),S_j^r\big)\notin \mathcal{T}_{\epsilon_1}^{(n)}(U,S)\big\},\label{eq:error_Events_CSIT_NC_1}\\
    \mathcal{E}_{2,j} &\triangleq \big\{\big(U^r(M_j,T_{j-1},K_{j-2},L_j),Y_j^r\big)\notin \mathcal{T}_{\epsilon_2}^{(n)}(U,Y)\big\},\label{eq:error_Events_CSIT_NC_2}\\
    \mathcal{E}_{3,j} &\triangleq \big\{\big(U^r(M_j,T_{j-1},K_{j-2},L_j),Y_j^r\big)\in \mathcal{T}_{\epsilon_2}^{(n)}(U,Y)\twocolbreakquad\mbox{\,for some $m_j\ne M_j$ and $\ell_j\in [1:2^{rR'}]$}\big\}.\label{eq:error_Events_CSIT_NC_3}
\end{align}
\end{subequations}
where $\epsilon_2>\epsilon_1>\epsilon>0$. The probability of error is upper bounded as follows,
\begin{align}
    \Prob(\mathcal{E})=\Prob\Big\{\bigcup\nolimits_{j = 1}^B \mathcal{E}_j\Big\}\leq\sum_{j=1}^B\Prob(\mathcal{E}_j).
\end{align}Now we bound $\Prob(\mathcal{E}_j)$ by using union bound,
\begin{align}
    \Prob(\mathcal{E}_j)\leq \Prob(\mathcal{E}_{1,j})+\Prob(\mathcal{E}_{1,j}^c\cap \mathcal{E}_{2,j})+\Prob(\mathcal{E}_{2,j}^c\cap \mathcal{E}_{3,j}).\label{eq:PError_CSIT_NC}
\end{align}According to Lemma~\ref{lemma:Typicality} the first term on the \ac{RHS} of \eqref{eq:PError_CSIT_NC} vanishes when $r$ grows, and by the law of large numbers the second term on the \ac{RHS} of \eqref{eq:PError_CSIT_NC} vanishes when $r$ grows. Also, according to the law of large numbers and the packing lemma, the last term on the \ac{RHS} of \eqref{eq:PError_CSIT_NC} vanishes when $r$ grows if \cite{ElGamalKim},
\begin{align}
&R +R_t + R' \leq \mi(U;Y).
\label{eq:Decoding_BM_NC_CSIT}
\end{align}
We now analyze the probability of error at the encoder and the decoder for key generation. Let $(A_{j-1},T_{j-1})$ denote the chosen indices at the encoder and $\hat{A}_{j-1}$ and $\hat{T}_{j-1}$ be the estimates of the indices $A_{j-1}$ and $T_{j-1}$ at the decoder. At the end of block $j$, by decoding $U_j^r$, the decoder has access to $\hat{T}_{j-1}$. To find $A_{j-1}$ we define the error event,
\begin{align}
    \mathcal{E}' = \Big\{ \Big(V_{j-1}^r(\hat{A}_{j-1}),S_{j-1}^r,U_{j-1}^r,Y_{j-1}^r\Big) \notin \mathcal{T}_\epsilon ^{(n)}\Big\},
\end{align}and consider the events,
\begin{subequations}\label{eq:Error_Events}
\begin{align}
    \mathcal{E}'_1 &= \Big\{ \Big(V_{j-1}^r(a_{j-1}),S_{j-1}^r\Big) \notin \mathcal{T}_{\epsilon'}^{(n)} \,\,\mbox{for all}\,\, a_{j-1}\in\intseq{1}{2^{r\tilde{R}}}\Big\},\\
    \mathcal{E}'_2 &= \Big\{ \Big(V_{j-1}^r(A_{j-1}),S_{j-1}^r,U_{j-1}^r,Y_{j-1}^r\Big) \notin \mathcal{T}_\epsilon ^{(n)}\Big\},\\
    \mathcal{E}'_3 &= \Big\{ \Big(V_{j-1}^r(\tilde{a}_{j-1}),U_{j-1}^r,Y_{j-1}^r\Big) \in \mathcal{T}_{\epsilon}^{(n)} \,\,\mbox{for some}\,\, \tilde{a}_{j-1}\in\mathcal{B}(\hat{T}_{j-1}),\tilde{a}_{j-1}\ne A_{j-1}\Big\},
\end{align}where $\epsilon>\epsilon'>0$. 
\end{subequations}By the union bound we have,
\begin{align}
    P(\mathcal{E}')\leq P(\mathcal{E}'_1) + P(\mathcal{E}_1^{\prime c}\cap\mathcal{E}'_2) + P(\mathcal{E}'_3)\label{eq:Error_Event_NonCausal_CSIT}.
\end{align}
According to Lemma~\ref{lemma:Typicality} the first term on the \ac{RHS} of \eqref{eq:Error_Event_NonCausal_CSIT} vanishes when $r$ grows if we have \eqref{eq:MAC_Covering_Lemma}. Following the steps in \cite[Sec.~11.3.1]{ElGamalKim}, the last two terms on the right hand side of \eqref{eq:Error_Event_NonCausal_CSIT} go to zero when $r$ grows if,
\begin{subequations}\label{eq:Error_Analysis_Encoder_NC_CSIT}
\begin{align}
    \tilde{R}&>\mi(V;S),\label{eq:Error_Analysis_Encoder_NC_CSIT_1}\\
    \tilde{R}-R_t&<\mi(V;U,Y).\label{eq:Error_Analysis_Encoder_NC_CSIT_2}
\end{align}
\end{subequations}


The region in Theorem~\ref{thm:Acivable_Rate_CSIT_WZ_NC} is derived by remarking that the scheme requires $R_K+R_T\geq R_k+R_t$ and applying Fourier-Motzkin to \eqref{eq:MAC_Covering_Lemma} and  \eqref{eq:KLD_Bound}, \eqref{eq:Decoding_BM_NC_CSIT}, and \eqref{eq:Error_Analysis_Encoder_NC_CSIT}.

\section{Proof of Lemma~\ref{lemma:Typicality}}
\label{Appen_Proof_Typicality_Lemma}
For a fix $\epsilon>0$ consider the \ac{PMF} $\Gamma$ defined in \eqref{eq:Ideal_PMF}. With respect to the random experiment described by $\Gamma$ we have
\begin{align}\label{eq:Gamma_Typicality}
    \expec_{C_r}\Prob_{\Gamma}\Big(\big(U^r(m_j,t_{j-1},k_{j-2},L_j),V^r(A_j),S_j^r\big)\notin\mathcal{T}_{\epsilon}^{(n)}|C_r\Big)\underset{r\rightarrow\infty}{\xrightarrow{\hspace{0.2in}}} 0,
\end{align}since $U^r(m_j,t_{j-1},k_{j-2},L_j)\sim P_U^r$ and $V^r(A_j)\sim P_V^r$ for every $(m_j,t_{j-1},k_{j-2})\in(\mathcal{M}_j,\mathcal{T}_{j-1},\mathcal{K}_{j-2})$ and $S_j^r$ is derived by passing $(U^r(m_j,t_{j-1},k_{j-2},L_j),V^r(A_j))$ through the \ac{DMC} $P_{S|U,V}^{\otimes r}$. Therefore \eqref{eq:Gamma_Typicality} holds by weak law of large numbers. We also have 
\begin{align}
 &\expec_{C_r}||P_{U^r,V^r,S_j^r|C_r}-\Gamma_{U^r,V^r,S_j^r|C_r}||_1
 \twocolbreak \nonumber\\ &\leq\expec_{C_r}||P_{M_j,T_{j-1},K_{j-2},S_j^r,L_j,A_j,U^r,V^r,Z_j^r,K_{j-1},T_j,K_j|C_r}- \Gamma_{M_j,T_{j-1},K_{j-2},S_j^r,L_j,A_j,U^r,V^r,Z_j^r,K_{j-1},T_j,K_j|C_r}||_1\underset{r\rightarrow\infty}{\xrightarrow{\hspace{0.2in}}} 0,
 \label{eq:TV_Ineq_app}
\end{align}
where   the \ac{RHS} of \eqref{eq:TV_Ineq_app} vanishes when $r$ grows because of \eqref{eq:General_Distribution_Expansion_2}.

We now define $g_n:\mathcal{U}^r\times\mathcal{V}^r\times\mathcal{S}_j^r\to\mathbb{R}$ as $g_n(u^r,v^r,s_j^r)\triangleq\indic{1}_{\{(u^r,v^r,s_j^r)\notin\mathcal{T}_\epsilon^{(n)}\}}$. We now have
\begin{align}
    &\expec_{C_r}\Prob_P\Big(\big(U^r(m_j,t_{j-1},k_{j-2},L_j),V^r(A_j),S_j^r\big)\notin\mathcal{T}_{\epsilon}^{(n)}|C_r\Big)\nonumber\\
    &=\expec_{C_r}\expec_{P}\Big[g_n(U^r(m_j,t_{j-1},k_{j-2},L_j),V^r(A_j),S_j^r)|C_r\Big]\nonumber\\
    &\leq\expec_{C_r}\expec_{\Gamma}\Big[g_n(U^r(m_j,t_{j-1},k_{j-2},L_j),V^r(A_j),S_j^r)|C_r\Big]\nonumber\\
    &+\expec_{C_r}\Big|\expec_{P}\Big[g_n(U^r(m_j,t_{j-1},k_{j-2},L_j),V^r(A_j),S_j^r)|C_r\Big]-\expec_{\Gamma}\Big[g_n(U^r(m_j,t_{j-1},k_{j-2},L_j),V^r(A_j),S_j^r)|C_r\Big]\Big|\nonumber\\
    &\mathop \leq \limits^{(a)} \expec_{C_r}\expec_{\Gamma}\Big[g_n(U^r(m_j,t_{j-1},k_{j-2},L_j),V^r(A_j),S_j^r)|C_r\Big]+\expec_{C_r}||P_{U^r,V^r,S_j^r|C_r}-\Gamma_{U^r,V^r,S_j^r|C_r}||_1
    \label{eq:Typicality_triangle}
\end{align}where $(a)$ follows from \cite[Property~1]{AllertonCuffSong} for $g_n$ being bounded by $1$. From \eqref{eq:Gamma_Typicality} and \eqref{eq:TV_Ineq_app} the \ac{RHS} of \eqref{eq:Typicality_triangle} vanishes when $r$ grows.

\section{Proof of Theorem~\ref{thm:Simple_Acivable_Region_CSIT_NC}}
\label{sec:Proof_Simple_Inner_Bound_For_CSIT_NC}
\sloppy Fix $P_{U|S}(u|s)$, $x(u,s)$, and $\epsilon_1>\epsilon_2>0$ such that, $P_Z = Q_0$.

\emph{Codebook Generation:}
Let $C_n\triangleq\{U^n(m,\ell)\}_{(m,\ell)\in\mathcal{M}\times\mathcal{L}}$, where $\mathcal{M}\in\intseq{1}{2^{nR}}$ and $\mathcal{L}\in\intseq{1}{2^{nR'}}$, be a random codebook consisting of independent random sequences each generated according to $\prod\nolimits_{i = 1}^n P_U(u_i)$. We denote a realization of $C_n$ by $\CodeBook_n\triangleq\{u^n(m,\ell)\}_{(m,\ell)\in\mathcal{M}\times\mathcal{L}}$. 
We define an ideal \ac{PMF} for codebook $\mathcal{C}_n$, as an approximate distribution to facilitate the analysis
\begin{align}
    &\Gamma_{M,L,U^n,S^n,Z^n}^{(\mathcal{C}_n)}(m,\ell,\tilde{u}^n,s^n,z^n) =2^{-n(R + R')}\indic{1}_{\{ \tilde{u}^n=u^n(m,\ell)\}}
     P_{S|U}^{\otimes n}(s^n|\tilde{u}^n) W_{Z|U,S}^{\otimes n}(z^n|\tilde{u}^n,s^n),\label{eq:Ideal_PMF_Simple_C}
\end{align}where $W_{Z|U,S}$ is the marginal distribution $W_{Z|U,S}=\sum_{x\in\mathcal{X}}\indic{1}_{\{x=x(u,s)\}}W_{Z|X,S}$ and $P_{S|U}$ is defined as follows,
\begin{align}
    P_{S|U}(s|u)&\triangleq\frac{P_{S,U}(s,u)}{P_{U}(u)}=\frac{Q_S(s)P_{U|S}(u|s)}{\sum_{s\in\mathcal{S}}Q_S(s)P_{U|S}(u|s)}.\label{eq:S_Dist_NC_Simple}
\end{align}

\emph{Encoding:}
To send the message $m$ the encoder chooses $\ell$ according to 
\begin{align}
 f(\ell|s^n,m)=\frac{P_{S|U}^{\otimes n}\big(s^n|u^n(m,\ell)\big)}{\sum\limits_{\ell'\in\intseq{1}{2^{nR'}}}P_{S|U}^{\otimes n}\big(s_j^n|u^n(m,\ell')\big)},
\label{eq:Likelihood_Encoder_BM_Simple_C}
\end{align}
where $P_{S|U}$ is defined in~\eqref{eq:S_Dist_NC_Simple}. Based on $(m,\ell)$ the encoder computes $u^n(m,\ell)$ and transmits codeword $x^n$, where $x_i=x(u_i(m,\ell),s_i)$. 

For a fixed codebook $\mathcal{C}_n$, the induced joint distribution over the codebook is as follows
\begin{align}
    \twocolalign &P_{M,S^n,L,U^n,Z^n}^{(\mathcal{C}_n)}(m,s^n,\ell,\tilde{u}^n,z^n)= 2^{-nR} Q_S^{\otimes n}(s^n) f(\ell|s^n,m)\indic{1}_{\{ \tilde{u}^n=u^n(m,\ell)\}} W_{Z|U,S}^{\otimes n}(z^n|\tilde{u}^n,s^n).\label{eq:P_Dist_App}
\end{align}

\emph{{Covert Analysis:}} 
We now show that this coding scheme guarantees that
\begin{align}
    \expec_{C_n}\big[\kld(P_{Z^n|C_n} || Q_Z^{\otimes n} )\big]\underset{n\rightarrow\infty}{\xrightarrow{\hspace{0.2in}}} 0.\label{eq:QZ_CSIT_NC}
\end{align}Then we choose $P_{U|S}$ and $x(u,s)$ such that it satisfies $Q_Z=Q_0$. By Combining Lemma~\ref{lemma:TV_KLD} and the triangle inequality a sufficient condition for \eqref{eq:QZ_CSIT_NC} is to show that the \ac{RHS} of the following inequality vanishes,
\begin{align}
    \expec_{C_n}||P_{Z^n|C_n}-Q_Z^{\otimes n}||_1 &\leq\expec_{C_n}||P_{Z^n|C_n}-\Gamma_{Z^n|C_n}||_1  + \expec_{C_n}||\Gamma_{Z^n|C_n}-Q_Z^{\otimes n}||_1.\label{eq:Triangle_Inequality_2_App}
\end{align} 
By \cite[Corollary~VII.5]{Cuff13} the second term on the \ac{RHS} of \eqref{eq:Triangle_Inequality_2_App} vanishes when $n$ grows if
\begin{align}
    R+R'>\mi(U;Z).\label{eq:res_Simple_CSIT_NC}
\end{align}
To bound the first term on the \ac{RHS} of \eqref{eq:Triangle_Inequality_2_App} we have,
\begin{subequations}\label{eq:P_Gamma_CSIT_Simple_NC}
\begin{align}
  \Gamma_{M}^{(\mathcal{C}_n)} &= 2^{-nR} = P_{M}^{(\mathcal{C}_n)},\label{eq:Uniformity_M_2_App}\\
  \Gamma_{L|M,S^n}^{(\mathcal{C}_n)} &= f(\ell|s^n,m) = P_{L|M,S^n}^{(\mathcal{C}_n)},\label{eq:Likelihood_Encoder_Equality_2_App}\\
  \Gamma_{U^n|M,S^n,L}^{(\mathcal{C}_n)} &= \indic{1}_{\{\tilde{u}^n = u^n(m,\ell)\}} \twocolbreak= P_{U^n|M,S^n,L}^{(\mathcal{C}_n)},\label{eq:eq4_App}\\
    \Gamma_{Z^n|M,S^n,L,U^n}^{(\mathcal{C}_n)} &= W_{Z|U,S}^{\otimes n}  = P_{Z^n|M,S^n,L,U^n}^{(\mathcal{C}_n)},\label{eq:Channel_Equality_2_App}
\end{align}
\end{subequations}
where \eqref{eq:Likelihood_Encoder_Equality_2_App} follows from \eqref{eq:Likelihood_Encoder_BM_Simple_C}. Hence,
\begin{align}
 \expec_{C_n}||P_{Z^n|C_n}-\Gamma_{Z^n|C_n}||_1
  &\leq\expec_{C_n}||P_{M,S^n,L,U^n,Z^n|C_n}- \Gamma_{M,S^n,L,U^n,Z^n|C_n}||_1 \nonumber\\
&\mathop = \limits^{(a)}\expec_{C_n}||P_{S^n,L,U^n,Z^n|M=1,C_n}- \Gamma_{S^n,L,U^n,Z^n|M=1,C_n}||_1 \nonumber\\
&\mathop = \limits^{(b)} \expec_{C_n}||Q_S^{\otimes n} - \Gamma_{S^n|M=1,C_n}||_1,\label{eq:General_Distribution_Expansion_2_App}
\end{align}where $(a)$ follows from \eqref{eq:Likelihood_Encoder_Equality_2_App}-\eqref{eq:Channel_Equality_2_App} and $(b)$ follows from the symmetry of the codebook construction with respect to $M$ and \eqref{eq:Uniformity_M_2_App}. Based on the soft covering lemma \cite[Corollary~VII.5]{Cuff13} the \ac{RHS} of \eqref{eq:General_Distribution_Expansion_2_App} vanishes if
\begin{align}
\label{eq:State_Dependent_Covering_Lemma_2_Simple_CSIT_NC}
    R' &> \mi(U;S).
\end{align}

\emph{{Decoding and Error Probability Analysis:}}
To decode the message $m$, the receiver finds a unique pair $(\hat{m},\hat{\ell})$ such that $\big(u^r(\hat{m},\hat{\ell}),y^r\big)\in\mathcal{T}_{\epsilon}^{(r)}$. 
To analyze the probability of error we define the following error events,
\begin{subequations}\label{eq:error_Events_CSIT_NC_Simple}
\begin{align}
    \mathcal{E} &\triangleq \big\{M\ne\hat{M}\big\},\label{eq:error_Events_CSIT_NC_0_Simple}\\
    \mathcal{E}_1 &\triangleq\big\{\big(U^r(M,L),S^r\big)\notin \mathcal{T}_{\epsilon_1}^{(n)}(U,S)\big\},\label{eq:error_Events_CSIT_NC_1_Simple}\\
    \mathcal{E}_2 &\triangleq \big\{\big(U^r(M,L),Y^r\big)\notin \mathcal{T}_{\epsilon_2}^{(n)}(U,Y)\big\},\label{eq:error_Events_CSIT_NC_2_Simple}\\
    \mathcal{E}_3 &\triangleq \big\{\big(U^r(M,L),Y^r\big)\in \mathcal{T}_{\epsilon_2}^{(n)}(U,Y)\twocolbreakquad\mbox{\,for some $m\ne M$ and $\ell\in [1:2^{rR'}]$}\big\},\label{eq:error_Events_CSIT_NC_3_Simple}
\end{align}
\end{subequations}
where $\epsilon_2>\epsilon_1>0$. 
Now we bound $\Prob(\mathcal{E})$ by using union bound,
\begin{align}
    \Prob(\mathcal{E})\leq \Prob(\mathcal{E}_1)+\Prob(\mathcal{E}_1^c\cap \mathcal{E}_2)+\Prob(\mathcal{E}_2^c\cap \mathcal{E}_3).\label{eq:PError_CSIT_NC_Simple}
\end{align}Similar to Lemma~\ref{lemma:Typicality} one can show that the first term on the \ac{RHS} of \eqref{eq:PError_CSIT_NC_Simple} vanishes when $r$ grows, and by the law of large numbers the second term on the \ac{RHS} of \eqref{eq:PError_CSIT_NC_Simple} vanishes when $r$ grows. Also, according to the law of large numbers and the packing lemma, the last term on the \ac{RHS} of \eqref{eq:PError_CSIT_NC_Simple} vanishes when $r$ grows if \cite{ElGamalKim},
\begin{align}
&R + R' \leq \mi(U;Y).
\label{eq:Decoding_Simple_CSIT_NC}
\end{align}



The region in Theorem~\ref{thm:Simple_Acivable_Region_CSIT_NC} is derived by applying Fourier-Motzkin to \eqref{eq:res_Simple_CSIT_NC}, \eqref{eq:State_Dependent_Covering_Lemma_2_Simple_CSIT_NC}, and  \eqref{eq:Decoding_Simple_CSIT_NC}.

\section{Proof of Theorem~\ref{thm:Upper_Bound_For_CSIT_NC}}
\label{sec:Proof_Upper_Bound_For_CSIT_NC}
Consider any sequence of length-$n$ codes for a state-dependent channel with channel state available non-causally only at the transmitter such that $P_e^{(n)}\leq\epsilon_n$ and $\kld(P_{Z^n}||Q_0^{\otimes n})\leq\delta$ with $\lim_{n\to\infty}\epsilon_n=0$. Note that the converse is consistent with the model and does \emph{not} require $\delta$ to vanish. The following lemma, a version of which with variational distance can be found in \cite[Lemma~VI.3]{Cuff13}, will prove useful. 

\begin{lemma}
\label{lemma:Resolvability_Properties}
If $\kld(P_{Z^n}||Q_0^{\otimes n})\leq\delta$, then $\sum\nolimits_{i=1}^n\mi(Z_i;Z^{i-1})\leq\delta$ and $\sum\nolimits_{i=1}^n\mi(Z_i;Z_{i+1}^n)\leq\delta$. In addition if $T\in\intseq{1}{n}$ is an independent variable uniformly distributed, then $\mi(T;Z_T)\leq\nu$, where $\nu\triangleq\frac{\delta}{n}$.
\end{lemma}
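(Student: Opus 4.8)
The plan is to use the chain rule for KL divergence against a product reference measure, which splits $\kld(P_{Z^n}||Q_0^{\otimes n})$ into a sum of single-letter divergences plus a sum of mutual-information terms, all non-negative; the three claims then drop out almost immediately. This is the KL-divergence counterpart of the variational-distance statement \cite[Lemma~VI.3]{Cuff13}.

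First I would factor $P_{Z^n}(z^n)=\prod_{i=1}^n P_{Z_i|Z^{i-1}}(z_i|z^{i-1})$ and $Q_0^{\otimes n}(z^n)=\prod_{i=1}^n Q_0(z_i)$, so that
\[
\kld(P_{Z^n}||Q_0^{\otimes n}) = \sum_{i=1}^n \expec\!\left[\log\frac{P_{Z_i|Z^{i-1}}(Z_i|Z^{i-1})}{Q_0(Z_i)}\right],
\]
the expectation being taken under $P_{Z^n}$. Inserting the marginal $P_{Z_i}(Z_i)$ and splitting the logarithm in each summand yields
\[
\kld(P_{Z^n}||Q_0^{\otimes n}) = \sum_{i=1}^n \mi(Z_i;Z^{i-1}) + \sum_{i=1}^n \kld(P_{Z_i}||Q_0).
\]
Both sums are non-negative (the full-support assumption on $Q_0$ guarantees finiteness), so each is at most $\delta$; in particular $\sum_{i=1}^n \mi(Z_i;Z^{i-1})\le\delta$ and, as a by-product, $\sum_{i=1}^n \kld(P_{Z_i}||Q_0)\le\delta$. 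Carrying out the identical computation with the reverse factorization $P_{Z^n}(z^n)=\prod_{i=1}^n P_{Z_i|Z_{i+1}^n}(z_i|z_{i+1}^n)$ gives the same identity with $\mi(Z_i;Z_{i+1}^n)$ replacing $\mi(Z_i;Z^{i-1})$, hence $\sum_{i=1}^n \mi(Z_i;Z_{i+1}^n)\le\delta$.

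For the last claim, let $\bar P_Z\eqdef\frac1n\sum_{i=1}^n P_{Z_i}$ denote the time-averaged marginal. Since $T$ is uniform on $\intseq{1}{n}$ and independent of $Z^n$, one has $\mi(T;Z_T)=\frac1n\sum_{i=1}^n\kld(P_{Z_i}||\bar P_Z)$. The compensation identity $\frac1n\sum_{i=1}^n\kld(P_{Z_i}||Q_0)=\frac1n\sum_{i=1}^n\kld(P_{Z_i}||\bar P_Z)+\kld(\bar P_Z||Q_0)$ then gives $\mi(T;Z_T)\le\frac1n\sum_{i=1}^n\kld(P_{Z_i}||Q_0)\le\frac{\delta}{n}=\nu$, using the per-letter bound obtained above. \textbf{Main obstacle:} there is essentially none --- the whole argument is a chain-rule/compensation-identity computation, and the only points needing a line of care are that $Q_0$ has full support (so every divergence in sight is finite, which is already part of the model) and that the single-letter divergence sums are discarded by non-negativity rather than needing to be estimated.
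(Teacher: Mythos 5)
Your proof is correct and is, up to packaging, the same argument as the paper's: the paper expands entropies and inserts $\log Q_0$ terms by hand, which produces exactly the identity $\kld(P_{Z^n}\|Q_0^{\otimes n})=\sum_i\mi(Z_i;Z^{i-1})+\sum_i\kld(P_{Z_i}\|Q_0)$ that you obtain from the KL chain rule, and the paper's bound on $\mi(T;Z_T)$ is precisely your compensation identity. Nothing is missing; your phrasing via the chain rule and compensation identity is just a cleaner way of organizing the same algebra.
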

Note that Lemma~\ref{lemma:Resolvability_Properties} is slightly different from~\cite[Lemma VI.3]{Cuff13}, as the upper bounds are tighter and do not include a factor of $n$. This is a consequence of using a constraint based on relative entropy instead of total variation. This is crucial in what follows as we do no necessarily require $\delta\to 0$. The details of the proof are available in Appendix~\ref{sec:Proof_Lemma_Resolvability_Properties}.

\subsection{Epsilon Rate Region}
We first define a region $\mathcal{S}_{\epsilon}$ for $\epsilon>0$ that expands the region defined in~\eqref{eq:finalregion_CSIT_NC} and~\eqref{eq:finalconstraint_CSIT_NC} as follows.
\begin{subequations}\label{eq:Epsilon_Rate_Region_CSIT_NC}
\begin{align}
\calS_\epsilon\eqdef \big\{R\geq 0: \exists P_{U,V,S,X,Y,Z}\in\calD_\epsilon: R\leq \min\{\mi(U;Y)-\mi(U;S),\mi(U,V;Y)-\mi(U;S|V)\} + \epsilon \big\}\label{eq:Sepsilon_CSIT_NC}
\end{align}
where 
\begin{align}
  \calD_\epsilon = \left.\begin{cases}P_{U,V,S,X,Y,Z}:\\
P_{U,V,S,X,Y,Z}=Q_SP_{UV|S}\indic{1}_{\big\{X=X(U,S)\big\}}W_{Y,Z|X,S}\\
\mathbb{D}\left(P_Z\Vert Q_0\right) \leq \epsilon\\
\min\{\mi(U;Y)-\mi(U;S),\mi(U,V;Y)-\mi(U;S|V)\} \geq \mi(V;Z)-\mi(V;S) - 3\epsilon\\
\max\{\card{\calU},\card{\calV}\}\leq \card{\calX}+3
\end{cases}\right\},\label{eq:Depsilon_CSIT_NC}
\end{align}
\end{subequations}where $\epsilon\triangleq\max\{\epsilon_n,\nu\geq\frac{\delta}{n}\}$. 
We next show that if a rate $R$ is achievable then $R\in\mathcal{S}_{\epsilon}$ for any $\epsilon>0$. 

For any $\epsilon_n>0$, we start by upper bounding $nR$ using standard techniques.
\begin{align}
nR &=\ent(M)\nonumber\\
&\mathop \le \limits^{(a)}\mi(M;Y^n)+n\epsilon_n\nonumber\\
&= \sum\limits_{i = 1}^n \mi(M;Y_i|Y^{i-1})  + n\epsilon_n \nonumber\\
&\leq \sum\limits_{i = 1}^n \mi(M,Y^{i-1};Y_i)  + n\epsilon_n \nonumber\\
&= \sum\limits_{i = 1}^n \big[\mi(M,Y^{i-1},S_{i+1}^n;Y_i)-\mi(S_{i+1}^n;Y_i|M,Y^{i-1})\big]  + n\epsilon_n \nonumber\\
&\mathop = \limits^{(b)} \sum\limits_{i = 1}^n \big[\mi(M,Y^{i-1},S_{i+1}^n;Y_i)-\mi(Y^{i-1};S_i|M,S_{i+1}^n)\big]  + n\epsilon_n \nonumber\\
&\mathop = \limits^{(c)} \sum\limits_{i = 1}^n \big[\mi(M,Y^{i-1},S_{i+1}^n;Y_i)-\mi(M,Y^{i-1},S_{i+1}^n;S_i)\big]  + n\epsilon_n \nonumber\\
&\mathop = \limits^{(d)} \sum\limits_{i = 1}^n [\mi(U_i;Y_i) - \mi(U_i;S_i)]  + n\epsilon_n \nonumber\\
&=n\sum\limits_{i = 1}^n \frac{1}{n}\big[\mi(U_i;Y_i|T=i) - \mi(U_i;S_i|T=i)\big]  + n\epsilon_n\nonumber\\
&=n\sum\limits_{i = 1}^n \Prob(T=i)\big[\mi(U_i;Y_i|T=i) - \mi(U_i;S_i|T=i)\big]  + n\epsilon_n\nonumber\\
&=n \big[\mi(U_T;Y_T|T) - \mi(U_T;S_T|T)\big]  + n\epsilon_n\nonumber\\
&\mathop = \limits^{(e)}n \big[\mi(U_T;Y_T|T) - \mi(U_T,T;S_T)\big]  + n\epsilon_n\nonumber\\
&\le \big[\mi(U_T,T;Y_T) - \mi(U_T,T;S_T)\big]  + n\epsilon_n\nonumber\\
&\mathop = \limits^{(f)} n\big[\mi(U;Y)-\mi(U;S)\big] + n\epsilon_n\nonumber\\
&\mathop \leq \limits^{(g)} n\big[\mi(U;Y)-\mi(U;S)\big] + n\epsilon
\label{eq:Upper_Bound_on_MRate_CSIT_NC_2}
\end{align}where 
\begin{itemize}
    \item[$(a)$] follows from Fano's inequality for $n$ large enough and the fact that conditioning does not increase entropy;
    \item[$(b)$] follows from Csisz\'ar-K\"{o}rner sum identity \cite[Lemma~7]{BCC:IT78};
    \item[$(c)$] follows since $S_i$ is independent of $(M,S_{i+1}^n)$;
    \item[$(d)$] follows by defining $U_i\triangleq(M,Y^{i-1},S_{i+1}^n)$;
    \item[$(e)$] follows from the independence of $S_T$ and $T$;
    \item[$(f)$] follows by defining $U=(U_T,T)$, $Y=Y_T$, and $S=S_T$;
    \item[$(g)$] follows by defining $\epsilon\triangleq\max\{\epsilon_n,\nu\geq\frac{\delta}{n}\}$.
\end{itemize}We also have,
\begin{align}
nR &=\ent(M)\nonumber\\
&\mathop \le \limits^{(a)}\mi(M;Y^n)+n\epsilon_n\nonumber\\
&= \sum\limits_{i = 1}^n \mi(M;Y_i|Y^{i-1})  + n\epsilon_n \nonumber\\
&\leq \sum\limits_{i = 1}^n \mi(M,Y^{i-1},Z^{i-1};Y_i)  + n\epsilon_n \nonumber\\
&= \sum\limits_{i = 1}^n \big[\mi(M,Y^{i-1},Z^{i-1},S_{i+1}^n;Y_i)-\mi(S_{i+1}^n;Y_i|M,Y^{i-1},Z^{i-1})\big]  + n\epsilon_n \nonumber\\
&\mathop = \limits^{(b)} \sum\limits_{i = 1}^n \big[\mi(M,Y^{i-1},Z^{i-1},S_{i+1}^n;Y_i)-\mi(Y^{i-1};S_i|M,S_{i+1}^n,Z^{i-1})\big]  + n\epsilon_n \nonumber\\
&\mathop = \limits^{(c)} \sum\limits_{i = 1}^n [\mi(U_i,V_i;Y_i) - \mi(U_i;S_i|V_i)]  + n\epsilon_n \nonumber\\
&=n\sum\limits_{i = 1}^n \frac{1}{n}\big[\mi(U_i,V_i;Y_i|T=i) - \mi(U_i;S_i|V_i,T=i)\big]  + n\epsilon_n\nonumber\\
&=n\sum\limits_{i = 1}^n \Prob(T=i)\big[\mi(U_i,V_i;Y_i|T=i) - \mi(U_i;S_i|V_i,T=i)\big]  + n\epsilon_n\nonumber\\
&=n \big[\mi(U_T,V_T;Y_T|T) - \mi(U_T;S_T|V_T,T)\big]  + n\epsilon_n\nonumber\\
&\le \big[\mi(U_T,V_T,T;Y_T) - \mi(U_T;S_T|V_T,T)\big]  + n\epsilon_n\nonumber\\
&\mathop = \limits^{(d)} n\big[\mi(U,V;Y)-\mi(U;S|V)\big] + n\epsilon_n\nonumber\\
&\mathop \leq \limits^{(e)} n\big[\mi(U,V;Y)-\mi(U;S|V)\big] + n\epsilon,
\label{eq:Upper_Bound_on_MRate_CSIT_NC_2_SEC}
\end{align}where 
\begin{itemize}
    \item[$(a)$] follows from Fano's inequality for $n$ large enough and the fact that conditioning does not increase entropy;
    \item[$(b)$] follows from Csisz\'ar-K\"{o}rner sum identity \cite[Lemma~7]{BCC:IT78};
    \item[$(c)$] follows by defining $U_i\triangleq(M,Y^{i-1},S_{i+1}^n)$ and $V_i\triangleq(M,Z^{i-1},S_{i+1}^n)$;
    \item[$(d)$] follows by defining $U=(U_T,T)$, $V=(V_T,T)$, $Y=Y_T$, and $S=S_T$;
    \item[$(e)$] follows from definition $\epsilon\triangleq\max\{\epsilon_n,\nu\}$.
\end{itemize}

Next, we lower bound $nR$ as follows.
\begin{align}
nR&\geq \ent(M)\nonumber\\
&\ge \mi(M;Z^n)\nonumber\\
&= \sum\limits_{i = 1}^n \mi(M;Z_i|Z^{i-1}) \nonumber\\
&= \sum\limits_{i = 1}^n \big[\mi(M,S_{i+1}^n;Z_i|Z^{i-1}) - \mi(S_{i+1}^n;Z_i|M,Z^{i-1})\big]  \nonumber\\
&\mathop = \limits^{(a)} \sum\limits_{i = 1}^n \big[\mi(M,S_{i+1}^n;Z_i|Z^{i-1}) - \mi(Z^{i-1};S_i|M,S_{i+1}^n)\big]  \nonumber\\
&\mathop \ge \limits^{(b)} \sum\limits_{i = 1}^n \big[\mi(M,S_{i+1}^n,Z^{i-1};Z_i) - \mi(Z^{i-1};S_i|M,S_{i+1}^n)\big]  - \delta \nonumber\\
&\mathop \ge \limits^{(c)} \sum\limits_{i = 1}^n \big[\mi(M,S_{i+1}^n,Z^{i-1};Z_i) - \mi(M,S_{i+1}^n,Z^{i-1};S_i)\big] - \delta \nonumber\\
&\mathop = \limits^{(d)} \sum\limits_{i = 1}^n \big[\mi(V_i;Z_i) - \mi(V_i;S_i)\big] - \delta \nonumber\\
&=n\sum\limits_{i = 1}^n \frac{1}{n}\big[\mi(V_i;Z_i|T=i) - \mi(V_i;S_i|T=i)\big]  - \delta\nonumber\\
&=n\sum\limits_{i = 1}^n \Prob(T=i)\big[\mi(V_i;Z_i|T=i) - \mi(V_i;S_i|T=i)\big]  - \delta\nonumber\\
&= n\big[\mi(V_T;Z_T|T) - \mi(V_T;S_T|T)\big] - \delta \nonumber\\
&\mathop = \limits^{(e)} n\big[\mi(V_T;Z_T|T) - \mi(V_T,T;S_T)\big] - \delta \nonumber\\
&\mathop \ge \limits^{(f)} n\big[\mi(V_T,T;Z_T) - \mi(V_T,T;S_T)\big] - 2\delta \nonumber\\
&\mathop = \limits^{(g)} n\big[\mi(V;Z) - \mi(V;S)\big] - 2\delta 
\end{align}
where 
\begin{itemize}
\item[$(a)$] follows from Csisz\'ar-K\"{o}rner sum identity \cite[Lemma~7]{BCC:IT78};
\item[$(b)$] follows from Lemma~\ref{lemma:Resolvability_Properties};
\item[$(c)$] follows since $S_i$ is independent of $(M,S_{i+1}^n)$;
\item[$(d)$] follows by defining $V_i\triangleq(M,S_{i+1}^n,Z^{i-1})$;
\item[$(e)$] follows from the independence of $S_T$ and $T$;
\item[$(f)$] follows from Lemma~\ref{lemma:Resolvability_Properties};
\item[$(g)$] follows by defining $V=(V_T,T)$, $Z=Z_T$, and $S=S_T$.
\end{itemize}
For any $\nu>0$, choosing $n$ large enough ensures that
\begin{align}
 R &\mathop \geq \mi(V;Z) - \mi(V;S) - 2\nu\nonumber\\
 &\mathop \geq \mi(V;Z) - \mi(V;S) - 2\epsilon,
 \label{eq:Upper_Bound_on_KRate_CSI_NC}
\end{align}where the last inequality follows from definition $\epsilon\triangleq\max\{\epsilon_n,\nu\}$. 
To show that $ \kld(P_Z||Q_0)\leq\epsilon$, note that for $n$ large enough
\begin{align}
\kld(P_Z||Q_0)=\kld(P_{Z_T}||Q_0)=\kld\Bigg(\frac{1}{n}\sum\limits_{i=1}^nP_{Z_i}\Bigg|\Bigg|Q_0\Bigg)\leq\frac{1}{n}\sum\limits_{i=1}^n\kld(P_{Z_i}||Q_0)\leq\frac{1}{n}\kld(P_{Z^n}||Q_0^{\otimes n})\leq\frac{\delta}{n}\leq\nu\leq\epsilon.\label{eq:boundKL}
\end{align}
Combining \eqref{eq:Upper_Bound_on_MRate_CSIT_NC_2}, \eqref{eq:Upper_Bound_on_MRate_CSIT_NC_2_SEC}, \eqref{eq:Upper_Bound_on_KRate_CSI_NC}, and~\eqref{eq:boundKL} shows that $\forall \epsilon_n,\nu>0$, $R\leq \max\{x:x\in\mathcal{S}_{\epsilon}\}$. Therefore,
\begin{align}
  R\leq \max\left\{x:x\in\bigcap_{\epsilon>0}\mathcal{S}_{\epsilon}\right\}.
\end{align}
\subsection{Continuity at zero}
The proof is available in Appendix~\ref{sec:continuity-at-zero}.

\section{Proof of Lemma~\ref{lemma:Resolvability_Properties}}
\label{sec:Proof_Lemma_Resolvability_Properties}
First note that,
\begin{align}
    \sum\limits_{i=1}^n\mi(Z_i;Z^{i-1})&=\sum\limits_{i=1}^n[\ent(Z_i)-\ent(Z_i|Z^{i-1})]\nonumber\\
    &=\sum\limits_{i=1}^n\ent(Z_i)-\ent(Z^n)\nonumber\\
    &=-\sum\limits_{i=1}^n\sum\limits_{z}P_{Z_i}(z)\log P_{Z_i}(z)+\sum\limits_{z^n}P_{Z^n}(z^n)\log P_{Z^n}(z^n)\nonumber\\
    &=-\sum\limits_{i=1}^n\sum\limits_{z}P_{Z_i}(z)\log P_{Z_i}(z)+\sum\limits_{i=1}^n\sum\limits_{z}P_{Z_i}(z)\log Q_0(z)\nonumber\\
    &\qquad-\sum\limits_{i=1}^n\sum\limits_{z}P_{Z_i}(z)\log Q_0(z)+\sum\limits_{z^n}P_{Z^n}(z^n)\log P_{Z^n}(z^n)\nonumber\\
    &=-\sum\limits_{i=1}^n\kld(P_{Z_i}||Q_0)-\sum\limits_{z^n}P_{Z^n}(z^n)\log Q_0^{\otimes n}(z^n)+\sum\limits_{z^n}P_{Z^n}(z^n)\log P_{Z^n}(z^n)\nonumber\\
    &\leq\kld(P_{Z^n}||Q_0^{\otimes n})\nonumber\\
    &\leq\delta.\nonumber
\end{align} Similarly one can proof $\sum\nolimits_{i=1}^n\mi(Z_i;Z_{i+1}^n)\leq\delta$. Next,
\begin{align}
    \mi(Q;Z_Q)&=\ent(Z_Q)-\ent(Z_Q|Q)\nonumber\\
    &=-\sum\limits_{z}\frac{1}{n}\sum\limits_{i=1}^nP_{Z_i}(z)\log \frac{1}{n}\sum\limits_{j=1}^nP_{Z_j}(z)+\frac{1}{n}\sum\limits_{i=1}^n\sum\limits_{z}P_{Z_i}(z)\log P_{Z_i}(z)\nonumber\\
    &=-\sum\limits_{z}\frac{1}{n}\sum\limits_{i=1}^nP_{Z_i}(z)\log \frac{1}{n}\sum\limits_{j=1}^nP_{Z_j}(z)+\sum\limits_{z}\frac{1}{n}\sum\limits_{i=1}^nP_{Z_i}(z)\log Q_0(z)\nonumber\\
    &\qquad -\sum\limits_{z}\frac{1}{n}\sum\limits_{i=1}^nP_{Z_i}(z)\log Q_0(z)+\frac{1}{n}\sum\limits_{i=1}^n\sum\limits_{z}P_{Z_i}(z)\log P_{Z_i}(z)\nonumber\\
    &=-\kld\Bigg(\frac{1}{n}\sum\limits_{i=1}^nP_{Z_i}\Bigg|\Bigg|Q_0\Bigg)+\frac{1}{n}\sum\limits_{i=1}^n\kld(P_{Z_i}||Q_0)\nonumber\\
    &\leq\frac{1}{n}\sum\limits_{i=1}^n\kld(P_{Z_i}||Q_0)\nonumber\\
    &\leq\frac{1}{n}\kld(P_{Z^n}||Q_0^{\otimes n})\nonumber\\
    &\leq\frac{\delta}{n}.
\end{align}

\section{Continuity at zero}
\label{sec:continuity-at-zero}
Our objective is to show that the capacity region is included in the region defined in \eqref{eq:finalSD_CSIT_NC}. 
The challenge, first highlighted in~\cite[Section VI.D]{Cuff13}, is that our converse arguments only establish that the capacity region is included in the region $\bigcap_{\epsilon>0}\calS_\epsilon$ where $\calS_\epsilon$ is defined in \eqref{eq:Epsilon_Rate_Region_CSIT_NC}. 
In the sequel, the continuity of the slackness in the mutual information inequality~\eqref{eq:Depsilon_CSIT_NC} will assume some importance, hence for ease of expression we define and refer to $g(\epsilon)\triangleq3\epsilon$. As $\epsilon$ vanishes, \emph{both} the region $\calS_\epsilon$ and the set of distributions $\calD_\epsilon$ shrink, so that proving the continuity at $\epsilon=0$ is not completely straightforward. We carefully lay out the arguments leading to the result in a series of lemmas.

\begin{lemma}
  \label{lm:d_epsilon}
  For all $\epsilon>0$, the set $\calD_\epsilon$ is closed and bounded, hence compact.
\end{lemma}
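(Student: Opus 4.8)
The plan is to show directly that $\calD_\epsilon$ is a closed and bounded subset of a finite-dimensional Euclidean space, which gives compactness by Heine-Borel. The set $\calD_\epsilon$ lives in the space of joint distributions $P_{U,V,S,X,Y,Z}$, but since the alphabets $\calS,\calX,\calY,\calZ$ are fixed by the channel and the cardinality constraints $\max\{\card{\calU},\card{\calV}\}\le\card{\calX}+3$ bound the auxiliary alphabets, this is a compact simplex $\Delta$ in $\bbR^N$ with $N=\card{\calU}\card{\calV}\card{\calS}\card{\calX}\card{\calY}\card{\calZ}$ coordinates. Boundedness is then immediate since every probability vector has entries in $[0,1]$. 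So the whole content is closedness.

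\textbf{Closedness.} I would write $\calD_\epsilon$ as an intersection of preimages of closed sets under continuous maps, restricted to $\Delta$. There are four defining conditions in~\eqref{eq:Depsilon_CSIT_NC}:
\begin{enumerate}
\item[(i)] the factorization $P_{U,V,S,X,Y,Z}=Q_SP_{UV|S}\indic{1}_{\{X=X(U,S)\}}W_{Y,Z|X,S}$, i.e.\ the $S$-marginal equals the prescribed $Q_S$, the $(Y,Z)$-conditional equals the prescribed channel $W_{Y,Z|X,S}$, and $X$ is a deterministic function of $(U,S)$;
\item[(ii)] $P_Z=Q_0$;
\item[(iii)] $\min\{\mi(U;Y)-\mi(U;S),\mi(U,V;Y)-\mi(U;S|V)\}\ge \mi(V;Z)-\mi(V;S)-3\epsilon$;
\item[(iv)] the cardinality bounds, already absorbed into the ambient space.
\end{enumerate}
Condition (i): the constraints ``$S$-marginal $=Q_S$'' and ``$(Y,Z)$-conditional $=W_{Y,Z|X,S}$ wherever $P_{X,S}>0$'' are each given by finitely many linear (hence continuous) equalities/inequalities, so they cut out a closed set; the deterministic-$X$ constraint says $P(u,v,s,x,y,z)=0$ whenever $x\ne x(u,s)$ for the fixed map, again a finite set of linear equalities. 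Condition (ii): $P_Z=Q_0$ is a finite set of linear equalities in the coordinates. Condition (iii): each mutual information is a continuous function of $P_{U,V,S,X,Y,Z}$ on the simplex (entropy is continuous on the probability simplex, and mutual informations are finite sums/differences of entropies of marginals, each obtained by a linear — hence continuous — marginalization), so the left and right sides of (iii) are continuous, their difference is continuous, and the inequality defines a closed set. A finite intersection of closed sets is closed, and intersecting with the compact simplex $\Delta$ preserves closedness and boundedness, hence compactness.

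\textbf{Main obstacle.} The only subtlety is the continuity of the mutual-information terms at the boundary of the simplex, where some masses vanish: one must invoke the standard convention $0\log 0 = 0$ and the fact that the entropy function $H(P)=-\sum_a P(a)\log P(a)$ is continuous (indeed uniformly continuous) on the whole finite-dimensional simplex, so all the derived quantities $\mi(U;Y)$, $\mi(U;S)$, $\mi(U,V;Y)$, $\mi(U;S|V)$, $\mi(V;Z)$, $\mi(V;S)$ — being fixed finite linear combinations of entropies of appropriate marginals — are continuous throughout $\Delta$. Also, strictly speaking the conditional-entropy/channel constraints in (i) are naturally phrased as conditional distributions, which are discontinuous where the conditioning event has zero probability; I would rephrase them in the joint-distribution form $P_{X,S,Y,Z}(x,s,y,z)=P_{X,S}(x,s)W_{Y,Z|X,S}(y,z|x,s)$ for all $(x,s,y,z)$, which is linear in the joint coordinates and therefore manifestly closed. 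With these conventions in place the argument is routine; the details are omitted for brevity and are recorded together with the remaining continuity-at-zero lemmas in Appendix~\ref{sec:continuity-at-zero}.
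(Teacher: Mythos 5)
Your approach is essentially the same as the paper's: observe that $\calD_\epsilon$ lives in a compact probability simplex (since the alphabets are finite, including the auxiliaries thanks to the cardinality caps), so boundedness is free, and argue closedness by exhibiting each defining condition as the preimage of a closed set under a continuous map of the joint distribution. One slip, however: your condition (ii) is stated as $P_Z = Q_0$, but that is the constraint in the \emph{limiting} region $\calD$. In $\calD_\epsilon$ as defined in~\eqref{eq:Depsilon_CSIT_NC} the constraint is the relaxation $\D{P_Z}{Q_0} \leq \epsilon$. So you cannot dismiss (ii) as "a finite set of linear equalities"; instead you must argue, as the paper does, that because $Q_0$ has full support on $\calZ$ (an assumption made in Section~\ref{perliminarisec}), the map $P_Z \mapsto \D{P_Z}{Q_0}$ is finite-valued and continuous on the whole simplex (including boundary points, via the $0\log 0 = 0$ convention), so that the constraint set is the preimage of the closed interval $[0,\epsilon]$ and hence closed. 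Once (ii) is corrected in this way the rest of your argument — factorization constraints as linear equalities in the joint coordinates, mutual-information constraint as preimage of a closed half-line under a continuous function built from entropies — matches the paper and is sound.
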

\begin{proof}
We need to check that every constraint defining the set $\calD_\epsilon$ defines a close set of distributions, so that $\calD_\epsilon$ is an intersection of closed sets and remains closed. First note that:
\begin{itemize}
\item the function that outputs the marginal $P_Z$ of $P_{U,V,S,X,Y,Z}$ is continuous in $P_Z$;
\item $Q_0$ has support $\calZ$ so that the divergence $\D{P_Z}{Q_0}$ is a continuous function of $P_Z$;
\item mutual information, viewed as a function of the joint distribution of the random variables involved, is continuous;
\item all the constraints in the definition of $\calD_\epsilon$ are inequalities.
\end{itemize}
Consequently, the pre-images of the closed sets defined by the inequalities are pre-images of closed sets through continuous functions, hence closed. $\calD_\epsilon$ is bounded because it is a subset of the probability simplex, hence it is compact.


\end{proof}

\begin{lemma}
  \label{lm:s_epsilon}
  For all $\epsilon>0$ the set $\calS_\epsilon$ is non-empty, closed, and bounded.
\end{lemma}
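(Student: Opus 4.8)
The plan is to show that $\calS_\epsilon$ is in fact exactly a closed bounded interval $[0,M_\epsilon+\epsilon]$ with $0\le M_\epsilon<\infty$; non-emptiness, closedness and boundedness then all follow at once. Introduce the shorthand $f(P_{U,V,S,X,Y,Z})\triangleq\min\{\mi(U;Y)-\mi(U;S),\mi(U,V;Y)-\mi(U;S|V)\}$, so that, by~\eqref{eq:Sepsilon_CSIT_NC}, $\calS_\epsilon=\{R\geq 0:\exists\,P\in\calD_\epsilon \text{ with } R\leq f(P)+\epsilon\}$. Since all alphabets are finite and the cardinalities of $\calU$ and $\calV$ are capped by the constraints in~\eqref{eq:Depsilon_CSIT_NC}, each (conditional) mutual information occurring in $f$ is a continuous function of the joint distribution $P_{U,V,S,X,Y,Z}$, and a finite minimum of continuous functions is continuous; hence $f$ is continuous on the ambient probability simplex.

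First I would verify that $\calD_\epsilon\ne\emptyset$, which is needed so that the interval description is not vacuous. Take $U$ and $V$ to be deterministic (constant) and the deterministic map $X(U,S)\equiv x_0$. Then the marginal of $Z$ induced by this choice is $\sum_{s}Q_S(s)W_{Z|X,S}(\cdot|x_0,s)=Q_0(\cdot)$ by~\eqref{eq:Q0_Definition}, so $\D{P_Z}{Q_0}=0\leq\epsilon$; moreover $\mi(U;Y)=\mi(U;S)=\mi(U,V;Y)=\mi(U;S|V)=\mi(V;Z)=\mi(V;S)=0$, so the mutual-information inequality in~\eqref{eq:Depsilon_CSIT_NC} becomes $0\geq -3\epsilon$, which holds. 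Thus this $P$ lies in $\calD_\epsilon$, and since $0\le f(P)+\epsilon=\epsilon$ we obtain $0\in\calS_\epsilon$, so $\calS_\epsilon\ne\emptyset$.

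For boundedness and closedness I would invoke Lemma~\ref{lm:d_epsilon}: $\calD_\epsilon$ is compact and, by the previous paragraph, non-empty, while $f$ is continuous, so $M_\epsilon\triangleq\max_{P\in\calD_\epsilon}f(P)$ is attained and finite. In addition $f(P)\leq \mi(U;Y)\leq\H{Y}\leq\log\card{\calY}$, hence $0\le M_\epsilon\le\log\card{\calY}$. I then claim $\calS_\epsilon=[0,M_\epsilon+\epsilon]$. For the inclusion ``$\supseteq$'': if $0\le R\le M_\epsilon+\epsilon$, choose a maximizer $P^\star\in\calD_\epsilon$ of $f$; then $R\le f(P^\star)+\epsilon$, so $R\in\calS_\epsilon$. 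For ``$\subseteq$'': if $R\in\calS_\epsilon$ with witness $P\in\calD_\epsilon$, then $0\le R\le f(P)+\epsilon\le M_\epsilon+\epsilon$. Therefore $\calS_\epsilon$ is a closed, bounded, non-empty interval, which is exactly the claim.

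There is no substantial obstacle here; the only two points that require a little care are (i) establishing $\calD_\epsilon\neq\emptyset$, handled above via the innocent-symbol distribution $X\equiv x_0$, and (ii) the continuity of the conditional mutual information $\mi(U;S|V)$ as a function of the joint law, which is immediate on finite alphabets. The lemma is essentially a packaging of Lemma~\ref{lm:d_epsilon} together with the continuity of $f$ and the attainment of the maximum over a non-empty compact set.
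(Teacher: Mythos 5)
Your proof is correct and follows the same basic route as the paper — compactness of $\calD_\epsilon$ from Lemma~\ref{lm:d_epsilon} combined with continuity of the rate expression — but you package it more explicitly as the closed interval $[0,M_\epsilon+\epsilon]$ with the maximum attained by the extreme value theorem, which is cleaner than the paper's appeal to ``Pareto optimal points.'' One genuine improvement is that you verify $\calD_\epsilon\neq\emptyset$ by exhibiting the innocent distribution (constant $U,V$ and $X\equiv x_0$); the paper's Lemma~\ref{lm:d_epsilon} only asserts compactness, which the empty set satisfies vacuously, so the paper's claim that ``$\calS_\epsilon$ contains $0$'' is not actually justified without the witness you supply.
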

\begin{proof}
The set of Pareto optimal points in $\calS_\epsilon$ is the image of $\calD_\epsilon$ through a continuous function. Since $\calD_\epsilon$ is compact, the set of Pareto optimal points is compact. In $\bbR$, compact sets are closed, hence the set of Pareto optimal points is closed and $\calS_\epsilon$ itself is closed by definition. 
$\calS_\epsilon$ is also non empty because it contains $0$. $\calS_\epsilon$ is bounded because we can upper bound $R$ by $2\log\card{\calX}+\epsilon$.
\end{proof}

Now define the set 
\begin{align}
\calS'_\epsilon\eqdef \left.\begin{cases}R\geq 0:\\
    \exists P_{U,V,S,X,Y,Z}\in\calD_\epsilon: R\leq \min\{\mi(U;Y)-\mi(U;S),\mi(U,V;Y)-\mi(U;S|V)\}
  \end{cases}\right\}.
\end{align}
Note that $\calS'_\epsilon$ differs from $\calS_\epsilon$ in the absence of $\epsilon$ in the rate constraint.
\begin{lemma}
  \label{lm:sprime_epsilon}
  For all $\epsilon>0$ the set $\calS'_\epsilon$ is closed and bounded.
\end{lemma}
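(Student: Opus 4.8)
## Proof proposal for Lemma~\ref{lm:sprime_epsilon}

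The plan is to show that $\calS'_\epsilon$ is closed and bounded by exploiting the compactness of $\calD_\epsilon$ (Lemma~\ref{lm:d_epsilon}) together with continuity of the rate functional, in close parallel to the proof of Lemma~\ref{lm:s_epsilon}. Boundedness is the easy half: every admissible rate satisfies $R\leq \min\{\mi(U;Y)-\mi(U;S),\mi(U,V;Y)-\mi(U;S|V)\}\leq \mi(U;Y)\leq \ent(Y)\leq \log\card{\calY}$, and similarly one can bound it by $2\log\card{\calX}$ using the cardinality bound on $\calU$; either way $\calS'_\epsilon\subseteq[0,2\log\card{\calX}]$, so it is bounded. (It is also non-empty since $0\in\calS'_\epsilon$, although the statement does not ask for this.)

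For closedness, I would define the functional $\rho:\calD_\epsilon\to\bbR_+$ by $\rho(P_{U,V,S,X,Y,Z})\eqdef \min\{\mi(U;Y)-\mi(U;S),\mi(U,V;Y)-\mi(U;S|V)\}$. Each mutual information term is a continuous function of the joint distribution, finite differences of continuous functions are continuous, and the pointwise minimum of two continuous functions is continuous; hence $\rho$ is continuous on $\calD_\epsilon$. Since $\calD_\epsilon$ is compact (Lemma~\ref{lm:d_epsilon}), the image $\rho(\calD_\epsilon)$ is a compact subset of $\bbR$. Now observe that
\begin{align}
\calS'_\epsilon = \big\{R\geq 0 : R\leq \max\nolimits_{P\in\calD_\epsilon}\rho(P)\big\} = [0, R^\star_\epsilon],
\end{align}
where $R^\star_\epsilon\eqdef\max_{P\in\calD_\epsilon}\rho(P)$ is attained because $\rho$ is continuous on a compact set; here I am using that $R\in\calS'_\epsilon$ iff there \emph{exists} $P\in\calD_\epsilon$ with $R\leq\rho(P)$, which is equivalent to $R\leq R^\star_\epsilon$. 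A closed bounded interval $[0,R^\star_\epsilon]$ is closed, which completes the argument.

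The only subtlety — and the step I would be most careful about — is justifying that $\calS'_\epsilon$ is exactly the ``downward closure'' $[0,R^\star_\epsilon]$ rather than merely a subset of it. This is immediate from the defining ``$R\leq\cdots$'' inequality in $\calS'_\epsilon$, but it is worth stating explicitly because it is precisely this downward-closed structure that makes the image-of-a-compact-set argument give closedness of $\calS'_\epsilon$ itself (and not just of its Pareto frontier, which here is the single point $R^\star_\epsilon$). Everything else is a routine transcription of the continuity-plus-compactness template already used for $\calD_\epsilon$ and $\calS_\epsilon$, so no new machinery is needed; the lemma is a stepping stone whose real purpose is to let the subsequent argument compare $\calS'_\epsilon$ with $\calS_\epsilon$ and take the limit $\epsilon\downarrow 0$.
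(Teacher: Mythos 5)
Your proof is correct and takes essentially the same approach as the paper, which simply references the argument of Lemma~\ref{lm:s_epsilon}: boundedness from the entropy/cardinality bound, and closedness from the fact that the continuous rate functional $\rho$ maps the compact set $\calD_\epsilon$ to a compact subset of $\bbR$, so the downward-closed set $\calS'_\epsilon$ is a closed interval $[0,R^\star_\epsilon]$. Your phrasing in terms of an attained maximum is cleaner than the paper's ``Pareto optimal points'' language (which for a scalar rate is just a single point); the one implicit step worth flagging is that the attainment of $\max_{P\in\calD_\epsilon}\rho(P)$ presupposes $\calD_\epsilon\neq\emptyset$, which you use tacitly when asserting $0\in\calS'_\epsilon$ (and which holds by taking $U,V$ constant and $X=x_0$), but the paper's own proof of Lemma~\ref{lm:s_epsilon} has the same unspoken assumption.
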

\begin{proof}
  Similar to the proof of Lemma~\ref{lm:s_epsilon}
\end{proof}

\begin{lemma}
  \label{lm:equal_intersection}
  \begin{align}
    \bigcap_{\epsilon>0}\calS'_\epsilon=\bigcap_{\epsilon>0}\calS_\epsilon
  \end{align}
\end{lemma}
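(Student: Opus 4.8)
The plan is to prove the two inclusions separately, with the nontrivial direction being $\bigcap_{\epsilon>0}\calS'_\epsilon \subseteq \bigcap_{\epsilon>0}\calS_\epsilon$. One inclusion is immediate: since the rate constraint defining $\calS_\epsilon$ is $R \leq \min\{\cdots\} + \epsilon$, which is weaker than the constraint $R \leq \min\{\cdots\}$ defining $\calS'_\epsilon$ (with the same constraint set $\calD_\epsilon$), we have $\calS'_\epsilon \subseteq \calS_\epsilon$ for every $\epsilon > 0$, hence $\bigcap_{\epsilon>0}\calS'_\epsilon \subseteq \bigcap_{\epsilon>0}\calS_\epsilon$. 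So the work is entirely in the reverse direction.

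For the reverse inclusion, I would take $R \in \bigcap_{\epsilon>0}\calS_\epsilon$ and show $R \in \calS'_{\epsilon_0}$ for every fixed $\epsilon_0 > 0$. Fix such an $\epsilon_0$. For each $n$, pick $\epsilon_n = \min\{\epsilon_0, 1/n\}$; since $R \in \calS_{\epsilon_n}$, there is a distribution $P^{(n)} \in \calD_{\epsilon_n}$ with $R \leq \min\{\mi(U;Y)-\mi(U;S), \mi(U,V;Y)-\mi(U;S|V)\}_{P^{(n)}} + \epsilon_n$. Because the cardinality bounds $\max\{\card{\calU},\card{\calV}\} \leq \card{\calX}+3$ are in force uniformly, the $P^{(n)}$ all live in a fixed compact simplex, so by Bolzano–Weierstrass a subsequence converges to some $P^{\star}$. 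The continuity facts established in Lemma~\ref{lm:d_epsilon} (continuity of marginalization, of $\D{P_Z}{Q_0}$ since $Q_0$ has full support, and of all the mutual-information functionals) then let me pass to the limit: the constraint $\D{P_Z}{Q_0} \leq \epsilon_n \to 0$ gives $\D{P^{\star}_Z}{Q_0} \leq \epsilon_0$ (in fact $=0$, but $\leq\epsilon_0$ suffices), the slackness constraint $\min\{\cdots\} \geq \mi(V;Z)-\mi(V;S) - 3\epsilon_n$ gives $\min\{\cdots\}_{P^{\star}} \geq \mi(V;Z)-\mi(V;S)_{P^{\star}}$, which is stronger than the $-3\epsilon_0$ slack required for $\calD_{\epsilon_0}$, and the Markov/functional structure $P_{U,V,S,X,Y,Z}=Q_SP_{UV|S}\indic{1}_{\{X=X(U,S)\}}W_{Y,Z|X,S}$ is preserved under limits (it is a closed condition). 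Hence $P^{\star} \in \calD_{\epsilon_0}$. Finally, from $R \leq \min\{\cdots\}_{P^{(n)}} + \epsilon_n$ and continuity of the objective, $R \leq \min\{\cdots\}_{P^{\star}}$, so $R \in \calS'_{\epsilon_0}$.

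Since $\epsilon_0 > 0$ was arbitrary, $R \in \bigcap_{\epsilon>0}\calS'_\epsilon$, completing the reverse inclusion and hence the equality. The main obstacle I anticipate is not the compactness/limit argument itself — that is routine once the cardinality bounds pin everything into a fixed simplex — but rather verifying carefully that the deterministic constraint $X = X(U,S)$ and the factorization through the fixed channel $W_{Y,Z|X,S}$ are genuinely closed conditions, so that $P^{\star}$ inherits them; a subtlety is that the limiting distribution could in principle "spread out" the conditional $P_{X|U,S}$ unless one is careful, but because we may restrict attention throughout to distributions of the prescribed product form (they form a closed subset of the simplex, being an intersection of linear equality constraints), this is fine. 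A secondary point to be careful about is the subsequence extraction: one should note that it suffices to find \emph{one} convergent subsequence, and that along it all the $\epsilon_n \to 0$, so every $\epsilon_0$-relaxed constraint is eventually satisfied; there is no need for a diagonal argument across different $\epsilon_0$. I would also remark that this lemma is exactly the device that, combined with the reverse direction established in the converse proofs and the forthcoming lemmas, closes the gap between $\bigcap_{\epsilon>0}\calS_\epsilon$ and the clean region $\calS$ of Theorem~\ref{thm:Upper_Bound_For_CSIT_NC}.
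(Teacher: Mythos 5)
Your proof is correct but takes a genuinely different route from the paper's. The paper argues by contradiction: it observes that $\bigcap_{\epsilon>0}\calS'_\epsilon$ is compact with a maximal element $r^\dagger$, so that $\bigcap_{\epsilon>0}\calS'_\epsilon=[0,r^\dagger]$; assumes $r^*\in\bigcap_{\epsilon>0}\calS_\epsilon$ with $r^*>r^\dagger$; sets $r_0=\tfrac12(r^\dagger+r^*)$ and chooses $\epsilon_0$ small enough that $r_0\notin\calS'_{\epsilon_0}$ and the $g(\epsilon_0)$-slack fits in $\tfrac12(r^*-r^\dagger)$; and then derives a contradiction by comparing the resulting strict inequality $r^* > \min\{\cdots\}+g(\epsilon_0)$ over all of $\calD_{\epsilon_0}$ against the membership $r^*\in\calS_{\epsilon_0}$. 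Your approach is instead direct: for $R\in\bigcap_{\epsilon>0}\calS_\epsilon$, you pick witnessing distributions $P^{(n)}\in\calD_{\epsilon_n}$ with $\epsilon_n\to 0$, use the uniform cardinality bounds to extract a convergent subsequence $P^{(n_k)}\to P^\star$, and pass every constraint (the factorization, $\D{P_Z}{Q_0}\leq\epsilon_n$, the information-inequality slack, and the rate bound $R\leq\min\{\cdots\}+\epsilon_n$) to the limit. This establishes outright that $P^\star\in\calD_\epsilon$ for every $\epsilon>0$ with $R\leq\min\{\cdots\}_{P^\star}$, hence $R\in\bigcap_{\epsilon>0}\calS'_\epsilon$. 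What your version buys is transparency: it is constructive, avoids the auxiliary quantity $r^\dagger$ and the nested-set reasoning that makes the paper's contradiction work, and it exposes explicitly the compactness/continuity machinery that the paper's proof relies on implicitly through Lemmas~\ref{lm:d_epsilon}--\ref{lm:sprime_epsilon}. What the paper's version buys is that it never has to invoke Bolzano--Weierstrass explicitly and reuses the characterization $\bigcap_{\epsilon>0}\calS'_\epsilon=[0,r^\dagger]$ that it has already set up.

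One small correction to your side remark. The set of distributions of the form $Q_S P_{UV|S}\indic{1}_{\{X=X(U,S)\}}W_{Y,Z|X,S}$ is \emph{not} an intersection of linear equality constraints, because the deterministic map $x(\cdot,\cdot)$ is existentially quantified rather than fixed. It is, however, a \emph{finite union} of such intersections, one for each of the finitely many maps $x:\calU\times\calS\to\calX$, and a finite union of closed sets is closed. That is the right justification for the closedness you invoke, and it is worth recording since the paper's own Lemma~\ref{lm:d_epsilon} does not explicitly check this particular constraint either.
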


\begin{proof}
  First note that $\bigcap_{\epsilon>0}\calS'_\epsilon$ is closed, since it is an intersection of closed sets, and bounded, since the sets $\calS'_\epsilon$ are nested and bounded. Hence, $\bigcap_{\epsilon>0}\calS'_\epsilon$ is compact. Consequently, there exists a maximal element, $r^\dagger$. Consider any $r\in[0,r^\dagger]$. Then $\forall \epsilon >0\exists P_{U,V,S,X,Y,Z}\in\calD_\epsilon \quad r\leq r^\dagger\leq \min\{\mi(U;Y)-\mi(U;S),\mi(U,V;Y)-\mi(U;S|V)\}$ and $r\in\bigcap_{\epsilon>0}\calS'_\epsilon$.

  We now want to show that $\bigcap_{\epsilon>0}\calS'_\epsilon=\bigcap_{\epsilon>0}\calS_\epsilon$. The hard part is showing that $\bigcap_{\epsilon>0}\calS_\epsilon\subset \bigcap_{\epsilon>0}\calS'_\epsilon$ since the other direction follows by the definition of $\calS_\epsilon$ and $\calS'_\epsilon$. We proceed by contradiction. Assume $\exists r^*\in\bigcap_{\epsilon>0}\calS_\epsilon$ such that $r^*\notin \bigcap_{\epsilon>0}\calS'_\epsilon$.  It must be that $r^\dagger<r^*$ for otherwise $r^*\in\bigcap_{\epsilon>0}\calS'_\epsilon$ as noted earlier.

Set $r_0\eqdef\frac{1}{2}(r^\dagger+r^*)$, which is such that $r_0>r^\dagger$ and therefore $r_0\notin\bigcap_{\epsilon>0}\calS'_\epsilon$. Set $\epsilon'>0$ such that $\forall \epsilon\leq\epsilon'$ $g(\epsilon)<\frac{r^*-r^\dagger}{2}$, which exists by the assumptions on $g$. Assume that $\forall \epsilon\in (0;\epsilon']$ $r_0\in\calS'_\epsilon$. Then $r_0\in \bigcap_{\epsilon>0}\calS'_\epsilon$ which contradicts our assumption. Hence, there exists $0<\epsilon_0\leq \epsilon'$ such that $r_0\notin \calS'_{\epsilon_0}$. Hence $\forall P_{U,V,S,X,Y,Z}\in\calD_{\epsilon_0}$ $r_0 > \min\{\mi(U;Y)-\mi(U;S),\mi(U,V;Y)-\mi(U;S|V)\}$. Then $\forall P_{U,V,S,X,Y,Z}\in\calD_{\epsilon_0}$
\begin{align}
  &r_0 > \min\{\mi(U;Y)-\mi(U;S),\mi(U,V;Y)-\mi(U;S|V)\} \\
  &\Rightarrow \frac{r^*+r^\dagger}{2} > \min\{\mi(U;Y)-\mi(U;S),\mi(U,V;Y)-\mi(U;S|V)\}\\
  &\Rightarrow \frac{r^*+r^\dagger}{2} + \frac{r^*-r^\dagger}{2} > \min\{\mi(U;Y)-\mi(U;S),\mi(U,V;Y)-\mi(U;S|V)\} + \frac{r^*-r^\dagger}{2}\\
  &\Rightarrow r^* > \min\{\mi(U;Y)-\mi(U;S),\mi(U,V;Y)-\mi(U;S|V)\} + g(\epsilon_0)
\end{align}
Since $r^*\in\bigcap_{\epsilon>0}\calS_\epsilon$, we have $\forall\epsilon>0$ $\exists P_{U,V,S,X,Y,Z}\in\calD_{\epsilon}$ such that $r^*\leq \min\{\mi(U;Y)-\mi(U;S),\mi(U,V;Y)-\mi(U;S|V)\}+g(\epsilon)$. Hence, there is a contradiction and we must have $r^*\in\bigcap_{\epsilon>0}\calS'_\epsilon$.  
\end{proof}

To conclude, one can prove that $\bigcap_{\epsilon>0}\calS'_\epsilon = \calS_0$, following the exact same arguments as in~\cite[Section IV.C]{Cuff13}.

\section{Proof of Theorem~\ref{thm:Acivable_Rate_For_CSIT_C}}
\label{sec:Proof_Acivable_Rate_Condition_For_CSIT_C}

We adopt a block-Markov encoding scheme in which $B$ independent messages are transmitted over $B$ channel blocks each of length $r$, such that $n=rB$.
The warden's observation is $Z^n=(Z_1^r,\dots,Z_B^r)$, the distribution induced at the output of the warden is $P_{Z^n}$, the target output distribution is $Q_Z^{\otimes n}$, and Equation \eqref{eq:Total_KLD_CoNC}, describing the distance between the two distributions, continues to  hold. 
The random code generation is as follows:

\sloppy Fix  $P_U(u)$, $P_{V|S}(v|s)$, $x(u,s)$, and $\epsilon_1>\epsilon_2>0$ such that, $P_Z = Q_0$.

\emph{Codebook Generation for Keys:}
For each block $j\in\intseq{1}{B}$, let $C_1^{(r)}\triangleq\big\{V^r(\ell_j)\big\}_{\ell_j\in\mathcal{L}}$, where $\mathcal{L}=\intseq{1}{2^{r\tilde{R}}}$, be a random codebook consisting of independent random sequences each generated according to $P_V^{\otimes r}$, where $P_V=\sum_{s\in\mathcal{S}}Q
_S(s)P_{V|S}(v|s)$. We denote a realization of $C_1^{(r)}$ by $\mathcal{C}_1^{(r)}\triangleq\big\{v^r(\ell_j)\big\}_{\ell_j\in\mathcal{L}}$. 
Partition the set of indices $\ell_j\in\intseq{1}{2^{r\tilde{R}}}$ into bins $\mathcal{B}(t)$, $t\in\intseq{1}{2^{rR_T}}$ by using function $\varphi:V^r(\ell_j)\to\intseq{1}{2^{rR_T}}$ through random binning by choosing the value of $\varphi(v^r(\ell_j))$ independently and uniformly at random for every $v^r(\ell_j)\in\mathcal{V}^r$. 
For each block $j\in\intseq{1}{B}$, create a function $\Phi:V^r(\ell_j)\to\intseq{1}{2^{rR_K}}$ through random binning by choosing the value of $\Phi(v^r(\ell_j))$ independently and uniformly at random for every $v^r(\ell_j)\in\mathcal{V}^r$. The key $k_j=\Phi(v^r(\ell_j))$ obtained in block $j\in\intseq{1}{B}$ from the description of the channel state sequence $v^r(\ell_j)$ is used to assist the encoder in block $j+2$.

\emph{Codebook Generation for Messages:}
For each block $j\in\intseq{1}{B}$, let $C_2^{(r)}\triangleq\big\{U^r(m_j,t_{j-1},k_{j-2})\big\}_{(m_j,t_{j-1},k_{j-2})\in\mathcal{M}\times\mathcal{T}\times\mathcal{K}}$, where $\mathcal{M}=\intseq{1}{2^{rR}}$, $\mathcal{T}=\intseq{1}{2^{rR_t}}$, and $\mathcal{K}=\intseq{1}{2^{rR_k}}$, be a random codebook consisting of independent random sequences each generated according to $P_U^{\otimes r}$. We denote a realization of $C_2^{(r)}$ by $\mathcal{C}_2^{(r)}\triangleq\big\{u^r(m_j,t_{j-1},k_{j-2})\big\}_{(m_j,t_{j-1},k_{j-2})\in\mathcal{M}\times\mathcal{T}\times\mathcal{K}}$. Let, $C_r=\big\{C_1^{(r)},C_2^{(r)}\big\}$ and $\mathcal{C}_r=\big\{\mathcal{C}_1^{(r)},\mathcal{C}_2^{(r)}\big\}$. The indices $(m_j,t_{j-1},k_{j-2})$ can be viewed as a two layer binning. 
%
%
We define an ideal \ac{PMF} for codebook $\mathcal{C}_n$, as an approximate distribution to facilitate the analysis
\begin{align}
    &\Gamma_{M_j,T_{j-1},K_{j-2},L_j,U^r,V^r,S_j^r,Z_j^r,K_{j-1},T_j,K_j}^{(\mathcal{C}_n)}(m_j,t_{j-1},k_{j-2},\ell_j,\tilde{u}^r,\tilde{v}^r,s_j^r,z_j^r,k_{j-1},t_j,k_j)  \nonumber\\
    &\qquad= 2^{-r(R + R_t +R_k + \tilde{R})}\indic{1}_{\{ \tilde{u}^r=u^r(m_j,t_{j-1},k_{j-2})\}}\indic{1}_{\{\tilde{v}^r=v^r(\ell_j) \}} P_{S|V}^{\otimes r}(s_j^r| \tilde{v}^r)  W_{Z|U,S}^{\otimes r}(z_j^r|\tilde{u}^r,s_j^r)\nonumber\\
    &\qquad\quad\times 2^{-rR_k}\indic{1}_{\{ t_j=\varphi (\tilde{v}^r)\}}\indic{1}_{\{ k_j=\Phi (\tilde{v}^r)\}},\label{eq:Ideal_PMF_Causal}
\end{align}where $W_{Z|U,S}$ is the marginal distribution of $W_{Z|U,S}=\sum_{x\in\mathcal{X}}\indic{1}_{\{x=x(u,s)\}}W_{Z|X,S}$ and
\begin{align}
    P_{S|V}=\frac{P_{S,V}(s,v)}{P_{V}(v)}=\frac{Q_S(s)P_{V|S}(v|s)}{\sum_{s\in\mathcal{S}}Q_S(s)P_{V|S}(v|s)}.\label{eq:Marginal_LE_C}
\end{align}

\emph{{Encoding:}}
We assume that the transmitter and the receiver have access to shared secret keys $k_{-1}$ and $k_0$ for the first two blocks but after the first two blocks they use the key that they generate from the \ac{CSI}.  

In the first block, to send the message $m_1$ according to $k_{-1}$, the encoder chooses the index $t_0$ uniformly at random (the index $t_0$ does not carry any useful information) and computes $u^r(m_1,t_0,k_{-1})$ and transmits a codeword $x^r$, where $x_i=x(u_i(m_1,t_0,k_{-1}),s_{1,i})$.


 

At the beginning of the second block, to generate a secret key shared between the transmitter and the receiver, the encoder chooses the index $\ell_1$ according to the following distribution with $j=1$,
\begin{align}
    f\big(\ell_j|s_j^r\big)=\frac{P_{S|V}^{\otimes r}\big(s_j^r|v^r(\ell_j)\big)}{\sum\limits_{\ell'\in\intseq{1}{2^{r\tilde{R}}}} P_{S|V}^{\otimes r}\big(s_j^r|v^r(\ell'_j)\big) },\label{eq:Likelihood_Encoder_Causal}
\end{align}where $P_{S|V}$ is defined in~\eqref{eq:Marginal_LE_C}. Then generates the reconciliation index $t_1=\varphi(v^n(\ell_1))$; simultaneously the transmitter generates a key $k_1=\Phi(v^r(\ell_1))$ from the description of its \ac{CSI} of the first block $v^r(\ell_1)$ to be used in the next block. To transmit the message $m_2$ and reconciliation index $t_1$ according to the key $k_0$ the encoder computes $u^r(m_2,t_1,k_0)$ and transmits a codeword $x^r$, where $x_i=x(u_i(m_2,t_1,k_0),s_{2,i})$.

In block $j\in\intseq{3}{B}$, the encoder first selects the index $\ell_{j-1}$ based $s_{j-1}^r$ by using the likelihood encoder described in \eqref{eq:Likelihood_Encoder_Causal} and then generates the reconciliation index $t_{j-1}=\varphi(v^r(\ell_{j-1}))$; simultaneously the encoder generates a key $k_{j-1}=\Phi(v^r(\ell_{j-1}))$ from the description of its \ac{CSI} of the block $j-1$, $v^r(\ell_{j-1})$, to be used in the next block. Then to send message $m_j$ and reconciliation index $t_{j-1}$ according to the generated key $k_{j-2}$ from the previous block and the \ac{CSI} of the current block $s_j^r$, the encoder computes $u^r(m_j,t_{j-1},k_{j-2})$ and transmits a codeword $x^r$, where each coordinate of the transmitted signal is a function of the current state $s_j^r$ as well as the corresponding sample of the transmitter's codeword $u_i$, i.e., $x_i=x(u_i(m_j,t_{j-1},k_{j-2}),s_{j,i})$.

Define
\begin{align}
    \twocolalign &\Upsilon_{M_j,T_{j-1},K_{j-2},U^r,S_j^r,L_j,V^r,Z_j^r,K_{j-1},T_j,K_j}^{(\mathcal{C}_r)}(m_j,t_{j-1},k_{j-2},\tilde{u}^r,s_j^r,\ell_j,\tilde{v}^r,z_j^r,k_{j-1},t_j,k_j) \twocolbreak  \nonumber\\
    &\qquad\triangleq 2^{-r(R + R_t + R_k)}\indic{1}_{\{ \tilde{u}^r=u^r(m_j,t_{j-1},k_{j-2})\}} Q_S^{\otimes r}(s_j^r) f(\ell_j|s_j^r)\indic{1}_{\{ \tilde{v}^r=v^r(\ell_j)\}}\nonumber\\
    &\qquad\quad\times W_{Z|U,S}^{\otimes r}(z_j^r|\tilde{u}^r,s_j^r) 2^{-rR_k}\indic{1}_{\{ t_j=\varphi (\tilde{v}^r)\}}\indic{1}_{\{ k_j=\Phi (\tilde{v}^r)\}}.\label{eq:P_Dist_C_CSIT}
\end{align}For a fixed codebook $\mathcal{C}_r$, the induced joint distribution over the codebook (i.e. $P^{(\mathcal{C}_r)}$) satisfies
\begin{align}
\kld\Big(P_{M_j,T_{j-1},K_{j-2},U^r,S_j^r,L_j,V^r,Z_j^r,K_{j-1},T_j,K_j}^{(\mathcal{C}_r)}||\Upsilon_{M_j,T_{j-1},K_{j-2},U^r,S_j^r,L_j,V^r,Z_j^r,K_{j-1},T_j,K_j}^{(\mathcal{C}_r)}\Big)\leq\epsilon.\label{eq:Uniformity_Key_C_CSIT}
\end{align} 
This intermediate distribution $\Upsilon^{(\mathcal{C}_r)}$ approximates the true distribution $P^{(\mathcal{C}_r)}$ and will be used in the sequel for bounding purposes. Expression~\eqref{eq:Uniformity_Key_C_CSIT} holds because the main difference in $\Upsilon^{(\mathcal{C}_r)}$ is assuming the keys $K_{j-2}$, $K_{j-1}$ and the reconciliation index $T_{j-1}$ are uniformly distributed, which is made (arbitrarily) nearly uniform in $P^{(\mathcal{C}_r)}$ with appropriate control of rate.

\begin{figure*}
\centering
\includegraphics[width=6.0in]{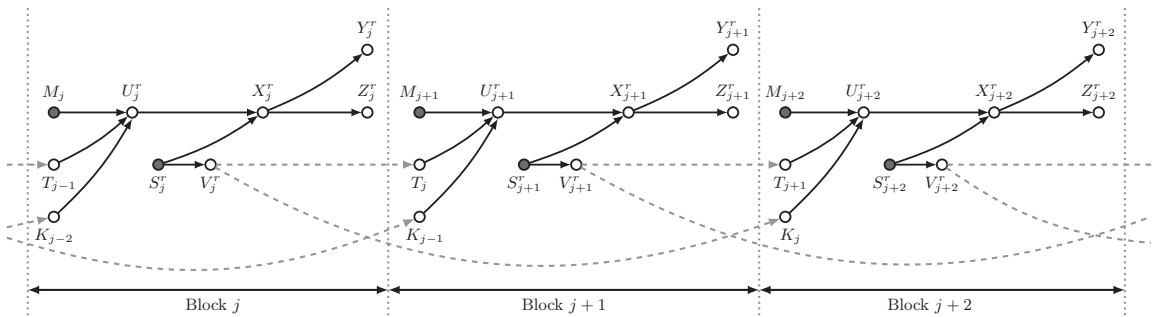}
\caption{Functional dependence graph for the block-Markov encoding scheme}
\label{fig:Chaining_C_CSIT}

\end{figure*}
\emph{Covert Analysis:}
We now show that this coding scheme guarantees that $\expec_{C_n}[\kld(P_{Z^n|C_n} || Q_Z^{\otimes n} )]\underset{n\rightarrow\infty}{\xrightarrow{\hspace{0.2in}}} 0$, where $C_n$ is the set of all the codebooks for all blocks, and then choose $P_U$, $P_V$, and $x(u,s)$ such that it satisfies $Q_Z=Q_0$. 
Similar to \eqref{eq:Total_KLD_Bound_NC_CSIT} using the functional dependence graph depicted in Fig.~\ref{fig:Chaining_C_CSIT} it follows that,
\begin{align}
    \kld(P_{Z^n}||Q_Z^{\otimes n})\leq 2\sum\limits_{j = 1}^B {\kld(P_{Z_j^r,K_{j-1},T_j,K_j}||Q_Z^{\otimes r}Q_{K_{j-1}}Q_{T_j}Q _{K_j})}.\label{eq:Total_KLD_Bound_C_CSIT}
\end{align}
To bound the \ac{RHS} of \eqref{eq:Total_KLD_Bound_C_CSIT} by using Lemma~\ref{lemma:TV_KLD} and the triangle inequality we have,
\begin{align}
    &\expec_{C_r}||P_{Z_j^r,K_{j-1},T_j,K_j|C_r}-Q_Z^{\otimes r}Q_{K_{j-1}}Q_{T_j}Q _{K_j}||_1\nonumber\\ &\leq\expec_{C_r}||P_{Z_j^r,K_{j-1},T_j,K_j|C_r}-\Gamma_{Z_j^r,K_{j-1},T_j,K_j|C_r}||_1  \twocolbreak+ \expec_{C_r}||\Gamma_{Z_j^r,K_{j-1},T_j,K_j|C_r}-Q_Z^{\otimes r}Q_{K_{j-1}}Q_{T_j}Q _{K_j}||_1\nonumber\\
    &\leq\expec_{C_r}||P_{Z_j^r,K_{j-1},T_j,K_j|C_r}-\Upsilon_{Z_j^r,K_{j-1},T_j,K_j|C_r}||_1 + \expec_{C_r}||\Upsilon_{Z_j^r,K_{j-1},T_j,K_j|C_r}-\Gamma_{Z_j^r,K_{j-1},T_j,K_j|C_r}||_1  \nonumber\\
    &\quad+ \expec_{C_r}||\Gamma_{Z_j^r,K_{j-1},T_j,K_j|C_r}-Q_Z^{\otimes r}Q_{K_{j-1}}Q_{T_j}Q _{K_j}||_1.\label{eq:Triangle_Inequality_Causaul_CSIT}
\end{align}
From \eqref{eq:Uniformity_Key_C_CSIT} and the monotonicity of KL-divergence the first term on the \ac{RHS} of \eqref{eq:Triangle_Inequality_Causaul_CSIT} goes to zero when $r$ grows. To bound the second term on the \ac{RHS} of \eqref{eq:Triangle_Inequality_Causaul_CSIT} for a fixed codebook $\mathcal{C}_r$ we have,
\begin{subequations}\label{eq:Gamma_components_Causal_CSIT}
\begin{align}
  \twocolalign\Gamma_{M_j,T_{j-1},K_{j-2}}^{(\mathcal{C}_r)} \onecolalign= 2^{-r(R+R_t+R_k)} = \Upsilon_{M_j,T_{j-1},K_{j-2}}^{(\mathcal{C}_r)},\label{eq:Uniformity_M_2_C}\\
  \twocolalign\Gamma_{U^r|M_j,T_{j-1},K_{j-2},S_j^r}^{(\mathcal{C}_r)} \onecolalign= \indic{1}_{\{ \tilde{u}^r=u^r(m_j,t_{j-1},k_{j-2})\}} \twocolbreak= \Upsilon_{U^r|M_j,T_{j-1},K_{j-2},S_j^r}^{(\mathcal{C}_r)},\label{eq:eq4_C}\\
  \twocolalign\Gamma_{L_j|M_j,T_{j-1},K_{j-2},S_j^r,U^r}^{(\mathcal{C}_r)} \onecolalign= f(\ell_j|s_j^r) = \Upsilon_{L_j|M_j,T_{j-1},K_{j-2},S_j^r,U^r}^{(\mathcal{C}_r)},\label{eq:Likelihood_Encoder_Equality_2_C}\\
  \twocolalign\Gamma_{V^r|M_j,T_{j-1},K_{j-2},S_j^r,L_j,U^r}^{(\mathcal{C}_r)} \onecolalign= \indic{1}_{\{ \tilde{v}^r=v^r(\ell_j)\}} \twocolbreak= \Upsilon_{V^r|M_j,T_{j-1},K_{j-2},S_j^r,L_j,U^r}^{(\mathcal{C}_r)},\label{eq:eq5_C}\\
    \twocolalign\Gamma_{Z_j^r|M_j,T_{j-1},K_{j-2},S_j^r,L_j,U^r,V^r}^{(\mathcal{C}_r)} \onecolalign= W_{Z|U,S}^{\otimes r}  = \Upsilon_{Z_j^r|M_j,T_{j-1},K_{j-2},S_j^r,L_j,U^r,V^r}^{(\mathcal{C}_r)},\label{eq:Channel_Equality_2_C}\\
    \twocolalign\Gamma_{K_{j-1}|M_j,T_{j-1},K_{j-2},S_j^r,L_j,U^r,V^r,Z_j^r}^{(\mathcal{C}_r)} \onecolalign= 2^{-rR_k}  = \Upsilon_{K_{j-1}|M_j,T_{j-1},K_{j-2},S_j^r,L_j,U^r,V^r,Z_j^r}^{(\mathcal{C}_r)},\label{eq:Second_Key_C}\\
    \twocolalign\Gamma_{T_j|M_j,T_{j-1},K_{j-2},S_j^r,L_j,U^r,V^r,Z_j^r}^{(\mathcal{C}_r)} \onecolalign=\indic{1}_{\{ t_j=\sigma (v^r(\ell_j)) \}}=\Upsilon_{T_j|M_j,T_{j-1},K_{j-2},S_j^r,L_j,U^r,V^r,Z_j^r}^{(\mathcal{C}_r)},\label{eq:K_j_Distribution_C}\\
    \twocolalign\Gamma_{K_j|M_j,T_{j-1},K_{j-2},S_j^r,L_j,U^r,V^r,Z_j^r,T_j}^{(\mathcal{C}_r)} \onecolalign=\indic{1}_{\{ k_j=\Phi (v^r(\ell_j))\}}=\Upsilon_{K_j|M_j,T_{j-1},K_{j-2},S_j^r,L_j,U^r,V^r,Z_j^r,T_j}^{(\mathcal{C}_r)},\label{eq:T_j_Distribution_C}
\end{align}
\end{subequations}
where \eqref{eq:Likelihood_Encoder_Equality_2_C} follows from \eqref{eq:Likelihood_Encoder_Causal}. Hence,
\begin{align}
 &\expec_{C_r}||\Upsilon_{Z_j^r,K_{j-1},T_j,K_j|C_r}-\Gamma_{Z_j^r,K_{j-1},T_j,K_j|C_r}||_1
 \twocolbreak \nonumber\\ &\leq\expec_{C_r}||\Upsilon_{M_j,T_{j-1},K_{j-2},S_j^r,L_j,U^r,V^r,Z_j^r,K_{j-1},T_j,K_j|C_r}- \Gamma_{M_j,T_{j-1},K_{j-2},S_j^r,L_j,U^r,V^r,Z_j^r,K_{j-1},T_j,K_j|C_r}||_1 \nonumber\\
&\mathop = \limits^{(a)}\expec_{C_r}||\Upsilon_{M_j,T_{j-1},K_{j-2},S_j^r|C_r}- \Gamma_{M_j,T_{j-1},K_{j-2},S_j^r|C_r}||_1 \nonumber\\
&\mathop = \limits^{(b)} \expec_{C_r}||Q_S^{\otimes r} - \Gamma_{S_j^r|M_j=1,T_{j-1}=1,K_{j-2}=1,C_r}||_1,\label{eq:General_Distribution_Expansion_2_C}
\end{align}where $(a)$ follows from \eqref{eq:eq4_C}-\eqref{eq:T_j_Distribution_C} and $(b)$ follows from the symmetry of the codebook construction with respect to $M_j$, $T_{j-1}$, and $K_{j-2}$ and \eqref{eq:Uniformity_M_2_C}. Based on the soft covering lemma \cite[Corollary~VII.5]{Cuff13} the \ac{RHS} of \eqref{eq:General_Distribution_Expansion_2_C} vanishes when $r$ grows if
\begin{align}\label{eq:State_Dependent_Covering_Lemma_2_C}
    \tilde{R} > \mi(S;V).
\end{align}
We now proceed to bound the third term on the \ac{RHS} of \eqref{eq:Triangle_Inequality_Causaul_CSIT}. First, consider the marginal,
\begin{align}
    &\Gamma_{Z_j^r,K_{j-1},T_j,K_j|C_r} (z_j^r,k_{j-1},t_j,k_j)\nonumber\\
    &= \sum\limits_{m_j} \sum\limits_{t_{j-1}}\sum\limits_{k_{j-2}}\sum\limits_{\ell_j} \sum\limits_{s_j^r} 
    \frac{1}{2^{r(R + R_t + 2R_k+\tilde{R})}} P_{S|V}^{\otimes r}\big(s_j^r|V^r(\ell_j)\big)\nonumber\\
    &\quad\times W_{Z|U,S}^{\otimes r}\big(z_j^r|U^r(m_j,t_{j-1},k_{j-2}),s_j^r\big)
     \indic{1}_{\{ t_j=\varphi(V^r(\ell_j))\}} \indic{1}_{\{ k_j=\Phi (V^r(\ell_j))\}}\nonumber\\
    &= \sum\limits_{m_j} \sum\limits_{t_{j-1}}\sum\limits_{k_{j-2}}\sum\limits_{\ell_j} 
    \frac{1}{2^{r(R + R_t + 2R_k+\tilde{R})}}  W_{Z|U,V}^{\otimes r}\big(z_j^r|U^r(m_j,t_{j-1},k_{j-2}),V^r(\ell_j)\big)
     \indic{1}_{\{ t_j=\varphi(V^r(\ell_j))\}} \indic{1}_{\{ k_j=\Phi (V^r(\ell_j))\}},
    \label{eq:Gamma_ZK_CSIT_C}  
\end{align}where $W_{Z|U,V}(z|u,v)=\sum_{s\in\mathcal{S}} P_{S|V}(s|v)W_{Z|U,S}(z|u,s)$.
\sloppy To bound the third term on the \ac{RHS} of \eqref{eq:Triangle_Inequality_Causaul_CSIT} by using Pinsker's inequality, it is sufficient to bound $\expec_{C_r}[\kld({\Gamma_{Z_j^r,K_{j-1},T_j,K_j|C_r}||Q_Z^{\otimes r}Q_{K_{j-1}}Q_{T_j}Q_{K_j}})]$ as follows,
\begin{align}
&\expec_{C_r}\big[\kld({\Gamma_{Z_j^r,K_{j-1},T_j,K_j|C_r}||Q_Z^{\otimes r} Q_{K_{j-1}}Q_{T_j}Q_{K_j}})\big] \nonumber\\ 
&= \expec_{C_r}\Bigg[\sum\limits_{z_j^r,k_{j-1},t_j,k_j} \Gamma_{Z_j^r,K_{j-1},T_j,K_j|C_r}(z_j^r,k_{j-1},t_j,k_j)\log \bigg(\frac{\Gamma_{Z_j^r,K_{j-1},T_j,K_j|C_r}(z_j^r,k_{j-1},t_j,k_j)}{Q_Z^{\otimes r}(z_j^r) Q_{K_{j-1}}(k_{j-1}) Q_{T_j}(t_j) Q_{K_j}(k_j)}\bigg) \Bigg] \nonumber\\ 
&= \expec_{C_r}\Bigg[\sum\limits_{(z_j^r,k_{j-1},t_j,k_j)}\sum\limits_{m_j} \sum\limits_{t_{j-1}}\sum\limits_{k_{j-2}}  \sum\limits_{\ell_j}\frac{W_{Z|U,V}^{\otimes r}\big(z_j^r|U^r(m_j,t_{j-1},k_{j-2}),V^r(\ell_j)\big)\indic{1}_{\{ t_j=\varphi(V^r(\ell_j))\}} \indic{1}_{\{ k_j=\Phi (V^r(\ell_j))\}}}{2^{r(R + R_t + 2R_k+\tilde{R})}} 
 \nonumber\\
&\times\log\Bigg(\frac{\sum\limits_{\tilde{m}_j}\sum\limits_{\tilde{t}_{j-1}}\sum\limits_{\tilde{k}_{j-2}} \sum\limits_{\tilde{\ell}_{j}} W_{Z|U,V}^{\otimes r}\big(z_j^r|U^r(\tilde{m}_j,\tilde{t}_{j-1},\tilde{k}_{j-2}),V^r(\tilde{\ell}_j)\big)\indic{1}_{\{ t_j=\varphi(V^r(\tilde{\ell}_{j}))\}} \indic{1}_{\{ k_j=\Phi (V^r(\tilde{\ell}_{j}))\}}}{2^{r(R + R_t + R_k +\tilde{R} - R_T - R_K)} Q_Z^{\otimes r}(z_j^r)}\Bigg)     \Bigg]\nonumber\\ 
&\mathop\le\limits^{(a)}\sum\limits_{(z_j^r,k_{j-1},t_j,k_j)} \sum\limits_{m_j} \sum\limits_{t_{j-1}}\sum\limits_{k_{j-2}}\sum\limits_{\ell_j} \frac{1}{2^{r(R + R_t + 2R_k+\tilde{R})}}\sum\limits_{\big(u^r(m_j,t_{j-1},k_{j-2}),v^r(\ell_j)\big)} \hspace{-2mm}\Gamma^{(\mathcal{C}_r)}_{U^r,V^r,Z_j^r}\big(u^r(m_j,t_{j-1},k_{j-2}),v^r(\ell_j),z_j^r\big)\nonumber\\
&\times\expec_{\varphi(v^r(\ell_j))}\big[\indic{1}_{\{ t_j=\varphi(v^r(\ell_j))\}}\big]\times\expec_{\Phi(v^r(\ell_j))}\big[\indic{1}_{\{ k_j=\Phi (v^r(\ell_j))\}}\big] \nonumber\\
&\times\log \expec_{\mathop {\backslash (m_j,t_{j-1},k_{j-2},\ell_j),}\limits_{\backslash (\varphi (v^r(\ell_j)),\Phi (v^r(\ell_j)))} } \Bigg[ \frac{\sum\limits_{\tilde{m}_j}\sum\limits_{\tilde{t}_{j-1}}\sum\limits_{\tilde{k}_{j-2}}  \sum\limits_{\tilde{\ell}_{j}} W_{Z|U,V}^{\otimes r}\big(z_j^r|U^r(\tilde{m}_j,\tilde{t}_{j-1},\tilde{k}_{j-2}),V^r(\tilde{\ell}_j)\big)\indic{1}_{\{ t_j=\varphi(V^r(\tilde{\ell}_j))\}} \indic{1}_{\{ k_j=\Phi (V^r(\tilde{\ell}_j))\}}}{2^{r(R + R_t + R_k +\tilde{R}  - R_T - R_K)} Q_Z^{\otimes r}(z_j^r)}  \Bigg]      \nonumber\\
&\mathop\le\limits^{(b)} \sum\limits_{(z_j^r,k_{j-1},t_j,k_j)} \sum\limits_{m_j} \sum\limits_{t_{j-1}}\sum\limits_{k_{j-2}}\sum\limits_{\ell_j} \frac{1}{2^{r(R + R_t + 2R_k+\tilde{R}+R_T+R_K)}}\sum\limits_{\big(u^r(m_j,t_{j-1},k_{j-2}),v^r(\ell_j)\big)} \hspace{-2mm}\Gamma^{(\mathcal{C}_r)}_{U^r,V^r,Z_j^r}\big(u^r(m_j,t_{j-1},k_{j-2}),v^r(\ell_j),z_j^r\big)\nonumber\\
&\times \log\frac{1}{2^{r(R + R_t + R_k +\tilde{R}  - R_T - R_K)} Q_Z^{\otimes r}(z_j^r)}\times\Bigg(W_{Z|U,V}^{\otimes r}\big(z_j^r|u^r(m_j,t_{j-1},k_{j-2}),v^r(\ell_j)\big) \nonumber\\
&+  \expec_{\backslash (m_j,t_{j-1},k_{j-2})}\Bigg[\sum\limits_{(\tilde{m}_j,\tilde{t}_{j-1},\tilde{k}_{j-2}) \ne (m_j,t_{j-1},k_{j-2})} W_{Z|U,V}^{\otimes r}\big(z_j^r|U^r(\tilde{m}_j,\tilde{t}_{j-1},\tilde{k}_{j-2}),v^r(\ell_j)\big)\Bigg]\nonumber\\
& +  \expec_{\mathop {\backslash \ell_j,}\limits_{\backslash (\varphi (v^r(\ell_j)),\Phi (v^r(\ell_j)))} }\Bigg[\sum\limits_{\tilde{\ell}_j \ne \ell_j} W_{Z|U,V}^{\otimes r}\big(z_j^r|u^r(m_j,t_{j-1},k_{j-2}),V^r(\tilde{\ell}_j)\big)\indic{1}_{\{ t_j=\varphi (V^r(\tilde{\ell}_j))\}}\indic{1}_{\{ k_j=\Phi (V^r(\tilde{\ell}_j))\}}\Bigg]\nonumber\\
&+  \expec_{\mathop {\backslash (m_j,t_{j-1},k_{j-2},\ell_j),}\limits_{\backslash (\varphi (v^r(\ell_j)),\Phi (v^r(\ell_j)))} }\Bigg[\sum\limits_{\tilde{\ell}_{j} \ne \ell_j}\sum\limits_{(\tilde{m}_j,\tilde{t}_{j-1},\tilde{k}_{j-2}) \ne (m_j,t_{j-1},k_{j-2})}\hspace{-1.2cm} W_{Z|U,V}^{\otimes r}\big(z_j^r|U^r(\tilde{m}_j,\tilde{t}_{j-1},\tilde{k}_{j-2}),V^r(\tilde{\ell}_j)\big)\indic{1}_{\{ t_j=\varphi (V^r(\tilde{\ell}_j))\}}\indic{1}_{\{ k_j=\Phi (V^r(\tilde{\ell}_j))\}}\Bigg]\Bigg)    \nonumber\\
&\mathop\le\limits^{(c)} \sum\limits_{(z_j^r,k_{j-1},t_j,k_j)} \sum\limits_{m_j} \sum\limits_{t_{j-1}}\sum\limits_{k_{j-2}}\sum\limits_{\ell_j} \frac{1}{2^{r(R + R_t + 2R_k+\tilde{R}+R_T+R_K)}}\sum\limits_{\big(u^r(m_j,t_{j-1},k_{j-2}),v^r(\ell_j)\big)} \hspace{-2mm}\Gamma^{(\mathcal{C}_r)}_{U^r,V^r,Z_j^r}\big(u^r(m_j,t_{j-1},k_{j-2}),v^r(\ell_j),z_j^r\big)\nonumber\\
&\times\log \Bigg(\frac{W_{Z|U,V}^{\otimes r}\big(z_j^r|u^r(m_j,t_{j-1},k_{j-2}),v^r(\ell_j)\big)}{2^{r(R + R_t + R_k +\tilde{R}  - R_T - R_K)} Q_Z^{\otimes r}(z_j^r)} +\sum\limits_{(\tilde{m}_j,\tilde{t}_{j-1},\tilde{k}_{j-2}) \ne (m_j,t_{j-1},k_{j-2})}\frac{ W_{Z|V}^{\otimes r}\big(z_j^r|v^r(\ell_j)\big)}{{{2^{r(R + R_t + R_k +\tilde{R}  - R_T - R_K)}} Q_Z^{\otimes r}(z_j^r)}}\nonumber\\
& +  \sum\limits_{\tilde{\ell}_{j} \ne \ell_j}\frac{ W_{Z|U}^{\otimes r}\big(z_j^r|u^r(m_j,t_{j-1},k_{j-2})\big)}{{{2^{r(R + R_t + R_k +\tilde{R})}} Q_Z^{\otimes r}(z_j^r)}} +1\Bigg)\nonumber\\
&\le \sum\limits_{(z_j^r,k_{j-1},t_j,k_j)} \sum\limits_{m_j} \sum\limits_{t_{j-1}}\sum\limits_{k_{j-2}}\sum\limits_{\ell_j} \frac{1}{2^{r(R + R_t + 2R_k+\tilde{R}+R_T+R_K)}}\sum\limits_{\big(u^r(m_j,t_{j-1},k_{j-2}),v^r(\ell_j)\big)} \hspace{-2mm}\Gamma^{(\mathcal{C}_r)}_{U^r,V^r,Z_j^r}\big(u^r(m_j,t_{j-1},k_{j-2}),v^r(\ell_j),z_j^r\big)\nonumber\\
&\times\log \Bigg(\frac{  W_{Z|U,V}^{\otimes r}\big(z_j^r|u^r(m_j,t_{j-1},k_{j-2}),v^r(\ell_j)\big)}{2^{r(R + R_t + R_k +\tilde{R}  - R_T - R_K)} Q_Z^{\otimes r}(z_j^r)} +\frac{ W_{Z|V}^{\otimes r}\big(z_j^r|v^r(\ell_j)\big)}{{{2^{r(\tilde{R}- R_T - R_K)}} Q_Z^{\otimes r}(z_j^r)}} +  \frac{  W_{Z|U}^{\otimes r}\big(z_j^r|u^r(m_j,t_{j-1},k_{j-2})\big)}{{{2^{r(R + R_t + R_k)}} Q_Z^{\otimes r}(z_j^r)}}  +1\Bigg) \nonumber\\
&\triangleq \Psi_1 + \Psi_2,\label{eq:KLD_ZK_C_CSIT}
\end{align}
where $(a)$ follows from Jensen's inequality, $(b)$ and $(c)$ hold because $\indic{1}_{\{ k_j=\Phi (\tilde s_j^r)\}}\leq 1$. We defined $\Psi_1$ and $\Psi_2$ as
\begin{align}
&{\Psi _1} =  \sum\limits_{(k_{j-1},t_j,k_j)} \sum\limits_{m_j} \sum\limits_{t_{j-1}}\sum\limits_{k_{j-2}}\sum\limits_{\ell_j} \frac{1}{2^{r(R + R_t + 2R_k+\tilde{R}+R_T+R_K)}}\sum\limits_{\big(u^r(m_j,t_{j-1},k_{j-2}),v^r(\ell_j),z_j^r\big)\in\mathcal{T}_{\epsilon}^{(r)}} \hspace{-1.2cm}\Gamma^{(\mathcal{C}_r)}_{U^r,V^r,Z_j^r}\big(u^r(m_j,t_{j-1},k_{j-2}),v^r(\ell_j),z_j^r\big)\nonumber\\
&\times\log \Bigg(\frac{  W_{Z|U,V}^{\otimes r}\big(z_j^r|u^r(m_j,t_{j-1},k_{j-2}),v^r(\ell_j)\big)}{2^{r(R + R_t + R_k +\tilde{R}  - R_T - R_K)} Q_Z^{\otimes r}(z_j^r)} +\frac{ W_{Z|V}^{\otimes r}\big(z_j^r|v^r(\ell_j)\big)}{{{2^{r(\tilde{R}- R_T - R_K)}} Q_Z^{\otimes r}(z_j^r)}} +  \frac{  W_{Z|U}^{\otimes r}\big(z_j^r|u^r(m_j,t_{j-1},k_{j-2})\big)}{{{2^{r(R + R_t + R_k)}} Q_Z^{\otimes r}(z_j^r)}}  +1\Bigg) \nonumber\\
& \le \log \Bigg(\frac{2^{r(R_T+R_K)}\times 2^{ - r(1 - \epsilon )\ent(Z|U,V)}}{2^{r(R + R_t + R_k+\tilde{R})}\times 2^{ - r(1 + \epsilon )\ent(Z)}} + \frac{2^{r(R_T+R_K)}\times 2^{ - r(1 - \epsilon )\ent(Z|V)}}{2^{r\tilde{R}}\times 2^{ - r(1 + \epsilon )\ent(Z)}} +\frac{2^{ - r(1 - \epsilon )\ent(Z|U)}}{2^{r(R + R_t + R_k)}\times 2^{ - r(1 + \epsilon )\ent(Z)}}+ 1\Bigg)\label{eq:Psi_1_C_CSIT}\\
&{\Psi _2} =  \sum\limits_{(k_{j-1},t_j,k_j)} \sum\limits_{m_j} \sum\limits_{t_{j-1}}\sum\limits_{k_{j-2}}\sum\limits_{\ell_j} \frac{1}{2^{r(R + R_t + 2R_k+\tilde{R}+R_T+R_K)}}\sum\limits_{\big(u^r(m_j,t_{j-1},k_{j-2}),v^r(\ell_j),z_j^r\big)\notin\mathcal{T}_{\epsilon}^{(r)}} \hspace{-1.2cm}\Gamma^{(\mathcal{C}_r)}_{U^r,V^r,Z_j^r}\big(u^r(m_j,t_{j-1},k_{j-2}),v^r(\ell_j),z_j^r\big)\nonumber\\
&\times\log \Bigg(\frac{  W_{Z|U,V}^{\otimes r}\big(z_j^r|u^r(m_j,t_{j-1},k_{j-2}),v^r(\ell_j)\big)}{2^{r(R + R_t + R_k +\tilde{R}  - R_T - R_K)} Q_Z^{\otimes r}(z_j^r)} +\frac{ W_{Z|V}^{\otimes r}\big(z_j^r|v^r(\ell_j)\big)}{{{2^{r(\tilde{R}- R_T - R_K)}} Q_Z^{\otimes r}(z_j^r)}} +  \frac{  W_{Z|U}^{\otimes r}\big(z_j^r|u^r(m_j,t_{j-1},k_{j-2})\big)}{{{2^{r(R + R_t + R_k)}} Q_Z^{\otimes r}(z_j^r)}}  +1\Bigg) \nonumber\\
&\le 2|V||U||Z|e^{ - r\epsilon^2 {\mu _{V,U,Z}}}r\log\Big(\frac{3}{\mu_Z} + 1\Big). \label{eq:Psi_2_C_CSIT}
\end{align}
In \eqref{eq:Psi_2_C_CSIT} $\mu_{V,U,Z}=\min\limits_{(v,u,z)\in(\mathcal{V},\mathcal{U},\mathcal{Z})}P_{V,U,Z}(v,u,z)$ and $\mu_Z=\min\limits_{z\in\mathcal{Z}}P_Z(z)$. When $r\to\infty$ then $\Psi_2\to 0$ and $\Psi_1$ goes to zero when $r$ grows if
\begin{subequations}\label{eq:Covert_Analysis_Causal_CSIT}
\begin{align}
    R + R_t + R_k + \tilde{R} - R_T - R_K &>\mi(U,V;Z),\label{eq:Covert_Analysis_Causal_CSIT_1}\\
    \tilde{R} - R_T - R_K &> \mi(V;Z),\label{eq:Covert_Analysis_Causal_CSIT_2}\\
    R + R_t + R_k  &>\mi(U;Z).\label{eq:Covert_Analysis_Causal_CSIT_4}
\end{align}
\end{subequations}

\emph{{Decoding and Error Probability Analysis:}}
At the end of block $j\in\intseq{1}{B}$, using its knowledge of the key $k_{j-2}$ generated from the block $j-2$, the receiver finds a unique pair $(\hat{m}_j,\hat{t}_{j-1})$ such that $\big(u^r(\hat{m}_j,\hat{t}_{j-1},k_{j-2}),y_j^r\big)\in\mathcal{T}_{\epsilon}^{(r)}$. According to the law of large numbers and the packing lemma probability of error vanishes when $r$ grows if \cite{ElGamalKim},
\begin{align}
&R + R_t < \mi(U;Y).
\label{eq:Decoding_BM_C_CSIT}
\end{align} 
We now analyze the probability of error at the encoder and the decoder for key generation. Let $(L_{j-1},T_{j-1})$ denote the chosen indices at the encoder and $\hat{L}_{j-1}$ and $\hat{T}_{j-1}$ be the estimate of the index $L_{j-1}$ and $T_{j-1}$ at the decoder. At the end of block $j$, by decoding $U_j^r$, the receiver has access to $T_{j-1}$ and to find $L_{j-1}$ we define the error event,
\begin{align}
    \mathcal{E} = \Big\{ \big(V_{j-1}^r(\hat{L}_{j-1}),S_{j-1}^r,U_{j-1}^r,Y_{j-1}^r\big) \notin \mathcal{T}_\epsilon ^{(n)}\Big\},
\end{align}and consider the events,
\begin{subequations}\label{eq:Error_Wyner_Ziv}
\begin{align}
    \mathcal{E}_1 &= \Big\{ \big(V_{j-1}^r(\ell_{j-1}),S_{j-1}^r\big) \notin \mathcal{T}_{\epsilon'}^{(n)} \,\,\mbox{for all}\,\, \ell_{j-1}\in\intseq{1}{2^{r\tilde{R}}}\Big\},\\
    \mathcal{E}_2 &= \Big\{ \big(V_{j-1}^r(L_{j-1}),S_{j-1}^r,U_{j-1}^r,Y_{j-1}^r\big) \notin \mathcal{T}_\epsilon ^{(n)}\Big\},\\
    \mathcal{E}_3 &= \Big\{ \big(V_{j-1}^r(\tilde{\ell}_{j-1}),U_{j-1}^r,Y_{j-1}^r\big) \in \mathcal{T}_{\epsilon}^{(n)} \,\,\mbox{for some}\,\, \ell_{j-1}\in\mathcal{B}(T_{j-1}),\tilde{\ell}_{j-1}\ne \ell_{j-1}\Big\},
\end{align}
\end{subequations}where $\epsilon>\epsilon'>0$. By the union bound we have,
\begin{align}
    P(\mathcal{E})\leq P(\mathcal{E}_1) + P(\mathcal{E}_1^c\cap\mathcal{E}_2) + P(\mathcal{E}_3)\label{eq:Error_Event_Causal_CSIT}.
\end{align}
Similar to the proof of Lemma~\ref{lemma:Typicality} one can show that the first term on the \ac{RHS} of \eqref{eq:Error_Event_Causal_CSIT} vanishes when $r$ grows if we have \eqref{eq:State_Dependent_Covering_Lemma_2_C}. Following the steps in \cite[Sec.~11.3.1]{ElGamalKim}, the last two terms on the \ac{RHS} of \eqref{eq:Error_Event_Causal_CSIT} go to zero when $r$ grows if,
\begin{subequations}\label{eq:Error_Analysis_Encoder_CSIT}
\begin{align}
    \tilde{R}&>\mi(S;V),\label{eq:Error_Analysis_Encoder_CSIT_1}\\
    \tilde{R}-R_t&<\mi(V;U,Y).\label{eq:Error_Analysis_Encoder_CSIT_2}
\end{align}
\end{subequations}

Applying Fourier-Motzkin to \eqref{eq:State_Dependent_Covering_Lemma_2_C}, \eqref{eq:Covert_Analysis_Causal_CSIT}, 
\eqref{eq:Decoding_BM_C_CSIT}, and  \eqref{eq:Error_Analysis_Encoder_CSIT} and remarking that the scheme requires $R_t + R_k\leq R_T + R_K$ results in the achievable region in Theorem~\ref{thm:Acivable_Rate_For_CSIT_C}.

\section{Proof of Theorem~\ref{thm:Acivable_Rate_For_CSIT_C_Checking}}
\label{sec:Proof_Simple_Inner_Bound_For_CSIT_C}
\sloppy Fix $P_U(u)$, $x(u,s)$, and $\epsilon_1>\epsilon_2>0$ such that, $P_Z = Q_0$.

\emph{Codebook Generation:}
Let $C_n\triangleq\{U^n(m)\}_{m\in\mathcal{M}}$, where $\mathcal{M}\in\intseq{1}{2^{nR}}$, be a random codebook consisting of independent random sequences each generated according to $\prod\nolimits_{i = 1}^n P_U(u_i)$. We denote a realization of $C_n$ by $\CodeBook_n\triangleq\{u^n(m)\}_{m\in\mathcal{M}}$. 

\emph{{Encoding:}}
To send the message $m$ the encoder computes $u^n(m)$ and transmits codeword $x^n$, where $x_i=x(u_i(m),s_i)$. 
For a fixed codebook $\mathcal{C}_n$, the induced joint distribution over the codebook is as follows
\begin{align}
    \twocolalign &P_{M,S^n,U^n,Z^n}^{(\mathcal{C}_n)}(m,s^n,\tilde{u}^n,z^n)= 2^{-nR} Q_S^{\otimes n}(s^n) \indic{1}_{\{ \tilde{u}^n=u^n(m)\}} W_{Z|U,S}^{\otimes n}(z^n|\tilde{u}^n,s^n).\label{eq:P_Dist_Simple_C}
\end{align}

\emph{{Covert Analysis:}} 
We now show that this coding scheme guarantees that
\begin{align}
    \expec_{C_n}\big[\kld(P_{Z^n|C_n} || Q_Z^{\otimes n} )\big]\underset{n\rightarrow\infty}{\xrightarrow{\hspace{0.2in}}} 0,\label{eq:QZ_CSIT_C}
\end{align}
where $P_{Z^n|C_n}$ is the marginal distribution of the distribution defined in \eqref{eq:P_Dist_Simple_C} and is as follows
\begin{align}
    \twocolalign P_{Z^n|C_n}(z^n)&=\sum\limits_{m}\sum\limits_{s^n} 2^{-nR} Q_S^{\otimes n}(s^n) W_{Z|U,S}^{\otimes n}\big(z^n|u^n(m),s^n\big)\\
    &=\sum\limits_{m}2^{-nR} W_{Z|U}^{\otimes n}\big(z^n|u^n(m)\big),\label{eq:P_Marginal_Dist_Simple_C}
\end{align}where $W_{Z|U}=\sum_{s\in\mathcal{S}} Q_S(s)W_{Z|U,S}(z|u,s)$. Then we choose $P_U$ and $x(u,s)$ such that it satisfies $Q_Z=Q_0$. 
By \cite[Theorem~1]{Hou13} one can show that \eqref{eq:QZ_CSIT_C} holds if,
\begin{align}
    R>\mi(U;Z).\label{eq:res_Simple_C}
\end{align}

\emph{{Decoding and Error Probability Analysis:}}
Upon receiving $y^n$ the receiver finds a unique message $\hat{m}$ such that $\big(u^n(\hat{m}),y^n\big)\in\mathcal{T}_{\epsilon}^{(n)}$. According to the law of large numbers and the packing lemma probability of error vanishes when $n$ grows if \cite{ElGamalKim},
\begin{align}
&R  < \mi(U;Y).
\label{eq:Decoding_Simple_C}
\end{align}


\section{Proof of Theorem~\ref{thm:Upper_Bound_For_CSIT_C}}
\label{sec:Proof_Upper_Bound_For_CSIT_C}
Consider any sequence of length-$n$ codes for a state-dependent channel with channel state available causally only at the transmitter such that $P_e^{(n)}\leq\epsilon_n$ and $\kld(P_{Z^n}||Q_0^{\otimes n})\leq\delta$ with $\lim_{n\to\infty}\epsilon_n=0$. Note that the converse is consistent with the model and does \emph{not} require $\delta$ to vanish.
\subsection{Epsilon Rate Region}
We first define a region $\mathcal{S}_{\epsilon}$ for $\epsilon>0$ that expands the region defined in~\eqref{eq:finalregion_CSIT_C} and~\eqref{eq:finalconstraint_CSIT_C} as follows.
\begin{subequations}\label{eq:Epsilon_Rate_Region_CSIT_C}
\begin{align}
\calS_\epsilon\eqdef \big\{R\geq 0: \exists P_{U,V,S,X,Y,Z}\in\calD_\epsilon: R\leq \mi(U;Y) + \epsilon \big\}\label{eq:Sepsilon_CSIT_C}
\end{align}
where 
\begin{align}
  \calD_\epsilon = \left.\begin{cases}P_{U,V,S,X,Y,Z}:\\
P_{U,V,S,X,Y,Z}=Q_SP_VP_{U|V}\indic{1}_{\big\{X=X(U,S)\big\}}W_{Y,Z|X,S}\\
\mathbb{D}\left(P_Z\Vert Q_0\right) \leq \epsilon\\
\mi(U;Y) \geq \mi(V;Z) - 3\epsilon\\
\max\{\card{\calU},\card{\calV}\}\leq \card{\calX}
\end{cases}\right\},\label{eq:Depsilon_CSIT_C}
\end{align}
\end{subequations}where $\epsilon\triangleq\max\{\epsilon_n,\nu\geq\frac{\delta}{n}\}$. 
We next show that if a rate $R$ is achievable then $R\in\mathcal{S}_{\epsilon}$ for any $\epsilon>0$. 

For any $\epsilon_n>0$, we start by upper bounding $nR$ using standard techniques.
\begin{align}
nR &=\ent(M)\nonumber\\
&\mathop \le \limits^{(a)}\mi(M;Y^n)+n\epsilon_n\nonumber\\
&= \sum\limits_{i = 1}^n \mi(M;Y_i|Y^{i-1})  + n\epsilon_n \nonumber\\
&\leq \sum\limits_{i = 1}^n \mi(M,Y^{i-1};Y_i)  + n\epsilon_n \nonumber\\
&\mathop \leq \limits^{(b)} \sum\limits_{i = 1}^n \mi(M,S^{i-1};Y_i)  + n\epsilon_n \nonumber\\
&\mathop = \limits^{(c)} \sum\limits_{i = 1}^n \mi(U_i;Y_i)  + n\epsilon_n \nonumber\\
& =  n\sum\limits_{i = 1}^n \frac{1}{n}\mi(U_i;Y_i)  + n\epsilon_n \nonumber\\
& =  n\sum\limits_{i = 1}^n \Prob(T=i)\mi(U_i;Y_i|T=i)  + n\epsilon_n \nonumber\\
& =  n\mi(U_T;Y_T|T)  + n\epsilon_n \nonumber\\
& \leq  n\mi(U_T,T;Y_T)  + n\epsilon_n \nonumber\\
&\mathop = \limits^{(d)}  n\mi(U;Y)  + n\epsilon_n\nonumber\\
&\mathop \leq \limits^{(e)}  n\mi(U;Y)  + n\epsilon,
\label{eq:Upper_Bound_on_MRate_CSIT_C_2_First_2}
\end{align}
where 
\begin{itemize}
    \item[$(a)$] follows from Fano's inequality;
    \item[$(b)$] follows since $(M,Y^{i-1})-(M,S^{i-1})-Y_i$, note that we also have  $V_i-(M,S^{i-1})-Y_i$, where $V_i\triangleq(M,Z^{i-1})$;
    \item[$(c)$]  follows by defining $U_i\triangleq(M,S^{i-1})$;
    \item[$(d)$] follows by defining $U=(U_T,T)$ and $Y=Y_T$;
    \item[$(e)$] follows by defining $\epsilon\triangleq\max\{\epsilon_n,\nu\geq\frac{\delta}{n}\}$.
\end{itemize}

Next, we lower bound $nR$ as follows,
\begin{align}
nR&\geq \ent(M)\nonumber\\
&\ge \mi(M;Z^n)\nonumber\\
&= \sum\limits_{i = 1}^n \mi(M;Z_i|Z^{i-1}) \nonumber\\
&\mathop \ge \limits^{(a)} \sum\limits_{i = 1}^n \mi(M,Z^{i-1};Z_i) -\delta  \nonumber\\
&\mathop = \limits^{(b)} \sum\limits_{i = 1}^n \mi(V_i;Z_i) - \delta \nonumber\\
&= n\sum\limits_{i = 1}^n \Prob(T=i)\mi(V_i;Z_i|T=i)  - \delta \nonumber\\
&= n\mi(V_T;Z_T|T)  - \delta \nonumber\\
&\mathop \ge \limits^{(c)} n\mi(V_T,T;Z_T) - 2\delta \nonumber\\
&\mathop = \limits^{(d)} n\mi(V;Z) - 2\delta
\label{eq:Upper_Bound_on_KRate_FCSI_C_Second_2}
\end{align}
where 
\begin{itemize}
    \item[$(a)$] follows from Lemma~\ref{lemma:Resolvability_Properties};
    \item[$(b)$] follows from the definition of $V_i\triangleq(M,Z^{i-1})$, which is defined in the process of deriving \eqref{eq:Upper_Bound_on_MRate_CSIT_C_2_First_2};
    \item[$(c)$] follows from Lemma~\ref{lemma:Resolvability_Properties};
    \item[$(d)$] follows by defining $V=(V_T,T)$ and $Z=Z_T$.
\end{itemize}
For any $\nu>0$, choosing $n$ large enough ensures that
\begin{align}
 R &\mathop \geq \mi(V;Z) - 2\nu\nonumber\\
 &\mathop \geq \mi(V;Z) - 2\epsilon,
 \label{eq:Upper_Bound_on_KRate_CSI_C}
\end{align}
where the last equality follows from definition $\epsilon\triangleq\max\{\epsilon_n,\nu\}$. To show that $ \kld(P_Z||Q_0)\leq\epsilon$, note that for $n$ large enough
\begin{align}
\kld(P_Z||Q_0)=\kld(P_{Z_T}||Q_0)=\kld\Bigg(\frac{1}{n}\sum\limits_{i=1}^nP_{Z_i}\Bigg|\Bigg|Q_0\Bigg)\leq\frac{1}{n}\sum\limits_{i=1}^n\kld(P_{Z_i}||Q_0)\leq\frac{1}{n}\kld(P_{Z^n}||Q_0^{\otimes n})\leq\frac{\delta}{n}\leq\nu\leq\epsilon.\label{eq:boundKL_CSIT_C}
\end{align}
Combining \eqref{eq:Upper_Bound_on_MRate_CSIT_C_2_First_2} and \eqref{eq:Upper_Bound_on_KRate_CSI_C}, and~\eqref{eq:boundKL_CSIT_C} shows that $\forall \epsilon_n,\nu>0$, $R\leq \max\{x:x\in\mathcal{S}_{\epsilon}\}$. Therefore,
\begin{align}
  R\leq \max\left\{x:x\in\bigcap_{\epsilon>0}\mathcal{S}_{\epsilon}\right\}.
\end{align}
\subsubsection{Continuity at Zero}
One can prove the continuity at zero of $\calS_\epsilon$ by substituting $\min\{\mi(U;Y)-\mi(U;S),\mi(U,V;Y)-\mi(U;S|V)\}$ with $\mi(U;Y)$ and $\mi(V;Z)-\mi(V;S)$ with $\mi(V;Z)$ in Appendix~\ref{sec:continuity-at-zero} and following the exact same arguments.

\section{Proof of Theorem~\ref{thm:Strictly_Causal_State_CSIT}}
\label{sec:Proof_Acivable_Rate_Condition_For_CSIT_SC}
We adopt a block-Markov encoding scheme in which $B$ independent messages are transmitted over $B$ channel blocks each of length $r$, such that $n=rB$.
The warden's observation is $Z^n=(Z_1^r,\dots,Z_B^r)$, the distribution induced at the output of the warden is $P_{Z^n}$, the target output distribution is $Q_Z^{\otimes n}$, and Equation \eqref{eq:Total_KLD_CoNC}, describing the distance between the two distributions, continues to  hold. 
The random code generation is as follows:

%

\sloppy Fix $P_X(x)$, $P_{V|S}(v|s)$, and $\epsilon_1>\epsilon_2>0$ such that, $P_Z = Q_0$.

\emph{Codebook Generation for Keys:}
For each block $j\in\intseq{1}{B}$, let $C_1^{(r)}\triangleq\big\{V^r(\ell_j)\big\}_{\ell_j\in\mathcal{L}}$, where $\mathcal{L}=\intseq{1}{2^{r\tilde{R}}}$, be a random codebook consisting of independent random sequences each generated according to $P_V^{\otimes r}$, where $P_V=\sum_{s\in\mathcal{S}}Q
_S(s)P_{V|S}(v|s)$. We denote a realization of $C_1^{(r)}$ by $\mathcal{C}_1^{(r)}\triangleq\big\{v^r(\ell_j)\big\}_{\ell_j\in\mathcal{L}}$. 
Partition the set of indices $\ell_j\in\intseq{1}{2^{r\tilde{R}}}$ into bins $\mathcal{B}(t)$, $t\in\intseq{1}{2^{rR_T}}$ by using function $\varphi:V^r(\ell_j)\to\intseq{1}{2^{rR_T}}$ through random binning by choosing the value of $\varphi(v^r(\ell_j))$ independently and uniformly at random for every $v^r(\ell_j)\in\mathcal{V}^r$. 
For each block $j\in\intseq{1}{B}$, create a function $\Phi:V^r(\ell_j)\to\intseq{1}{2^{rR_K}}$ through random binning by choosing the value of $\Phi(v^r(\ell_j))$ independently and uniformly at random for every $v^r(\ell_j)\in\mathcal{V}^r$. The key $k_j=\Phi(v^r(\ell_j))$ obtained in block $j\in\intseq{1}{B}$ from the description of the channel state sequence $v^r(\ell_j)$ is used to assist the encoder in block $j+2$.

\emph{Codebook Generation for Messages:}
For each block $j\in\intseq{1}{B}$, let $C_2^{(r)}\triangleq\big\{X^r(m_j,t_{j-1},k_{j-2})\big\}_{(m_j,t_{j-1},k_{j-2})\in\mathcal{M}\times\mathcal{T}\times\mathcal{K}}$, where $\mathcal{M}=\intseq{1}{2^{rR}}$, $\mathcal{T}=\intseq{1}{2^{rR_t}}$, and $\mathcal{K}=\intseq{1}{2^{rR_k}}$, be a random codebook consisting of independent random sequences each generated according to $P_X^{\otimes r}$. We denote a realization of $C_2^{(r)}$ by $\mathcal{C}_2^{(r)}\triangleq\big\{x^r(m_j,t_{j-1},k_{j-2})\big\}_{(m_j,t_{j-1},k_{j-2})\in\mathcal{M}\times\mathcal{T}\times\mathcal{K}}$. Let, $C_r=\big\{C_1^{(r)},C_2^{(r)}\big\}$ and $\mathcal{C}_r=\big\{\mathcal{C}_1^{(r)},\mathcal{C}_2^{(r)}\big\}$. The indices $(m_j,t_{j-1},k_{j-2})$ can be viewed as a two layer binning.  We define an ideal \ac{PMF} for codebook $\mathcal{C}_r$, as an approximate distribution to facilitate the analysis
\begin{align}
    &\Gamma_{M_j,T_{j-1},K_{j-2},L_j,X^r,V^r,S_j^r,Z_j^r,K_{j-1},T_j,K_j}^{(\mathcal{C}_r)}(m_j,t_{j-1},k_{j-2},\ell_j,\tilde{x}^r,\tilde{v}^r,s_j^r,z_j^r,k_{j-1},t_j,k_j)  \nonumber\\
    &\qquad= 2^{-r(R + R_t +R_k + \tilde{R})}\indic{1}_{\{ \tilde{x}^r=x^r(m_j,t_{j-1},k_{j-2})\}}\indic{1}_{\{\tilde{v}^r=v^r(\ell_j) \}} P_{S|V}^{\otimes r}(s_j^r| \tilde{v}^r)  W_{Z|X,S}^{\otimes r}(z_j^r|\tilde{x}^r,s_j^r)\nonumber\\
    &\qquad\quad\times 2^{-rR_k}\indic{1}_{\{ t_j=\varphi (\tilde{v}^r)\}}\indic{1}_{\{ k_j=\Phi (\tilde{v}^r)\}},\label{eq:Ideal_PMF_Strictly_Causal}
\end{align}where $W_{Z|X,S}$ is the marginal distribution of $W_{Y,Z|X,S}$ defined in Theorem~\ref{thm:Strictly_Causal_State_CSIT} and
\begin{align}
    P_{S|V}=\frac{P_{S,V}(s,v)}{P_{V}(v)}=\frac{Q_S(s)P_{V|S}(v|s)}{\sum_{s\in\mathcal{S}}Q_S(s)P_{V|S}(v|s)}.\label{eq:Marginal_LE_SC}
\end{align}

\emph{{Encoding:}}
We assume that the transmitter and the receiver have access to shared secret keys $k_{-1}$ and $k_0$ for the first two blocks but after the first two blocks they use the key that they generate from the \ac{CSI}.  

In the first block, to send the message $m_1$ according to $k_{-1}$, the encoder chooses the index $t_0$ uniformly at random (the index $t_0$ does not carry any useful information) and computes $x^r(m_1,t_0,k_{-1})$ and transmits it over the channel.

At the beginning of the second block, to generate a secret key shared between the transmitter and the receiver, the encoder chooses the index $\ell_1$ according to the following distribution with $j=1$,
\begin{align}
    f(\ell_j|s_j^r)=\frac{P_{S|V}^{\otimes r}(s_j^r|v^r(\ell_j))}{\sum\limits_{\ell'\in\intseq{1}{2^{r\tilde{R}}}} P_{S|V}^{\otimes r}(s_j^r|v^r(\ell'_j)) },\label{eq:Likelihood_Encoder_Str_Causal}
\end{align}where $P_{S|V}$ is defined in~\eqref{eq:Marginal_LE_SC}. Then generates the reconciliation index $t_1=\varphi(v^n(\ell_1))$; simultaneously the transmitter generates a key $k_1=\Phi(v^r(\ell_1))$ from the description of its \ac{CSI} of the first block $v^r(\ell_1)$ to be used in the next block. To transmit the message $m_2$ and reconciliation index $t_1$ according to the key $k_0$ the encoder computes $x^r(m_2,t_1,k_0)$ and transmits it over the channel.


In block $j\in\intseq{3}{B}$, the encoder first selects the index $\ell_{j-1}$ based $s_{j-1}^r$ by using the likelihood encoder described in \eqref{eq:Likelihood_Encoder_Str_Causal} and then generates the reconciliation index $t_{j-1}=\varphi(v^r(\ell_{j-1}))$; simultaneously the transmitter generates a key $k_{j-1}=\Phi(v^r(\ell_{j-1}))$ from the description of its \ac{CSI} of the block $j-1$, $v^r(\ell_{j-1})$, to be used in the next block. Then to send the message $m_j$ and the reconciliation index $t_{j-1}$ according to the generated key $k_{j-2}$ from the previous block, the encoder computes $x^r(m_j,t_{j-1},k_{j-2})$ and transmits it over the channel.

Define
\begin{align}
    \twocolalign &\Upsilon_{M_j,T_{j-1},K_{j-2},X^r,S_j^r,L_j,V^r,Z_j^r,K_{j-1},T_j,K_j}^{(\mathcal{C}_r)}(m_j,t_{j-1},k_{j-2},\tilde{x}^r,s_j^r,\ell_j,v^r,z_j^r,k_{j-1},t_j,k_j) \nonumber\\
    &\triangleq 2^{-r(R + R_t + R_k)}\indic{1}_{\{ \tilde{x}^r=x^r(m_j,t_{j-1},k_{j-2})\}} Q_S^{\otimes r}(s_j^r) f(\ell_j|s_j^r) \indic{1}_{\{ \tilde{v}^r=v^r(\ell_j)\}}\nonumber\\
    &\quad\times W_{Z|X,S}^{\otimes r}(z_j^r|\tilde{x}^r,s_j^r) 2^{-rR_k}\indic{1}_{\{ t_j=\sigma (v^r(\ell_j)) \}}\indic{1}_{\{ k_j=\Phi (v^r(\ell_j))\}}.\label{eq:P_Dist_SC_CSIT}
\end{align}For a fixed codebook $\mathcal{C}_r$, the induced joint distribution over the codebook (i.e. $P^{(\mathcal{C}_r)}$) satisfies
\begin{align}
\kld\Big(P_{M_j,T_{j-1},K_{j-2},X^r,S_j^r,L_j,V^r,Z_j^r,K_{j-1},T_j,K_j}^{(\mathcal{C}_r)}||\Upsilon_{M_j,T_{j-1},K_{j-2},X^r,S_j^r,L_j,V^r,Z_j^r,K_{j-1},T_j,K_j}^{(\mathcal{C}_n)}\Big)\leq\epsilon.\label{eq:Uniformity_Key_SC_CSIT}
\end{align}
This intermediate distribution $\Upsilon^{(\mathcal{C}_r)}$ approximates the true distribution $P^{(\mathcal{C}_r)}$ and will be used in the sequel for bounding purposes. Expression~\eqref{eq:Uniformity_Key_SC_CSIT} holds because the main difference in $\Upsilon^{(\mathcal{C}_r)}$ is assuming the keys $K_{j-2}$, $K_{j-1}$ and the reconciliation index $T_{j-1}$ are uniformly distributed, which is made (arbitrarily) nearly uniform in $P^{(\mathcal{C}_r)}$ with appropriate control of rate.

\begin{figure*}
\centering
\includegraphics[width=6.0in]{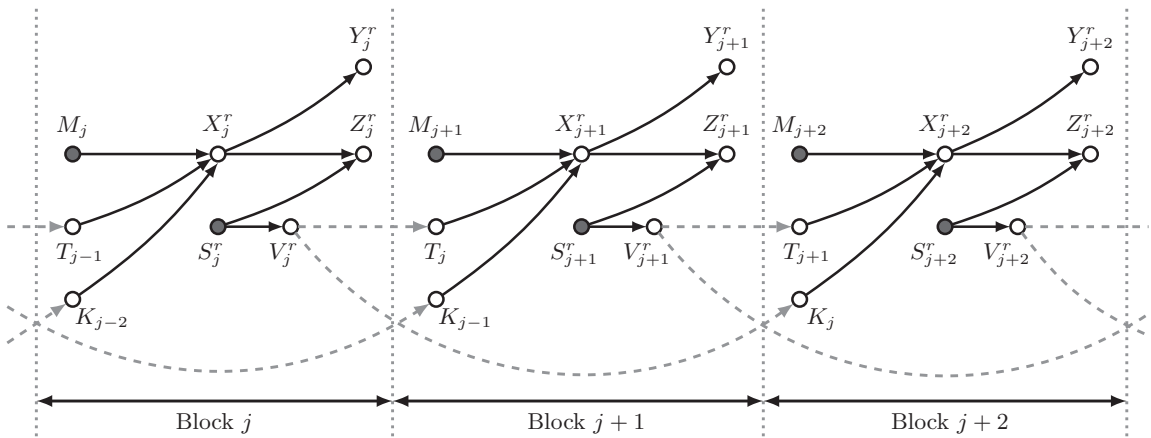}
\caption{Functional dependence graph for the block-Markov encoding scheme}
\label{fig:Chaining_SC_CSIT}

\end{figure*}
\emph{Covert Analysis:} We now show that this coding scheme  guarantees that $\expec_{C_n}[\kld(P_{Z^n|C_n} || Q_Z^{\otimes n} )]\underset{n\rightarrow\infty}{\xrightarrow{\hspace{0.2in}}} 0$, where $C_n$ is the set of all the codebooks for all blocks, and then choose $P_X$ and $P_V$ such that it satisfies $Q_Z=Q_0$. 
Similar to \eqref{eq:Total_KLD_Bound_NC_CSIT} by using functional dependence graph depicted in Fig.~\ref{fig:Chaining_SC_CSIT} one can show that
\begin{align}
    \kld(P_{Z^n}||Q_Z^{\otimes n})\leq 2\sum\limits_{j = 1}^B {\kld(P_{Z_j^r,K_{j-1},T_j,K_j}||Q_Z^{\otimes r}Q_{K_{j-1}}Q_{T_j}Q _{K_j})}.\label{eq:Total_KLD_Bound_SC_CSIT}
\end{align}
To bound the \ac{RHS} of \eqref{eq:Total_KLD_Bound_SC_CSIT} by using Lemma~\ref{lemma:TV_KLD} and the triangle inequality we have,
\begin{align}
        &\expec_{\mathcal{C}_r}||P_{Z_j^r,K_{j-1},T_j,K_j|C_r}-Q_Z^{\otimes r}Q_{K_{j-1}}Q_{T_j}Q _{K_j}||_1 \nonumber\\ &\leq\expec_{C_r}||P_{Z_j^r,K_{j-1},T_j,K_j|C_r}-\Gamma_{Z_j^r,K_{j-1},T_j,K_j|C_r}||_1  \twocolbreak+ \expec_{C_r}||\Gamma_{Z_j^r,K_{j-1},T_j,K_j|C_r}-Q_Z^{\otimes r}Q_{K_{j-1}}Q_{T_j}Q _{K_j}||_1\nonumber\\
    &\leq\expec_{C_r}||P_{Z_j^r,K_{j-1},T_j,K_j|C_r}-\Upsilon_{Z_j^r,K_{j-1},T_j,K_j|C_r}||_1 + \expec_{C_r}||\Upsilon_{Z_j^r,K_{j-1},T_j,K_j|C_r}-\Gamma_{Z_j^r,K_{j-1},T_j,K_j|C_r}||_1  \nonumber\\
    &\quad+ \expec_{C_r}||\Gamma_{Z_j^r,K_{j-1},T_j,K_j|C_r}-Q_Z^{\otimes r}Q_{K_{j-1}}Q_{T_j}Q _{K_j}||_1.\label{eq:Triangle_Inequality_Strictly_Causaul_CSIT}
\end{align}
From \eqref{eq:Uniformity_Key_SC_CSIT} and the monotonicity of KL-divergence the first term on the \ac{RHS} of \eqref{eq:Triangle_Inequality_Strictly_Causaul_CSIT} goes to zero when $n$ grows. To bound the second term on the \ac{RHS} of \eqref{eq:Triangle_Inequality_Strictly_Causaul_CSIT} for a fixed codebook $\mathcal{C}_r$ we have,
\begin{subequations}\label{eq:SC_CSIT_equalities}
\begin{align}
  \twocolalign\Gamma_{M_j,T_{j-1},K_{j-2}}^{(\mathcal{C}_r)} \onecolalign= 2^{-r(R+R_t+R_k)} = \Upsilon_{M_j,T_{j-1},K_{j-2}}^{(\mathcal{C}_r)},\label{eq:Uniformity_M_2_SC}\\
  \twocolalign\Gamma_{X^r|M_j,T_{j-1},K_{j-2},S_j^r}^{(\mathcal{C}_r)} \onecolalign= \indic{1}_{\{ \tilde{x}^r=x^r(m_j,t_{j-1},k_{j-2})\}} \twocolbreak= \Upsilon_{X^r|M_j,T_{j-1},K_{j-2},S_j^r}^{(\mathcal{C}_r)},\label{eq:eq4_SC}\\
  \twocolalign\Gamma_{L_j|M_j,T_{j-1},K_{j-2},S_j^r,X^r}^{(\mathcal{C}_r)} \onecolalign= f(\ell_j|s_j^r) = \Upsilon_{L_j|M_j,T_{j-1},K_{j-2},S_j^r,X^r}^{(\mathcal{C}_r)},\label{eq:Likelihood_Encoder_Equality_2_SC}\\
  \twocolalign\Gamma_{V^r|M_j,T_{j-1},K_{j-2},S_j^r,L_j,X^r}^{(\mathcal{C}_r)} \onecolalign= \indic{1}_{\{ \tilde{v}^r=v^r(\ell_j)\}} \twocolbreak= \Upsilon_{V^r|M_j,T_{j-1},K_{j-2},S_j^r,L_j,X^r}^{(\mathcal{C}_r)},\label{eq:eq5_SC}\\
    \twocolalign\Gamma_{Z_j^r|M_j,T_{j-1},K_{j-2},S_j^r,L_j,X^r,V^r}^{(\mathcal{C}_r)} \onecolalign= W_{Z|X,S}^{\otimes r}  = \Upsilon_{Z_j^r|M_j,T_{j-1},K_{j-2},S_j^r,L_j,X^r,V^r}^{(\mathcal{C}_r)},\label{eq:Channel_Equality_2_SC}\\
    \twocolalign\Gamma_{K_{j-1}|M_j,T_{j-1},K_{j-2},S_j^r,L_j,X^r,V^r,Z_j^r}^{(\mathcal{C}_r)} \onecolalign= 2^{-rR_k}  = \Upsilon_{K_{j-1}|M_j,T_{j-1},K_{j-2},S_j^r,L_j,X^r,V^r,Z_j^r}^{(\mathcal{C}_r)},\label{eq:Second_Key_SC}\\
    \twocolalign\Gamma_{T_j|M_j,T_{j-1},K_{j-2},S_j^r,L_j,X^r,V^r,Z_j^r}^{(\mathcal{C}_r)} \onecolalign=\indic{1}_{\{ t_j=\sigma (v^r(\ell_j)) \}}=\Upsilon_{T_j|M_j,T_{j-1},K_{j-2},S_j^r,L_j,X^r,V^r,Z_j^r}^{(\mathcal{C}_r)},\label{eq:K_j_Distribution_SC}\\
    \twocolalign\Gamma_{K_j|M_j,T_{j-1},K_{j-2},S_j^r,L_j,X^r,V^r,Z_j^r,T_j}^{(\mathcal{C}_r)} \onecolalign=\indic{1}_{\{ k_j=\Phi (v^r(\ell_j))\}}=\Upsilon_{K_j|M_j,T_{j-1},K_{j-2},S_j^r,L_j,X^r,V^r,Z_j^r,T_j}^{(\mathcal{C}_r)},\label{eq:T_j_Distribution_SC}
\end{align}
\end{subequations}
where \eqref{eq:Likelihood_Encoder_Equality_2_SC} follows from \eqref{eq:Likelihood_Encoder_Str_Causal}. Hence,
\begin{align}
 &\expec_{C_r}||\Upsilon_{Z_j^r,K_{j-1},T_j,K_j|C_r}-\Gamma_{Z_j^r,K_{j-1},T_j,K_j}||_1
 \twocolbreak \nonumber\\ &\leq\expec_{C_r}||\Upsilon_{M_j,T_{j-1},K_{j-2},S_j^r,L_j,X^r,V^r,Z_j^r,K_{j-1},T_j,K_j|C_r}- \Gamma_{M_j,T_{j-1},K_{j-2},S_j^r,L_j,X^r,V^r,Z_j^r,K_{j-1},T_j,K_j|C_r}||_1 \nonumber\\
&\mathop = \limits^{(a)}\expec_{C_r}||\Upsilon_{M_j,T_{j-1},K_{j-2},S_j^r|C_r}- \Gamma_{M_j,T_{j-1},K_{j-2},S_j^r|C_r}||_1 \nonumber\\
&\mathop = \limits^{(b)} \expec_{C_r}||Q_S^{\otimes r} - \Gamma_{S_j^r|M_j=1,T_{j-1}=1,K_{j-2}=1,C_r}||_1,\label{eq:General_Distribution_Expansion_2_SC}
\end{align}where $(a)$ follows from \eqref{eq:eq4_SC}-\eqref{eq:T_j_Distribution_SC} and $(b)$ follows from the symmetry of the codebook construction with respect to $M_j$, $T_{j-1}$, and $K_{j-2}$ and \eqref{eq:Uniformity_M_2_SC}. Based on the soft covering lemma \cite[Corollary~VII.5]{Cuff13} the \ac{RHS} of \eqref{eq:General_Distribution_Expansion_2_SC} vanishes if
\begin{align}\label{eq:State_Dependent_Covering_Lemma_2_SC}
    \tilde{R} > \mi(S;V).
\end{align}
We now proceed to bound the third term on the \ac{RHS} of \eqref{eq:Triangle_Inequality_Strictly_Causaul_CSIT}. First, consider the marginal
\begin{align}
    &\Gamma_{Z_j^r,K_{j-1},T_j,K_j|C_r} (z_j^r,k_{j-1},t_j,k_j)= \sum\limits_{m_j} \sum\limits_{t_{j-1}}\sum\limits_{k_{j-2}}\sum\limits_{\ell_j} \sum\limits_{s_j^r} 
    \frac{1}{2^{r(R + R_t + R_k+\tilde{R})}} P_{S|V}^{\otimes r}\big(s_j^r|V^r(\ell_j)\big)\nonumber\\
    &\quad
    \times W_{Z|X,S}^{\otimes r}\big(z_j^r|X^r(m_j,t_{j-1},k_{j-2}),s_j^r\big) 2^{-rR_k} \indic{1}_{\{ t_j=\sigma(V^r(\ell_j))\}}\indic{1}_{\{ k_j=\Phi (V^r(\ell_j))\}}\nonumber\\
    &= \sum\limits_{m_j} \sum\limits_{t_{j-1}}\sum\limits_{k_{j-2}}\sum\limits_{\ell_j} 
    \frac{1}{2^{r(R + R_t + R_k+\tilde{R})}}  W_{Z|X,V}^{\otimes r}\big(z_j^r|X^r(m_j,t_{j-1},k_{j-2}),V^r(\ell_j)\big) 2^{-rR_k} \indic{1}_{\{ t_j=\sigma(V^r(\ell_j))\}}\indic{1}_{\{ k_j=\Phi (V^r(\ell_j))\}},
\end{align}where $W_{Z|X,V}(z|x,v)=\sum_{s\in\mathcal{S}} P_{S|V}(s|v)W_{Z|X,V}(z|x,v)$. To bound the third term on the \ac{RHS} of \eqref{eq:Triangle_Inequality_Strictly_Causaul_CSIT} by using Pinsker's inequality, it is sufficient to bound $\expec_{C_r}[\kld({\Gamma_{Z_j^r,K_{j-1},T_j,K_j|C_r}||Q_Z^{\otimes r}Q_{K_{j-1}}Q_{T_j}Q_{K_j}})]$ as follows,
\begin{align}
&\expec_{C_r}[\kld({\Gamma_{Z_j^r,K_{j-1},T_j,K_j|C_r}||Q_Z^{\otimes r} Q_{K_{j-1}}Q_{T_j}Q_{K_j}})] \nonumber\\ 
&= \expec_{C_r}\Big[\sum\limits_{z_j^r,k_{j-1},t_j,k_j} \Gamma_{Z_j^r,K_{j-1},T_j,K_j|C_r}(z_j^r,k_{j-1},t_j,k_j)\log \bigg(\frac{\Gamma_{Z_j^r,K_{j-1},T_j,K_j|C_r}(z_j^r,k_{j-1},t_j,k_j)}{Q_Z^{\otimes r}(z_j^r) Q_{K_{j-1}}(k_{j-1}) Q_{T_j}(t_j) Q_{K_j}(k_j)}\bigg) \Big] \nonumber\\ 
&= \expec_{C_r}\Bigg[\sum\limits_{(z_j^r,k_{j-1},t_j,k_j)}\sum\limits_{m_j} \sum\limits_{t_{j-1}}\sum\limits_{k_{j-2}}  \sum\limits_{\ell_j}\frac{1}{2^{r(R + R_t + 2R_k+\tilde{R})}}W_{Z|X,V}^{\otimes r}\big(z_j^r|X^r(m_j,t_{j-1},k_{j-2}),V^r(\ell_j)\big)\nonumber\\
&\times\indic{1}_{\{ t_j=\varphi(V^r(\ell_j))\}} \indic{1}_{\{ k_j=\Phi (V^r(\ell_j))\}} \nonumber\\
&\times\log\Bigg(\frac{\sum\limits_{\tilde{m}_j}\sum\limits_{\tilde{t}_{j-1}}\sum\limits_{\tilde{k}_{j-2}} \sum\limits_{\tilde{\ell}_{j}}  W_{Z|X,V}^{\otimes r}\big(z_j^r|X^r(\tilde{m}_j,\tilde{t}_{j-1},\tilde{k}_{j-2}),V^r(\tilde{\ell}_{j})\big)\indic{1}_{\{ t_j=\varphi(V^r(\tilde{\ell}_{j}))\}} \indic{1}_{\{ k_j=\Phi (V^r(\tilde{\ell}_{j}))\}}}{2^{r(R + R_t + R_k +\tilde{R} - R_T - R_K)} Q_Z^{\otimes r}(z_j^r)}\Bigg)     \Bigg]\nonumber\\ 
&\mathop\le\limits^{(a)}\sum\limits_{(z_j^r,k_{j-1},t_j,k_j)} \sum\limits_{m_j} \sum\limits_{t_{j-1}}\sum\limits_{k_{j-2}}\sum\limits_{\ell_j} \frac{1}{2^{r(R + R_t + 2R_k+\tilde{R})}}\sum\limits_{\big(x^r(m_j,t_{j-1},k_{j-2}),s_j^r,v^r(\ell_j)\big)} \hspace{-2mm}\Gamma^{(\mathcal{C}_r)}_{X^r,S_j^r,V^r,Z_j^r}\big(x^r(m_j,t_{j-1},k_{j-2}),s_j^r,v^r(\ell_j),z_j^r\big)\nonumber\\
&\times\expec_{\varphi(v^r(\ell_j))}\big[\indic{1}_{\{ t_j=\varphi(v^r(\ell_j))\}}\big]\times\expec_{\Phi(v^r(\ell_j))}\big[\indic{1}_{\{ k_j=\Phi (v^r(\ell_j))\}}\big] \nonumber\\
&\times\log \expec_{\mathop {\backslash (m_j,t_{j-1},k_{j-2},\ell_j),}\limits_{\backslash (\varphi (v^r(\ell_j)),\Phi (v^r(\ell_j)))} } \Bigg[ \frac{\sum\limits_{\tilde{m}_j}\sum\limits_{\tilde{t}_{j-1}}\sum\limits_{\tilde{k}_{j-2}}  \sum\limits_{\tilde{\ell}_{j}} W_{Z|X,V}^{\otimes r}\big(z_j^r|X^r(\tilde{m}_j,\tilde{t}_{j-1},\tilde{k}_{j-2}),V^r(\tilde{\ell}_j)\big)\indic{1}_{\{ t_j=\varphi(V^r(\tilde{\ell}_j))\}} \indic{1}_{\{ k_j=\Phi (V^r(\tilde{\ell}_j))\}}}{2^{r(R + R_t + R_k +\tilde{R}  - R_T - R_K)} Q_Z^{\otimes r}(z_j^r)}  \Bigg]      \nonumber\\
&\mathop\le\limits^{(b)} \sum\limits_{(z_j^r,k_{j-1},t_j,k_j)} \sum\limits_{m_j} \sum\limits_{t_{j-1}}\sum\limits_{k_{j-2}}\sum\limits_{\ell_j} \frac{1}{2^{r(R + R_t + 2R_k+\tilde{R})}}\sum\limits_{\big(x^r(m_j,t_{j-1},k_{j-2}),s_j^r,v^r(\ell_j)\big)} \hspace{-2mm}\Gamma^{(\mathcal{C}_r)}_{X^r,S_j^r,V^r,Z_j^r}\big(x^r(m_j,t_{j-1},k_{j-2}),s_j^r,v^r(\ell_j),z_j^r\big)\nonumber\\
&\times \frac{1}{2^{r(R_T+R_K)}}\log \frac{1}{2^{r(R + R_t + R_k +\tilde{R}  - R_T - R_K)} Q_Z^{\otimes r}(z_j^r)}\Bigg( W_{Z|X,V}^{\otimes r}\big(z_j^r|x^r(m_j,t_{j-1},k_{j-2}),v^r(\ell_j)\big) \nonumber\\
&+  \expec_{\backslash (m_j,t_{j-1},k_{j-2})}\Bigg[\sum\limits_{(\tilde{m}_j,\tilde{t}_{j-1},\tilde{k}_{j-2}) \ne (m_j,t_{j-1},k_{j-2})}   W_{Z|X,V}^{\otimes r}\big(z_j^r|X^r(\tilde{m}_j,\tilde{t}_{j-1},\tilde{k}_{j-2}),v^r(\ell_j)\big)\Bigg]\nonumber\\
& +  \expec_{\mathop {\backslash \ell_j,}\limits_{\backslash (\varphi (v^r(\ell_j)),\Phi (v^r(\ell_j)))} }\Bigg[\sum\limits_{\tilde{\ell}_{j} \ne \ell_j} W_{Z|X,V}^{\otimes r}\big(z_j^r|x^r(m_j,t_{j-1},k_{j-2}),V^r(\tilde{\ell}_j)\big)\indic{1}_{\{ t_j=\varphi (V^r(\tilde{\ell}_j))\}}\indic{1}_{\{ k_j=\Phi (V^r(\tilde{\ell}_j))\}}\Bigg]\nonumber\\
&+  \expec_{\mathop {\backslash (m_j,t_{j-1},k_{j-2},\ell_j),}\limits_{\backslash (\varphi (v^r(\ell_j)),\Phi (v^r(\ell_j)))} }\Bigg[
\sum\limits_{\tilde{\ell}_{j} \ne \ell_j}\sum\limits_{(\tilde{m}_j,\tilde{t}_{j-1},\tilde{k}_{j-2}) \ne (m_j,t_{j-1},k_{j-2})} W_{Z|X,V}^{\otimes r}\big(z_j^r|X^r(\tilde{m}_j,\tilde{t}_{j-1},\tilde{k}_{j-2}),V^r(\tilde{\ell}_j)\big)\nonumber\\
&\times\indic{1}_{\{ t_j=\varphi (V^r(\tilde{\ell}_j))\}}\indic{1}_{\{ k_j=\Phi (V^r(\tilde{\ell}_j))\}}\Bigg]\Bigg)    \nonumber\\
&\mathop\le\limits^{(c)} \sum\limits_{(z_j^r,k_{j-1},t_j,k_j)} \sum\limits_{m_j} \sum\limits_{t_{j-1}}\sum\limits_{k_{j-2}}\sum\limits_{\ell_j} \frac{1}{2^{r(R + R_t + 2R_k+\tilde{R})}}\sum\limits_{\big(x^r(m_j,t_{j-1},k_{j-2}),s_j^r,v^r(\ell_j)\big)} \hspace{-2mm}\Gamma^{(\mathcal{C}_r)}_{X^r,S_j^r,V^r,Z_j^r}\big(x^r(m_j,t_{j-1},k_{j-2}),s_j^r,v^r(\ell_j),z_j^r\big)\nonumber\\
&\times \frac{1}{2^{r(R_T+R_K)}}\log \Bigg(\frac{ W_{Z|X,V}^{\otimes r}\big(z_j^r|x^r(m_j,t_{j-1},k_{j-2}),v^r(\ell_j)\big)}{2^{r(R + R_t + R_k +\tilde{R}  - R_T - R_K)} Q_Z^{\otimes r}(z_j^r)} \nonumber\\
&+\sum\limits_{(\tilde{m}_j,\tilde{t}_{j-1},\tilde{k}_{j-2}) \ne (m_j,t_{j-1},k_{j-2})}\frac{ W_{Z|V}^{\otimes r}\big(z_j^r|v^r(\ell_j)\big)}{{{2^{r(R + R_t + R_k +\tilde{R}  - R_T - R_K)}} Q_Z^{\otimes r}(z_j^r)}}+  \sum\limits_{\tilde{\ell}_{j} \ne \ell_j}\frac{ W_{Z|X}^{\otimes r}\big(z_j^r|x^r(m_j,t_{j-1},k_{j-2})\big)}{{{2^{r(R + R_t + R_k +\tilde{R})}}\times Q_Z^{\otimes r}(z_j^r)}}+1\Bigg)\nonumber\\
&\le \sum\limits_{(z_j^r,k_{j-1},t_j,k_j)} \sum\limits_{m_j} \sum\limits_{t_{j-1}}\sum\limits_{k_{j-2}}\sum\limits_{\ell_j} \frac{1}{2^{r(R + R_t + 2R_k+\tilde{R}+R_T+R_K)}}\sum\limits_{\big(x^r(m_j,t_{j-1},k_{j-2}),s_j^r,v^r(\ell_j)\big)} \hspace{-12mm}\Gamma^{(\mathcal{C}_r)}_{X^r,S_j^r,V^r,Z_j^r}\big(x^r(m_j,t_{j-1},k_{j-2}),s_j^r,v^r(\ell_j),z_j^r\big)\nonumber\\
&\times\log \Bigg(\frac{ W_{Z|X,V}^{\otimes r}\big(z_j^r|x^r(m_j,t_{j-1},k_{j-2}),v^r(\ell_j)\big)}{2^{r(R + R_t + R_k +\tilde{R}  - R_T - R_K)} Q_Z^{\otimes r}(z_j^r)} +\frac{ W_{Z|V}^{\otimes r}\big(z_j^r|v^r(\ell_j)\big)}{{{2^{r(\tilde{R}- R_T - R_K)}} Q_Z^{\otimes r}(z_j^r)}}+  \frac{  W_{Z|X}^{\otimes r}\big(z_j^r|x^r(m_j,t_{j-1},k_{j-2})\big)}{{{2^{r(R + R_t + R_k)}} Q_Z^{\otimes r}(z_j^r)}} +1\Bigg) \nonumber\\
&\triangleq \Psi_1 + \Psi_2,\label{eq:KLD_ZK_SC_CSIT}
\end{align}
where $(a)$ follows from Jensen's inequality, $(b)$ and $(c)$ hold because $\indic{1}_{\{ k_j=\Phi (\tilde s_j^r)\}}\leq 1$. We defined $\Psi_1$ and $\Psi_2$ as
\begin{align}
&{\Psi _1} = \sum\limits_{(k_{j-1},t_j,k_j)} \sum\limits_{m_j} \sum\limits_{t_{j-1}}\sum\limits_{k_{j-2}} \sum\limits_{\ell_j} \frac{1}{{{2^{r(R + R_t + 2R_k + R_T +R_K)}}}}\nonumber\\
&\qquad\times\sum\limits_{\big(v^r,x^r(m_j,t_{j-1},k_{j-2}),z_j^r\big) \in \mathcal{T}_\epsilon ^{(r)}} \Gamma^{(\mathcal{C}_r)}_{V^r,X^r,Z^r}(v^r,x^r(m_j,t_{j-1},k_{j-2}),z_j^r)\nonumber\\
&\times\log \Bigg(\frac{ W_{Z|X,V}^{\otimes r}\big(z_j^r|x^r(m_j,t_{j-1},k_{j-2}),v^r(\ell_j)\big)}{2^{r(R + R_t + R_k +\tilde{R}  - R_T - R_K)} Q_Z^{\otimes r}(z_j^r)} +\frac{ W_{Z|V}^{\otimes r}\big(z_j^r|v^r(\ell_j)\big)}{{{2^{r(\tilde{R}- R_T - R_K)}} Q_Z^{\otimes r}(z_j^r)}}+  \frac{  W_{Z|X}^{\otimes r}\big(z_j^r|x^r(m_j,t_{j-1},k_{j-2})\big)}{{{2^{r(R + R_t + R_k)}} Q_Z^{\otimes r}(z_j^r)}} +1\Bigg) \nonumber\\
& \le \log \Bigg(\frac{2^{r(R_T+R_K)}\times 2^{ - r(1 - \epsilon )\ent(Z|X,V)}}{2^{r(R + R_t + R_k+\tilde{R})}\times 2^{ - r(1 + \epsilon )\ent(Z)}} + \frac{2^{r(R_T+R_K)}\times 2^{ - r(1 - \epsilon )\ent(Z|V)}}{2^{r\tilde{R}}\times 2^{ - r(1 + \epsilon )\ent(Z)}} +\frac{ 2^{ - r(1 - \epsilon )\ent(Z|X)}}{2^{r(R+R_t+R_k)}\times 2^{ - r(1 + \epsilon )\ent(Z)}}+ 1\Bigg)\label{eq:Psi_1_SC_CSIT}\\
&{\Psi _2} =\sum\limits_{(k_{j-1},t_j,k_j)} \sum\limits_{m_j} \sum\limits_{t_{j-1}}\sum\limits_{k_{j-2}} \sum\limits_{\ell_j} \frac{1}{{{2^{r(R + R_t + 2R_k + R_T +R_K)}}}}\nonumber\\
&\qquad\times\sum\limits_{\big(v^r,x^r(m_j,t_{j-1},k_{j-2}),z_j^r\big) \notin \mathcal{T}_\epsilon ^{(r)}} \Gamma^{(\mathcal{C}_r)}_{V^r,X^r,Z^r}(v^r,x^r(m_j,t_{j-1},k_{j-2}),z_j^r)\nonumber\\
&\times\log \Bigg(\frac{ W_{Z|X,V}^{\otimes r}\big(z_j^r|x^r(m_j,t_{j-1},k_{j-2}),v^r(\ell_j)\big)}{2^{r(R + R_t + R_k +\tilde{R}  - R_T - R_K)} Q_Z^{\otimes r}(z_j^r)} +\frac{ W_{Z|V}^{\otimes r}\big(z_j^r|v^r(\ell_j)\big)}{{{2^{r(\tilde{R}- R_T - R_K)}} Q_Z^{\otimes r}(z_j^r)}}+  \frac{  W_{Z|X}^{\otimes r}\big(z_j^r|x^r(m_j,t_{j-1},k_{j-2})\big)}{{{2^{r(R + R_t + R_k)}} Q_Z^{\otimes r}(z_j^r)}} +1\Bigg) \nonumber\\
&\le 2|V||X||Z|e^{ - r\epsilon^2 {\mu _{V,X,Z}}}r\log\Big(\frac{3}{\mu_Z} + 1\Big). \label{eq:Psi_2_SC_CSIT}
\end{align}
In \eqref{eq:Psi_2_SC_CSIT} $\mu_{X,V,Z}=\min\limits_{(x,v,z)\in(\mathcal{X},\mathcal{V},\mathcal{Z})}P_{X,V,Z}(x,v,z)$ and $\mu_Z=\min\limits_{z\in\mathcal{Z}}P_Z(z)$. When $r\to\infty$ then $\Psi_2\to 0$ and $\Psi_1$ goes to zero when $r$ grows if
\begin{subequations}\label{eq:Covert_Analysis_Str_Causal_CSIT}
\begin{align}
    R + R_t + R_k + \tilde{R} - R_T - R_K &>\mi(X,V;Z),\label{eq:Covert_Analysis_Str_Causal_CSIT_1}\\
    \tilde{R} - R_T - R_K &> \mi(V;Z),\label{eq:Covert_Analysis_Str_Causal_CSIT_2}\\
    R + R_t + R_k  &>\mi(X;Z).\label{eq:Covert_Analysis_Str_Causal_CSIT_4}
\end{align}
\end{subequations}

\emph{{Decoding and Error Probability Analysis:}}
At the end of block $j\in\intseq{1}{B}$, using its knowledge of the key $k_{j-2}$ generated from the block $j-2$, the receiver finds a unique pair $(\hat{m}_j,\hat{t}_{j-1})$ such that $\big(x^r(\hat{m}_j,\hat{t}_{j-1},k_{j-2}),y_j^r\big)\in\mathcal{T}_{\epsilon}^{(r)}$. According to the law of large numbers and the packing lemma probability of error vanishes when $r$ grows if \cite{ElGamalKim},
\begin{align}
&R + R_t < \mi(X;Y).
\label{eq:Decoding_BM_SC_CSIT}
\end{align}
We now analyze the probability of error at the encoder and the decoder for key generation. Let $(L_{j-1},T_{j-1})$ denote the chosen indices at the encoder and $\hat{L}_{j-1}$ and $\hat{T}_{j-1}$ be the estimate of the index $L_{j-1}$ and $T_{j-1}$ at the decoder. At the end of block $j$, by decoding $U_j^r$, the receiver has access to $T_{j-1}$ and to find $L_{j-1}$ we define the error event,
\begin{align}
    \mathcal{E} = \Big\{ \big(V_{j-1}^r(\hat{L}_{j-1}),S_{j-1}^r,X_{j-1}^r,Y_{j-1}^r\big) \notin \mathcal{T}_\epsilon ^{(n)}\Big\},
\end{align}and consider the events
\begin{subequations}\label{eq:Encoder_Errors_SC_CSIT}
\begin{align}
    \mathcal{E}_1 &= \Big\{ \big(V_{j-1}^r(\ell_{j-1}),S_{j-1}^r\big) \notin \mathcal{T}_{\epsilon'}^{(n)} \,\,\mbox{for all}\,\, \ell_{j-1}\in\intseq{1}{2^{r\tilde{R}}}\Big\},\\
    \mathcal{E}_2 &= \Big\{ \big(V_{j-1}^r(L_{j-1}),S_{j-1}^r,X_{j-1}^r,Y_{j-1}^r\big) \notin \mathcal{T}_\epsilon ^{(n)}\Big\},\\
    \mathcal{E}_3 &= \Big\{ \big(V_{j-1}^r(\tilde{\ell}_{j-1}),X_{j-1}^r,Y_{j-1}^r\big) \in \mathcal{T}_{\epsilon}^{(n)} \,\,\mbox{for some}\,\, \ell_{j-1}\in\mathcal{B}(T_{j-1}),\tilde{\ell}_{j-1}\ne \ell_{j-1}\Big\}.
\end{align}
\end{subequations}By the union bound we have
\begin{align}
    P(\mathcal{E})\leq P(\mathcal{E}_1) + P(\mathcal{E}_1^c\cap\mathcal{E}_2) + P(\mathcal{E}_3)\label{eq:Error_Event_SCausal_CSIT}.
\end{align}
According to \cite[Lemma~2]{Ziv_Strong_Sec_BC_Cooperation} the first term on the \ac{RHS} of \eqref{eq:Error_Event_SCausal_CSIT} vanishes when $r$ grows if we have \eqref{eq:State_Dependent_Covering_Lemma_2_SC}. Following the steps in \cite[Sec.~11.3.1]{ElGamalKim}, the last two terms on the right hand side of \eqref{eq:Error_Event_SCausal_CSIT} go to zero when $r$ grows if we have
\begin{subequations}\label{eq:Error_Analysis_Encoder_SC_CSIT}
\begin{align}
    \tilde{R}&>\mi(V;S),\label{eq:Error_Analysis_Encoder_SC_CSIT_1}\\
    \tilde{R}-R_t&<\mi(V;X,Y).\label{eq:Error_Analysis_Encoder_SC_CSIT_2}
\end{align}
\end{subequations}


Applying Fourier-Motzkin to 
\eqref{eq:State_Dependent_Covering_Lemma_2_SC}, \eqref{eq:Covert_Analysis_Str_Causal_CSIT},  \eqref{eq:Decoding_BM_SC_CSIT}, and \eqref{eq:Error_Analysis_Encoder_SC_CSIT} and remarking that the scheme requires $R_t + R_k\leq R_T + R_K$ results in the region in Theorem~\ref{thm:Strictly_Causal_State_CSIT}.

\section{Proof of Theorem~\ref{thm:Strictly_Causal_State_CSIT_App}}
\label{sec:Proof_Simple_Inner_Bound_For_CSIT_SC}
\sloppy Fix $P_X(x)$ and $\epsilon_1>\epsilon_2>0$ such that, $P_Z = Q_0$.

\emph{Codebook Generation:}
Let $C_n\triangleq\{X^n(m)\}_{m\in\mathcal{M}}$, where $\mathcal{M}\in\intseq{1}{2^{nR}}$, be a random codebook consisting of independent random sequences each generated according to $\prod\nolimits_{i = 1}^n P_X(x_i)$. We denote a realization of $C_n$ by $\CodeBook_n\triangleq\{x^n(m)\}_{m\in\mathcal{M}}$. 

\emph{{Encoding:}}
To send the message $m$ the encoder computes $x^n(m)$ and transmits it over the channel. 

For a fixed codebook $\mathcal{C}_n$, the induced joint distribution over the codebook is as follows
\begin{align}
    \twocolalign &P_{M,S^n,X^n,Z^n}^{(\mathcal{C}_n)}(m,s^n,\tilde{x}^n,z^n)= 2^{-nR} Q_S^{\otimes n}(s^n) \indic{1}_{\{ \tilde{x}^n=x^n(m)\}} W_{Z|X,S}^{\otimes n}(z^n|\tilde{x}^n,s^n).\label{eq:P_Dist_Simple_SC}
\end{align}

\emph{{Covert Analysis:}} 
We now show that this coding scheme guarantees that
\begin{align}
    \expec_{C_n}\big[\kld(P_{Z^n|C_n} || Q_Z^{\otimes n} )\big]\underset{n\rightarrow\infty}{\xrightarrow{\hspace{0.2in}}} 0,\label{eq:QZ_CSIT_SC}
\end{align}
where $P_{Z^n|C_n}$ is the marginal distribution of the distribution induced by the code design defined in \eqref{eq:P_Dist_Simple_SC} and is as follows
\begin{align}
    \twocolalign P_{Z^n|C_n}(z^n)&=\sum\limits_{m}\sum\limits_{s^n} 2^{-nR} Q_S^{\otimes n}(s^n) W_{Z|X,S}^{\otimes n}\big(z^n|x^n(m),s^n\big)\\
    &=\sum\limits_{m}2^{-nR} W_{Z|X}^{\otimes n}\big(z^n|x^n(m)\big),\label{eq:P_Marginal_Dist_Simple_SC}
\end{align}where $W_{Z|X}=\sum_{s\in\mathcal{S}} Q_S(s)W_{Z|X,S}(z|x,s)$. Then we choose $P_X$ such that it satisfies $Q_Z=Q_0$.

By \cite[Theorem~1]{Hou13} one can show that \eqref{eq:QZ_CSIT_SC} holds if
\begin{align}
    R>\mi(X;Z).\label{eq:res_Simple_SC}
\end{align}

\emph{{Decoding and Error Probability Analysis:}}
Upon receiving $y^n$ the receiver finds a unique message $\hat{m}$ such that $\big(x^n(\hat{m}),y^n\big)\in\mathcal{T}_{\epsilon}^{(n)}$.  According to the law of large numbers and the packing lemma probability of error vanishes when $n$ grows if \cite{ElGamalKim},
\begin{align}
&R  < \mi(X;Y).
\label{eq:Decoding_Simple_SC}
\end{align}


\section{Proof of Theorem~\ref{thm:Upper_Bound_For_CSIT_SC}}
\label{sec:Proof_Upper_Bound_For_CSIT_SC}
Consider any sequence of length-$n$ codes for a state-dependent channel with channel state available strictly causally only at the transmitter such that $P_e^{(n)}\leq\epsilon_n$ and $\kld(P_{Z^n}||Q_0^{\otimes n})\leq\delta$ with $\lim_{n\to\infty}\epsilon_n=0$. Note that the converse is consistent with the model and does \emph{not} require $\delta$ to vanish.
\subsection{Epsilon Rate Region}
We first define a region $\mathcal{S}_{\epsilon}$ for $\epsilon>0$ that expands the region defined in~\eqref{eq:finalregion_CSIT_SC} and~\eqref{eq:finalconstraint_CSIT_SC} as follows.
\begin{subequations}\label{eq:Epsilon_Rate_Region_CSIT_SC}
\begin{align}
\calS_\epsilon\eqdef \big\{R\geq 0: \exists P_{V,S,X,Y,Z}\in\calD_\epsilon: R\leq \mi(X;Y) + \epsilon \big\}\label{eq:Sepsilon_CSIT_SC}
\end{align}
where 
\begin{align}
  \calD_\epsilon = \left.\begin{cases}P_{V,S,X,Y,Z}:\\
P_{V,S,X,Y,Z}=Q_SP_VP_{X|V}W_{Y,Z|X,S}\\
\mathbb{D}\left(P_Z\Vert Q_0\right) \leq \epsilon\\
\mi(X;Y) \geq \mi(V;Z) - 3\epsilon\\
\card{\calV}\leq \card{\calX}
\end{cases}\right\},\label{eq:Depsilon_CSIT_SC}
\end{align}
\end{subequations}where $\epsilon\triangleq\max\{\epsilon_n,\nu\geq\frac{\delta}{n}\}$. 
We next show that if a rate $R$ is achievable then $R\in\mathcal{S}_{\epsilon}$ for any $\epsilon>0$. 

For any $\epsilon_n>0$, we start by upper bounding $nR$ using standard techniques. 
\begin{align}
nR &= \ent(M)\nonumber\\
&\mathop \le \limits^{(a)} \mi(M;Y^n) + n\epsilon_n \nonumber\\
&= \sum\limits_{i = 1}^n \mi(M;Y_i|Y^{i - 1})  + n\epsilon_n \nonumber\\
&\leq \sum\limits_{i = 1}^n \mi(M,Y^{i - 1};Y_i)  + n\epsilon_n \nonumber\\
&\mathop \le \limits^{(b)} \sum\limits_{i = 1}^n \mi(M,S^{i - 1};Y_i)  + n\epsilon_n \nonumber\\
&\leq \sum\limits_{i = 1}^n \mi(M,S^{i - 1},Z^{i - 1};Y_i)  + n\epsilon_n \nonumber\\
&\mathop \le \limits^{(c)} \sum\limits_{i = 1}^n \mi(X_{i};Y_i)  + n\epsilon_n \label{eq:Upper_Bound_Strictly_Causal_CSIT_1}\\
& =  n\sum\limits_{i = 1}^n \frac{1}{n}\mi(X_i;Y_i)  + n\epsilon_n \nonumber\\
& =  n\sum\limits_{i = 1}^n \Prob(T=i)\mi(X_i;Y_i|T=i)  + n\epsilon_n \nonumber\\
& =  n\mi(X_T;Y_T|T)  + n\epsilon_n \nonumber\\
& \leq  n\mi(X_T,T;Y_T)  + n\epsilon_n \nonumber\\
&\mathop = \limits^{(d)}  n\mi(X;Y)  + n\epsilon_n\nonumber\\
&\mathop \leq \limits^{(e)}  n\mi(X;Y)  + n\epsilon,
\label{eq:Upper_Bound_Strictly_Causal_CSIT_2}
\end{align}where
\begin{itemize}
    \item[$(a)$] follows from Fano's inequality;
    \item[$(b)$] follows from the Markov chain $(M,Y^{i-1})-(M,S^{i-1})-Y_i$;
    \item[$(c)$] follows since $S_i$ is independent of $\big(M,S^{i-1},Z^{i-1},X_i(M,S^{i-1})\big)$ and therefore \begin{equation}
    \mi(M,S^{i-1},Z^{i-1};Y_i|X_i)\leq\mi(M,S^{i-1},Z^{i-1};Y_i|X_i,S_i)=0,
    \end{equation}
    that is $(M,S^{i-1},Z^{i-1})-X_i-Y_i$ forms a Markov chain, which implies $V_i-X_i-Y_i$, where $V_i\triangleq(M,Z^{i-1})$, forms a Markov chain;
    \item[$(d)$] follows by defining $X=(X_T,T)$ and $Y=Y_T$.
    \item[$(e)$] follows by defining $\epsilon\triangleq\max\{\epsilon_n,\nu\geq\frac{\delta}{n}\}$.
\end{itemize}

Next, we lower bound $nR$ as follows,
\begin{align}
nR &\geq \ent(M)\nonumber\\
&\ge \mi(M;Z^n)\nonumber\\
&= \sum\limits_{i = 1}^n\mi(M;Z_i|Z^{i-1})\nonumber\\
&\mathop \ge \limits^{(a)} \sum\limits_{i = 1}^n\mi(M,Z^{i-1};Z_i)-\delta\nonumber\\
&\mathop = \limits^{(b)} \sum\limits_{i = 1}^n \big[\mi(V_i;Z_i)\big] - \delta \nonumber\\
&= n\mi(V_T;Z_T|T) - \delta \nonumber\\
&\mathop \ge \limits^{(c)} n\mi(V_T,T;Z_T) - 2\delta \nonumber\\
&\mathop = \limits^{(d)} n\mi(V;Z) - 2\delta
\label{eq:Upper_Bound_on_KRate_FCSI_SC}
\end{align}where 
\begin{itemize}
    \item[$(a)$] follows from Lemma~\ref{lemma:Resolvability_Properties};
    \item[$(b)$] follows by defining $V_i\triangleq(M,Z^{i-1})$;
    \item[$(c)$] follows from Lemma~\ref{lemma:Resolvability_Properties};
    \item[$(d)$] follows by defining $V=(V_T,T)$ and $Z=Z_T$.
\end{itemize}
For any $\nu>0$, choosing $n$ large enough ensures that
\begin{align}
 R &\mathop \geq \mi(V;Z) - 2\nu,\nonumber\\
 &\mathop \geq \mi(V;Z) - 2\epsilon,
 \label{eq:Upper_Bound_on_KRate_CSI_SC}
\end{align}where the last inequality follows from definition $\epsilon\triangleq\max\{\epsilon_n,\nu\}$. 
To show that $ \kld(P_Z||Q_0)\leq\epsilon$, note that for $n$ large enough
\begin{align}
\kld(P_Z||Q_0)=\kld(P_{Z_T}||Q_0)=\kld\Bigg(\frac{1}{n}\sum\limits_{i=1}^nP_{Z_i}\Bigg|\Bigg|Q_0\Bigg)\leq\frac{1}{n}\sum\limits_{i=1}^n\kld(P_{Z_i}||Q_0)\leq\frac{1}{n}\kld(P_{Z^n}||Q_0^{\otimes n})\leq\frac{\delta}{n}\leq\nu\leq\epsilon.\label{eq:boundKL_CSIT_SC}
\end{align}
Combining \eqref{eq:Upper_Bound_Strictly_Causal_CSIT_2} and \eqref{eq:Upper_Bound_on_KRate_CSI_SC}, and~\eqref{eq:boundKL_CSIT_SC} shows that $\forall \epsilon_n,\nu>0$, $R\leq \max\{x:x\in\mathcal{S}_{\epsilon}\}$. Therefore,
\begin{align}
  R\leq \max\left\{x:x\in\bigcap_{\epsilon>0}\mathcal{S}_{\epsilon}\right\}.
\end{align}
\subsubsection{Continuity at Zero}
One can prove the continuity at zero of $\calS_\epsilon$ by substituting $\min\{\mi(U;Y)-\mi(U;S),\mi(U,V;Y)-\mi(U;S|V)\}$ with $\mi(X;Y)$ and $\mi(V;Z)-\mi(V;S)$ with $\mi(V;Z)$ in Appendix~\ref{sec:continuity-at-zero} and following the exact same arguments.

\end{appendices}

\bibliographystyle{IEEEtran}
\bibliography{IEEEabrv,mybibfile}

\end{document}